\numberwithin{equation}{section}
\newtheorem{thm}{Theorem}[section]
\newtheorem{conj}{Conjecture}
\newtheorem{lemma}{Lemma}[section]
\newtheorem{defi}{Definition}[section]
\newtheorem{cor}{Corollary}[section]
\theoremstyle{remark}
\newtheorem{rem}{Remark}[section]
\theoremstyle{plain}
\newtheorem{prop}{Proposition}[section]
\newtheorem{claim}{Claim}
\newenvironment{nalign}{
    \begin{equation}
    \begin{aligned}
}{
    \end{aligned}
    \end{equation}
    \ignorespacesafterend
}
\theoremstyle{remark}
\renewcommand{\labelenumi}{\alph{enumi}.} 
\newtheorem{subclaim}{Sublemma}
\newcommand{\dd}{\mathop{}\!\mathrm{d}}
\newcommand{\pu}{\partial_u}
\newcommand{\pv}{\partial_v}
\renewcommand{\O}{\mathcal{O}}
\newcommand{\p}{\phi_1}
\newcommand{\cc}{C_{\mathrm{in}}^{(1)}}
\newcommand{\ccc}{C_{\mathrm{in}}^{(2)}}
\newcommand{\pho}{(r\phi_1)}
\newcommand{\IL}[1]{I_{\ell=#1}^{\mathrm{future},f}}
\newcommand{\ILlog}[1]{I_{\ell=#1}^{\mathrm{future},\frac{\log r}{r^3}}}
\newcommand{\ILp}[1]{I_{\ell=#1}^{\mathrm{past},f}}
\newcommand{\ILn}[1]{I_{\ell=#1}^{\mathrm{future}}}
\newcommand{\ILpn}[1]{I_{\ell=#1}^{\mathrm{past}}}
\renewcommand{\div}{\mathrm{div}}
\title{The Case Against Smooth Null Infinity III:\\Early-Time Asymptotics for Higher $\ell$-Modes of Linear Waves on a Schwarzschild Background} 
\author[1]{Lionor M. A. Kehrberger\thanks{kehrberger@mis.mpg.de}} 
\affil[1]{University of Cambridge, Department of Applied Mathematics and Theoretical Physics,

 Wilberforce Road, Cambridge CB3 0WA, United Kingdom}
\date{April 30, 2024} 
\begin{document}
\pagenumbering{roman}

\maketitle 
\begin{abstract}
In this paper, we derive the early-time asymptotics for fixed-frequency solutions $\phi_\ell$ to the wave equation $\Box_g \phi_\ell=0$ on a fixed Schwarzschild background ($M>0$) arising from the no incoming radiation condition on $\mathcal I^-$ and polynomially decaying data, $r\phi_\ell\sim t^{-1}$ as $t\to-\infty$, on either  a timelike boundary of constant area radius $r>2M$ \textbf{(I)} or an ingoing null hypersurface \textbf{(II)}.
In case \textbf{(I)}, we show that the asymptotic expansion of $\pv(r\phi_\ell)$ along outgoing null hypersurfaces near spacelike infinity $i^0$ contains logarithmic terms at order $r^{-3-\ell}\log r$.  
In contrast, in case \textbf{(II)}, we obtain that the asymptotic expansion of $\pv(r\phi_\ell)$ near spacelike infinity $i^0$ contains logarithmic terms already at order $r^{-3}\log r$ (unless $\ell=1$).

These results suggest an alternative approach to the study of late-time asymptotics near future timelike infinity $i^+$ that does not assume conformally smooth or compactly supported Cauchy data:
In case \textbf{(I)}, our results indicate a \textit{logarithmically modified Price's law} for each $\ell$-mode. 
On the other hand, the data of case \textbf{(II)} lead to much stronger deviations from Price's law. 
In particular, we conjecture that compactly supported scattering data on $\mathcal H^-$ and $\mathcal I^-$ lead to solutions that exhibit the same late-time asymptotics on $\mathcal I^+$ for each $\ell$: $r\phi_\ell|_{\mathcal I^+}\sim u^{-2}$ as $u\to\infty$. 
\end{abstract}
\newpage
    \begingroup
\hypersetup{linkcolor=black}
    \tableofcontents{}
    \endgroup
\newpage
\pagenumbering{arabic}
\section{Introduction}
\subsection{Background and Motivation}
In this paper, we study the early-time asymptotics, i.e.\ asymptotics near spatial infinity $i^0$, of solutions, localised on a single angular frequency $\ell=L$, to the wave equation
\begin{equation}\label{waveequation}
\Box_{g_M}\phi_{\ell=L}=0
\end{equation}
on the exterior of a fixed Schwarzschild (or a more general spherically symmetric) background $(\mathcal{M}_M,g_M)$ under certain assumptions on data near past infinity. 
The most important of these assumptions is the \textit{no incoming radiation condition} on $\mathcal{I}^-$, stating that the flux of the radiation field on past null infinity vanishes at late advanced times. In addition, we will assume polynomially decaying (boundary) data on \textit{either} a past-complete timelike hypersurface, \textit{or} a past-complete null hypersurface.
\subsubsection{The spherically symmetric mode}\label{sec:subsec:intro:sphsym}
We initiated the study of such data in~\cite{I}, where we constructed \textit{spherically symmetric} solutions arising from the no incoming radiation condition, as a condition on data on $\mathcal{I}^-$, and polynomially decaying boundary data on a timelike hypersurface $\Gamma$ terminating at $i^-$ (or polynomially decaying characteristic initial data on an ingoing null hypersurface $\mathcal{C}_{\mathrm{in}}$ terminating at $\mathcal{I}^-$). 

The choice for these data, in turn, was motivated by an argument due to D.\ Christodoulou~\cite{Chr02} (based on the monumental proof of the stability of the Minkowski space~\cite{CK93}), which showed that the assumption of \textit{Sachs peeling}~\cite{SeriesVI,SeriesVIII} and, thus, of \textit{(conformally) smooth null infinity}~\cite{Penrose65} is incompatible with the no incoming radiation condition and the prediction of the quadrupole formula for $N$ infalling masses from $i^-$. 
The latter predicts that the rate of change of gravitational energy along $\mathcal I^+$ is given by $\sim -1/|u|^4$ near $i^0$. 
Indeed, modelling gravitational radiation by scalar radiation, we showed in~\cite{I} that the data described above lead to solutions which not only agree with the prediction of the quadrupole approximation (namely that $r^2(\pu\phi)^2|_{\mathcal I^+}\sim |u|^{-4}$ near $i^0$), but also have logarithmic terms in the asymptotic expansion of the \textit{spherically symmetric mode} $\pv(r\phi_0)$ as $\mathcal{I}^+$ is approached, thus contradicting the statement of Sachs peeling that such expansions are analytic in $1/r$.
More precisely, we obtained for the spherically symmetric mode $\phi_0$ that if the limit
\begin{equation}
\lim_{\mathcal{C}_{\mathrm{in}},u\to-\infty}|u|r\phi_0:=\Phi^-
\end{equation}
on initial data is non-zero (or, in the timelike case, if a similar condition on $\Gamma$ holds), then it is, in fact, a conserved quantity along $\mathcal I^-$, and,  for sufficiently large negative values of $u$, one obtains on each outgoing null hypersurface of constant $u$ the asymptotic expansion
\begin{equation}\label{l:eq:intro:pvrphilog}
\pv(r\phi_0)(u,v)=-2M\Phi^- \frac{\log r-\log|u|}{r^3}+\mathcal{O}(r^{-3}).
\end{equation}

In wide parts of the literature, it has been (and still is) assumed that physically relevant spacetimes do possess a smooth null infinity and that, therefore, logarithms as in \eqref{l:eq:intro:pvrphilog} do not appear. 
The result of \cite{I}, in line with \cite{CK93}, thus further puts this assumption in doubt. 
Furthermore, we showed in \cite{II} that the failure of peeling manifested by the \emph{early}-time asymptotics \eqref{l:eq:intro:pvrphilog} translates into logarithmic \emph{late}-time asymptotics near $i^+$, providing evidence for the physical measurability of the failure of null infinity to be smooth. We will return to the discussion of late-time asymptotics in section~\ref{sec:intro:NP}.

For more background on the history and relevance of peeling and smooth null infinity, we refer the reader to the introduction of \cite{I}. 

Finally, we note that the results from~\cite{I} were, in fact, obtained for the non-linear Einstein-Scalar field system ($G_{\mu\nu}[g]=T_{\mu\nu}^{sf}[\phi_0]$) under spherical symmetry and then, \textit{a fortiori,} carried over to the linear case ($G_{\mu\nu}[g]=0$, $\Box_g \phi_0=0 $). 

\subsubsection{Higher \texorpdfstring{$\ell$}{l}-modes}
Ultimately, we would like to develop an understanding of the situation for the \textit{Einstein vacuum equations} without symmetry assumptions (for which the spherically symmetric Einstein-Scalar field system only served as a toy model) in order to close the circle to Christodoulou's original argument~\cite{Chr02}, which was an argument pertaining to gravitational, not scalar, radiation.
In particular, we would like to understand the prediction of the quadrupole approximation, namely that the rate of gravitational energy loss along $\mathcal I^+$ is given by $-1/|u|^4$ as $u\to-\infty$, \textit{dynamically}, i.e.\ arising from suitable scattering data, rather than imposing it on $\mathcal I^+$ as was done in~\cite{Chr02}.
In view of the multipole structure of gravitational radiation, it thus seems to be necessary to first understand the answer to the following question: 

\textit{What are the early-time asymptotics for higher $\ell$-modes of solutions to the wave equation $\Box_g \phi=0$ on  a fixed Schwarzschild background, arising from the no incoming radiation condition, i.e.,\ what is the analogue of \eqref{l:eq:intro:pvrphilog} for $\ell>0$? }

 We shall provide a detailed answer to this question in this paper. Let us already paraphrase two special cases of the main statements (which are summarised in section~\ref{sec:intro:summary}). Statement \textit{\textbf{1)}} below corresponds to Theorems~\ref{thm:intro:gtl},~\ref{thm:intro:gnl}, and Statement \textit{\textbf{2)}} corresponds to Theorem~\ref{thm:intro:moreg}.
\begin{figure}[htbp]
\floatbox[{\capbeside\thisfloatsetup{capbesideposition={right,top},capbesidewidth=4.4cm}}]{figure}[\FBwidth]
{\caption{Schematic depiction of the data setup considered in \textbf{\textit{1)}}: We consider polynomially decaying data on a spherically symmetric timelike hypersurface $\Gamma$, and vanishing data on $\mathcal I^-$. The latter condition is to be thought of as the no incoming radiation condition. }\label{fig:III:-1}}
{\includegraphics[width=200pt]{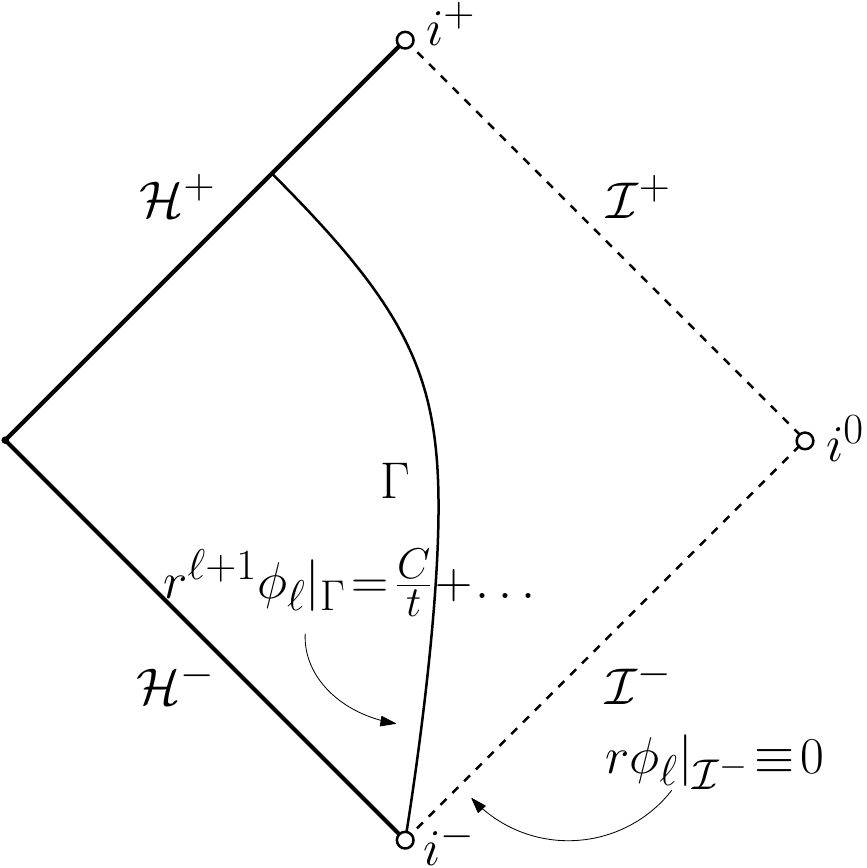}}
\end{figure}
\par\smallskip\noindent
\centerline{\begin{minipage}{0.9\textwidth}
\textit{\textbf{1)} Consider solutions $\phi_\ell$ to \eqref{waveequation} arising from 
 polynomially decaying data ${r^{\ell+1}}\phi_\ell\sim |t|^{-1}$ as $t\to-\infty$ on a spherically symmetric timelike hypersurface $\Gamma$ and the no incoming radiation condition on $\mathcal I^-$. 
 (See Figure~\ref{fig:III:-1}.) 
 Then, schematically, $r\phi_\ell|_{\mathcal I^+}\sim |u|^{-\ell-1}$ along $\mathcal I^+$ as $u\to-\infty$, and the asymptotic expansion of $\pv(r\phi_\ell)$ along outgoing null hypersurfaces of constant $u$ near spacelike infinity $i^0$ reads:
\begin{equation}\label{eq:intro:scheme1}
\pv(r\phi_\ell)=\frac{f_0(u)}{r^2}+\dots+\frac{f_\ell(u)}{r^{2+\ell}}+C\frac{\log r}{r^{3+\ell}}+\dots,
\end{equation}
where $C$ is a non-vanishing constant.
} \end{minipage}}
\par\smallskip
\begin{figure}[htbp]
\floatbox[{\capbeside\thisfloatsetup{capbesideposition={right,top},capbesidewidth=4.4cm}}]{figure}[\FBwidth]
{\caption{Schematic depiction of the data setup considered in \textbf{\textit{2)}}: We consider polynomially decaying data on a spherically symmetric ingoing null hypersurface $\mathcal C_{\mathrm{in}}$, and vanishing data on the part of $\mathcal I^-$ that lies to the future of $\mathcal C_{\mathrm {in}}$. The latter condition is to be thought of as the no incoming radiation condition.}\label{fig:III:0}}
{\includegraphics[width=200pt]{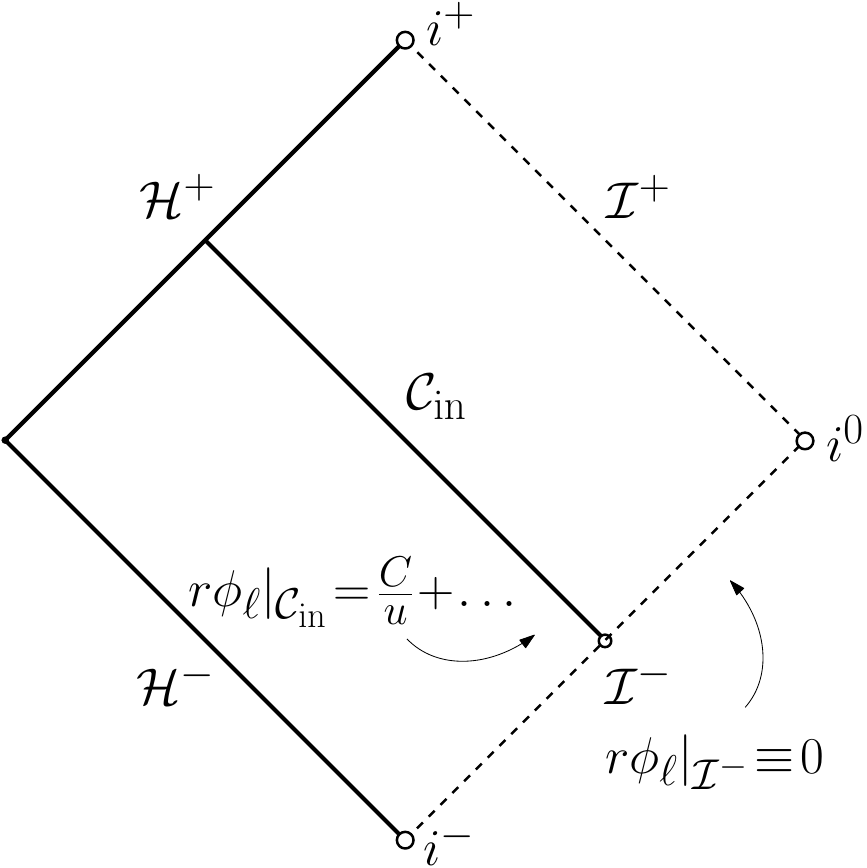}}
\end{figure}
\par\smallskip\noindent
\centerline{\begin{minipage}{0.9\textwidth}
\textit{\textbf{2)}
Alternatively, consider solutions $\phi_\ell$ to \eqref{waveequation} arising from 
 polynomially decaying data $r\phi_\ell\sim |u|^{-1}$ as $u\to-\infty$ on a null hypersurface $\mathcal C_{\mathrm{in}}$ and the no incoming radiation condition. (See Figure~\ref{fig:III:0}.)
  Then, schematically, $r\phi_\ell|_{\mathcal I^+}\sim |u|^{-\min(\ell+1,2)}$ as $u\to-\infty$, and the asymptotic expansion of $\pv(r\phi_\ell)$ along outgoing null hypersurfaces of constant $u$ near spacelike infinity $i^0$ reads:
 \begin{equation}\label{eq:intro:scheme2}
\pv(r\phi_\ell)=\frac{f_0(u)}{r^2}+C\frac{\log r}{r^{3}}+\dots,
\end{equation}
unless $\ell=1$, in which case we instead have that
\begin{equation}\label{eq:intro:scheme3}
\pv(r\phi_\ell)=\frac{f_0(u)}{r^2}+\frac{f_1(u)}{r^3}+C\frac{\log r}{r^{4}}+\dots.
\end{equation}
In both cases, $C$ is a generically non-vanishing constant.
}
\end{minipage}}
\par\smallskip
By incorporating an $r^\ell$-weight into the boundary data assumption (namely  $r^{\ell+1}\phi_\ell|_{\Gamma}\sim |t|^{-1}$), we phrased statement \textit{\textbf{1)}} in such a way as to be independent of the behaviour of the area radius $r$ on $\Gamma$:
Independently of whether $r$ is constant along $\Gamma$ or divergent (e.g.\ $r|_{\Gamma}\sim|t|$),    
the $|t|^{-1}$-decay of $r^{\ell+1}\phi$ on $\Gamma$ translates into $|u|^{-\ell-1}$ decay of $r\phi_\ell$ near $\mathcal I^-$, causing the logarithmic term in \eqref{eq:intro:scheme1} to appear $\ell$ orders later than in \eqref{eq:intro:scheme2}.



The difference between \eqref{eq:intro:scheme2} and \eqref{eq:intro:scheme3}, on the other hand, is a manifestation of certain cancellations that happen if $r\phi_\ell\sim |u|^{-\ell}$ on $\mathcal C_{\mathrm{in}}$. 
Similar cancellations are responsible for $r\phi_\ell$ decaying faster on $\mathcal I^+$ than on $\mathcal C_{\mathrm{in}}$ in case \textbf{\textit{2)}}.
These cancellations, together with the precise and more general versions of the above statements, will be discussed in detail in section~\ref{sec:intro:summary} below, see also Remark~\ref{rem:intro:cancel}. 

Let us finally remark that, even though higher $\ell$-modes thus decay slower than the spherically symmetric mode near spacelike infinity, we still expect the leading-order asymptotics near \textit{future timelike infinity} $i^+$ to be dominated by the spherically symmetric mode in the two data setups described above, see also~\cite{II} and~\cite{AAG21}. 
However, in the case of smooth compactly supported scattering data on $\mathcal I^-$ and the past event horizon $\mathcal H^-$, it turns out that all $\ell$-modes can be expected to have the same decay along $\mathcal I^+$ as $i^+$ is approached.
 We will discuss this in detail in section~\ref{sec:intro:NP}, see already Figures~\ref{fig:III:1}--\ref{fig:III:3}.

\subsection{Summary of the main results}\label{sec:intro:summary}
We now give a summary of the main theorems obtained in this paper. They are all stated with respect to Eddington--Finkelstein double null coordinates $(u,v)$ ($\pv r=1-\frac{2M}{r}=-\pu r$).
Let's first focus on solutions to \eqref{waveequation} supported on a single $\ell=1$-frequency.
\subsubsection{The case \texorpdfstring{$\ell=1$}{L=1}}\label{sec:1.2.1}
Let $\Gamma\subset \mathcal{M}_M$ be a spherically symmetric, past-complete timelike hypersurface of constant area radius function $r=R>2M$.\todo{Necessary to have this restriction?}\footnote{In fact, the theorem below also applies to spherically symmetric hypersurfaces $\Gamma$ on which $r$ is allowed to vary and, in particular, tend to infinity. We will show this in the main body of the paper.}
Let $\ell=1$ and $|m|\leq 1$, and prescribe on $\Gamma$ smooth boundary data for $\phi_{\ell=1}=\phi_1 \cdot Y_{1m}$ that satisfy, as $u\to-\infty$,\todo{Explain O}
\begin{equation}\label{eq:intro:ass1}
\left|r^2\phi_1|_{\Gamma}-\frac{C_\Gamma}{|u|}\right|=\mathcal{O}_5(|u|^{-1-\epsilon})
\end{equation}
for some constant $C_\Gamma$ and for some $\epsilon\in(0,1)$.
Moreover, prescribe in  a limiting sense that
\begin{equation}
\label{eq:intro:nir}
\lim_{u\to-\infty}\pv^n(r\phi_1)(u,v)=0, \quad n=0,\dots, 5
\end{equation}
for all $v\in \mathbb R$. 
We interpret this as the condition of no incoming radiation from $\mathcal I^-$.
We then prove the following theorem, in its rough form (see Theorem~\ref{thm:tl} for the precise version):
	\begin{thm}\label{thm:intro:tl}
	Given smooth boundary data satisfying \eqref{eq:intro:ass1}, there exists a unique smooth (finite-energy) solution to \eqref{waveequation} (restricted to the $(1,m)$-angular frequency) in the domain of dependence of $\Gamma\cup \mathcal{I}^-$ that restricts correctly to these data and satisfies \eqref{eq:intro:nir}. Moreover, this solution satisfies along any spherically symmetric ingoing null hypersurface:
	\begin{align}
	\lim_{u\to-\infty}r^2\pu(r\phi_1)(u,v)&=0,\\
	\lim_{u\to-\infty}r^2\pu(r^2\pu(r\phi_1))(u,v)&\equiv \ILpn{1}[\phi],
	\end{align}
	where $\ILpn{1}[\phi]$ is a constant which is non-vanishing as long as $C_\Gamma$ is non-vanishing and $R/2M$ is sufficiently large, and we further have that
	\begin{align}
	\left|r^2\pu(r^2\pu(r\phi_1))(u,v) - \ILpn{1}[\phi_1]\right|=\mathcal{O}(\max(r^{-1}, |u|^{-\epsilon})).
	\end{align}
	In particular, $r\phi_1$ decays like $u^{-2}$ towards $\mathcal I^-$.
	\end{thm}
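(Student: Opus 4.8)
\emph{Sketch of a proof.} The plan is to treat the assertion as a scattering problem for the reduced $1{+}1$-dimensional equation: obtain the solution by a limiting procedure backed by weighted a priori estimates, and then read off the two charges at $\iminus$ from the structure of the $1/r$-expansion of $r\phi_1$ there, using the explicitly solvable Minkowskian problem to identify the constant. First I would reduce: writing $\phi_1=\phi_1(u,v)\,Y_{1m}$ and $\psi:=r\phi_1$, the wave equation becomes $\pu\pv\psi=-V(r)\,\psi$ with $V(r)=\big(1-\tfrac{2M}{r}\big)\big(\tfrac{2}{r^{2}}+\tfrac{2M}{r^{3}}\big)$, a smooth positive potential on $\{r>2M\}$ satisfying $V(r)=\tfrac{2}{r^{2}}+O(r^{-3})$ and $\int^{\infty} r^{2}V(r)\,\tfrac{\dd r}{1-2M/r}<\infty$; since $r_\ast=v-u+\text{const}$, one has $r\sim|u|$ along the ingoing cones $\{v=\text{const}\}$ as $u\to-\infty$. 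It suffices to work in $\{u\le U_0\}\cap\{r\ge R\}$, where additionally $r\gtrsim|u|$, the complement $\{u\ge U_0\}\cap\{r\ge R\}$ being a standard mixed characteristic--timelike problem once data on $\{u=U_0\}$ have been produced.

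For existence and uniqueness I would approximate $\iminus$ by the ingoing null cones $\mathcal C_n=\{u=u_n\}$, $u_n\to-\infty$, solving the well-posed problem with $\psi\equiv0$ on $\mathcal C_n$ and the prescribed data on $\Gamma$ (cutting the latter off near $u_n$ so that the corner $\mathcal C_n\cap\Gamma$ is compatible), and passing to the limit using $n$-uniform weighted estimates. Granted these, $\psi^{(n)}$ converges locally uniformly with all derivatives, the limit solves $\pu\pv\psi=-V\psi$ and restricts correctly to $\Gamma$, and the weighted bounds force $\pv^{k}\psi\to0$ as $u\to-\infty$ for $k=0,\dots,5$, i.e.\ the no incoming radiation condition. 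Uniqueness within the class of such solutions follows from a standard energy identity on $\{u\ge u_\ast\}$ whose $\iminus$-boundary term vanishes as $u_\ast\to-\infty$ by the NIR decay (equivalently, by Gr\"onwall applied to the Volterra-type identity $\pv\psi(u,v)=-\int_{-\infty}^{u}V(u',v)\psi(u',v)\,\dd u'$ together with the relation that fixes $\psi,\pu\psi,\pv\psi$ on $\Gamma$).

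The core is a weighted bootstrap. Integrating $\pu\pv\psi=-V\psi$ in $v$ off $\Gamma$ and in $u$ from $\iminus$ (using NIR), and feeding in the $\mathcal O_5(|u|^{-1-\epsilon})$-decay of the data, one closes estimates adapted to the rates $|\psi|\lesssim|u|^{-1}$ near $\Gamma$ and $|\psi|\lesssim|u|^{-2}$ as $r\to\infty$ --- this \emph{one-power gain}, a consequence of the integrability of $V\sim 2/r^{2}$ along ingoing cones, is the crucial point. One then extracts the expansion $\psi(u,v)=c_0(u)+r^{-1}c_1(u)+O(r^{-2})$ valid near $\iminus$, where NIR gives $c_0(-\infty)=0$, the equation forces the recursion $c_1'=2c_0+(\text{l.o.t.})$ (and the $r^{-2}$-coefficient $c_2$ to be constant), and matching to $\Gamma$ forces $c_1(u)$ to decay at the data rate $|u|^{-1}$, so that $c_0=\tfrac12c_1'+\dots=O(|u|^{-2})$. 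Differentiating, one obtains near $\iminus$ that $r^{2}\pu(r\phi_1)=c_0'(u)\,r^{2}+2c_0(u)\,r+c_1(u)+O(r^{-1})\longrightarrow0$ and $r^{2}\pu\big(r^{2}\pu(r\phi_1)\big)=c_0''(u)\,r^{4}+o(1)\longrightarrow\ILpn{1}[\phi]:=\lim_{u\to-\infty}c_0''(u)\,|u|^{4}$, with error $O\big(\max(r^{-1},|u|^{-\epsilon})\big)$; the limit is $v$-independent since $\pv$ of each of these quantities is $O(r^{-1})\cdot(\text{bounded})\to0$ at $\iminus$. (Here $c_2$ --- a Newman--Penrose-type constant --- vanishes for this data: NIR, finite energy, and the decay of the $\Gamma$-data force the solution's ingoing part to be trivial at leading order, equivalently $c_2=0$; moreover the logarithmic terms of \eqref{eq:intro:scheme1} first enter $r\phi_1$ at order $r^{-3}\log r$ and are annihilated to leading order by $r^{2}\pu(r^{2}\pu(\,\cdot\,))$, so they affect neither charge.)

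Finally, to see $\ILpn{1}[\phi]\ne0$ for $R/2M$ large I would compute it for $M=0$: the general $\ell=1$ Minkowskian solution is $\psi=\tfrac12\big(B'(v)-A'(u)\big)-\tfrac{A(u)+B(v)}{v-u}$, and NIR together with finite energy force $B'\equiv0$, reducing it to $\psi=-\tfrac12\tilde A'(u)-\tfrac{\tilde A(u)}{v-u}$; the boundary condition $\psi|_{r=R}=\tfrac{C_\Gamma}{R|u|}+\mathcal O(|u|^{-1-\epsilon})$ then becomes the linear ODE $\tilde A'+\tfrac2R\tilde A=-\tfrac{2C_\Gamma}{R|u|}+\dots$, whose decaying solution satisfies $\tilde A(u)=-C_\Gamma|u|^{-1}+O(|u|^{-1-\epsilon})$, hence $c_0=-\tfrac12\tilde A'=\tfrac{C_\Gamma}{2|u|^{2}}+\dots$ and $\ILpn{1}[\phi]=3C_\Gamma$. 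On Schwarzschild the same charge differs from this value only by terms carrying positive powers of $M/R$, so $\ILpn{1}[\phi]=3C_\Gamma+O(M/R)\ne0$ once $R/2M$ is sufficiently large. The main obstacle I anticipate is the sharp weighted a priori estimate --- in particular establishing the one-power decay gain for $\psi$ between $\Gamma$ and $\iminus$ that produces the $|u|^{-2}$ rate, and controlling $r^{2}\pu\big(r^{2}\pu(r\phi_1)\big)$ uniformly up to $\iminus$ while checking that the Schwarzschild corrections do not contaminate the orders $r^{-2},r^{-3}$ governing the two charges; a secondary difficulty is making the Minkowskian comparison robust enough to cover hypersurfaces $\Gamma$ on which $r$ is non-constant, as the footnote claims.
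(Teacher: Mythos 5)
Your overall architecture (finite approximations with vanishing data on ingoing cones $u=u_n$, uniform weighted estimates, a $1/r$-expansion of $r\phi_1$ near $\iminus$, and a Minkowskian identification of the constant) matches the paper's, and the explicit $M=0$ computation giving $\ILpn{1}[\phi]=3C_\Gamma$ is an attractive way to see the non-vanishing (the paper instead propagates a lower bound for $r^2\pv T(r\phi_1)$ off $\Gamma_R$ and uses the algebraic relation $\lim r^2\pu(r^2\pu\pho)=-3\lim|u|^2r^2\pv T\pho$). However, the core of the proof --- the ``one-power gain'' that you yourself flag as the crucial point --- is missing, and the mechanism you offer for it (``integrability of $V\sim 2/r^2$ along ingoing cones'') does not suffice. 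If you run the direct iteration on $\pu\pv(r\phi_1)=-V\,r\phi_1$, bootstrapping $|r\phi_1|\le C_{\mathrm{BS}}(r|u|)^{-1}$, then integrating in $u$ from $\iminus$ gives $|\pv(r\phi_1)|\lesssim C_{\mathrm{BS}}\,r^{-2}|u|^{-1}$, and integrating that in $v$ from $\Gamma_R$ returns only $|r\phi_1|\lesssim (C_{\mathrm{data}}+C_{\mathrm{BS}})(R|u|)^{-1}$: the boundary term dominates and the factor $R^{-1}$ never upgrades to $r^{-1}$, so the bootstrap does not close at the sharp rate and you never obtain the $|u|^{-2}$ decay near $\iminus$ --- hence neither $\lim r^2\pu(r\phi_1)=0$ nor the finiteness of $\ILpn{1}[\phi]$. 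The gain is not a statement about $V$ alone; it comes from commuting with the Killing field $T$ and exploiting that the \emph{$T$-derivative} of the data already decays like $|u|^{-2}$ on $\Gamma$. The paper converts this into $r$-decay either via the identities $\pu(r^{-2}\pv(r^2\phi_1))=\frac{DT(r\phi_1)}{r^2}-\frac{8MD}{r^4}r\phi_1$ and $\pu(r^2\pv(r\phi_1))=-2D\pv(r^2\phi_1)-6MD\phi_1$ (the ``stationary solution $\phi_1\sim r^{-2}$'' structure), or, at top order in $T$, via a $T$-energy estimate on $\Gamma_R$ whose flux is $\sim T^{j+1}\phi_1\cdot\pv T^j\phi_1$ and therefore costs only a square root of the bootstrapped quantity; the resulting boundary control of $\Phi=r^2\pv(r\phi_1)-Mr\phi_1$ is then propagated with the approximate conservation law for $\pu(r^{-2}\pv\Phi)$. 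Without some such ingredient your a priori estimate is not merely ``left to the reader'': the route as described provably falls short of it.

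Two smaller points. First, $r^2\pu(r^2\pu\pho)$ is not $c_0''r^4+o(1)$: the contributions $-2Dc_0'r^3$, $c_1''r^3$, $-Dc_1'r^2$ and the derivative of the $O(1)$ coefficient are all $O(1)$ in the limit and must be summed (for $M=0$ they combine to $6\lim|u|^2c_0=3C_\Gamma$, consistent with your answer, but this bookkeeping is exactly the content of the paper's Proposition 5.8 and its recursion among the limits $\lim|u|^{j+1}rT^j(r\phi_1)$). Second, to show that the limit $\ILpn{1}[\phi]$ \emph{exists} --- rather than just $|r^2\pu(r^2\pu\pho)|\lesssim1$ --- one needs quantitative convergence of the coefficients $c_0,c_1$ as $u\to-\infty$, which the paper extracts from the auxiliary $\epsilon$-improved estimates on the differences $r\phi_1-|u|T(r\phi_1)$; your sketch assumes the asymptotics of $c_0,c_1$ rather than deriving them.
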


The next theorem translates these results into logarithmic asymptotics along outgoing null hypersurfaces in a neighbourhood of spacelike infinity.
Let $\mathcal{C}_{\mathrm{in}}$ be a spherically symmetric, past-complete ingoing null hypersurface (e.g.\ $v=1$). 
Prescribe on $\mathcal{C}_{\mathrm{in}}$ smooth data for $\phi_{\ell=1}=\phi_1 \cdot Y_{1m}$ that satisfy
\begin{align}\label{eq:intro:ass2}
\lim_{u\to-\infty}r^2\pu(r\phi_1)&=\cc,\\
\left|r^2\pu(r^2\pu(r\phi_1))-\ccc\right|&=\mathcal{O}(|u|^{-\epsilon}),\label{eq:intro:ass3}
\end{align}
for some constants $\cc$, $\ccc$ and for some $\epsilon\in(0,1)$. 
Moreover, prescribe equation \eqref{eq:intro:nir} to hold in a limiting sense to the future of $\mathcal C_{\mathrm{in}}$ for $n=0,1,2$. \todo{Requires less regularity here...} Then we have (see Theorem~\ref{thm:nl} for the precise version):
	\begin{thm}\label{thm:intro:nl}
	Given smooth data satisfying \eqref{eq:intro:ass2} and \eqref{eq:intro:ass3}, there exists a unique smooth solution to \eqref{waveequation} (restricted to the $(1,m)$-angular frequency) in the domain of dependence of $\mathcal{C}_{\mathrm{in}}\cup \mathcal{I}^-$ that restricts correctly to these data and satisfies \eqref{eq:intro:nir}.
	 Moreover, this solution satisfies, for sufficiently large negative values of $u$, the following asymptotics as $\mathcal{I}^+$ is approached along any outgoing spherically symmetric null hypersurface:
	\begin{align}\label{eq:1.14}
	\begin{split}
	r^2\pv\pho(u,v)=&-\cc-2\int_{-\infty}^uF(u')\dd u'-\frac{2M\cc-2M\int_{-\infty}^uF(u')\dd u'}{r}\\
	&-2M(\ccc-2M\cc)\frac{\log r-\log|u|}{r^2}+\mathcal{O}(r^{-2}),
	\end{split}
	\end{align}
	where $F(u)$ is given by the limit of the radiation field $r\phi_1$ on $\mathcal{I}^+$,
	\begin{equation}\label{eq:1.15}
	F(u):=\lim_{v\to\infty}r\phi_1(u,v)=\frac{\ccc-2M\cc}{6|u|^2}+\mathcal{O}(|u|^{-2-\epsilon}).
	\end{equation}
	The asymptotics of $r\phi_1$ near $\mathcal I^+$ can be obtained by integrating $\pv(r\phi_1)$ from $\mathcal I^+$ and combining \eqref{eq:1.14} and \eqref{eq:1.15}.
	In particular, if $M(\ccc-2M\cc)\neq 0$, then peeling fails at future null infinity.
	\end{thm}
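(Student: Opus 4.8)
The plan is to prove Theorem~\ref{thm:intro:nl} in three stages — constructing the scattering solution, propagating the $\mathcal C_{\mathrm{in}}$-data into a neighbourhood of $i^0$ (computing the radiation field $F$ on the way), and expanding $\pv(r\phi_1)$ near $\iplus$ — from which the failure of peeling is immediate. \emph{Stage 1 (well-posedness).} I would obtain the solution by the limiting procedure of~\cite{Kerrburger}: solve a sequence of characteristic initial value problems with data on $\mathcal C_{\mathrm{in}}$ and on ingoing cones approaching $\iminus$, on which the no incoming radiation condition is imposed to finite order, and pass to the limit using uniform $r^p$-weighted energy estimates and the $v$-integrated form of the equation; this gives a solution in $D(\mathcal C_{\mathrm{in}}\cup\iminus)$ with $\lim_{u\to-\infty}\pv^n(r\phi_1)=0$, and uniqueness follows from the same estimates. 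Being the linear, fixed-$\ell$ analogue of~\cite{Kerrburger}, this stage should present no real difficulty.

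\emph{Stage 2 (asymptotics near $i^0$).} Write $\psi:=r\phi_1$, so the equation reads $\pu\pv\psi=-\mathcal V_1(r)\psi$ with $\mathcal V_1=(1-\tfrac{2M}{r})(\tfrac{2}{r^2}+\tfrac{2M}{r^3})$. Using \eqref{eq:intro:ass2}--\eqref{eq:intro:ass3}, the no incoming radiation condition and commuting with $r^2\pu$, I would first show $\psi\to0$ on $\iminus$ and that $\lim_{u\to-\infty}r^2\pu\psi$ and $\lim_{u\to-\infty}r^2\pu(r^2\pu\psi)$ exist, are independent of $v$, and equal $\cc$ and $\ccc$ (conservation laws along $\iminus$); here the error terms produced by the commutation must be shown integrable and vanishing as $u\to-\infty$, which is exactly where the $\ell=1$ borderline behaviour of Theorem~\ref{thm:intro:tl} is felt. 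Feeding these back in and solving $\pu\pv\psi=-\mathcal V_1\psi$ order by order in $1/r$ in a neighbourhood of spacelike infinity then yields the asymptotic expansions of $\psi$, $\pu\psi$, $\pv\psi$ there, and letting $v\to\infty$ gives the radiation field and \eqref{eq:1.15} — the combination $\ccc-2M\cc$ arising because the $\tfrac{2M}{r^3}$-part of $\mathcal V_1$ and the logarithmic discrepancy between $r$ and $r_*$ along $\mathcal C_{\mathrm{in}}$ contribute a relative $-2M\cc$ on top of the Minkowskian $\ccc$.

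\emph{Stage 3 (asymptotics near $\iplus$; peeling).} Fixing $u=U_0$ sufficiently negative and working along $\{u=U_0\}$, I posit the ansatz
\[
\psi = F(u)+\frac{\psi_1(u)}{r}+\frac{\psi_2(u)}{r^2}+\frac{\psi_3(u)+g_3\log r}{r^3}+\cdots,
\]
substitute into $\pu\pv\psi=-\mathcal V_1\psi$ in $(u,r)$-variables (using $\pv=(1-\tfrac{2M}{r})\partial_r$, $\pu|_v=\pu|_r-(1-\tfrac{2M}{r})\partial_r$), and match powers of $1/r$. This forces the transport equations $\psi_1'=2F$, $\psi_2'=MF$, $\psi_3'=4M\int_{-\infty}^uF$ and $g_3\equiv\mathrm{const}$, with the integration constants and the value of $g_3$ pinned down by matching to the $i^0$-expansion of Stage 2 — in particular $\psi_1(-\infty)=\cc$, $\psi_2(-\infty)=2M\cc$ (this last relation being exactly what keeps the source of the $\psi_3$-equation borderline rather than power-law), and $g_3=\tfrac{2M}{3}(\ccc-2M\cc)$. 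Since $r^2\pv\psi=-\psi_1+\tfrac{2M\psi_1-2\psi_2}{r}+\tfrac{(\cdots)-3g_3\log r}{r^2}+\cdots$, substituting $\psi_1=\cc+2\int_{-\infty}^uF$ and $\psi_2=2M\cc+M\int_{-\infty}^uF$ recovers the first two lines of \eqref{eq:1.14}; and because the source $4M\int_{-\infty}^uF$ of the $\psi_3$-equation decays only like $|u|^{-1}$ (as $F\sim|u|^{-2}$), $\psi_3(u)$ picks up a $\log|u|$-term, which combines with $-3g_3\log r$ to produce exactly $-2M(\ccc-2M\cc)\tfrac{\log r-\log|u|}{r^2}$. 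To make this rigorous I would control the remainder left after truncating the ansatz via an $r^p$-weighted/Grönwall argument in $\{u\le U_0\}$ built on Stages 1--2, so that the $\O(|u|^{-\epsilon})$ losses in \eqref{eq:1.15} affect only the remainder. Finally, if $M(\ccc-2M\cc)\neq0$, the $\tfrac{\log r}{r^2}$-coefficient of $r^2\pv(r\phi_1)$ is non-zero, so $\pv(r\phi_1)$ admits no expansion in integer powers of $1/r$ along $\{u=U_0\}$, i.e.\ peeling fails at $\iplus$.

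The main obstacle is the interface between Stages 2 and 3: the $i^0$-asymptotics must be carried out to $\iplus$ with remainders genuinely below the $\tfrac{\log r}{r^2}$ level, \emph{uniformly as $u\to-\infty$}, while tracking both the $\log r$ generated by the $1/r$-hierarchy and the $\log|u|$ hidden in the $u\to-\infty$ behaviour of $\psi_3$, without drowning them in the $\O(|u|^{-\epsilon})$ errors. Essentially all the content of the theorem lies in the precise bookkeeping of which conserved quantity ($\cc$ or $\ccc$) feeds which transport equation with which numerical coefficient — which is also where the $\ell=1$ borderline phenomena flagged around Theorem~\ref{thm:intro:tl} resurface.
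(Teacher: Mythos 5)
Your proposal follows the same two-stage architecture as the paper's proof (\S\ref{sec:nl}): first the asymptotics of $r\phi_1$ near $i^0$ via the $v$-direction commuted equation and conservation of $\cc$, $\ccc$ along $\iminus$, yielding \eqref{eq:1.15}; then the expansion of $\pv\pho$ near $\iplus$. Your formal bookkeeping is correct: the transport equations $\psi_1'=2F$, $\psi_2'=MF$, $\psi_3'=8M\psi_1-4\psi_2=4M\int_{-\infty}^uF$, the matching values $\psi_1(-\infty)=\cc$, $\psi_2(-\infty)=2M\cc$, and $g_3=\tfrac{2M}{3}(\ccc-2M\cc)$ all reproduce \eqref{eq:1.14}, including the mechanism whereby the $|u|^{-1}$ source of the $\psi_3$-equation generates the $\log|u|$ that pairs with $-3g_3\log r$.

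The genuine difference is in how Stage 3 is organised. You posit a truncated ansatz, read off transport equations by matching powers of $1/r$, and defer all rigour to a remainder estimate "via an $r^p$-weighted/Gr\"onwall argument" that you do not carry out. The paper instead never writes an ansatz: it integrates the commuted approximate conservation law \eqref{eq:l=1cons.law.in.u} (in the form \eqref{eq:nl:ualmost}, further commuted with $r^4$) in $u$ from $\iminus$, where the no-incoming-radiation condition kills the boundary term, and then in $v$ from $\iplus$. Each coefficient function then arrives already equipped with an explicit $\mathcal O$-error, controlled by the elementary integral estimates \eqref{eq:nl:stupidintegral1.5} and \eqref{eq:nl:stupidintegral2}; in particular the uniformity in $u$ that you correctly flag as the main obstacle (keeping the remainder below $\log r/r^2$ while $\psi_3\sim\log|u|$) is handled automatically because the $\log(v-u)-\log|u|$ structure is produced and propagated exactly by those integrals. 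So the two routes compute the same hierarchy — your transport equations are precisely the $u$-integrated forms of the paper's commuted equations — but the paper's ordering of the integrations is what supplies the error control that your proposal only asserts. That unexecuted remainder estimate is the one substantive gap; everything else (including citing scattering theory rather than a limiting construction for Stage 1, which is what the paper does) is sound.
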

Theorem~\ref{thm:intro:nl} applies to the solution of Theorem~\ref{thm:intro:tl}, with $\cc=0$ and $\ccc=\ILpn1[\phi]$.
\begin{rem}
Let us already make the following observation: We recall from section~\ref{sec:subsec:intro:sphsym} that, in the spherically symmetric case ($\ell=0$) studied in~\cite{I}, the initial $u$-decay of $r\phi_0$ was transported all the way to $\mathcal{I}^+$, that is, we had that $\lim_{v\to\infty}r\phi_0(u,v)\sim|u|^{-1}$.
 This fact was closely related to the approximate conservation law satisfied by $\pu(r\phi_0)$.\footnote{Recall that $\pv\pu(r\phi_0)= -2M(1-\frac{2M}{r})\cdot\frac{r\phi_0}{r^3}$.} For $\ell=1$, we see that this is no longer the case: The initial $|u|^{-1}$-decay of $r\phi_1$ translates into $|u|^{-2}$-decay on $\mathcal{I}^+$. 
  This improvement in the $u$-decay on $\mathcal I^+$ can be traced back to certain cancellations that happen if the $|u|$-decay of the data comes with a specific power: 
  In fact, notice from \eqref{eq:1.15} that if $C_{\mathrm{in}}^{(1)}=0$, the $u$-decay of $r\phi_1$ on $\mathcal I^+$ sees \textit{no improvement} over its initial decay.
We will understand these cancellations in more generality in the theorems below, see already equation \eqref{eq:intro:thma1} of Theorem~\ref{thm:intro:moreg} and the Remark~\ref{rem:intro:cancel}.
See also \S\ref{sec:4.4.3} for a schematic explanation of these cancellations.
\end{rem}

\subsubsection{The case of general \texorpdfstring{$\ell\geq 0$}{L>0}}
Let $\ell=L\in\mathbb N_0$ and $|m|\leq L$, let $\Gamma$ be as in \S\ref{sec:1.2.1}, and prescribe on $\Gamma$ smooth boundary data for $\phi_{\ell=L}=\phi_L \cdot Y_{Lm}$ that satisfy, as $u$ tends to $-\infty$,
\begin{equation}\label{eq:intro:ass4}
\left|r^{L+1}\phi_L|_{\Gamma}-\frac{C_\Gamma}{|u|}\right|=\mathcal{O}_{L+4}(|u|^{-1-\epsilon})
\end{equation}
for some constant $C_\Gamma$ and some $\epsilon\in(0,1)$, and prescribe again, in a limiting sense, that for all $v\in\mathbb R$: 
\begin{equation}\label{eq:intro:nirg}
\lim_{u\to-\infty}\pv^n(r\phi_L(u,v))=0,\quad n=0,\dots,L+4.
\end{equation}
Then we have (see Theorem~\ref{thm:gtl} for the precise version):
	\begin{thm}\label{thm:intro:gtl}
	Given smooth boundary data satisfying \eqref{eq:intro:ass4}, there exists a unique smooth solution to \eqref{waveequation} (restricted to the $(L,m)$-angular frequency) in the domain of dependence of $\Gamma\cup \mathcal{I}^-$ that restricts correctly to these data and satisfies \eqref{eq:intro:nirg}.
	 Moreover, this solution satisfies along any spherically symmetric ingoing null hypersurface:
	\begin{align}
	\lim_{u\to-\infty}(r^2\pu)^{L-j}(r\phi_L)(u,v)&=0,&&j=0,\dots,L,\\
	\lim_{u\to-\infty}(r^2\pu)^{L+1}(r\phi_L)(u,v)&\equiv \ILpn{L}[\phi],&&\label{eq:intro:thmtimelike2}
	\end{align}
	where $\ILpn{L}[\phi]$ is a constant which is non-vanishing as long as $C_\Gamma$ is non-vanishing and $R/2M$ is sufficiently large, and we further have that
	\begin{align}
	\left|(r^2\pu)^{L+1}(r\phi_1)(u,v) - \ILpn{L}[\phi]\right|=\mathcal{O}(\max(r^{-1},|u|^{-\epsilon})).\label{eq:intro:thmtimelike3}
	\end{align}
	In particular, $r\phi_\ell$ decays like $|u|^{-\ell-1}$ towards $\mathcal I^-$.
	\end{thm}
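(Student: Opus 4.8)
The plan is to reduce \eqref{waveequation} to a radial transport equation for $\psi_L:=r\phi_L$, namely $\pu\pv\psi_L+V_L\psi_L=0$ with potential $V_L:=(1-2M/r)\big(L(L+1)r^{-2}+2Mr^{-3}\big)$, to construct the solution together with a tower of $r$-weighted a priori estimates, and then to read off the claimed limits from a chain of $L+1$ renormalised transport identities near $\iminus$. For existence I would realise $\phi_L$ as a limit of solutions $\phi_L^{(n)}$ of the mixed characteristic initial--boundary value problems posed on the truncated regions $\{r\geq R,\ v\leq v_n\}$, carrying the prescribed Dirichlet data \eqref{eq:intro:ass4} on $\Gamma$ and vanishing data on the ingoing cones $\{v=v_n\}$, and then send $v_n\to\infty$. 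Energy estimates, together with the weighted estimates of the next step, give bounds uniform in $n$; a compactness/diagonal argument extracts the (smooth) limit, and one checks that the no incoming radiation condition \eqref{eq:intro:nirg} is inherited by it. Uniqueness is the usual statement: the difference of two solutions with the same $\Gamma$-data and both satisfying \eqref{eq:intro:nirg} has vanishing incoming radiation and vanishing boundary data, so a backward energy (or Morawetz) estimate---legitimate in the finite-energy class, which excludes spurious contributions entering from $i^0$---forces it to vanish. This runs parallel to the $\ell=0$ construction of \cite{Kerrburger}; the one structural change is that here $V_L=\O(r^{-2})$ rather than $\O(r^{-3})$, which is still short-range but is exactly what makes the hierarchy below $L+1$ steps long instead of one.

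For the asymptotics I would introduce the operator $\mathcal L:=r^2\pu$ and the quantities $\psi_L^{[k]}:=\mathcal L^k\psi_L$, $k=0,\dots,L+1$. Commuting the equation with $\mathcal L$ and using $\pv r=1-2M/r$ produces transport equations of the schematic form
\[
\pv\psi_L^{[k+1]}=\frac{2(1-2M/r)}{r}\,\psi_L^{[k+1]}+\big(\text{terms built from }\psi_L^{[0]},\dots,\psi_L^{[k]}\text{, with progressively better }r\text{-weights}\big),
\]
the first ("good") term being the one that propagates information along $v$ near $\iminus$. The point of the weight $r^2$ is that on ingoing cones near $\iminus$ one has $r\to\infty$ with $r\sim|u|$, so $\mathcal L\sim u^2\pu$ sends a profile $\sim|u|^{-p}$ to one $\sim|u|^{-(p-1)}$; starting from the behaviour $\psi_L\sim|u|^{-(L+1)}$ that the $\Gamma$-data forces, one therefore needs exactly $L+1$ applications of $\mathcal L$ to reach a finite, non-decaying quantity---which will be $\ILpn{L}[\phi]$.

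The analytic core is to close this hierarchy. I would first bootstrap a crude decay bound for $\psi_L$ near $\iminus$ from the $\Gamma$-data and \eqref{eq:intro:nirg} (e.g.\ by integrating $\pu\pv\psi_L=-V_L\psi_L$ in $u$ from $\iminus$, where \eqref{eq:intro:nirg} kills the boundary term, and feeding in the uniform energy bound). Then, inductively in $k$: integrating the transport equation for $\psi_L^{[k]}$ in $v$ (using the good term and the decay of the sources) shows that $\lim_{u\to-\infty}\psi_L^{[k]}(u,v)$, if it exists, is independent of $v$; integrating in $u$ along an ingoing cone from $\iminus$ shows it does exist, the source being integrable up to $\O(\max(r^{-1},|u|^{-\epsilon}))$---the $|u|^{-\epsilon}$ descending from the $\O_{L+4}(|u|^{-1-\epsilon})$ error in \eqref{eq:intro:ass4}. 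This yields $\lim_{u\to-\infty}\psi_L^{[k]}=0$ for $k\leq L$ and $=\ILpn{L}[\phi]$ for $k=L+1$, with the rate \eqref{eq:intro:thmtimelike3}. Integrating the chain $\mathcal L\psi_L^{[k]}=\psi_L^{[k+1]}$ back down in $u$ from $\iminus$---each integration against $r^{-2}\dd u\sim u^{-2}\dd u$ gaining a power of $|u|$---then propagates $\lim\psi_L^{[L+1]}=\ILpn{L}[\phi]$ into $\psi_L\sim\ILpn{L}[\phi]\,|u|^{-L-1}/(L+1)!$, i.e.\ $r\phi_L$ decays like $|u|^{-\ell-1}$. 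To see $\ILpn{L}[\phi]\neq0$ I would write it as $C_\Gamma$ times the coefficient of the linear map sending $\Gamma$-data to the $\iminus$-hierarchy, plus corrections generated by the bulk potential; letting $M/R\to0$ the corrections vanish, the solution degenerates to the explicit Minkowskian $\ell$-mode solution with no incoming radiation, for which this coefficient is an explicit non-zero multiple of $C_\Gamma$, and continuity in $M/R$ gives $\ILpn{L}[\phi]\neq0$ for $R/2M$ sufficiently large.

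I expect the main obstacle to be closing the hierarchy: producing enough a priori decay and regularity near $\iminus$ to even define $\psi_L^{[L+1]}=(r^2\pu)^{L+1}\psi_L$, and then carrying the sharp rate $\O(\max(r^{-1},|u|^{-\epsilon}))$ through $L+1$ nested integrations without losing a logarithm. Because $V_L$ is only $\O(r^{-2})$, at each level one must check that the good commutator term $\tfrac{2(1-2M/r)}{r}\psi_L^{[k+1]}$ genuinely dominates the potential-generated sources and that the borderline weights do not conspire to produce a $\log r$ already here---logarithms are meant to surface only one order later, at $r^{-3-\ell}\log r$ in $\pv(r\phi_\ell)$. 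Allowing $r$ to vary on $\Gamma$ then adds bookkeeping but no new difficulty once everything is phrased through $r^{L+1}\phi_L|_\Gamma$.
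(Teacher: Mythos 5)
There is a genuine gap, and it sits at exactly the point the paper identifies as the main difficulty of the timelike problem: your hierarchy $\psi_L^{[k]}=(r^2\pu)^k(r\phi_L)$ cannot be initialized anywhere. The commuted equation for $(r^2\pu)^k(r\phi_L)$ is a transport equation in the $\pv$-direction, so it must be integrated from a hypersurface on which $(r^2\pu)^k(r\phi_L)$ is already known. The no-incoming-radiation condition \eqref{eq:intro:nirg} supplies vanishing $\pv$-derivatives at $\iminus$ --- which initializes the $u$-direction transport equations for the \emph{outgoing} hierarchy $(r^2\pv)^k(r\phi_L)$, not yours --- and on $\Gamma$ the boundary data determine only tangential ($T$-)derivatives, not the transversal derivatives entering $(r^2\pu)^k$. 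Your fallback, ``integrating in $u$ along an ingoing cone from $\iminus$ shows [the limit] does exist,'' is circular: $\pu\psi_L^{[k]}=r^{-2}\psi_L^{[k+1]}$, so controlling $\psi_L^{[k]}$ by $u$-integration requires control one level \emph{up} the hierarchy, and the ascent never terminates without independent top-level control. Even your starting input ``the $\Gamma$-data forces $\psi_L\sim|u|^{-(L+1)}$'' is not a consequence of the data alone: the data give $|u|^{-1}$-decay with an $R^{-L}$-weight, and upgrading $R$-weights to genuine $r$- (hence, near $\iminus$, $|u|$-) weights is itself a nontrivial step (the paper's ``upwards-downwards induction''). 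Separately, your existence construction truncates in the wrong causal direction: data on the ingoing cone $\{v=v_n\}$ is invisible to the solution in $\{v<v_n\}$, whose causal past meets only $\Gamma$ and $\iminus$, so the truncated problems are not finite; one must instead truncate in $u$, replacing $\iminus$ by far-out outgoing cones $\mathcal{C}_{u=-k}$ carrying vanishing data.

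The paper's proof is built precisely to circumvent these obstructions: it works throughout with $[r^2\pv]^{L-j}T^i(r\phi_L)$, obtains the missing transversal boundary values on $\Gamma_R$ from a $T$-energy estimate whose boundary flux involves only a ``square root'' of the bootstrapped quantity (so the bootstrap closes), integrates the $u$-direction approximate conservation law \eqref{eq:gl:approx.u} from $\iminus$ where it is initialized for free, and only at the very end recovers $(r^2\pu)^{L+1}(r\phi_L)$ \emph{algebraically}, by expanding $(r^2T-r^2\pv)^{L+1}$ and expressing each term through the limits $\lim_{u\to-\infty}|u|^{1+i}r^j[r^2\pv]^{L-j}T^i(r\phi_L)$ --- whose existence in turn requires the auxiliary estimates on $r\phi_L-|u|T(r\phi_L)$, for which your sketch has no analogue. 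To salvage your route you would at minimum have to import that boundary-control machinery to determine $(r^2\pu)^k(r\phi_L)|_{\Gamma_R}$ up to $k=L+1$, at which point you have essentially reconstructed the paper's argument in a less convenient form.
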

\begin{rem}
Theorem~\ref{thm:intro:gtl} also applies to boundary data on more general spherically symmetric timelike hypersurfaces on which $r$ is allowed to tend to infinity. See also Theorems~\ref{thm:tlf},~\ref{thm:tlf2}. 

Moreover, the proof can also be applied to any inverse polynomial rate for the $|u|$-decay of the boundary data.
In fact, if $r|_{\Gamma}\to\infty$, one can more generally apply it to growing polynomial rates, $r^{L+1}\phi_L|_{\Gamma}\sim |u|^{-p}$ for some $p<0$, so long as the quantity $r\phi_L|_{\Gamma}$ itself is decaying.
 This leads to some obvious changes in equations \eqref{eq:intro:thmtimelike2}, \eqref{eq:intro:thmtimelike3}. (Schematically, if $r^{L+1}\phi_L|_{\Gamma}\sim |u|^{-p}$ along $\Gamma$, then $r\phi_L\sim |u|^{-L-p}$ along hypersurfaces of constant $v$.)
\end{rem}
\begin{rem}
Notice that the regularity required for the boundary data (eq.\ \eqref{eq:intro:ass4}), when restricted to $L=0$, is higher than that of~\cite{I}. This is because, for general $L\geq 0$, we need to work with certain energy estimates in order to obtain the sharp decay for transversal derivatives on $\Gamma$, which is not necessary for the $\ell=0$-mode.
\end{rem}
As before, the results of Theorem~\ref{thm:intro:gtl} translate into logarithmic asymptotics near spacelike infinity:
 Prescribe on $\mathcal{C}_{\mathrm{in}}$ smooth data for $\phi_{\ell=L}=\phi_L \cdot Y_{Lm}$ that satisfy
\begin{align}\label{eq:intro:ass5}
\lim_{u\to-\infty}(r^2\pu)^{L-j}(r\phi_L)(u,v)&=0,&&j=0,\dots,L,\\
\left|(r^2\pu)^{L+1}(r\phi_1)(u,v) - C^{(L,0)}_{\mathrm{in}}\right|&=\mathcal{O}(|u|^{-\epsilon})&&\label{eq:intro:ass6}
\end{align}
for some constant $C^{(L,0)}_{\mathrm{in}}$ and some $\epsilon\in(0,1)$,
and further prescribe equation \eqref{eq:intro:nirg} to hold in the future of $\mathcal C_{\mathrm{in}}$ for $n=0,\dots,L+1$. We prove the following theorem in its rough form (see Theorem~\ref{thm:gnl} for the precise version):
	\begin{thm}\label{thm:intro:gnl}
	Given smooth data satisfying \eqref{eq:intro:ass5} and \eqref{eq:intro:ass6}, there exists a unique smooth solution to \eqref{waveequation} (restricted to the $(L,m)$-angular frequency) in the domain of dependence of $\mathcal C_{\mathrm{in}}\cup \mathcal{I}^-$ that restricts correctly to these data and satisfies \eqref{eq:intro:nirg}. 
 Moreover, this solution satisfies, for sufficiently large negative values of $u$, the following asymptotics as $\mathcal{I}^+$ is approached along any outgoing spherically symmetric null hypersurface:
	\begin{align}
	\begin{split}
	r^2\pv(r\phi_L)(u,v)=\sum_{i=0}^L \frac{f_i^{(L)}(u)}{r^{i}}+\frac{ \ILlog{L}[\phi](\log r-\log|u|)}{r^{L+1}}+\mathcal{O}(r^{-L-1})
	\end{split}
	\end{align}
	where the $f_i^{(L)}$ are smooth functions of $u$ which satisfy $f_i(u)=\frac{\beta_i^{(L)}}{|u|^{L-i}}+\mathcal{O}(|u|^{-L+i-\epsilon})$ for some explicit numerical constants $\beta_i^{(L)}$, and $\ILlog{L}[\phi]$ is an explicit constant which can be expressed as a non-vanishing numerical multiple of $M$ and $C^{(L,0)}_{\mathrm{in}}$. 
	In addition, we have that
	\begin{align}
\lim_{v\to\infty} r\phi_L(u,v)=\frac{L!C^{(L,0)}_{\mathrm{in}}}{(2L+1)!|u|^{L+1}}+\mathcal{O}(|u|^{-L-1-\epsilon}).
\end{align}
The asymptotics or $r\phi_L$ near $\mathcal I^+$ can again be obtained by integrating $\pv(r\phi_L)$ from $\mathcal I^+$ and combining the above two estimates.
	\end{thm}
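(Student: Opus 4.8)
The plan is to project \eqref{waveequation} onto the angular frequency $\ell=L$, reduce to the associated $1{+}1$--dimensional wave equation, and then connect the prescribed behaviour at $\mathcal I^-$ to the asymptotics at $\mathcal I^+$ through a hierarchy of transport estimates. Writing $\psi:=r\phi_L$ and projecting onto $Y_{Lm}$ gives $\pu\pv\psi=-V_L\psi$ with $V_L=(1-\tfrac{2M}{r})\big(\tfrac{L(L+1)}{r^2}+\tfrac{2M}{r^3}\big)$, and all of the work takes place in a region $\mathcal D=\{v\ge1,\ u\le u_1\}$ with $|u_1|$ large; there $\pv r=1-\tfrac{2M}{r}>0$ forces $r\ge R_1\gg 2M$, so $|V_L|\lesssim r^{-2}\le R_1^{-2}$ is uniformly small, and the past boundary of $\mathcal D$ is exactly $\mathcal C_{\mathrm{in}}\cup(\mathcal I^-\cap\{v\ge1\})$ --- so solving \eqref{waveequation} under \eqref{eq:intro:ass5}--\eqref{eq:intro:ass6} and \eqref{eq:intro:nirg} is a characteristic problem with one boundary at infinity. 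For existence I would approximate $\mathcal I^-$ by the outgoing cones $\{u=u_n\}$, $u_n\to-\infty$, prescribe $\psi\equiv0$ on each of them, solve the ordinary characteristic initial value problem in $\{u_n\le u\le u_1,\ v\ge1\}$ with the given data on $\{v=1\}$, use the smallness of $V_L$ to derive weighted bounds uniform in $n$, extract a limit $\psi_\infty$, and check that it is independent of the approximation, restricts correctly to \eqref{eq:intro:ass5}--\eqref{eq:intro:ass6}, and satisfies \eqref{eq:intro:nirg} --- the last point because $\pv\psi_\infty(u,v)=-\int_{-\infty}^u V_L\psi_\infty\,\dd u'$ together with $|\psi_\infty|\lesssim|u'|^{-(L+1)}$ near $\mathcal I^-$ forces $\pv^n\psi_\infty\to0$. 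Uniqueness follows from a weighted energy estimate (equivalently, from iterating the double integral of $\pu\pv\psi=-V_L\psi$ using $\int_{\mathcal D}|V_L|\lesssim R_1^{-1}$). This is the construction of \cite{Kerrburger} and of the earlier sections with $L$--dependent weights, carried out simultaneously with the hierarchy estimates below (a bootstrap), so I would cite it and only spell out the modifications.

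Next I would set up the ingoing hierarchy near $\mathcal I^-$. Put $\psi_0:=\psi$, $\psi_{k+1}:=r^2\pu\psi_k$; by \eqref{eq:intro:ass5}--\eqref{eq:intro:ass6} one has $\psi_k\to0$ for $0\le k\le L$ and $\psi_{L+1}\to C^{(L,0)}_{\mathrm{in}}$ on $\mathcal C_{\mathrm{in}}$. Commuting the equation with $r^2\pu$ yields transport equations of the schematic form $\pv\psi_k=(\text{linear in the }\psi_j,\ j<k,\text{ with definite }r\text{--weights from }V_L)$ whose right--hand sides are $v$--integrable near $\mathcal I^-$; hence these limits propagate to all of $\mathcal D$ and sharpen to the quantitative bounds
\[
|\psi_k|\lesssim\max\!\big(|u|^{-(L+1-k)},\,r^{-1}\big)\quad(0\le k\le L),\qquad \big|\psi_{L+1}-C^{(L,0)}_{\mathrm{in}}\big|\lesssim\max\!\big(|u|^{-\epsilon},\,r^{-1}\big),
\]
the $\epsilon$--loss inherited from \eqref{eq:intro:ass6}. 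This is the $\ell=L$ analogue of the approximate conservation law for $r^2\pu(r\phi_0)$ of \cite{Kerrburger}: instead of a single almost--conserved quantity there is now a chain of $L{+}1$ hierarchically coupled ones, the coupling being dictated by the $L(L+1)/r^2$ part of $V_L$, which is where all the combinatorial (Legendre) structure enters.

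Then I would build the outgoing $1/r$--expansion near $\mathcal I^+$. Using $\pv r=1-\tfrac{2M}{r}$ to pass to an $(u,r)$--type description, the equation becomes a $\pu$--transport equation for $r^2\pv\psi$; substituting the ansatz
\[
r^2\pv\psi=\sum_{i=0}^{L}\frac{f_i^{(L)}(u)}{r^i}+\frac{g^{(L)}(u)\log r}{r^{L+1}}+\mathcal O(r^{-L-1})
\]
and matching powers of $r^{-i}$ and of $r^{-i}\log r$ gives, for $i=0,\dots,L$, transport ODEs $\tfrac{\dd}{\dd u}f_i^{(L)}=(\text{linear in }f_0,\dots,f_{i-1})+(\text{a term from }\lim_{v\to\infty}\psi)$, solvable order by order; at order $r^{-(L+1)}$ the matching leaves a residual source that is a constant multiple of $M$ (coming from the $2M$--corrections in $V_L$ and in $\pv r$) and cannot be absorbed into a pure power, forcing the $\log r$ with $g^{(L)}=\ILlog{L}[\phi]$ a nonzero multiple of $M$ times the relevant lower--order coefficient, while the accompanying $-\log|u|$ is the integration constant pinned down by matching the expansion to the (log--free) ingoing hierarchy of the previous step as $u\to-\infty$, where $r\sim|u|$. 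A further iteration using $|V_L|\lesssim R_1^{-2}$ shows the remainder is genuinely $\mathcal O(r^{-L-1})$, so that in particular $\lim_{v\to\infty}r^2\pv\psi=f_0^{(L)}(u)$. Finally, integrating the $f_i^{(L)}$--ODEs from $u=-\infty$ with ``initial data'' read off from the ingoing hierarchy gives $f_i^{(L)}(u)=\beta_i^{(L)}|u|^{-(L-i)}+\mathcal O(|u|^{-(L-i)-\epsilon})$ with $\beta_i^{(L)}$ explicit rationals (binomial sums from integrating $|u'|^{-m}$ against the transport kernels), whence $\ILlog{L}[\phi]=c_L\,M\,C^{(L,0)}_{\mathrm{in}}$ with $c_L\neq0$ after tracing $\beta_0^{(L)}$ through the chain back to $C^{(L,0)}_{\mathrm{in}}$; and integrating the equation first in $u$ from $\mathcal I^-$ (where $\pv\psi$ vanishes) and then in $v$ from $\mathcal C_{\mathrm{in}}$, against the hierarchy bounds, yields $\lim_{v\to\infty}r\phi_L=\tfrac{L!\,C^{(L,0)}_{\mathrm{in}}}{(2L+1)!\,|u|^{L+1}}+\mathcal O(|u|^{-L-1-\epsilon})$, the combinatorial factor being the output of the iterated integrations against the Legendre--type kernels.

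The hard part is the combined bookkeeping of the last two steps: propagating the $(L{+}1)$--level hierarchy uniformly all the way up to $i^0$ with the sharp rates and the correct $\epsilon$--losses, and then extracting the \emph{precise} constants --- the $\beta_i^{(L)}$, the nonvanishing multiple $c_L$ of $MC^{(L,0)}_{\mathrm{in}}$, and the factor $\tfrac{L!}{(2L+1)!}$ on $\mathcal I^+$ --- from the chain of coupled transport ODEs. This forces one to keep the two parts of $V_L$ separate throughout: the $L(L+1)/r^2$ term, which governs the hierarchy and produces the factorial/Legendre factors, and the $2M/r^3$ term (together with the $2M/r$ in $\pv r$), which alone sources the logarithm --- so that $\ILlog{L}[\phi]$ vanishes when $M=0$, consistently with exact peeling on Minkowski space. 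The existence and uniqueness, by contrast, are essentially routine once one is in the small--potential region $\mathcal D$ and has the scattering construction for the spherically symmetric mode of \cite{Kerrburger} at hand.
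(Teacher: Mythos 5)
Your overall strategy is aligned with the paper's: a hierarchy of $(r^2\pu)$-commuted quantities propagated from $\mathcal C_{\mathrm{in}}$ to recover the $|u|^{-L-1}$-decay of $r\phi_L$, a hierarchy of $(r^2\pv)$-commuted quantities integrated from $\mathcal I^-$ (where the no-incoming-radiation condition kills the boundary terms) to build the outgoing expansion, and a logarithm sourced exclusively by the $M$-weighted corrections. However, there is a genuine gap in your third step. You propose to \emph{substitute the ansatz} $r^2\pv\psi=\sum_i f_i^{(L)}(u)r^{-i}+g^{(L)}(u)r^{-L-1}\log r+\mathcal O(r^{-L-1})$ and ``match powers of $r$''. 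The existence of this expansion, with a remainder that is genuinely $\mathcal O(r^{-L-1})$ uniformly on each outgoing cone and with the logarithm appearing exactly at order $r^{-L-1}$, is precisely the content of the theorem; assuming the ansatz and reading off ODEs for the coefficients is circular. The paper instead \emph{proves} the expansion term by term: it integrates the exact commuted equations $\pu\pv[r^2\pv]^{N}(r\phi_L)=\dots$ for $N=0,1,\dots,L$ in $u$ from $\mathcal I^-$, establishing inductively that $[r^2\pv]^{L-n}(r\phi_L)=\beta_n^{(L)}|u|^{-n-1}+\mathcal O(|u|^{-n-1-\epsilon}+r^{-1}|u|^{-n})$, and only then invokes the approximate conservation law for $\Phi_L$; the logarithm does not arise from an ``unabsorbable residual in the $r$-matching'' but from the explicit $u$-integral $\int_{-\infty}^u r^{-1}|u'|^{-1}\,\dd u'=(\log(v-u)-\log|u|)/v$ applied to the $2M c_0^L/r$-correction acting on the $1/|u|$-tail of $[r^2\pv]^L(r\phi_L)$ — this is also what fixes the $-\log|u|$ companion term, which is not a matching constant. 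Your proposal as written contains no mechanism for controlling the remainder or for extracting the precise value of $\ILlog{L}[\phi]$.

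A second, more localized problem: your claimed intermediate bounds $|\psi_k|\lesssim\max(|u|^{-(L+1-k)},r^{-1})$ for $\psi_k=(r^2\pu)^k(r\phi_L)$ are not what one can prove (nor what one needs). These quantities grow in $r$ at fixed $u$; the correct bounds, obtained by inductively integrating the $u$-commuted equation from $v=1$, are $|(r^2\pu)^k(r\phi_L)|\lesssim r^{2k}|u|^{-L-1-k}$, culminating in $\pu[r^2\pu]^L(r\phi_L)=C^{(L,0)}_{\mathrm{in}}r^{2L}|u|^{-2L-2}+\mathcal O(r^{2L}|u|^{-2L-2-\epsilon})$. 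The factor $L!/(2L+1)!$ then comes out of integrating this $L+1$ times in $u$ from $\mathcal I^-$ via the identity $(N-1)\int_{-\infty}^u r^{N'}|u'|^{-N}\dd u'=r^{N'}|u|^{1-N}+\dots$, producing the product $\frac{1}{(2L+1)(2L)\cdots(L+1)}$. With your bounds this $(L{+}1)$-fold integration cannot even be set up, so the constant in the radiation-field asymptotics would not emerge from your argument.
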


Now, while Theorem~\ref{thm:intro:gtl} generalises Theorem~\ref{thm:intro:tl} in every sense, Theorem~\ref{thm:intro:gnl} does not fully generalise Theorem~\ref{thm:intro:nl} since it excludes initial data that satisfy
\begin{equation}
\lim_{u\to-\infty}[r^2\pu]^{L-j}(r\phi_L)=C_{\mathrm{in}}^{(L,j+1)}
\end{equation}
for $j=0,\dots, L$ and \emph{non-vanishing} constants $C_{\mathrm{in}}^{(L,j)}$. If only  $C_{\mathrm{in}}^{(L,1)}$ is non-vanishing, then, in fact, the above theorem remains valid, albeit with some modifications to the $f_i(u)$ and to the constant $\ILlog{L}[\phi]$. 
More generally, however, we have the following:

Instead of \eqref{eq:intro:ass5}, \eqref{eq:intro:ass6}, prescribe on $\mathcal C_{\mathrm{in}}$ that
\begin{align}\label{eq:intro:ass7}
\left|r\phi_L(u,1)-\frac{C_{\mathrm{in}}}{r^p}\right|=\mathcal O_1( r^{-p-\epsilon})
\end{align}
for some $\epsilon\in(0,1]$, a constant $C_{\mathrm{in}}\neq 0$, and for some $p\in \mathbb N_0$ ($p=0$ is permitted). 
Moreover, assume the no incoming radiation condition \eqref{eq:intro:nirg} to hold for $n=1,\dots,L+1$. 
Then we have (see Theorem~\ref{thm:moreg} for the precise version):
	\begin{thm}\label{thm:intro:moreg}
	Given smooth data satisfying \eqref{eq:intro:ass7}, there exists a unique smooth solution to \eqref{waveequation} (restricted to the $(L,m)$-angular frequency) in the domain of dependence of $\mathcal C_{\mathrm{in}}\cup \mathcal{I}^-$ that restricts correctly to these data and satisfies \eqref{eq:intro:nirg}. 
	Define $r_0:=|u|-2M\log|u|$.
	Then the limit of the radiation field satisfies
	\begin{equation}\label{eq:intro:thma1}
	\lim_{v\to\infty}r\phi_L(u,v)=F(u)=\begin{cases}
													\mathcal O(r_0^{-p-\epsilon}),&\text{if }p\leq L \text{ and } p\neq 0,\\
													C(L,p)\cdot C_{\mathrm{in}} r_0^{-p}+\mathcal O(r_0^{-p-\epsilon}), &\text{if }p>L \text{ or } p=0,
										\end{cases}
	\end{equation}
	for some smooth function $F(u)$ and some non-vanishing numerical constant $C(L,p)$.
	
	Moreover, \underline{if $p<L$}, this solution satisfies, for sufficiently large negative values of $u$, the following asymptotics as $\mathcal{I}^+$ is approached along any outgoing spherically symmetric null hypersurface:
	\begin{equation}\label{eq:intro:thmb1}
	r^2\pv(r\phi_L)(u,v)=\sum_{i=0}^{p-1}\frac{f^{(L,p)}_i(u)}{r^i}+\frac{I_{\ell=L}^{\mathrm{future},r^{2+p-L}}[\phi](\log r-\log|u|)}{r^p}+\mathcal{O}\left(\frac{|u|}{r^{p}}\right),
	\end{equation}
	where the $f^{(L,p)}_i$ are smooth functions which satisfy $f^{(L,p)}_i=\mathcal O(r_0^{-p+i+1-\epsilon})$ if $i<p-1$, and $f^{(L,p)}_{i}=\beta_i^{(L,p)}+\mathcal O(r_0^{-\epsilon})$ for some constant $\beta_i^{(L,p)}$  if $i=p-1$. $I_{\ell=L}^{\mathrm{future},r^{2+p-L}}[\phi]$ is a non-vanishing constant which depends on $p,L,C_{\mathrm{in}}$ and $M$.
	
	On the other hand, \underline{if $p\geq L$}, then 
	\begin{equation}\label{eq:mgintro:thmb2}
	r^2\pv(r\phi_L)(u,v)=\sum_{i=0}^{\max(L,p-1)}\frac{f^{(L,p)}_i(u)}{r^i}+\mathcal O\left(\frac{\log r}{r^{\max(L+1,p)}}\right),
	\end{equation}
	where the $f^{(L,p)}_i$ are smooth functions which satisfy $f^{(L,p)}_i=\mathcal O(r_0^{-p+i+1-\epsilon})$ if $p=L$ and $i<L-1$, and which satisfy $f^{(L,p)}_i=\beta_i^{(L,p)}r_0^{-p+i+1}+\mathcal O(r_0^{-p+i-\epsilon})$ for some constants $\beta_i^{(L,p)}$ otherwise (i.e.\ if $p=L=i$, $p=L=i+1,$ or if $p>L$).
	
	The asymptotics or $r\phi_L$ near $\mathcal I^+$ can again be obtained by integrating $\pv(r\phi_L)$ from $\mathcal I^+$ and combining the above estimates.
	\end{thm}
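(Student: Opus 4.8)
\medskip\noindent\emph{Proof strategy.}
The plan is to follow the same three-stage argument used for the other main theorems. Throughout write $\psi:=r\phi_L$; restricted to the $(L,m)$-mode, the wave equation \eqref{waveequation} becomes the $1+1$-dimensional equation $\partial_u\partial_v\psi=-V_L(r)\psi$, where $V_L(r)=\frac14\left(1-\frac{2M}{r}\right)\left(\frac{L(L+1)}{r^2}+\frac{2M}{r^3}\right)$ decays like $r^{-2}$ at infinity, and all estimates are carried out in the region $\mathcal D$ bounded by $\mathcal C_{\mathrm{in}}$, $\mathcal I^-$ and $\mathcal I^+$, on which $r\approx v-u$, and $r$ agrees with $r_0$ up to a bounded error along $\mathcal C_{\mathrm{in}}=\{v=1\}$. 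First, I would establish well-posedness: existence and uniqueness of a smooth solution in $\mathcal D$ attaining the data \eqref{eq:intro:ass7} and satisfying \eqref{eq:intro:nirg} follow, as in the companion theorems, by solving the characteristic initial value problem on $\mathcal C_{\mathrm{in}}\cup\{v=v_n\}$ with vanishing data on $\{v=v_n\}$, proving $v_n$-uniform bounds via the conserved $T$-energy, its higher-order commuted analogues and weighted energy estimates, and letting $v_n\to\infty$; uniqueness follows from the same estimates. I then regard $\psi$ as a fixed solution with the stated boundary behaviour.

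Next, I would analyse the asymptotics towards $\mathcal I^-$ and $i^0$. Integrating $\partial_u\partial_v^n\psi=-\partial_v^{n-1}(V_L\psi)$ in $u$ and using \eqref{eq:intro:nirg} to annihilate the boundary term at $\mathcal I^-$ gives, for $n=1,\dots,L+1$, the identity $\partial_v^n\psi(u,v)=-\int_{-\infty}^u\partial_v^{n-1}\bigl(V_L(r(u',v))\psi(u',v)\bigr)\,\mathrm{d}u'$. Starting from the single-mode data $\psi|_{\mathcal C_{\mathrm{in}}}=C_{\mathrm{in}}r^{-p}+\mathcal O_1(r^{-p-\epsilon})$, marching in $v$ away from $\mathcal C_{\mathrm{in}}$ and bootstrapping, one reads off the $|u|$-decay of $\psi$ and of the solution-adapted hierarchy $(r^2\partial_u)^k\psi$ along ingoing null rays as $u\to-\infty$. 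Here the dichotomy $p<L$ versus $p\geq L$ enters: when $p<L$ the slow $r^{-p}$-profile is "seen" by the potential only at subleading order and forces a non-vanishing limit of $(r^2\partial_u)^k\psi$ at a definite level $k$ (the past analogue of $C^{(L,0)}_{\mathrm{in}}$), whereas when $p\geq L$ the behaviour near $\mathcal I^-$ is governed by a faster power, which is why only the weaker conclusion \eqref{eq:mgintro:thmb2} survives there. This is the analogue, for pure single-mode null data, of the conclusions of Theorems~\ref{thm:intro:gtl}--\ref{thm:intro:gnl}, and combined with the last step it also yields the form of $F(u)=\lim_{v\to\infty}\psi$ in \eqref{eq:intro:thma1}.

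Finally, I would propagate this information to $\mathcal I^+$ and extract the logarithm. Using $\partial_v\psi(u,v)=-\int_{-\infty}^u V_L(r(u',v))\psi(u',v)\,\mathrm{d}u'$, write $r^2\partial_v\psi$ as an expansion $\sum_i f_i(u)r^{-i}+\dots$, derive from the wave equation the transport (ODE-in-$u$) equations for the coefficients $f^{(L,p)}_i$, impose at $u=-\infty$ the data supplied by the previous step, and integrate. The logarithm is produced by the resonance mechanism of \cite{Kerrburger}: expanding $V_L(r(u',v))\psi(u',v)$ jointly in powers of $1/r$ and $1/|u'|$, exactly one term decays like $|u'|^{-1}$ in $u'$, so the corresponding $u'$-integral, cut off from below by $|u|$, contributes a factor $\log r-\log|u|$; bookkeeping the binomial coefficients in the $1/r$-expansion of $(v-u')^{-2}$, composed with the $u'$-profile of the previous step, shows that this occurs first at order $r^{-p}$ in $r^2\partial_v\psi$ when $p<L$ (giving \eqref{eq:intro:thmb1}) and, up to a remainder, at order $r^{-(L+1)}$ when $p\geq L$ (giving \eqref{eq:mgintro:thmb2}), and it identifies the leading coefficient $I_{\ell=L}^{\mathrm{future},r^{2+p-L}}[\phi]$ as an explicit rational multiple of $M$ and $C_{\mathrm{in}}$.

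The delicate point is this last step. One must organise the joint expansion in $1/r$ and $1/|u|$ so as to justify the term-by-term integration with rigorously controlled remainders rather than a formal series, to locate the precise order at which the first logarithm appears separately in each of the regimes $p<L$, $p=L$ and $p>L$, and --- crucially, in the regime $p<L$ --- to verify that the resonant contributions of the various members of the hierarchy do not cancel, i.e.\ that $I_{\ell=L}^{\mathrm{future},r^{2+p-L}}[\phi]\neq 0$. Since such cancellations genuinely do occur at other orders (this is exactly the phenomenon behind the improved $\mathcal I^+$-decay in \eqref{eq:intro:thma1} and the difference between \eqref{eq:intro:thmb1} and \eqref{eq:mgintro:thmb2}; cf.\ Remark~\ref{rem:intro:cancel}), non-cancellation of the \emph{leading} logarithm is a genuinely combinatorial computation rather than a soft consequence, and I expect it to be the bulk of the work.
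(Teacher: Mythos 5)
Your overall architecture (well-posedness, asymptotics towards $\mathcal I^-$, then propagation to $\mathcal I^+$ with the logarithm produced by a resonant $|u'|^{-1}$-integral) has the right shape, and you are right that non-cancellation of the leading logarithmic coefficient is ultimately a combinatorial matter. But your middle stage fails in precisely the regime this theorem is about. You propose to read off, near $\mathcal I^-$, limits of the hierarchy $(r^2\partial_u)^k(r\phi_L)$ and to use "the past analogue of $C^{(L,0)}_{\mathrm{in}}$" as effective scattering data for the outgoing analysis. For $p\leq L$ no such finite, $v$-independent limits exist: already for $p=1$ the limits $\lim_{u\to-\infty}(r^2\partial_u)^k(r\phi_L)(u,v)$ grow like $v^{k-1}$, so there is no conserved past quantity to propagate, and the reduction to the setting of Theorem~\ref{thm:intro:gnl} collapses. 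Moreover the hypothesis \eqref{eq:intro:ass7} controls only one derivative of the asymptotic profile ($\mathcal O_1$), so the asymptotics of $(r^2\partial_u)^k(r\phi_L)|_{\mathcal C_{\mathrm{in}}}$ for $k\geq 2$ are not even determined by the assumptions. This obstruction is exactly why the paper's proof of this theorem abandons the $\partial_u$-hierarchy and the $v$-direction conservation law altogether.

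What the paper does instead: it computes the transversal quantities $(r^2\partial_v)^k(r\phi_L)$ \emph{on} $\mathcal C_{\mathrm{in}}$ by integrating the $(r^2\partial_v)^k$-commuted wave equations in $u$ from $\mathcal I^-$ (these are determined by the equation together with \eqref{eq:intro:nirg}, not by differentiating the data); it propagates $(r^2\partial_v)^L(r\phi_L)$ off $\mathcal C_{\mathrm{in}}$ using only the $u$-direction approximate conservation law for $\Phi_L$; and it then integrates $L$ times in $v$ from $v=1$, carrying along at each step an explicit initial-data contribution $\mathrm{data}_{L-j}=\sum_i \frac{1}{i!}\left(r_0^{-1}-r^{-1}\right)^i (r^2\partial_v)^{L-j+i}(r\phi_L)(u,1)$. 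The dichotomies in \eqref{eq:intro:thma1}--\eqref{eq:mgintro:thmb2} --- the improved decay of $F(u)$ for $0\neq p\leq L$, the order at which the first logarithm appears, and the non-vanishing of $I_{\ell=L}^{\mathrm{future},r^{2+p-L}}[\phi]$ --- all come from evaluating the resulting binomial sums $\sum_{i}(-1)^i\frac{(2L-j+i)!}{i!\,(L+p-j+i)!\,(j-i)!}$ and determining exactly when they vanish. To rescue your outline you must replace your second stage by this (or an equivalent) $\partial_v$-only scheme; as written, the step that is supposed to supply the "data at $u=-\infty$" for your transport equations does not exist.
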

Some remarks are in order.
\begin{rem}\label{rem:intro:cancel}
Notice the different behaviour in the cases $p<L$, $p=L$ and $p>L$ in Theorem~\ref{thm:intro:moreg}. We want to direct the reader's attention to the following points:
\begin{itemize}
	\item Equation \eqref{eq:intro:thma1} shows that if $0\neq p\leq L$, then there is a \textit{cancellation}  and $\lim_{v\to\infty}(r\phi_L)(u,v)$ decays faster in $u$ than $r\phi_L(u,1)$.
	See \S\ref{sec:4.4.3} for a schematic explanation of these cancellations.
	 Such cancellations \textit{do not} happen if $p=0$ of $p>L$. 
	Moreover, they can be viewed as Minkowskian behaviour, i.e., they can already be seen if $M=0$.
	 In fact, in the course of the proof of Theorem~\ref{thm:intro:moreg}, we will derive simple and effective expressions for solutions of $\Box_g\phi_L=0$ on Minkowski arising from the no incoming radiation condition and initial data $r\phi_L(u,1)=C/r^p$ (see Proposition~\ref{prop:mg:prop4}).
	\item In view of \eqref{eq:mgintro:thmb2}, we see that "the first logarithmic term" in the expansion of $r^2\pv(r\phi_L)$ appears at order $r^{-p-1}\log r$ unless $p=L$, in which case it appears one order later. In particular, it never appears at order $r^{-L-1}\log r$.
\end{itemize}
\end{rem}
\begin{rem}
 The proof of Theorem~\ref{thm:intro:moreg} can be generalised to positive non-integer $p$ in \eqref{eq:intro:ass7} (and even to certain negative $p$). However, if $p\notin\{1,\dots,L\}$, we expect no cancellations of the type above to occur. 
On the other hand, if we assume, for instance,  that $r\phi_L(u,1)\sim r^{-p}\log r $ initially, then the same cancellations occur in the range $p\in\{1,\dots L\}$, and one will obtain that $r\phi_L|_{\mathcal I^+}\sim |u|^{-p}$ if $p\in\{1,\dots, L\}$ and $r\phi_L|_{\mathcal I^+}\sim |u|^{-p}\log|u|$ otherwise.
 This observation will be of relevance in future work.
 
 
 \end{rem}
\begin{rem}\label{rem1.6}
All of the above theorems make crucial use of certain approximate conservation laws. These are generalisations of the Minkowskian identities
\begin{align*}
\pu\left(r^{-2\ell-2}(r^2\pv)^{\ell+1}(r\phi_\ell)\right)=0,&&
\pv\left(r^{-2\ell-2}(r^2\pu)^{\ell+1}(r\phi_\ell)\right)=0,
\end{align*}
and have been used in a very similar context in the recent~\cite{AAG21}, see also \cite{MZ21}. 
See already section~\ref{sec:thecommutedequations} and section~\ref{sec:generalNP} for a discussion and derivation of these in the cases $\ell\leq 1$, $\ell \geq 0$, respectively. 
The reason why we stated Theorems~\ref{thm:intro:gnl} and~\ref{thm:intro:moreg} separately is that the former can be proved in a rather simple way using the second conservation law, i.e.\ by propagating the initial decay for $(r^2\pu)^{\ell+1}(r\phi_\ell)$ in $v$, whereas, in order to prove  Theorem~\ref{thm:intro:moreg}, we will need to use the conservation law in the $u$-direction.
\end{rem}
\begin{rem}\label{rem1.7}
The constants $I_{\ell=L}^{\mathrm{future},f}[\phi]$ appearing in the above theorems are modified Newman--Penrose constants. These are closely related to the approximate conservation laws mentioned before. We will discuss this further in the next section.
\end{rem}
\begin{rem}
One can generalise all of the above theorems to hold on more general spherically symmetric spacetimes such as the Reissner--Nordstr\"om spacetimes in the full physical range of charge parameters $|e|\leq M$. 
In the extremal case $|e|=M$, one can moreover apply the well-known conformal "mirror" isometry to obtain results on the asymptotics near the future event horizon $\mathcal H^+$, see section~2.2.2 of~\cite{I}.
\end{rem}

\subsection{Future applications: Late-time asymptotics and the role of the modified Newman--Penrose constants} \label{sec:intro:NP}
The approximate conservation laws mentioned in Remarks~\ref{rem1.6},~\ref{rem1.7} are closely related to the $\ell$-th order Newman--Penrose constants  $I_\ell[\phi]$ defined on future and past null infinity, respectively
 (see also the original~\cite{NPconstants65,NPconstants68}, and, more tailored to our context,~\cite{AAG18b,AAG21} and section~\ref{sec:generalNP} of the present paper). 
In fact, these $\ell$-th order Newman--Penrose constants play an important role in the study of both  \textit{early-time asymptotics (near $i^0$)} and \textit{late-time asymptotics (near $i^+$)} of fixed-$\ell$ solutions to the wave equation on Schwarzschild. 

While the question of early-time asymptotics has not been investigated much elsewhere, the study of late-time asymptotics has been an active field for decades. 
The most prominent result in this line of research is the so-called \emph{Price's law}~\cite{Price72,GPP94}, see also~\cite{Leaver86}. 
Price's law states that smooth, compactly supported data on a Cauchy hypersurface (i.e.\ data with trivial early-time asymptotics) for fixed angular frequency solutions $\phi_{\ell=L}=Y_{Lm}\phi_{Lm}$ to the wave equation \eqref{waveequation}  generically lead to the following asymptotics near future timelike infinity $i^+$ (we suppress the $m$-index in the following):
\begin{align}\label{eq:Price}
r\phi_L|_{\mathcal I^+}\sim u^{-2-L}, &&\phi_L|_{r=\text{constant}}\sim \tau^{-2L-3}, &&\phi_L|_{\mathcal H^+}\sim v^{-2L-3}
\end{align}
along future null infinity, hypersurfaces of constant $r$, and the event horizon $\mathcal H^+$, respectively. 
This statement has been satisfactorily proved in the recent works~\cite{AAG18a,AAG18b, AAG21}, see also~\cite{Hintz22} and~\cite{MZ21}. (For earlier rigorous works on \textit{pointwise upper bounds} (not asymptotics), see \cite{DSS11,DSS12} as well as  \cite{MTT12}.)
We also refer the reader to these papers for more general background and motivation for the study of late-time asymptotics. 

The question of late-time asymptotics for \textit{compactly supported Cauchy data} has thus been completely understood. 
Similar results have been obtained for non-compactly supported data, but in that case, it has  typically been assumed that the data are conformally smooth.
However, if one's motivation for studying late-time asymptotics comes from gravitational wave astronomy (i.e.\ the hope that some devices will eventually be able to measure these asymptotics), then the assumption of smooth compactly supported (or conformally smooth) data on a Cauchy hypersurface becomes questionable -- as long as one accepts the general framework of an \textit{isolated system}. 
For, if one assumes that the gravitational waves emitting system under consideration has existed for all times, then it will certainly have radiated for all times: 
Thus, a spacetime describing this system cannot be expected to contain Cauchy hypersurfaces with compact radiation content. 
On the other hand, the data considered in~\cite{I} and the present paper have a clear physical motivation\footnote{In addition to the remarks at the beginning of the paper, see also sections~1 and~2.1 of~\cite{I} for a more detailed discussion of the physical motivation.} and, thus,  seem like a more reasonable starting point for the question of physically relevant late-time asymptotics.

Motivated by this, we shall now discuss consequences that our results from section~\ref{sec:intro:summary} have on late-time asymptotics.
It turns out that one can gain a simple, intuitive understanding of these in terms of the aforementioned Newman--Penrose constants. 
\subsubsection{The timelike case: A logarithmically modified Price's law for all \texorpdfstring{$\ell$}{L}}\label{intro:subsecNP1}
Let's assume that we have a spherically symmetric timelike hypersurface $\Gamma$ that has \textit{constant} area radius near $i^-$ and terminates at $\mathcal{H}^+$. 
(Note that, if we chose $\Gamma$ to terminate at $i^+$, then we would have to essentially \textit{prescribe} the late-time asymptotics as boundary data on $\Gamma$. 
On the other hand, if we choose $\Gamma$ to terminate at $\mathcal H^+$, then it will turn out that the leading-order late-time asymptotics are completely determined by the data's behaviour near $i^-$. In particular, they do not depend on the extension of the data towards $\mathcal H^+$.)
Consider first the spherically symmetric mode, and prescribe smooth data for it which, near past timelike infinity $i^-$, behave like $r\phi_0=C|u|^{-1}+\O(|u|^{-1-\epsilon})$, and which smoothly extend to the future event horizon $\mathcal{H}^+$; and impose the no incoming radiation condition on $\mathcal I^-$. 
Then the results of~\cite{I} show\footnote{In fact, we only showed in~\cite{I} that $\lim_{u\to-\infty}|u| r\phi_0$ is finite. However, it follows directly from Lemma~5.1 therein that $\lim_{u\to-\infty}|u| r\phi_0=\lim_{u\to-\infty}|u|^2 T(r\phi_0)=\lim_{u\to-\infty}r^2\pu(r\phi_0)$, since $\pv(r\phi_0)\lesssim r^{-2}|u|^{-1}$. Alternatively, one can also refer to Thm.~\ref{thm:intro:gtl} of the present paper.} that the past Newman--Penrose constant exists and is conserved along $\mathcal{I}^-$:
\begin{equation}
\ILpn0[\phi]:=\lim_{u\to -\infty} r^2\pu(r\phi_0)\neq 0.
\end{equation}
Moreover, we showed that the finiteness of the past N--P constant, together with the no incoming radiation condition, implies that, even though the future Newman--Penrose constant vanishes ($\lim_{v\to\infty} r^2 \pv(r\phi_0)=0$), a \textit{logarithmically modified} future Newman--Penrose constant exists and is conserved along $\mathcal{I}^+$:
\begin{equation}
I_{\ell=0}^{ \mathrm{future},\frac{\log r}{r^3}}[\phi]:=\lim_{v\to\infty} \frac{r^3}{\log r} \pv(r\phi_0)=-2MI_{\ell=0}^{\mathrm{past}}[\phi]\neq 0.
\end{equation}
In~\cite{II}, we then applied slight adaptations of the methods of~\cite{AAG18b} to show that this logarithmically modified Newman--Penrose constant completely determines the leading-order late-time asymptotics near $i^+$:
\begin{align}\label{eq:intro:asy1}
r\phi_0(u,\infty)=\frac12\ILlog0[\phi]\frac{\log u}{u^2}+\mathcal{O}(u^{-2}),\\
\phi_0(u,v_R(u))=\frac12\ILlog0[\phi]\frac{\log\tau}{\tau^3}+\mathcal{O}(\tau^{-3}),\\
\phi_0(\infty,v)=\frac12\ILlog0[\phi]\frac{\log v}{v^3}+\mathcal{O}(v^{-3}),\label{eq:intro:asy3}
\end{align}
along $\mathcal{I}^+$, hypersurfaces of constant $R$, and $\mathcal{H}^+$, respectively.\footnote{Notice that the $(u,v)$-gauge used in this paper is related to that of~\cite{II} by $(u,v)=(u'/2,v'/2)$. }  In particular, the leading-order late-time behaviour is independent of the extension of the data towards $\mathcal H^+$ and only depends on the behaviour of the data near $i^-$. We called this \textit{a logarithmically modified Price's law} for the $\ell=0$-mode.

Consider now the $\ell=1$-case. If we assume data as in Theorem~\ref{thm:intro:tl} and smoothly extend them  to $\mathcal H^+$, then we obtain that the past N--P constant of order $\ell=1$ exists and is conserved along $\mathcal{I}^-$:
\begin{equation}\label{eq:intro:NP1}
\ILp1[\phi]:=\lim_{u\to-\infty}r^2\pu(r^2\pu(r\phi_1)+Mr\phi_1)=\lim_{u\to-\infty}r^2\pu(r^2\pu(r\phi_1))\neq 0.
\end{equation}
It then follows from Theorem~\ref{thm:intro:nl} that the decay encoded in \eqref{eq:intro:NP1} (namely, $r\phi_1\sim u^{-2}$), along with the no incoming radiation condition, implies that the future N--P constant vanishes, but that a \textit{logarithmically modified future Newman--Penrose constant of order $\ell=1$} exists and is conserved along $\mathcal{I}^+$ (see also Theorem~\ref{thm:NP} in \S\ref{sec:nl:comments}):
\begin{equation}\label{eq:intro:NP2}
I_{\ell=1}^{ \mathrm{future},\frac{\log r}{r^3}}[\phi]:=\lim_{v\to\infty} \frac{r^3}{\log r} \pv(r^2\pv(r\phi_1)-Mr\phi_1)=4MI_{\ell=1}^{\mathrm{past}}[\phi].
\end{equation}
One should then be able to combine the results above with those of Angelopoulos--Aretakis--Gajic~\cite{AAG21}, with adaptations exactly as in~\cite{II} (which combined the results of~\cite{I} and~\cite{AAG18a,AAG18b}), in order to obtain near $i^+$:
\begin{align}\label{eq:intro:asy4}
r\phi_1(u,\infty)=C_1\frac{\log u}{u^3}+\mathcal{O}(u^{-3}),\\
\phi_1(u,v_R(u))=C_2\frac{\log\tau}{\tau^5}+\mathcal{O}(\tau^{-5}),\\
\phi_1(\infty,v)=C_2\frac{\log v}{v^5}+\mathcal{O}(v^{-5}),\label{eq:intro:asy6}
\end{align}\todo{Could figure out what these constants are?}where $C_1, C_2$ are given by numerical multiples of $I_{\ell=1}^{\log, \mathrm{future}}[\phi]$. In particular, these constants $C_1,C_2$ should be independent of the extension of the data towards $\mathcal H^+$.  
Thus, we would obtain a logarithmically modified Price's law for $\ell=1$ (cf.\ \S\ref{sec:nl:comments:logpricelaw}).

\begin{figure}[htbp]
\floatbox[{\capbeside\thisfloatsetup{capbesideposition={right,top},capbesidewidth=4.4cm}}]{figure}[\FBwidth]
{\caption{Schematic depiction of the situation of \S \ref{intro:subsecNP1}: Given smooth data for $r\phi_\ell$ on $\Gamma$ which decay like $1/t$ near $i^-$, the solution decays like $u^{-\ell-1}$ near $\mathcal I^-$ by Thm.~\ref{thm:intro:gtl} and has finite logarithmically modified N--P constant on $\mathcal I^+$ by Thm.~\ref{thm:intro:gnl}. The depicted late-time behaviour near $i^+$ should follow from the methods of~\cite{AAG21} and should be independent of the data's extension towards $\mathcal H^+$.}\label{fig:III:1}}
{\includegraphics[width=200pt]{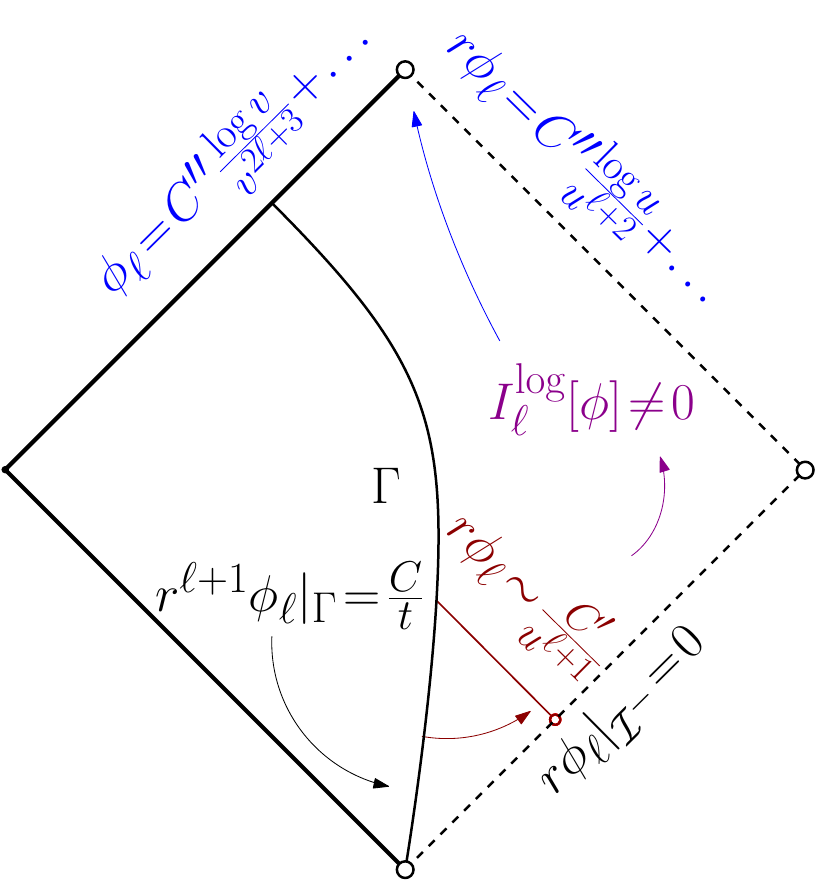}}
\end{figure}

In fact, we expect the same structure to hold in the case of general $\ell=L$. The data of Theorem~\ref{thm:intro:gtl} lead to solutions which have finite $L$-th order past N--P constant and, by Theorem~\ref{thm:intro:gnl} (see also Theorem~\ref{thm:gnl}), have a finite \textit{logarithmically modified} $L$-th order future N--P constant $I_{\ell=L}^{ \mathrm{future},\frac{\log r}{r^3}}[\phi]$, see  \S \ref{sec:generalNP} for the definition of these.
 In other words, Theorems~\ref{thm:intro:gtl} and~\ref{thm:intro:gnl} prove the precise analogues of \eqref{eq:intro:NP1} and \eqref{eq:intro:NP2} for general $\ell=L$. 
In view of the remarks above, one should then be able to recover a \textit{logarithmically modified Price's law {for each $\ell$}} from this. See Figure~\ref{fig:III:1}. 

What would be more difficult, however, is to show such a statement for \textit{fixed, finite regularity} of $\phi_L$ instead of assuming smoothness or regularity that is dependent on $L$. We therefore make the following conjecture:
\begin{conj}\label{conj1}
Prescribe data for $\phi$ on $\Gamma$ that have sufficient but fixed, finite regularity and which satisfy $r^\ell\phi_\ell\sim t^{-1}$ as $t\to-\infty$ for all $\ell$. 
Moreover, prescribe the no incoming radiation condition on $\mathcal I^-$. 
Then there exists an $\ell_0\in \mathbb N$, increasing with the prescribed regularity of the data, such that, for all $\ell\leq \ell_0$, the $\ell$-modes $\phi_\ell$ of the corresponding solution will exhibit the following late-time asymptotics near $i^+$:
\begin{align}\label{eq:dnnt}
r\phi_\ell|_{\mathcal I^+}\sim u^{-2-\ell}\log u, &&\phi_\ell|_{r=\text{constant}}\sim \tau^{-2\ell-3}\log \tau, &&\phi_\ell|_{\mathcal H^+}\sim v^{-2\ell-3}\log v.
\end{align} 
Moreover, the projection onto higher $\ell>\ell_0$-modes $\phi_{\ell>\ell_0}$ satisfies the upper bounds
\begin{align}\label{eq:dnt}
r\phi_{\ell>\ell_0}|_{\mathcal I^+}=\mathcal O( u^{-2-\ell_0-\epsilon}), &&\phi_{\ell>\ell_0}|_{r=\text{constant}}=\mathcal O( \tau^{-2\ell_0-3-\epsilon}), &&\phi_{\ell>\ell_0}|_{\mathcal H^+}=\mathcal O( v^{-2\ell_0-3-\epsilon})
\end{align} 
for some $\epsilon>0$.
 If the data are chosen to be smooth, then $\ell_0$ can be chosen to be $\infty$.
\end{conj}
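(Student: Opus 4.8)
\emph{Proof strategy.}
The plan is to combine the early-time asymptotics of this paper (Theorems~\ref{thm:intro:gtl} and~\ref{thm:intro:gnl}) with the late-time asymptotics machinery of~\cite{Dejantobepublished}, adapted to non-trivial early-time data exactly as~\cite{Kerrburger2} did for the $\ell=0$ mode, and to supplement this with a fixed-order energy argument for the high $\ell$-modes. One decomposes $\phi=\sum_{\ell,m}\phi_{\ell m}Y_{\ell m}$ and splits the analysis at a threshold $\ell_0=\ell_0(s)$ determined by the (fixed) regularity $s$ of the data. For each fixed $\ell\leq\ell_0$: since the data satisfy the hypotheses of Theorem~\ref{thm:intro:gtl} (for that $\ell$, with $r^{\ell}\phi_\ell\sim|t|^{-1}$ on $\Gamma$) and obey the no incoming radiation condition, that theorem yields a finite, conserved past Newman--Penrose constant of order $\ell$ (the analogue of~\eqref{eq:intro:NP1}) and hence $|u|^{-\ell-1}$ decay towards $\mathcal I^-$; feeding this into Theorem~\ref{thm:intro:gnl} --- whose hypotheses~\eqref{eq:intro:ass5}--\eqref{eq:intro:ass6} are precisely the output of Theorem~\ref{thm:intro:gtl} --- produces the analogue of~\eqref{eq:intro:NP2}, i.e.\ a finite, conserved logarithmically modified future N--P constant $\ILlog{L}[\phi]$ together with $r\phi_\ell|_{\mathcal I^+}\sim|u|^{-\ell-1}$ near $i^0$. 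The mode $\phi_\ell$ thus falls into the class of solutions whose late-time asymptotics are governed by the $\ell$-th order N--P hierarchy analysed in~\cite{Dejantobepublished}; carrying out the substitution of $\ILlog{L}[\phi]$ for the (vanishing) future N--P constant, exactly as in~\cite{Kerrburger2,Angelopoulos2018Late-timeSpacetimes}, then yields~\eqref{eq:dnnt} for that mode, with leading constant a numerical multiple of $\ILlog{L}[\phi]$, in particular independent of the extension of the data to $\mathcal H^+$. This step consumes a number of derivatives growing linearly in $\ell$ (the commutations in Theorem~\ref{thm:intro:gtl}, plus those in the imported late-time estimates), which is exactly what constrains $\ell\leq\ell_0(s)$.

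For $\ell>\ell_0$ one abandons sharp asymptotics and proves only the upper bounds~\eqref{eq:dnt}. The idea is to run a fixed number of weighted ($r^p$-type) energy estimates in the exterior of $\Gamma\cup\mathcal I^-$: a weighted flux that is finite on the data --- the $\mathcal I^-$ contribution being killed by the no incoming radiation condition and the $\Gamma$ contribution controlled by the boundary assumptions --- propagates to the future and, together with integrated local energy decay, the Morawetz estimate, and the redshift estimate on Schwarzschild, gives decay of $\phi_{\ell>\ell_0}$ at the rate dictated by the fixed number of available weights and commutations. Since each angular derivative buys a factor $\sqrt{\ell(\ell+1)}$, the projection onto $\ell>\ell_0$ inherits, after Sobolev embedding on $S^2$, a uniform bound $\mathcal O(u^{-2-\ell_0-\epsilon})$ on $\mathcal I^+$ and the stated analogues on $\{r=\text{const.}\}$ and $\mathcal H^+$. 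Summing the low-mode asymptotics against the high-mode tail gives the decomposition claimed in the conjecture, and letting $s\to\infty$ sends $\ell_0\to\infty$, recovering~\eqref{eq:dnnt} for every $\ell$ in the smooth case.

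The main obstacle is the bookkeeping relating $\ell_0$ to $s$: one must quantify precisely how many derivatives Theorems~\ref{thm:intro:gtl},~\ref{thm:intro:gnl} and the imported late-time estimates consume per unit of $\ell$, and then verify that the high-mode bound~\eqref{eq:dnt} lies \emph{strictly} below the slowest surviving low-mode rate, so that the tail does not contaminate the leading-order behaviour on any of the three hypersurfaces --- this is why the $\ell_0$-th rate appears with an $\epsilon$-loss in~\eqref{eq:dnt}. A secondary difficulty is that the late-time results of~\cite{Dejantobepublished} are phrased for Cauchy data with prescribed decay towards $i^0$ rather than for the mixed characteristic/timelike scattering problem considered here; one first re-poses the problem on a sufficiently late ingoing cone, on which Theorem~\ref{thm:intro:gnl} furnishes exactly the required $i^0$-asymptotics, and then checks that the redshift and horizon estimates are insensitive to whether $\Gamma$ or $\mathcal H^-$ bounds the region --- this insensitivity is the precise reason the leading-order late-time behaviour is independent of the data's extension towards $\mathcal H^+$.
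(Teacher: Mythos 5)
The statement you are addressing is a \emph{conjecture}: the paper does not prove it, and explicitly defers its resolution (together with Conjectures~\ref{conj2} and~\ref{conj3}) to future work. Your proposal is therefore not comparable to a proof in the paper; it is a strategy sketch, and in fact it closely mirrors the paper's own informal discussion in \S\ref{intro:subsecNP1} and \S\ref{sec:gnl:comments} (combine Theorems~\ref{thm:intro:gtl} and~\ref{thm:intro:gnl} with the late-time machinery of~\cite{Dejantobepublished}, modified as in~\cite{Kerrburger2}). As such it cannot be accepted as a proof, because it leaves open exactly the points that make the statement a conjecture rather than a theorem.

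Concretely, two gaps remain. First, for the low modes $\ell\leq\ell_0$ you invoke Theorem~\ref{thm:intro:gtl}, but its hypothesis \eqref{eq:intro:ass4} demands data regularity $\mathcal O_{L+4}$, i.e.\ regularity \emph{increasing} in $\ell$; under fixed finite regularity the theorem simply does not apply near the threshold, and the paper states that one would have to revisit its proof to extract the weaker decay consistent with \eqref{eq:dnt} — this extraction is precisely what is missing, and your proposal does not supply it. Likewise, the passage from the logarithmically modified future N--P constant to the late-time rates \eqref{eq:dnnt} is not proved anywhere for $\ell\geq 1$: the required modifications to~\cite{Dejantobepublished} are asserted to be analogous to~\cite{Kerrburger2} but are not carried out, and they again consume $\ell$-dependent regularity. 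Second, the high-mode bound \eqref{eq:dnt} is not a consequence of any estimate established in the paper; your appeal to "a fixed number of weighted energy estimates" plus the heuristic that each angular derivative buys a factor $\sqrt{\ell(\ell+1)}$ is exactly the quantitative bookkeeping the paper identifies as the open difficulty (it even remarks that it is unknown whether \eqref{eq:dnt} can be improved without extra regularity). Your sketch is a reasonable roadmap, consistent with the author's own, but every step that turns the roadmap into a proof is absent.
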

See also the comments in \S\ref{sec:gnl:comments}.

A proof of the above conjecture would require revisiting the proof of Thm.\ \ref{thm:intro:gtl} since, as stated, Thm.\ \ref{thm:intro:gtl} requires boundary data regularity increasing in angular frequency $L$. However, if one imposes fixed, finite regularity, it should still be possible to extract weaker decay (compared to that of Thm.\ \ref{thm:intro:gtl}) from the methods of the proof that is consistent with \eqref{eq:dnt}.
On the other hand, once these modifications are understood, one should be able to directly apply the methods of~\cite{AAG21}, with modifications as in~\cite{II}, to prove the conjecture.

It would also be interesting to find a definitive answer to the question whether or not the rate \eqref{eq:dnt} can be improved \textit{without assuming additional regularity}.

We finally note that, on the one hand, if the $1/t$-decay on initial data is replaced by any \textit{integrable} decay rate, then the logarithms in \eqref{eq:dnnt} would disappear and we would expect the usual Price's law tails. 
\textit{On the other hand, if one considers a timelike hypersurface $\Gamma$ on which $r|_{\Gamma}\to \infty$ as $u\to-\infty$, say, $r|_{\Gamma}(u)\sim |u|$, and only imposes $r\phi_\ell|_{\Gamma}\sim |t|^{-1}$, then the expected modifications to Price's law are much more severe and exactly as in the null case with $p=1$. We will discuss this latter case now.}


\subsubsection{The null case: More severe deviations from Price's law}\label{intro:subsecNP2}
In contrast to the timelike case, it turns out that the data considered in Theorem~\ref{thm:intro:nl}, which were posed for the $\ell=1$-mode on an ingoing null hypersurface ($r\phi_1|_{\mathcal{C}_{\mathrm{in}}}\sim \cc|u|^{-1}$), generally lead to a non-vanishing future Newman--Penrose constant 
\begin{equation}
I_{\ell=1}^{\mathrm{future}}[\phi]:=\lim_{v\to\infty} r^2 \pv(r^2\pv(r\phi_1)-Mr\phi_1)\neq0
\end{equation} if $\cc\neq 0$, cf.~Theorem~\ref{thm:NP}.
In this case, one recovers the following late-time asymptotics (provided that one smoothly extends the data to $\mathcal H^+$):
\begin{align}
r\phi_1|_{\mathcal{I}^+}\sim u^{-2}, &&\phi_1|_{r=R}\sim \tau^{-4},&& \phi_1|_{\mathcal{H}^+}\sim v^{-4},
\end{align}
with the leading-order asymptotics only depending on $I_{\ell=1}^{\mathrm{future}}[\phi]$. These late-time asymptotics are one power worse than the Price's law decay \eqref{eq:Price} for compactly supported data and have also been derived in~\cite{AAG21}.\footnote{\label{footnote7}
In fact, the authors of~\cite{AAG21} first derived late-time asymptotics for data with $I_{\ell=1}^{\mathrm{future}}[\phi]\neq 0$, and then showed that solutions $\phi$ arising from smooth compactly supported data can generically be written as time derivatives of solutions $\phi^T$ that satisfy $I_{\ell=1}^{\mathrm{future}}[\phi^T]\neq 0$. They then showed that time derivatives decay one power faster, which proved Price's law. 
}

In the case of general $\ell\geq 1$, however, the situation is more subtle: 
The data considered in Theorem~\ref{thm:intro:moreg}, i.e.\ $r\phi_L(u,1)\sim |u|^{-p}$, lead, for $p\leq L\neq 0$, to solutions where the usual Newman--Penrose constant is infinite, $I_{\ell=L}^\mathrm{future}[\phi]:=\lim r^2\pv\Phi_L=\infty$, where $\Phi_L$ is defined in section~\ref{sec:generalNP}, eq.\ \eqref{eq:NPfuture}. 
Instead, the following \textit{$(L-p)$-modified Newman--Penrose constant} remains finite and conserved along null infinity (see also Thm.~\ref{thm:moreg}):
\begin{equation}\label{intro:eq:y}
I_{\ell=L}^{\mathrm{future},r^{2-L+p}}[\phi]:=\lim_{v\to\infty} r^{2-L+p}\pv\Phi_L\neq 0.
\end{equation}
With the decay encoded in \eqref{intro:eq:y}, which is $L-p$ powers worse than in the case of finite unmodified N--P constant, we expect that one can further modify the methods of~\cite{AAG21} to then derive late-time asymptotics near $i^+$ which are $L-p+1$ powers slower than the Price's law decay \eqref{eq:Price} and which do not depend on the data's extension towards $\mathcal H^+$ (see Figure~\ref{fig:III:2}), provided that the solution is smooth. Cf.\ the comments in \S\ref{sec:10comments}.
\begin{figure}[htbp]
\floatbox[{\capbeside\thisfloatsetup{capbesideposition={right,top},capbesidewidth=4.4cm}}]{figure}[\FBwidth]
{\caption{Schematic depiction of the situation of \S \ref{intro:subsecNP2}: Given data for $r\phi_\ell$ on $\mathcal C_{\mathrm{in}
}$ which decay like $1/u^p$ near $\mathcal I^-$, the solution has finite $(\ell-p)$-modified N--P constant (see \eqref{intro:eq:y}) on $\mathcal I^+$ by Thm.~\ref{thm:intro:moreg}, provided that $p\leq \ell$. We also depicted the conjectured late-time behaviour near $i^+$.}\label{fig:III:2}}
{\includegraphics[width=200pt]{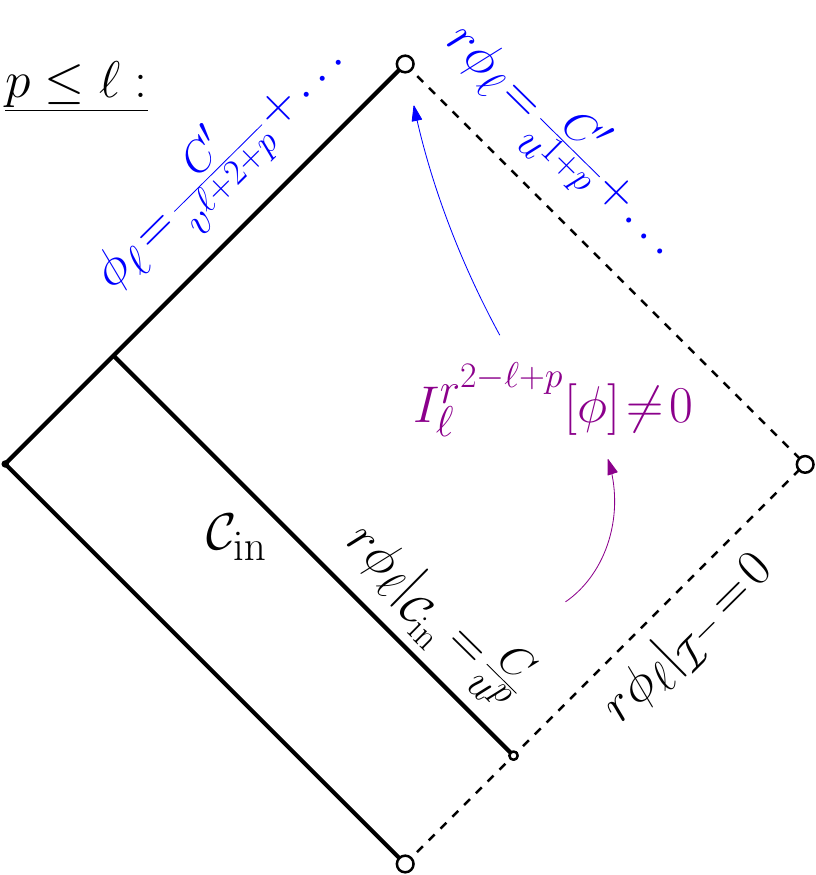}}
\end{figure}

Analogously to Conjecture \ref{conj1}, we also make the following conjecture for the finite regularity problem:
\begin{conj}\label{conj2}
Let $1\leq p \in \mathbb N$, and prescribe  data for $\phi$ on $\mathcal C_{\mathrm{in}}$ that have sufficient but fixed, finite regularity and which satisfy $r\phi_\ell\sim |u|^{-p}$ as $u\to-\infty$ for all $\ell$.
 Moreover, prescribe the no incoming radiation condition on $\mathcal I^-$. 
 Then, for all $\ell\geq p>0$,\footnote{For $\ell=p-1$, one would obtain a logarithmically modified Price's law \eqref{eq:dnnt}, and, for $\ell<p-1$, one would \textit{generically} obtain the usual Price's law behaviour \eqref{eq:Price}. } the $\ell$-modes $\phi_\ell$ of the corresponding solution will exhibit the following late-time asymptotics near $i^+$ along $\mathcal I^+$:
 \begin{equation}\label{intro:eq:yyyy}
 r\phi_\ell|_{\mathcal I^+}\sim u^{-p-1}.
 \end{equation}
 Moreover, there exists an $\ell_0\in \mathbb N$, increasing with the prescribed regularity of the data, such that, away from $\mathcal I^+$, and for some $\epsilon>0$,
\begin{align}\label{intro:eq:yy}
\phi_\ell|_{r=\text{constant}}\sim \tau^{-\ell-p-2},\quad \phi_\ell|_{\mathcal{H}^+}&\sim v^{-\ell-p-2},\quad\text{for all } \ell\in\{p,\dots,\ell_0\},\\
\phi_{\ell>\ell_0}|_{r=\text{constant}}=\mathcal O( \tau^{-\ell_0-p-2-\epsilon}),\quad \phi_{\ell>\ell_0}|_{\mathcal{H}^+}&=\mathcal O( v^{-\ell_0-p-2-\epsilon}).
\end{align}
 If the data are chosen to be smooth, then $\ell_0$ can be chosen to be $\infty$.
\end{conj}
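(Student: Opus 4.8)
The plan is to prove Conjecture~\ref{conj2} one angular mode at a time, bootstrapping from the spatial-infinity analysis already carried out in Theorem~\ref{thm:intro:moreg} to the timelike-infinity regime via the late-time machinery of Angelopoulos--Aretakis--Gajic~\cite{Dejantobepublished}, adapted to modified Newman--Penrose constants exactly as~\cite{Kerrburger2} adapted~\cite{Angelopoulos2018ASpacetimes,Angelopoulos2018Late-timeSpacetimes} to the $\ell=0$ case; since the statement is a conjecture, what follows is a roadmap and I flag at the end the inputs that would still need to be supplied. First I would fix $\ell\geq p>0$ and apply Theorem~\ref{thm:intro:moreg} to the mode $\phi_\ell$: the hypothesis $r\phi_\ell\sim|u|^{-p}$ on $\mathcal C_{\mathrm{in}}$ is precisely assumption~\eqref{eq:intro:ass7}, and the theorem then produces a solution satisfying the no incoming radiation condition, together with a \emph{finite, conserved, generically non-vanishing} $(\ell-p)$-modified future Newman--Penrose constant $I^{\mathrm{future},r^{2-\ell+p}}_{\ell}[\phi]$ along $\iplus$ (equation~\eqref{intro:eq:y}) and the uniform control on $r\phi_\ell$, $\pv(r\phi_\ell)$ and their commuted analogues near $i^0$ encoded in~\eqref{eq:intro:thmb1}--\eqref{eq:mgintro:thmb2}. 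This is the seed for the late-time problem. (The border cases $\ell=p-1$ and $\ell<p-1$ noted in the statement fall under the $p\geq\ell$ branch of Theorem~\ref{thm:intro:moreg} and are treated the same way, yielding respectively a logarithmically modified Price's law and the ordinary rate~\eqref{eq:Price}.)

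The bridge from $i^0$ to $i^+$ rests on the approximate conservation laws of Remark~\ref{rem1.6}: the identity $\pu\big(r^{-2\ell-2}(r^2\pv)^{\ell+1}(r\phi_\ell)\big)\approx 0$ propagates the modified N--P constant from a neighbourhood of $i^0$ along all of $\iplus$, while its $\pv$-counterpart transports decay into the interior. Together with the $r^q$-weighted energy hierarchy and the time-inversion construction of~\cite{Dejantobepublished}, this lets one write $\phi_\ell$, modulo a faster-decaying error, as a suitable number of $T$-derivatives of an auxiliary fixed-frequency solution whose modified N--P constant is non-zero; one reads off that auxiliary solution's late-time asymptotics from its conserved charge and integrates the appropriate number of times in $u$, each integration costing one power of $|u|$. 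The power counting is fixed by~\eqref{intro:eq:y}: a finite $(\ell-p)$-modified constant means $\pv\Phi_\ell\sim r^{-(2-\ell+p)}$ as $\iplus$ is approached, i.e. the radiation field decays $\ell-p$ powers slower than in the finite-unmodified-N--P case, hence $\ell-p+1$ powers slower than Price's law~\eqref{eq:Price}, giving $r\phi_\ell|_{\iplus}\sim u^{-(\ell+2)+(\ell-p+1)}=u^{-p-1}$, which is~\eqref{intro:eq:yyyy}; propagating this into the interior via the redshift estimate and the weighted hierarchy then yields the $\ell$-dependent rates $\phi_\ell|_{r=\text{constant}}\sim\tau^{-\ell-p-2}$ and $\phi_\ell|_{\Hp}\sim v^{-\ell-p-2}$ of~\eqref{intro:eq:yy}, and the independence of the leading coefficients from how the $\mathcal C_{\mathrm{in}}$-data are extended towards the horizon follows, as in~\cite{Kerrburger2}, from the fact that the modified N--P constant is determined entirely by the behaviour near $i^0$ while the horizon contribution is exponentially suppressed by the redshift.

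The genuinely new analytic work -- and the main obstacle -- is the extension of the Angelopoulos--Aretakis--Gajic late-time framework to modified Newman--Penrose constants of \emph{arbitrary} order $\ell-p$: one must carry the logarithmic corrections appearing in the physical-space expansion~\eqref{eq:intro:thmb1} through the whole time-inversion hierarchy without corrupting the leading term, and verify at each stage of the iteration that the non-sharp remainders (the $\mathcal O(r_0^{-p-\epsilon})$ error in~\eqref{eq:intro:thma1} and the $\mathcal O(|u|/r^{p})$ error in~\eqref{eq:intro:thmb1}) remain strictly subleading. The second point is the finite-regularity bookkeeping behind the threshold $\ell_0$, handled as anticipated for Conjecture~\ref{conj1}: defining the $(\ell-p)$-modified N--P constant already involves $\sim\ell$ commutations, and the interior rates above degrade with $\ell$, so a fixed, finite number of derivatives suffices for the sharp conclusion only up to some $\ell_0$ increasing with the regularity, above which a crude use of the energy method gives the stated $\mathcal O(\tau^{-\ell_0-p-2-\epsilon})$ and $\mathcal O(v^{-\ell_0-p-2-\epsilon})$ tails; on $\iplus$ the leading rate $u^{-p-1}$ is $\ell$-independent, which is why that part of the statement is expected to hold for all $\ell\geq p$ without a threshold, but making this uniform in $\ell$ from fixed regularity -- by exploiting the favourable $r^{-2\ell-2}$-weight in the conservation law -- is itself part of what must be shown. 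The Minkowskian building blocks of Proposition~\ref{prop:mg:prop4} should render the $M=0$ skeleton of all of this explicit, the Schwarzschild corrections entering only at lower order in $2M/r$.
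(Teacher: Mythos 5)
This statement is a conjecture that the paper explicitly leaves to future work, so there is no proof of it in the paper to compare against; your roadmap reproduces essentially verbatim the paper's own motivating discussion in \S\ref{intro:subsecNP2} and \S\ref{sec:10comments} — the finite $(\ell-p)$-modified Newman--Penrose constant supplied by Theorem~\ref{thm:intro:moreg}, the power count giving $r\phi_\ell|_{\mathcal I^+}\sim u^{-p-1}$, the adaptation of~\cite{Dejantobepublished} in the spirit of~\cite{Kerrburger2}, and the finite-regularity threshold $\ell_0$ — and you correctly identify the same open analytic inputs the paper does. One small directional slip: the paper proposes taking time \emph{derivatives} of the given solution to reduce to the cases treated in~\cite{Dejantobepublished} and then \emph{integrating} those asymptotics from $i^+$ (the reverse of the procedure in footnote~\ref{footnote7}), whereas you describe writing $\phi_\ell$ as $T$-derivatives of an auxiliary solution with non-vanishing modified constant and then integrating; the power counting agrees, but the construction is inverted. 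Neither your text nor the paper's constitutes a proof, and you are right to present it as a roadmap.
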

 Remarkably, if one takes $p=1$ in \eqref{eq:intro:ass7}, then the asymptotics \eqref{intro:eq:yyyy}, \eqref{intro:eq:yy} for $\ell\geq 1$ would still be a logarithm faster than the ones for $\ell=0$, \eqref{eq:intro:asy1}--\eqref{eq:intro:asy3}, despite the decay of $\pv(r\phi_\ell)$ towards spatial infinity being slower for $\ell>0$ than for $\ell=0$.

 We note that even if one is willing to assume smoothness, then the modifications to~\cite{AAG21} needed to prove \eqref{intro:eq:yy} are different than those of~\cite{II}, as one now has to deal with a difference in \textit{integer powers} in decay. 
 (In~\cite{II}, we treated non-integer modifications in decay.) 
 We expect that one should be able to use time \textit{derivatives}, rather than time integrals, of our solutions to reduce to the cases treated in~\cite{AAG21}, and then integrate the asymptotics of these time derivatives from $i^+$ to obtain the asymptotics of the original solution.  
 (This would be the opposite procedure of that described in footnote~\ref{footnote7}.) 
 
 In contrast, the fixed, finite regularity problem, i.e.\ a proof of Conjecture~\ref{conj2}, would require much more elaborate modifications.
 In fact, since we conjecture the precise late-time asymptotics \textit{for all} $\ell$ in \eqref{intro:eq:yyyy}, one would now also have to modify the methods of~\cite{AAG21} since the procedure outlined in the previous paragraph would, again, require regularity that is increasing in $\ell$.
 We also want to point out that, since the conjectured asymptotics \eqref{intro:eq:yyyy} along $\mathcal I^+$ are independent of $\ell$, an understanding of the fixed, finite regularity problem would be all the  more important for applications!

\subsubsection{Compactly supported scattering data on \texorpdfstring{$\mathcal H^-$}{H-} and \texorpdfstring{$\mathcal I^-$}{I-}}\label{intro:subsecNP3}
One final natural configuration of data we want to consider is the case of smooth, compactly supported scattering data on $\mathcal I^-$ and the past event horizon $\mathcal H^-$. 
In order to apply our results, we can, without loss of generality, assume that the data on $\mathcal H^-$ are vanishing (this can be achieved by restricting to sufficiently large negative values of $u$). 
Similarly, we can assume, without loss of generality, that the data on $\mathcal I^-$ are supported in $v_1\leq v\leq 1$. If we then integrate the wave equation satisfied by the radiation field $r\phi_L$, namely
\begin{equation}\label{intro:wave:radiation}
\pv\pu(r\phi_L)=\left(1-\frac{2M}{r}\right)\left(\frac{-L(L+1)}{r^2}r\phi_L-\frac{2M}{r^3}r\phi_L\right),
\end{equation}
from $v=v_1$ to $v=1$, we obtain that, generically, $\pu(r\phi_0)(u,1)\sim r^{-3}$ if $L=0$ and  $\pu(r\phi_L)(u,1)\sim r^{-2}$ if $L\geq 1$.  
More precisely, one can derive from \eqref{intro:wave:radiation} that if the data on $\mathcal I^-$ are given by $r\phi_L(-\infty,v)=:G(v)$, then 
\begin{align*}
r^2\cdot \pu(r\phi_L)(u,1)=-L(L+1)\int_{v_1}^1 G(v)\dd v+\mathcal O(r^{-1}), &&\text{if } L>0,\\
r^3\cdot \pu(r\phi_0)(u,1)=-2M\int_{v_1}^1 G(v)\dd v+\mathcal O(r^{-1}), &&\text{if } L=0,
\end{align*}
See also  \S 2.2 and \S 6 of~\cite{I} for a detailed discussion of this restricted to the spherically symmetric mode.
Thus, since the integrals above are non-vanishing for generic scattering data $G$, one can show that Theorem~\ref{thm:intro:moreg} applies, with (generically) $p=2$ if $L=0$ and with $p=1$ if $L\geq 1$.

 The results of Theorem~\ref{thm:intro:moreg} then show that if $L=0$, then $\lim_{v\to\infty}r^3\pv(r\phi_0)<\infty$, whereas if $L\geq 1$, then, generically, $I_{\ell=L}^{\mathrm{future},r^{2-L+1}}[\phi]$ is finite and non-vanishing. 
 Therefore, if $L=0$, one obtains the following late-time asymptotics near $i^+$~\cite{AAG21}:
\begin{align}
r\phi_0|_{\mathcal{I}^+}\sim u^{-2}, &&\phi_0|_{r=R}\sim \tau^{-3},&& \phi_0|_{\mathcal{H}^+}\sim v^{-3}.
\end{align}

On the other hand, if $L>0$, then, since $p=1$, Conjecture~\ref{conj2} would imply that, generically,
\begin{align}
r\phi_L|_{\mathcal{I}^+}\sim u^{-2}, &&\phi_L|_{r=R}\sim \tau^{-L-3},&& \phi_L|_{\mathcal{H}^+}\sim v^{-L-3}.
\end{align}

We are thus led to a third conjecture (see Figure~\ref{fig:III:3}):
\begin{conj}\label{conj3}
Consider compactly supported scattering data on $\mathcal H^-$ and $\mathcal I^-$ for \eqref{waveequation}, {supported on all angular frequencies}, with sufficient but finite regularity. 
Then there exists an $\ell_0\in \mathbb N$, increasing with the prescribed regularity of the data, such that, away from $\mathcal I^+$, and for some $\epsilon>0$,
\begin{align}
\phi_\ell|_{r=\text{constant}}\sim \tau^{-\ell-p-2},\quad \phi_\ell|_{\mathcal{H}^+}&\sim v^{-\ell-p-2},\quad\text{for all } \ell\in\{0,\dots,\ell_0\},\\
\phi_{\ell>\ell_0}|_{r=\text{constant}}=\mathcal O( \tau^{-\ell_0-p-2-\epsilon}),\quad \phi_{\ell>\ell_0}|_{\mathcal{H}^+}&=\mathcal O( v^{-\ell_0-p-2-\epsilon}).
\end{align}
On the other hand, along future null infinity $\mathcal I^+$, we have the asymptotic expression
\begin{equation}\label{super}
r\phi|_{\mathcal I^+}=\sum_{\ell=0}^\infty\sum_{m=-\ell}^\ell C_{\ell m}\cdot Y_{\ell m}(\theta,\phi)u^{-2}+o(u^{-2})
\end{equation}
for some constants $C_{\ell m}$ which can be computed explicitly from the scattering data on $\mathcal I^-$ and which are generically non-zero.
\end{conj}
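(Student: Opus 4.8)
We outline the strategy one would follow to establish Conjecture~\ref{conj3}. \emph{The plan is to} run the mode-by-mode scheme of this paper --- reduce the scattering problem to characteristic data on $\mathcal C_{\mathrm{in}}$, apply Theorem~\ref{thm:intro:moreg} to each mode, feed the resulting (modified) Newman--Penrose constants into the late-time machinery of~\cite{Dejantobepublished} --- and, on top of that, control the sum over $\ell$.

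\textbf{Step 1 (reduction to data on $\mathcal C_{\mathrm{in}}$).} As in~\S\ref{intro:subsecNP3}, after restricting to sufficiently large negative $u$ we may assume that the $\mathcal H^-$-data vanish and that the $\mathcal I^-$-data are supported in $v_1\le v\le 1$; write $G_L(v):=r\phi_L(-\infty,v)$ for the projection onto the $\ell=L$ frequency. Integrating the radiation-field equation~\eqref{intro:wave:radiation} in $v$ from $v_1$ to $1$ along ingoing null rays, and using that $\partial_u(r\phi_L)$ vanishes towards $\mathcal I^-$ to the past of the support of $G_L$, one obtains the induced data on $\mathcal C_{\mathrm{in}}=\{v=1\}$: for $L\ge 1$ one has $r^2\partial_u(r\phi_L)(u,1)=-L(L+1)\int_{v_1}^1 G_L(v)\dd v+\mathcal O(r^{-1})$, hence $r\phi_L(u,1)=C_{\mathrm{in}}^{(L)}r^{-1}+\mathcal O(r^{-1-\epsilon})$ in the sense of~\eqref{eq:intro:ass7} with $p=1$; for $L=0$ the leading transport cancels (exactly as in~\cite{Kerrburger}) and one finds instead $r^3\partial_u(r\phi_0)(u,1)=-2M\int_{v_1}^1 G_0(v)\dd v+\mathcal O(r^{-1})$, so that the relevant power is $p=2$. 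In each case the leading constant is a fixed non-zero multiple of $\int_{v_1}^1 G_L(v)\dd v$ and hence non-vanishing for generic data.

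\textbf{Step 2 (early-time asymptotics and modified N--P constants).} Apply Theorem~\ref{thm:intro:moreg} to each mode with the data from Step~1. For $L\ge 2$ (so $p=1<L$) this yields the expansion~\eqref{eq:intro:thmb1} of $r^2\partial_v(r\phi_L)$ near $i^0$ and, crucially, that the $(L-1)$-modified future Newman--Penrose constant $I_{\ell=L}^{\mathrm{future},r^{3-L}}[\phi]$ exists, is conserved along $\mathcal I^+$, and is a non-zero numerical multiple of $M\cdot C_{\mathrm{in}}^{(L)}$; for $L=1$ this is the case $\cc\ne 0$ of Theorem~\ref{thm:intro:nl}, where it is the \emph{unmodified} future N--P constant that is finite and generically non-zero; for $L=0$ one instead gets $\lim_{v\to\infty}r^3\partial_v(r\phi_0)$ finite, as in~\cite{Kerrburger,Dejantobepublished}. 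These conserved quantities, together with the expansions near $i^0$, are precisely the inputs required by the late-time analysis. \textbf{Step 3 (late-time tails, fixed $\ell$).} For each fixed $\ell$, insert the conserved quantity from Step~2 into the physical-space weighted-energy and pointwise hierarchy of~\cite{Dejantobepublished}, with the modifications indicated in~\S\ref{sec:10comments} and along the lines of~\cite{Kerrburger2} (which adapted~\cite{Angelopoulos2018ASpacetimes,Angelopoulos2018Late-timeSpacetimes}); equivalently, realise the mode as a suitable time-(anti)derivative of a solution with finite \emph{unmodified} future N--P constant, quote~\cite{Dejantobepublished} directly, and integrate the resulting asymptotics in $t$ --- the procedure of footnote~\ref{footnote7} run in reverse. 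Since the induced $u$-decay on $\mathcal C_{\mathrm{in}}$ has $p=1$ for every $\ell\ge 1$, this gives $r\phi_\ell|_{\mathcal I^+}\sim C_{\ell m}\, u^{-2}$ with $C_{\ell m}$ a numerical multiple of the conserved quantity (hence explicitly computable from $G$), and along $\{r=\mathrm{const}\}$ and $\mathcal H^+$ the extra factors $\tau^{-\ell}$, $v^{-\ell}$ of~\eqref{intro:eq:yy}; the $\ell=0$ mode contributes $r\phi_0|_{\mathcal I^+}\sim u^{-2}$, $\phi_0|_{\mathcal H^+}\sim v^{-3}$ as in~\cite{Dejantobepublished}.

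\textbf{Step 4 (summation and the main obstacle).} To obtain~\eqref{super} one resums the mode asymptotics of Step~3; for finite-regularity data one truncates at an $\ell_0$ growing with the regularity and controls $\phi_{\ell>\ell_0}$ by the upper-bound half of the statement, while for smooth data $\ell_0=\infty$. \textbf{The hard part} is uniformity in $\ell$, on two fronts. First, Theorem~\ref{thm:intro:moreg} is proved with data regularity that increases with $\ell$, so one must revisit its proof to extract, from \emph{fixed} finite regularity, the weaker --- but still useful and summable --- decay consistent with the conjectured upper bounds; this is the difficulty already flagged for Conjectures~\ref{conj1} and~\ref{conj2}. Second, the estimates of~\cite{Dejantobepublished} must be made quantitative in $\ell$ so that both the leading coefficients $C_{\ell m}$ and the remainders can be resummed: one needs $\sum_{\ell,m}C_{\ell m}Y_{\ell m}$ to converge (tied to smoothness of the scattering data) and, more delicately, the subleading parts of all the modes to sum to a genuine $o(u^{-2})$ along $\mathcal I^+$ --- note that \emph{every} $\ell$-mode sits at exactly order $u^{-2}$ there, so the error in~\eqref{super} has no room to spare, even though $\partial_v(r\phi_\ell)$ decays \emph{more slowly} towards $i^0$ as $\ell$ grows. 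Establishing that this $\ell$-independence of the $\mathcal I^+$-rate survives the resummation, and that genericity ($C_{\ell m}\ne 0$) is preserved, is the crux of the argument.
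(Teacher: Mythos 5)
This statement is a \emph{conjecture}: the paper offers no proof, only the heuristic derivation of \S\ref{intro:subsecNP3} (reduction of the scattering data to characteristic data on $\mathcal C_{\mathrm{in}}$ with $p=2$ for $\ell=0$ and $p=1$ for $\ell\geq 1$ via \eqref{intro:wave:radiation}, application of Theorem~\ref{thm:intro:moreg}, and an appeal to Conjecture~\ref{conj2} and the methods of~\cite{Dejantobepublished}), and it explicitly defers the resolution to future work. Your outline reproduces exactly this route and correctly identifies the same open obstacles (fixed-regularity decay, $\ell$-uniformity of the late-time estimates, and resummation of the $u^{-2}$ tails); it is therefore an accurate account of the paper's intended strategy rather than a proof, and no more could be expected here.
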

\begin{figure}[htbp]
\floatbox[{\capbeside\thisfloatsetup{capbesideposition={right,top},capbesidewidth=4.4cm}}]{figure}[\FBwidth]
{\caption{
Given compactly supported scattering data for $r\phi_\ell$ on $\mathcal H^-$ and $\mathcal I^-$, the solution generically decays like $1/u$ near $\mathcal I^-$ (away from the support of the data) unless $\ell=0$. It thus has finite $(\ell-1)$-modified N--P constant (see \eqref{intro:eq:y}) on $\mathcal I^+$ by Thm.~\ref{thm:intro:moreg}. We also depicted the conjectured late-time asymptotics for all $\ell\geq 0$.  }
\label{fig:III:3}}
{\includegraphics[width=200pt]{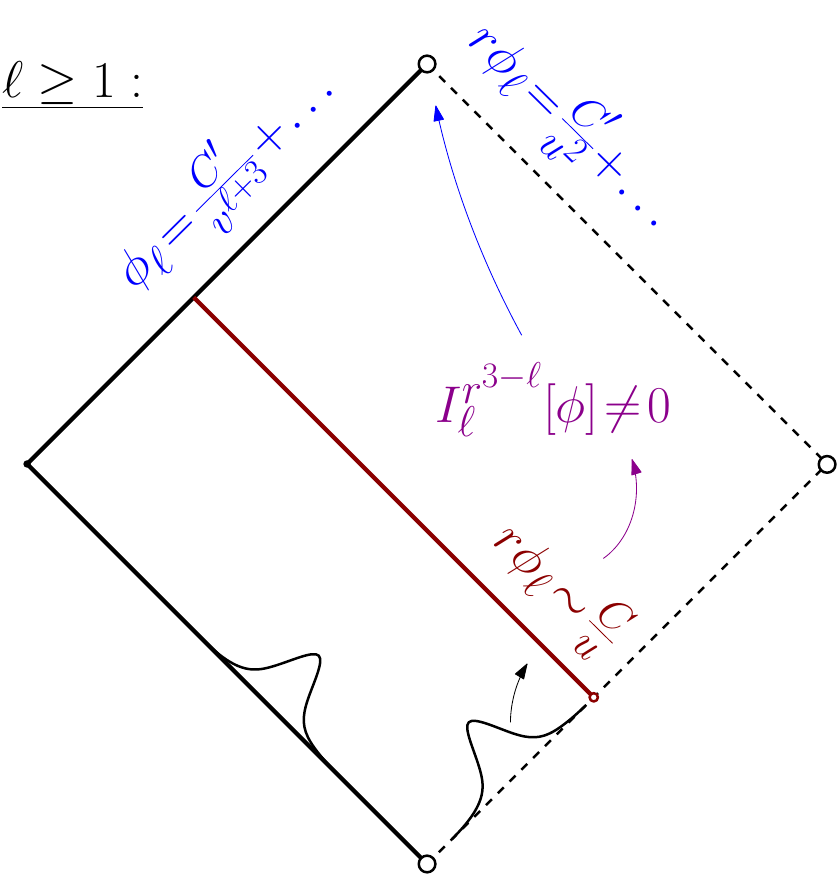}}
\end{figure}

The asymptotics \eqref{super} would be in stark contrast to the usual expectation that the asymptotic behaviour on $\mathcal I^+$, i.e.\ the \textit{physically measurable} behaviour, is dominated by low frequencies. 
It would therefore also be interesting to find the precise form of the constants $ C_{\ell m}$ to see how much each frequency contributes. We leave this, as well as the resolution of Conjectures~\ref{conj1}--\ref{conj3}, to future work.

\subsection{Structure and guide to reading the paper}\label{sec:intro:gouda}
This paper is structured as follows:
We first recall the family of Schwarzschild spacetimes and recall some geometric preliminaries in \S\ref{sec:geom}. We  then recall relevant results on the wave equation on a Schwarzschild background in \S\ref{sec:wave}. 

The rest of the paper is divided into two parts: In part~\ref{part1}, which comprises sections~\ref{sec:nl}--\ref{sec:timelikenonconstantr}, we focus solely on the $\ell=1$-case. 
This part is written with an emphasis on being instructive and providing intuition for the main results and some (but not all) of the methods used to prove them, which might otherwise be camouflaged by the large amount of inductions in the case of general $\ell$. 
In part~\ref{part2}, which comprises sections~\ref{sec:generalNP}--\ref{sec:moreg:null}, we then develop a more systematic approach for the case of general $\ell$. 

Part~\ref{part1} is structured as follows: 
In \S\ref{sec:nl}, we treat the case of data on an ingoing null hypersurface and prove Theorem~\ref{thm:intro:nl}. 
In \S\ref{sec:timelikeconstantR}, we then treat the case of boundary data on a timelike hypersurface $\Gamma$ of constant area radius and prove Theorem~\ref{thm:intro:tl}. 
We shall explain how to lift the restriction to constant area radii and treat boundary data on more general spherically symmetric hypersurfaces in \S\ref{sec:timelikenonconstantr}.

Part~\ref{part2} is structured as follows: 
In \S\ref{sec:generalNP}, we derive the higher-order approximate conservation laws for general $\ell$-modes and the associated higher-order Newman--Penrose constants.
Equipped with these, we then consider the case of boundary data on a hypersurface $\Gamma$ of constant area radius and prove Theorem~\ref{thm:intro:gtl} in \S\ref{sec:general:timelike}. The generalisation to more general $\Gamma$ proceeds similarly to the one in \S\ref{sec:timelikenonconstantr} and is left to the reader.
The last two sections, \S\ref{sec:general:null} and \S\ref{sec:moreg:null}, again concern data on a null hypersurface. 
In \S\ref{sec:general:null}, we consider the fast initial decay implied by Thm.~\ref{thm:intro:gtl} and prove  Theorem~\ref{thm:intro:gnl}. 
Section~\ref{sec:moreg:null} then generalises these results to slowly decaying data (using different methods) and contains the proof of Theorem~\ref{thm:intro:moreg}.
Various inductive proofs of statements made in part~\ref{part2} are deferred to the appendix~\ref{sec:app}.

Depending on the reader's taste, she can either begin with a thorough reading of part \ref{part1} and then skim through \S\ref{sec:generalNP}--\S\ref{sec:general:null} of part \ref{part2} and carefully read \S\ref{sec:moreg:null}, which introduces an approach not presented in part~\ref{part1}.

Alternatively, she can skip directly to part~\ref{part2} and occasionally refer back to part~\ref{part1} for details, e.g.\ on the treatment of boundary data on a timelike hypersurface of varying area radius in~\S\ref{sec:timelikenonconstantr}.

In any case, an effort was made to make each section of the paper as self-contained as possible.
%
%

\section{Geometric preliminaries}\label{sec:geom}

\subsection{The Schwarzschild spacetime manifold}
The (exterior of the) Schwarzschild family of spacetimes $(\mathcal{M}_M,g_M)$, $M>0$,\footnote{For $M=0$, one recovers the Minkowski spacetime, to which all the results of the paper apply as well.} is given by the family of manifolds
\begin{equation*}
\mathcal{M}_M=\mathbb{R}\times(2M, \infty)\times \mathbb{S}^2,
\end{equation*}
covered by the coordinate chart $(v,r,\theta, \varphi)$, with $v\in\mathbb{R}$, $r\in(2M,\infty)$, $\theta\in(0,\pi)$ and $\varphi\in(0,2\pi)$, where $(\theta,\varphi)$ denote the standard spherical coordinates on $\mathbb{S}^2$, and by the family of metrics
\begin{equation}\label{eq:geometry:vr}
g_M=-4D(r)\dd v^2+4\dd v\dd r+r^2(\dd\theta^2+\sin^2\theta \dd\varphi^2),
\end{equation}
where 
\begin{equation}\label{eq:geometry:D}
D(r)=1-\frac{2M}{r}.
\end{equation}
Upon introducing the \textit{tortoise coordinate} $r^*$ as
\begin{equation}\label{eq:geometry:tortoise}
r^*(r):=R+\int_R^r D^{-1}(r')\dd r'
\end{equation}
for some $R>2M$, and defining 
\begin{equation}\label{eq:geometry:u}
u:=v-r^*,
\end{equation}
one obtains a double null covering  $(u,v,\theta,\varphi)$ of $\mathcal{M}_M$, with $u\in(-\infty, \infty)$, $v\in(-\infty, \infty)$. 
In these coordinates, the metric takes the form 
\begin{equation}\label{eq:geometry:null}
g_M=-4D(r)\dd u\dd v+r^2(\dd\theta^2+\sin^2\theta \dd\varphi^2).
\end{equation}
Throughout the remainder of this paper, we will always work within this $(u,v)$-coordinate system.

From the definitions \eqref{eq:geometry:tortoise}, \eqref{eq:geometry:u}, it follows that $\pv r=-\pu r=D$ and that, for sufficiently large values of $r$:
\begin{equation}\label{eq:v-u=r}
|r-(v-u)+2M\log r|=\mathcal{O}(1).
\end{equation}
The estimate \eqref{eq:v-u=r} will be used frequently throughout the paper.

The vector field $T=\pu+\pv$ is a Killing vector field, the \textit{static} Killing vector field of the Schwarzschild spacetime, which equips $(\mathcal{M}_M,g_M)$ with a time orientation.

While the metric \eqref{eq:geometry:null} in double null coordinates $(u,v)$ becomes singular near $u=\infty$, we see from the form \eqref{eq:geometry:vr} that one can smoothly extend $(\mathcal M_M,g_M)$ to and beyond "$u=\infty$" in $(v,r)$-coordinates. The set "$u=\infty$" is referred to as $\mathcal H^+$, or \textit{future event horizon}, and the region beyond it as the black hole region of the Schwarzschild spacetime. Similarly, one can extend $(\mathcal M_M,g_M)$ to and beyond "$v=-\infty$" (denoted $\mathcal H^-$) by working in coordinates $(u,r)$.

On the other hand, we will often consider functions $f\in C^{\infty}(\mathcal M_M)$ such that  e.g.\ the limit $\lim_{v\to\infty}f(u,v,\theta,\varphi)$ exists and is continuous in $u,\theta$ and $\varphi$.
 In these cases, we will interpret the limit as living on the abstract set $\{u,v=\infty,\theta,\varphi \}$, which we well refer to as \textit{future null infinity} or $\mathcal I^+$.
 Similarly, \textit{past null infinity} $\mathcal I^-$ corresponds to the set of points $\{u=-\infty,v,\theta,\varphi\}$. 
 One can think of these sets as being attached to $\mathcal M$ as boundaries, but the differentiable structure of this extension plays no role in this paper.
 See Figure~\ref{fig:III:4}.

We introduce two null foliations of $\mathcal{M}_M$: A foliation by ingoing null hypersurfaces
\begin{equation}
\mathcal{C}_{v=V}=\mathcal{M}_M\cap\{v=V\},
\end{equation}
and a foliation by outgoing null hypersurfaces
\begin{equation}
\mathcal{C}_{u=U}=\mathcal{M}_M\cap\{u=U\}.
\end{equation}
We will often just write $\mathcal C_U$ instead of $\mathcal C_{u=U}$, and, similarly, $\mathcal C_{V}$ instead of $\mathcal C_{v=V}$. It will always be clear from the context whether we refer to ingoing or outgoing null hypersurfaces.
Moreover, if $f:\mathbb{R}\to (2M,\infty)$ is a smooth function of $u$, we shall denote by $\Gamma_f$ the following timelike hypersurface:
\begin{equation}
\Gamma_{f}=\mathcal{M}_M\cap \{v=v_{\Gamma_{f}}(u)\},
\end{equation}
where $v_{\Gamma_{f}}(u)$ is defined via
\[v_{\Gamma_{f}}(u)-u=2r^*(f(u)).\]
In the special case where $f(u)=R>2M$ is a constant, we simply write 
$\Gamma_f=\Gamma_R$
and $v_{\Gamma_{f}}(u)=v_R(u).$

In the sequel, we will drop the subscript $M$ in $\mathcal{M}_M$ and $g_M$, and we will frequently quotient out the spheres for a given spherically symmetric subset of $\mathcal M$ without writing it (for instance, we will denote the set of all points $(u,v)$ s.t.\ $(u,v,\theta,\varphi)\in \Gamma_R$ by $\Gamma_R$, too).
\begin{figure}[htbp]
\floatbox[{\capbeside\thisfloatsetup{capbesideposition={right,top},capbesidewidth=4.4cm}}]{figure}[\FBwidth]
{\caption{Depiction of the Schwarzschild manifold $\mathcal M$. Also depicted is the region $\Delta$ in which we apply the divergence theorem \eqref{eq:divergencetheorem}.}
\label{fig:III:4}}
{\includegraphics[width=200pt]{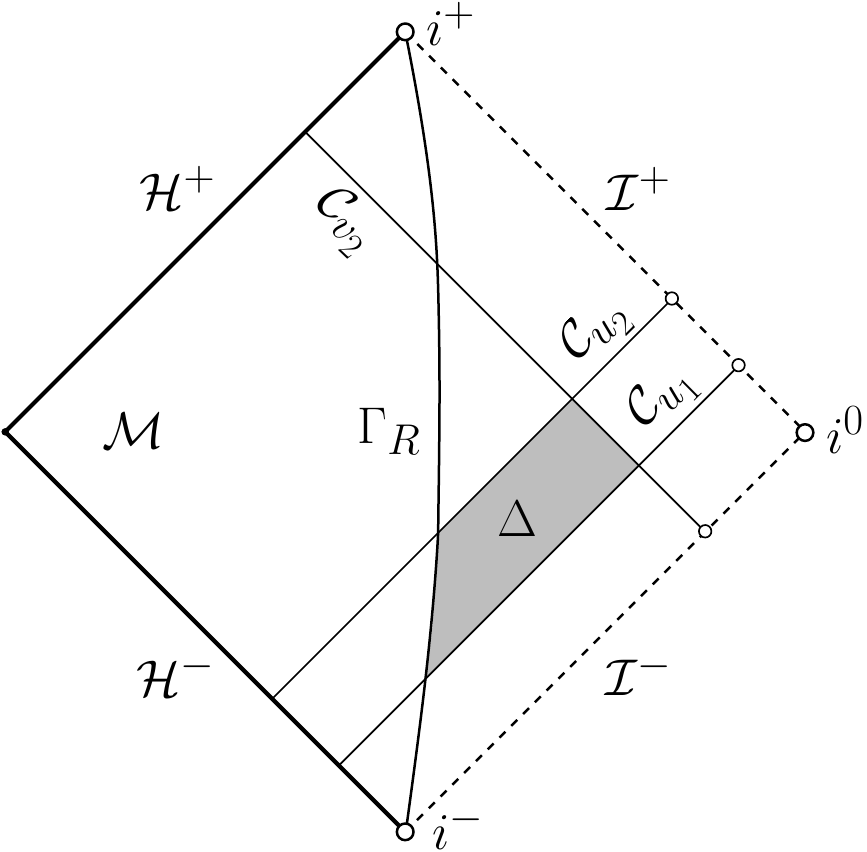}}
\end{figure}
\subsection{The divergence theorem}
Let $\mathcal{D}$ be any simply connected subset of $\mathcal{M}$ with piecewise smooth boundary $\partial\mathcal{D}$. If $J$ is a smooth 1-form, then we have by the divergence theorem:
\begin{equation}\label{eq:divergencetheorem2}
\int_{\mathcal{D}}\div J=\int_{\partial\mathcal{D}}J\cdot n_{\partial\mathcal{D}}
\end{equation}
Here, $n_{\partial\mathcal{D}}$ is the normal to $\partial\mathcal{D}$, and integration over the canonical volume form is implied. If $\partial\mathcal{D}$ contains null pieces, then there is no canonical choice of volume form or normal on these. In this case, we shall choose the product of volume form and normal in such a way that the divergence theorem \eqref{eq:divergencetheorem2} applies (using Stokes' Theorem). 
For instance, if $\Delta$ is the region bounded by $\Gamma_R\cap\{u_1\leq u\leq u_2\}$, $\mathcal{C}_{u_1}\cap\{v_R(u_1)\leq v\leq v_2\}$, $\mathcal{C}_{u_2}\cap\{v_R(u_2)\leq v\leq v_2\}$ and $\mathcal{C}_{v_2}\cap\{u_1\leq u\leq u_2\}$, then we have
\begin{align}\label{eq:divergencetheorem}
&\int_{\Gamma_R\cap\{u_1\leq u\leq u_2\}}  r^2\dd (u+v)\dd \Omega	\,J\cdot(\pu-\pv)
+\int_{\mathcal{C}_{u_1}\cap\{v_R(u_1)\leq v\leq v_2\}} r^2 \dd v \dd \Omega \,J\cdot \pv \nonumber\\
=&\int_{\mathcal{C}_{v_2}\cap\{u_1\leq u\leq u_2\}} r^2\dd u \dd \Omega\, J\cdot \pu 
+\int_{\mathcal{C}_{u_2}\cap\{v_R(u_2)\leq v\leq v_2\}} r^2 \dd v \dd \Omega \,J\cdot \pv
-\int_{\Delta} 2Dr^2 \dd u\dd v\dd \Omega\, \div J ,
\end{align}
where $d\Omega=\sin\theta \dd \theta \dd \varphi$ is the volume form of the unit sphere. 
See Figure~\ref{fig:III:4} for a depiction of this region.

\section{Generalities on the wave equation}\label{sec:wave}
In this section, we collect some important facts about the wave equation 
\begin{equation}
\Box_g\phi:=\nabla^\mu\nabla_\mu \phi=0
\end{equation}
on a Schwarzschild background, where $\nabla$ denotes the Levi--Civita connection of $g$.
\subsection{Existence and uniqueness}
We recall the following two standard existence results:
\begin{prop}[Existence for characteristic initial data]
Let $f\in C^{\infty}(\mathcal{C}_{v_1}\cap\{u_1\leq u\leq u_2\})$ and $h\in C^{\infty}(\mathcal{C}_{u_1}\cap\{v_1\leq v\leq v_2\})$ be two smooth functions satisfying the usual corner condition. 
Then there exists a unique smooth function $\phi:\mathcal{M}\cap\{v_1\leq v\leq v_2,u_1\leq u\leq u_2\}\to \mathbb{R}$ such that
\begin{align*}
\phi|_{\mathcal{C}_{v_1}\cap\{u_1\leq u\leq u_2\}}=f,&& \phi|_{\mathcal{C}_{u_1}\cap\{v_1\leq v\leq v_2\}} =h,
\end{align*}
and
$$\Box_g\phi=0.$$
\end{prop}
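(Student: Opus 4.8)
The statement to prove is the standard existence and uniqueness result for the characteristic initial value problem for $\Box_g\phi=0$ on a rectangle $\{v_1\le v\le v_2, u_1\le u\le u_2\}$ in the Schwarzschild exterior, with smooth data $f$ on the outgoing leg $\mathcal C_{v_1}$ and $h$ on the ingoing leg $\mathcal C_{u_1}$ subject to the corner compatibility condition $f(u_1)=h(v_1)$.

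\textbf{Plan of proof.} The wave equation restricted to a single angular frequency (or, more cleanly, after expanding in spherical harmonics) takes the form of a $1{+}1$-dimensional wave equation with a potential: writing $\psi_{\ell m}=r\phi_{\ell m}$, equation \eqref{intro:wave:radiation} gives
\begin{equation*}
\pv\pu\psi_{\ell m}=-D\Big(\frac{\ell(\ell+1)}{r^2}+\frac{2M}{r^3}\Big)\psi_{\ell m}=:-V_\ell(r)\,\psi_{\ell m},
\end{equation*}
with $V_\ell$ smooth and bounded on any compact $r$-range $[r_1,r_2]\subset(2M,\infty)$, and uniformly bounded on the whole rectangle since $r$ stays in a compact subinterval of $(2M,\infty)$ there. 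For fixed $(u,v)$ in the rectangle one integrates this identity twice, once in $u$ from $u_1$ and once in $v$ from $v_1$, using the characteristic data to pin down the "constants" of integration; this yields the Duhamel-type integral equation
\begin{equation*}
\psi_{\ell m}(u,v)=f_{\ell m}(u)+h_{\ell m}(v)-f_{\ell m}(u_1)-\int_{u_1}^u\!\!\int_{v_1}^v V_\ell(r(u',v'))\,\psi_{\ell m}(u',v')\,\dd v'\,\dd u',
\end{equation*}
where $f_{\ell m}$, $h_{\ell m}$ denote the spherical-harmonic coefficients of the data (the corner condition guarantees consistency at $(u_1,v_1)$). First I would solve this integral equation by a Picard/Banach fixed-point argument on $C^0$ of the rectangle: the Volterra-type double integral operator is a contraction after restricting to a sufficiently short rectangle (or, equivalently, one obtains an a priori $\sup$-norm bound via Grönwall and iterates), and then one extends to the full rectangle by a standard continuation argument since the $L^\infty$ bound does not degenerate. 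This gives a unique $C^0$ solution $\psi_{\ell m}$; uniqueness in $C^2$ (hence for smooth solutions) follows by subtracting two solutions and applying the same Grönwall estimate to the difference, which satisfies the homogeneous integral equation with zero data.

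\textbf{Regularity and reassembly.} Once a continuous solution of the integral equation is in hand, smoothness is bootstrapped: the integral equation shows $\psi_{\ell m}\in C^1$, then differentiating it shows $\pu\psi_{\ell m}$, $\pv\psi_{\ell m}$ solve related integral equations with smooth coefficients, and iterating gives $\psi_{\ell m}\in C^\infty$ on the rectangle; the mixed equation $\pv\pu\psi_{\ell m}=-V_\ell\psi_{\ell m}$ together with the two characteristic ODEs along the legs recovers all derivatives. One then sets $\phi_{\ell m}=\psi_{\ell m}/r$ and $\phi=\sum_{\ell,m}\phi_{\ell m}Y_{\ell m}$. To make the sum over $(\ell,m)$ converge to a smooth function on the (three-dimensional, including the sphere factor) domain, one needs uniform-in-$\ell$ control: decomposing the smooth data $f,h$ into harmonics gives coefficients $f_{\ell m}$, $h_{\ell m}$ decaying faster than any polynomial in $\ell$ (in every $C^k$ norm in the characteristic variable), and the energy/Grönwall estimate for the integral equation is uniform in $\ell$ up to a factor polynomial in $\ell(\ell+1)$ (coming from $\|V_\ell\|_\infty$). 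Hence the $(\ell,m)$-series and all its termwise derivatives converge absolutely and uniformly, producing $\phi\in C^\infty$ on $\mathcal M\cap\{v_1\le v\le v_2,\ u_1\le u\le u_2\}$ solving $\Box_g\phi=0$ with $\phi|_{\mathcal C_{v_1}}=f$, $\phi|_{\mathcal C_{u_1}}=h$. Uniqueness of $\phi$ follows from the uniqueness of each $\psi_{\ell m}$.

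\textbf{Main obstacle.} The genuinely nontrivial point — as opposed to the textbook $1{+}1$ fixed-point argument — is controlling the angular sum: one must track how the constants in the Grönwall/fixed-point estimate depend on $\ell$ (they grow at most polynomially because $V_\ell\sim \ell(\ell+1)/r^2$) and combine this with the super-polynomial decay of the harmonic coefficients of smooth data, to conclude smoothness of the reassembled $\phi$ rather than mere $L^2$-convergence. Everything else — the contraction estimate, the continuation to the full rectangle, the $C^\infty$ bootstrap, and the uniqueness — is routine. Alternatively, one can bypass the harmonic decomposition entirely and run the fixed-point argument directly for $\Box_g\phi=0$ written in $(u,v)$ coordinates as $\pv\pu(r\phi)=-D(\slashed\Delta_{\mathbb S^2}(r\phi)/r^2 + 2Mr\phi/r^3)$, treating the angular Laplacian as a bounded-below self-adjoint operator and working in $C^0_uC^0_v H^s(\mathbb S^2)$ for all $s$; the contraction constant then involves $\|\slashed\Delta\|_{H^{s}\to H^{s-2}}$, which is the price paid, and again one needs the data smooth to close the loop in all $s$. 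I would present the harmonic-decomposition version as it meshes directly with the rest of the paper's fixed-$\ell$ analysis.
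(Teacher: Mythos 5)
The paper offers no proof of this proposition: it is recalled in \S3.1 as a standard fact, so your argument can only be measured against the standard literature. Your overall strategy --- spherical-harmonic decomposition, the $1{+}1$ Goursat problem for $\psi_{\ell m}=r\phi_{\ell m}$ via a Volterra integral equation, a $C^\infty$ bootstrap, and resummation over $(\ell,m)$ --- is the right one, and the fixed-$\ell$ part (existence, uniqueness and smoothness of each $\psi_{\ell m}$, the derivation of the integral equation, the role of the corner condition) is correct as written.

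However, the step you yourself single out as the main obstacle contains a genuine quantitative error. The sup-norm bound produced by Picard iteration (equivalently, by the two-dimensional Gr\"onwall lemma) for the double Volterra operator is
\begin{equation*}
|\psi_{\ell m}|\;\le\;\|F_{\ell m}\|_\infty\sum_{n\ge 0}\frac{\bigl(\|V_\ell\|_\infty\,(u_2-u_1)(v_2-v_1)\bigr)^n}{(n!)^2}\;=\;\|F_{\ell m}\|_\infty\, I_0\!\left(2\sqrt{\|V_\ell\|_\infty\,(u_2-u_1)(v_2-v_1)}\right),
\end{equation*}
and since $\|V_\ell\|_\infty\sim \ell(\ell+1)$ and $I_0(y)\sim e^{y}/\sqrt{2\pi y}$, this constant grows like $e^{c\ell}$ --- \emph{not} polynomially in $\ell(\ell+1)$ as you claim. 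Smooth data have harmonic coefficients decaying faster than any polynomial but not exponentially (that would require analyticity), so the series $\sum\psi_{\ell m}Y_{\ell m}$ is not controlled by these bounds and the resummation does not close. Your alternative $H^s(\mathbb S^2)$ iteration has the analogous defect: each Picard step loses two angular derivatives, so the $n$-th iterate requires $H^{s+2n}$ data and the sum again fails for merely smooth data. The standard repair --- and the one consistent with the rest of the paper --- is to replace the sup-norm bound by the $T$-energy identity: multiplying $\pu\pv\psi_\ell=-V_\ell\psi_\ell$ by $2T\psi_\ell$ and using $TV_\ell=0$ and $V_\ell\ge 0$ on $r>2M$ gives the exact flux identity $\pu\bigl((\pv\psi_\ell)^2+V_\ell\psi_\ell^2\bigr)+\pv\bigl((\pu\psi_\ell)^2+V_\ell\psi_\ell^2\bigr)=0$ (the $q=0$ case of Proposition~\ref{prop:l:energy}, or the divergence identity \eqref{eq:divergencetheorem} applied to $J^T$), in which the angular term enters with a favourable sign, so the resulting estimate is uniform in $\ell$ relative to the natural $\ell$-weighted data norms. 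Commuting with $T$ and the angular Killing fields and applying Sobolev embedding on $\mathbb S^2$ then yields the $\ell$-uniform $C^k$ control needed to sum the series. With that substitution your proof closes.
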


\begin{prop}[Existence for mixed characteristic/boundary data]\label{prop:existence:mixedboundary}
Let $f\in C^{\infty}(\Gamma_R\cap\{u_1\leq u\leq u_2\})$ and $h\in C^{\infty}(\mathcal{C}_{u_1}\cap\{v_R(u_1)\leq v\leq v_2\})$ be two smooth functions satisfying the usual corner condition.
Then there exists a unique smooth function $\phi:\mathcal{M}\cap\{u_1\leq u\leq u_2,v_R(u)\leq v\leq v_2\}\to \mathbb{R}$ such that
\begin{align*}
\phi|_{\Gamma_R\cap\{u_1\leq u\leq u_2\}}=f,&& \phi|_{\mathcal{C}_{u_1}\cap\{v_R(u_1)\leq v\leq v_2\}} =h,
\end{align*}
and
$$\Box_g\phi=0.$$
\end{prop}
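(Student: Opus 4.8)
The plan is to pass to the radiation field $\psi := r\phi$, prove uniqueness by an energy estimate tailored to the Dirichlet condition on $\Gamma_R$, and obtain existence by combining an explicit parametrix for the associated flat mixed problem with a Volterra-type iteration carried out mode by mode in the angular variables. First note that on the compact region $\Delta$ (see Figure~\ref{fig:4}) the area radius $r$ ranges over a compact subinterval of $(2M,\infty)$, so all geometric coefficients are smooth and bounded, and $\Box_g\phi=0$ is equivalent to $\pu\pv\psi = \frac{D}{r^2}\Delta_{\mathbb{S}^2}\psi - \frac{2MD}{r^3}\psi =: \mathcal{V}\psi$ — the general-$\ell$ version of \eqref{intro:wave:radiation}. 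The boundary data become $\psi|_{\mathcal{C}_{u_1}} = rh$ and $\psi|_{\Gamma_R} = Rf$; since $\Gamma_R = \{v-u = 2r^*(R)\}$ is a line of null slope, the usual corner condition is exactly the compatibility of these two data sets at $\mathcal{C}_{u_1}\cap\Gamma_R$.

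For uniqueness, let $\phi$ be the difference of two solutions, so $\phi$ vanishes on $\mathcal{C}_{u_1}$ and on $\Gamma_R$. I would apply the divergence theorem \eqref{eq:divergencetheorem} to the energy current $J_\mu = \mathbb{T}_{\mu\nu}[\phi]\,T^\nu$ of the static Killing field $T=\pu+\pv$, where $\mathbb{T}_{\mu\nu}[\phi]=\partial_\mu\phi\,\partial_\nu\phi-\frac12 g_{\mu\nu}\,\partial^\alpha\phi\,\partial_\alpha\phi$, over $\Delta$ and over every subregion obtained by lowering $u_2$ and $v_2$. Since $T$ is Killing and $\Box_g\phi=0$, $\div J=0$, so there is no bulk contribution. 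On the outgoing (resp.\ ingoing) null boundary pieces the flux integrand equals $r^2\big[(\pv\phi)^2+D|\slabla\phi|^2\big]$ (resp.\ the same with $\pu$), which is nonnegative because $T$ is timelike for $r>2M$. On $\Gamma_R$ the flux integrand is $r^2\big[(\pu\phi)^2-(\pv\phi)^2\big]$, and it vanishes identically: $\phi|_{\Gamma_R}=0$ forces the tangential derivative $(\pu+\pv)\phi=0$ along $\Gamma_R$, i.e.\ $\pu\phi=-\pv\phi$ there. Hence \eqref{eq:divergencetheorem} forces every null flux to vanish, so $\partial\phi\equiv 0$ throughout, and $\phi|_{\Gamma_R}=0$ then gives $\phi\equiv 0$.

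For existence, I would first solve the inhomogeneous flat mixed problem $\pu\pv\Psi=F$ with $\Psi|_{\mathcal{C}_{u_1}}=0$ and $\Psi|_{\Gamma_R}=0$: integrating $\pu(\pv\Psi)=F$ from $u=u_1$ gives $\pv\Psi(u,v)=\int_{u_1}^u F(u',v)\,\dd u'$; evaluating on $\Gamma_R$ and using $\frac{\dd}{\dd u}\Psi(u,v_R(u))=0$ determines $\pu\Psi|_{\Gamma_R}$, and a second integration in $u$ from $\mathcal{C}_{u_1}$ produces an explicit, bounded double-integral (Volterra-type) operator $\Psi=\mathcal{T}F$ for which both homogeneous boundary conditions are checked directly. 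Writing $\psi_{\mathrm{free}}$ for the explicit solution of the flat mixed problem carrying the actual data $f,h$, the problem becomes the fixed-point equation $\psi=\psi_{\mathrm{free}}+\mathcal{T}(\mathcal{V}\psi)$. To avoid losing angular derivatives through $\mathcal{V}$, decompose $f$, $h$, $\psi$ into spherical harmonics: for each $(\ell,m)$, $\mathcal{V}$ acts as multiplication by the smooth bounded function $-\ell(\ell+1)D/r^2-2MD/r^3$, and the Neumann series $\sum_{n\ge 0}\big(\mathcal{T}(\mathcal{V}\cdot)\big)^n\psi_{\mathrm{free},\ell m}$ converges, the double integral supplying a $1/n!$-type gain. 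It then remains to reconstitute the genuine solution as $\phi=\frac1r\sum_{\ell,m}\psi_{\ell m}Y_{\ell m}$ and check convergence in $C^\infty(\Delta)$: this follows from the rapid decay of $f_{\ell m},h_{\ell m}$ (smoothness of $f,h$) together with estimates for $\psi_{\ell m}$ that are uniform in $\ell$, obtained by commuting the equation with the rotation Killing fields and running the $T$-energy estimate above, whose only sign-indefinite bulk term $\frac{2MD}{r^3}\phi^2$ is $\ell$-independent while the angular potential has a favourable sign. Differentiating the integral formulas then shows $\psi\in C^\infty$ up to the boundary, the corner condition handling regularity at $\mathcal{C}_{u_1}\cap\Gamma_R$.

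The main obstacle is the interplay between the characteristic data on $\mathcal{C}_{u_1}$ and the Dirichlet data on the timelike $\Gamma_R$: unlike for a pure characteristic initial value problem, one cannot integrate the equation directly off the data, since $\psi|_{\Gamma_R}$ prescribes neither $\pu\psi$ nor $\pv\psi$ individually there; in an energy-estimate-only approach the flux through $\Gamma_R$ would pair the unknown transversal derivative with the known tangential derivative of the data, which is exactly why the Dirichlet nature of the condition matters and why it is cleanest to build the explicit parametrix $\mathcal{T}$, which reconstructs $\pu\psi|_{\Gamma_R}$ from the bulk via a second integration. The secondary, more bookkeeping-heavy point is securing $C^\infty$-convergence of the spherical-harmonic sum, for which the $\ell$-uniformity of the energy estimate — equivalently, the benign sign of the angular potential — is essential.
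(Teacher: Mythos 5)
The paper does not actually prove this proposition --- it is stated as one of two ``standard existence results'' and used as a black box --- so there is nothing to compare your argument against line by line; what follows is an assessment of your proof on its own terms. Your uniqueness argument is correct and uses exactly the machinery the paper sets up: $\div J^T=0$, the flux through $\Gamma_R$ equals $T\phi\cdot(\pu-\pv)\phi$, and this vanishes for the difference of two solutions because $T$ is tangent to $\Gamma_R$ (here you use that $R$ is constant); applying \eqref{eq:divergencetheorem} on every sub-rectangle then kills all first derivatives. Your existence scheme --- the explicit Volterra parametrix $\mathcal{T}F(u,v)=\int_{v_R(u)}^{v}\int_{u_1}^{u}F\,\dd u'\dd v'$ for the flat mixed problem, followed by a Neumann series mode by mode --- is also sound, and the integration region stays inside the domain since $v_R$ is increasing.

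The one soft spot is the resummation over $(\ell,m)$. The per-mode Neumann series bound carries a constant growing super-polynomially in $\ell$ (roughly $e^{c\ell}$ from the Bessel-type sum $\sum_n(\sup|\mathcal{V}_\ell|)^n/(n!)^2$), so it cannot be used to sum the series against merely rapidly (polynomially) decaying coefficients of smooth data; you correctly propose to replace it by the commuted $T$-energy estimate, but that estimate, applied with \emph{inhomogeneous} Dirichlet data, reintroduces precisely the $\Gamma_R$-flux term $T\phi\cdot\pv\phi$ pairing known tangential with unknown transversal derivatives that you yourself identify as the main obstacle --- as written there is a small circularity. The standard repair is to subtract a smooth lift: choose $\phi_0$ extending $f$ and $h$ into $\Delta$ with the corner conditions and with all Sobolev norms of $(\phi_0)_{\ell m}$ polynomially controlled in $\ell$; then $\tilde\phi=\phi-\phi_0$ has \emph{homogeneous} data, so the $\Gamma_R$-flux vanishes identically exactly as in your uniqueness argument, and the only new term is the bulk error $\mathcal{E}^T[\tilde\phi]=T\tilde\phi\cdot\Box_g\phi_0$, which Gr\"onwall absorbs with constants uniform in $\ell$ (the angular potential enters with a favourable sign after the Poincar\'e inequality \eqref{poincare}). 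With that modification the sum converges in $C^\infty(\Delta)$ and your proof closes.
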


\subsection{The basic energy currents}
We define, with respect to any coordinate basis, and for any smooth scalar field $f\in C^{\infty}(\mathcal{M})$, the following \textit{energy momentum tensor}:
\begin{align*}
\mathbf{T}_{\mu\nu}[f]:=\partial_\mu f\partial_\nu f-\frac12 g_{\mu\nu} \partial^\xi f\partial_\xi f
\end{align*}
Moreover, if $V$ is any smooth vector field on $\mathcal{M}$, we define the energy current $J^V[f]$ according to
\begin{align*}
J^V[\phi](\cdot):=\mathbf{T}[\phi](V,\cdot).
\end{align*}
With the divergence theorem \eqref{eq:divergencetheorem2} in mind, we compute
\begin{align}\label{eq:def:J}
\div J^V[f]=K^V[f]+\mathcal{E}^V[f],
\end{align}
where
\begin{align}
K^V[f]:=\mathbf{T}^{\mu\nu}\nabla_{\mu} V_{\nu},\\
\mathcal{E}^V[f]:=V(f)\Box_g f.
\end{align}
Note that $K^V[f]$ vanishes if $V$ is Killing (in view of the symmetry of $\mathbf{T}$), whereas $\mathcal{E}^V[f]$ vanishes if $f$ is a solution to the wave equation. Thus, $K^V[f]$ measures the failure of $V$ to be Killing and $\mathcal{E}^V[f]$ measures the failure of $f$ to solve the wave equation.

\subsection{Decomposition into spherical harmonics}\label{sec:decompositionintoYlm}
One can decompose any smooth function $f:\mathcal{M}\to \mathbb{R}$ into its projections onto spherical harmonics,
\[
f=\sum_{\ell'=0}^\infty f_{\ell=\ell'},
\]
such that
\[
f_{\ell=\ell'}(u,v,\theta, \varphi)=\sum_{m=-\ell'}^{m=\ell'}f_{\ell'm}(u,v)Y_{\ell'm}(\theta, \varphi),
\]
where the $Y_{\ell'm}$ are the spherical harmonics. These form a complete basis on $L^2(\mathbb{S}^2)$ of orthogonal eigenfunctions to the spherical Laplacian $\slashed{\Delta}_{\mathbb{S}^2}$, with eigenvalues $-\ell'(\ell'+1)$.
In particular, in view of the spherical symmetry of the Schwarzschild spacetime, if $\phi$ solves $\Box_g\phi=0$, so does $\phi_{\ell=L}$:
$$ \Box_g\phi=0 \implies \Box_g\phi_{\ell=L}=0$$
for any $L\geq 0$. In the sequel, we will frequently suppress the $m$-index of $\phi_{\ell m}(u,v)$ and just write $\phi_\ell$ instead.

Finally, we recall the Poincar\'e inequality on the sphere:
\begin{lemma}
Let $L>0$, and let $f_{\ell\geq L}\in C^2({\mathbb S^2})$ be supported only on  $\ell$-modes with $\ell\geq L$. Then
\begin{equation}\label{poincare}
\int\limits_{\mathbb S^2} f_{\ell\geq L}^2\dd \Omega\leq 
-\frac{1}{L(L+1)}\int\limits_{\mathbb S^2}f_{\ell\geq L}\cdot \slashed{\Delta}_{\mathbb S^2}f_{\ell\geq L} 	\dd \Omega=
\frac{1}{L(L+1)}\int\limits_{\mathbb S^2} |\slashed{\nabla}_{\mathbb S^2}f_{\ell\geq L}|^2\dd \Omega.
\end{equation}
\end{lemma}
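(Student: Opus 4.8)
The plan is to prove this by expanding $f_{\ell\geq L}$ in the orthonormal basis of spherical harmonics and exploiting the monotonicity of the eigenvalues $-\ell(\ell+1)$ of $\slashed{\Delta}_{\mathbb S^2}$. Write $f_{\ell\geq L}=\sum_{\ell'\geq L}\sum_{m=-\ell'}^{\ell'}c_{\ell'm}Y_{\ell'm}$, where the sum converges (together with its first and second derivatives, in $L^2(\mathbb S^2)$) since $f_{\ell\geq L}\in C^2(\mathbb S^2)$. By Parseval's identity, $\int_{\mathbb S^2}f_{\ell\geq L}^2\dd\Omega=\sum_{\ell'\geq L}\sum_{m}|c_{\ell'm}|^2$. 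Applying $\slashed{\Delta}_{\mathbb S^2}Y_{\ell'm}=-\ell'(\ell'+1)Y_{\ell'm}$ term by term and using orthonormality once more gives
\begin{equation*}
-\int_{\mathbb S^2}f_{\ell\geq L}\cdot\slashed{\Delta}_{\mathbb S^2}f_{\ell\geq L}\dd\Omega=\sum_{\ell'\geq L}\sum_{m}\ell'(\ell'+1)|c_{\ell'm}|^2.
\end{equation*}

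Next I would observe that the map $\ell'\mapsto\ell'(\ell'+1)$ is strictly increasing on $\mathbb N_0$, so that for every $\ell'\geq L$ we have $\ell'(\ell'+1)\geq L(L+1)$. Substituting this bound into the series yields
\begin{equation*}
-\int_{\mathbb S^2}f_{\ell\geq L}\cdot\slashed{\Delta}_{\mathbb S^2}f_{\ell\geq L}\dd\Omega\geq L(L+1)\sum_{\ell'\geq L}\sum_{m}|c_{\ell'm}|^2=L(L+1)\int_{\mathbb S^2}f_{\ell\geq L}^2\dd\Omega,
\end{equation*}
which, after dividing by $L(L+1)>0$, is precisely the claimed inequality between the left- and middle-hand sides of \eqref{poincare}.

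Finally, for the equality between the middle and right-hand sides, I would apply Green's identity (integration by parts) on the closed manifold $\mathbb S^2$: since $\partial\mathbb S^2=\emptyset$, there are no boundary terms, and $-\int_{\mathbb S^2}f_{\ell\geq L}\,\slashed{\Delta}_{\mathbb S^2}f_{\ell\geq L}\dd\Omega=\int_{\mathbb S^2}|\slashed{\nabla}_{\mathbb S^2}f_{\ell\geq L}|^2\dd\Omega$; the $C^2$ regularity of $f_{\ell\geq L}$ is more than enough to justify this. Alternatively, one can verify this identity directly from the spherical harmonic expansion via the $L^2$-convergence of $\slashed{\nabla}_{\mathbb S^2}f_{\ell\geq L}$. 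I do not anticipate any genuine obstacle here; the only point requiring mild care is the term-by-term manipulation of the spherical harmonic series, which is legitimate because $f_{\ell\geq L}$ is assumed $C^2$ (so its expansion converges in $H^2(\mathbb S^2)$, making all the inner products above well defined and the rearrangements valid).
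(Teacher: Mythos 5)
Your proof is correct and is the standard argument (spherical harmonic expansion, Parseval, monotonicity of the eigenvalues $\ell'(\ell'+1)$, and Green's identity on the closed manifold $\mathbb S^2$). The paper simply recalls this lemma without proof, so there is nothing to compare against; your justification of the term-by-term manipulations via the $C^2$ regularity is adequate.
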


\subsection{The commuted wave equations and the higher-order Newman--Penrose constants}\label{sec:thecommutedequations}
In the double null coordinates \eqref{eq:geometry:null}, the wave operator $\Box_g$ acting on any scalar function $f$ takes the form
\begin{equation}\label{eq:box}
\Box_g f=-\frac{\pu\pv f}{D}+\frac{1}{r}\pv f-\frac{1}{r}\pu f+\frac{1}{r^2}\slashed{\Delta}_{\mathbb{S}^2}f.
\end{equation}
Hence, if $\phi$ solves the wave equation $\Box_g\phi=0$, then we obtain the following wave equation for the \textit{radiation field} $r\phi$ (recall that $\pv r=D=-\pu r$):
\begin{equation}\label{eq:pupv rphi general}
\pu\pv(r\phi)=\frac{D}{r^2}\slashed{\Delta}_{\mathbb{S}^2}(r\phi)-\frac{2MD}{r^3}r\phi.
\end{equation}
Notice that if we restrict to the spherically symmetric mode $r\phi_{\ell=0}$, this gives rise to the approximate conservation law
\begin{equation}\label{eq:l=0cons.law}
\pu\pv(r\phi_0)=-\frac{2MD}{r^3}r\phi_0.
\end{equation}
This equation \eqref{eq:l=0cons.law} is closely related to the existence of conserved quantities along null infinity, the so-called the Newman--Penrose constants
\begin{align}
\ILn0[\phi](u)&:=\lim_{v\to\infty}r^2\pv(r\phi_0)(u,v),\\
\ILpn0[\phi](v)&:=\lim_{u\to-\infty}r^2\pu (r\phi_0)(u,v),
\end{align}
which, under suitable assumptions on $\phi$, remain conserved along $\mathcal{I}^+$, $\mathcal{I}^-$, respectively. Equation \eqref{eq:l=0cons.law} (or rather, the non-linear analogue thereof) played a crucial role in proving our results from~\cite{I} and is, in fact, ubiquitous in the studies of asymptotics for the wave equation on Schwarzschild backgrounds, see e.g.~\cite{DR05},~\cite{AAG18b}.

However, for higher $\ell$-modes, the approximate conservation law \eqref{eq:l=0cons.law} is no longer available, and the RHS of $\pu\pv(r\phi_{\ell=L})$ has a bad $r^{-2}$-weight. This difficulty appears already in the Minkowski spacetime, i.e.\ for $M=0$. There, it can be resolved by commuting with $(r^2\pv)^\ell$, $(r^2\pu)^\ell$, respectively. Indeed, if $ M=0$, one has the following precise conservation laws:
\begin{align*}
\pu(r^{-2L-2}(r^2\pv)^{(L+1)}(r\phi_L))=0,\\
\pv(r^{-2L-2}(r^2\pu)^{(L+1)}(r\phi_L))=0.
\end{align*}
One can find generalisations of these conservation laws in Schwarzschild. This is done in \S \ref{sec:generalNP} of the paper. For now, we believe it to be more instructive to only explain what happens to the $\ell=1$-modes.
If we naively commute the wave equation for $\ell =1$, namely
\begin{align}\label{eq:l=1waveequation}
\pu\pv(r\phi_1)=-\frac{2D}{r^2}r\phi_1\left(1+\frac{M}{r}\right),
\end{align}
with $r^2\pv$, then we find
\begin{equation}\label{eq:l)1}
\pu(r^{-2}\pv(r^2\pv(r\phi_1)))=-10 MD\frac{r^2\pv(r\phi_1)}{r^5}-2MD\frac{r\phi_1}{r^4}\left(1+\frac{4M}{r}\right).
\end{equation}
We see that the top-order term in \eqref{eq:l)1} comes with a good $r^{-5}$-weight. Moreover, the problematic $r^{-4}$-weight multiplying $r\phi_1$ can be removed by subtracting $Mr\phi_1$ in the following way:
\begin{equation}\label{eq:l=1cons.law.in.u}
\pu(r^{-2}\pv(r^2\pv(r\phi_1)-M r\phi_1))=-12 MD\frac{r^2\pv(r\phi_1)}{r^5}-6M^2 D\frac{r\phi_1}{r^5}.
\end{equation}
Similarly, for $u$ and $v$ interchanged, we obtain
\begin{align}
\pv(r^{-2}\pu(r^2\pu(r\phi_1)))=-10 MD\frac{r^2\pu(r\phi_1)}{r^5}+2MD\frac{r\phi_1}{r^4}\left(1+\frac{4M}{r}\right)
\end{align}
and 
\begin{equation}\label{eq:l=1cons.law.in.v}
\pv(r^{-2}\pu(r^2\pu(r\phi_1)+M r\phi_1))=-12 MD\frac{r^2\pu(r\phi_1)}{r^5}+6M^2 D\frac{r\phi_1}{r^5}.
\end{equation}
The \textit{approximate conservation laws} \eqref{eq:l=1cons.law.in.u}, \eqref{eq:l=1cons.law.in.v} give rise to the following higher-order Newman--Penrose constants:
\begin{align}\label{eq:NPl=1future}
\ILn1[\phi](u)&:=\lim_{v\to\infty}r^2\pv(r^2\pv(r\phi_1)-Mr\phi_1)(u,v),\\
\ILpn1[\phi](v)&:=\lim_{u\to-\infty}r^2\pu(r^2\pu (r\phi_1)+Mr\phi_1)(u,v),
\end{align}
which, under suitable assumptions on $\phi$, remain conserved along $\mathcal{I}^+$, $\mathcal{I}^-$, respectively.
Equations \eqref{eq:l=1cons.law.in.u} and \eqref{eq:l=1cons.law.in.v} will play a similar role in the asymptotic analysis of the $\ell=1$-mode as equation \eqref{eq:l=0cons.law} did in the analysis of~\cite{I}.
\subsection{Notational conventions}

We use the notation that $f\sim g$ (or $f\lesssim g$) if there exists a uniform constant $C>0$ such that $C^{-1}g\leq f\leq Cg$ (or $f\leq C g$). Similarly, we use the convention that $f=\mathcal O(g)$ if there exists a uniform constant $C>0$ such that $|f|\leq C g$. If $f$ and $g$ are functions depending on a single variable $x$, and if $k\in\mathbb N$, we also say that $f=\mathcal O_k(g)$ if there exist uniform constants $C_j>0$ such that $|\partial_x^j f|\leq C_j|\partial_x^j g|$ for all $j\leq k$.
 Finally, we use the usual algebra of constants ($C+D=C=CD\dots$).

\newpage
\part{The case \texorpdfstring{$\ell=1.$}{L=1.}}\label{part1}
In this part of the paper, we focus solely on the analysis of the $\ell=1$-modes. 
The aim of this part is to give some intuition for the decay rates and the methods used to prove them. 
The confident reader may wish to skip directly to the discussion of general $\ell$ in Part~\ref{part2}.

We first treat the case of data on an ingoing null hypersurface and prove Theorem~\ref{thm:intro:nl} in \S \ref{sec:nl}. 
We then treat the case of boundary data on a timelike hypersurface of constant area radius $r$ and prove Theorem~\ref{thm:intro:tl} in \S \ref{sec:timelikeconstantR}. 
Finally, we explain how to generalise to the case of boundary data on timelike hypersurfaces on which $r$ is allowed to vary in \S \ref{sec:timelikenonconstantr}.

Throughout Part~\ref{part1}, $\phi$ will always denote a solution to $\Box_g\phi=0$ which is localised on an $(\ell,m)$-frequency with $\ell=1$, $|m|\leq 1$ fixed. We use the notation from \S\ref{sec:decompositionintoYlm}, that is, we write $\phi=\phi_{\ell=1}=\phi_1(u,v)\cdot Y_{1m}(\theta,\varphi)$.
\section{Data on an ingoing null hypersurface \texorpdfstring{$\mathcal C_{v=1}$}{C(v=1)}}\label{sec:nl}
In this section, we consider solutions $\phi$ arising from polynomially decaying data on an ingoing null hypersurface $\mathcal{C}_{v=1}$ and from vanishing data on $\mathcal I^-$, and we show asymptotic estimates near spatial infinity for these. In particular, this section contains the proof of Theorem~\ref{thm:intro:nl} from the introduction.

\subsection{Initial data assumptions and the main theorem (Theorem~\ref{thm:nl})}
Prescribe smooth characteristic/scattering data for the wave equation \eqref{waveequation} restricted to $(1,m)$  which satisfy on $\mathcal C_{v=1}$
\begin{align}\label{eq:nl:ass1}
	r^2\pu\pho(u,1)=C_{\mathrm{in}}^{(1)}+\mathcal{O}(r^{-1}),\\
	r^2\pu(r^2\pu\pho)(u,1)=\ccc+\mathcal{O}(r^{-\eta})\label{eq:nl:ass2}
\end{align}
for some $\eta\in(0,1)$, and which moreover satisfy for all $v\geq 1$:
\begin{equation}\label{eq:nl:assNoIncoming}
\lim_{u\to-\infty}\pv^n(r\phi_1)(u,v)=0
\end{equation}
for $n=0,1,2$. We interpret this latter assumption as the no incoming radiation condition.

The main result of this section then is:
\begin{thm}\label{thm:nl}
By standard scattering theory~\cite{DRSR18}, there exists a unique smooth scattering solution $\phi_1\cdot Y_{1m}$ in $\mathcal M\cap\{v\geq 1\}$ attaining these data.
Let $U_0$ be a sufficiently large negative number. Then, for all $(u,v)\in\mathcal{D}:=(-\infty,U_0]\times [1,\infty)$, the outgoing derivative of $r\phi_1$ satisfies, for fixed values of $u$, the following asymptotic expansion as $\mathcal{I}^+$ is approached:
\begin{align}
\begin{split}
r^2\pv\pho(u,v)=&-\cc-2\int_{-\infty}^uF(u')\dd u'-\frac{2M\cc-2M\int_{-\infty}^uF(u')\dd u'}{r}\\
&-2M(\ccc-2M\cc)\frac{\log r-\log|u|}{r^2}+\mathcal{O}(r^{-2}),
\end{split}
\end{align}
where $F(u)$ is given by the limit of the radiation field $r\phi_1$ on $\mathcal{I}^+$
\begin{equation}
F(u):=\lim_{v\to\infty}r\phi_1(u,v)=\frac{\ccc-2M\cc}{6|u|^2}+\mathcal{O}(|u|^{-2-\eta}).
\end{equation}
In particular, if $M(\ccc-2M\cc)\neq 0$, then peeling fails at future null infinity.
\end{thm}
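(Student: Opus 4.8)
Existence and uniqueness of the scattering solution $\phi_1\cdot Y_{1m}$ attaining \eqref{eq:nl:ass1}--\eqref{eq:nl:ass2} and satisfying the no incoming radiation condition \eqref{eq:nl:assNoIncoming} is provided by the backward scattering theory of~\cite{DRSR}; the substance of the theorem is the asymptotic expansion, which I would establish in four stages. Write $\psi=r\phi_1$. \emph{Stage 1 (a priori decay).} First, integrating \eqref{eq:nl:ass1} in $u$ along $\mathcal C_{v=1}$ out of $\iminus$ --- the boundary term vanishing by \eqref{eq:nl:assNoIncoming} --- gives $\psi(u,1)=\cc/|u|+\O(|u|^{-1-\eta})$. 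I would then use the commuted equations \eqref{eq:l=1cons.law.in.u} and \eqref{eq:l=1cons.law.in.v} as transport equations in $u$ and $v$ respectively, together with a bootstrap on $\D$: since both right-hand sides carry good $r^{-5}$ weights, one propagates the initial decay into the interior to obtain, say, $|\psi|\lesssim|u|^{-1}$, $|r^2\pu\psi|\lesssim1$, $|\pv\psi|\lesssim r^{-2}$ and $|r^2\pu(r^2\pu\psi+M\psi)|\lesssim1$ throughout $\D$, all $\iminus$-boundary terms vanishing by \eqref{eq:nl:assNoIncoming}. In particular $\pv\psi$ is integrable in $v$, so $F(u):=\lim_{v\to\infty}\psi(u,v)$ exists.

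\emph{Stage 2 (sharp asymptotics of $F$ near $\iminus$).} This is the crux. The bound from Stage 1 only yields $F=\O(|u|^{-1})$, but a cancellation --- already visible in the model case $M=0$, cf.\ Remark~\ref{rem:intro:cancel} --- improves this to $|u|^{-2}$. I would integrate the $v$-conservation law \eqref{eq:l=1cons.law.in.v} (equivalently, the identities for $\pv(r^{-4}(r^2\pu)^2\psi)$) from $v=1$ to $v=\infty$ and feed in the data \eqref{eq:nl:ass1}--\eqref{eq:nl:ass2}; on Minkowski this produces an explicit formula for $\psi$, and evaluating at $v=\infty$ one checks that $F$ depends, to the stated precision, only on $\cc,\ccc$ and gives $F(u)=\tfrac16(\ccc-2M\cc)|u|^{-2}+\O(|u|^{-2-\eta})$ --- the factor $\tfrac16=\tfrac{1!}{3!}$ (as in Theorem~\ref{thm:intro:gnl}) coming from the Minkowskian integration, and the combination $\ccc-2M\cc$ rather than $\ccc$ from translating $(r^2\pu)^2\psi$ into the ``good'' past N--P quantity $r^2\pu(r^2\pu\psi+M\psi)$ of \eqref{eq:NPl=1future} and from the $2M\log r$ correction in \eqref{eq:v-u=r}. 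At the same time one records the next coefficient, $\lim_{v\to\infty}r^2\pv\psi(u,v)=-\cc-2\int_{-\infty}^u F(u')\dd u'$, by integrating $\pv(r^2\pv\psi)$ in $v$ from $\iplus$ using Stage 1.

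\emph{Stages 3--4 (the logarithm and the assembly).} With $F$ and $\lim_{v\to\infty}r^2\pv\psi$ in hand, I would integrate \eqref{eq:l=1cons.law.in.u} in $u$ out of $\iminus$,
\begin{equation*}
r^{-2}\pv(r^2\pv\psi-M\psi)(u,v)=-\int_{-\infty}^u\left(12MD\,\frac{r^2\pv\psi}{r^5}+6M^2D\,\frac{\psi}{r^5}\right)(u',v)\dd u',
\end{equation*}
the $u=-\infty$ boundary term vanishing by \eqref{eq:nl:assNoIncoming} and Stage 1. Inserting the expansions of Stage 2, the decisive contribution comes from the intermediate range $|u|\ll|u'|\ll v$, where $r\approx v-u'$ and $r^2\pv\psi\approx-\cc$; exactly as the analogous $\ell=0$ integral in~\cite{Kerrburger} produced a $\frac{\log r-\log|u|}{r^3}$ term in $\pv(r\phi_0)$, this yields a term proportional to $\frac{\log r-\log|u|}{r^3}$ in $\pv(r^2\pv\psi-M\psi)$ (the lower limit $\log|u|$ reflecting that the ray $u=\mathrm{const}$ meets $\mathcal C_{v=1}$ at $r\approx|u|$ by \eqref{eq:v-u=r}), all other contributions being $\O(r^{-3})$. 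Finally, writing $\pv(r^2\pv\psi)=\pv(r^2\pv\psi-M\psi)+M\pv\psi$ and integrating in $v$ from $\iplus$: the constant term reproduces $\lim_{v\to\infty}r^2\pv\psi$ from Stage 2, the $\O(r^{-2})$ part of the integrand produces the $-\tfrac{2M\cc-2M\int F}{r}$ term, and the $\frac{\log r-\log|u|}{r^3}$ term integrates (via $\int_\infty^v\frac{\log r'}{r'^3}\dd v'=-\tfrac{\log r}{2r^2}+\O(r^{-2})$) to $-2M(\ccc-2M\cc)\frac{\log r-\log|u|}{r^2}$, with remainder $\O(r^{-2})$. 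This is the asserted expansion of $r^2\pv\pho$; since the coefficient of $\frac{\log r}{r^2}$ is a nonzero numerical multiple of $M(\ccc-2M\cc)$, the expansion of $\pv\pho$ fails to be analytic in $1/r$, i.e.\ peeling fails at $\iplus$, whenever $M(\ccc-2M\cc)\neq0$.

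\emph{Main obstacle.} The hard part is Stage 2 and its feedback into Stage 3: one must exhibit the Minkowskian cancellation turning the $|u|^{-1}$ decay on $\mathcal C_{v=1}$ into $|u|^{-2}$ decay of $F$ on $\iplus$, and --- since the $\frac{\log r}{r^2}$ term sits only logarithmically above the $\O(r^{-2})$ remainder --- track all error terms through the nested integrations of Stages 3--4 with enough precision to pin the logarithmic coefficient down exactly rather than merely bound it (in particular, controlling the $\O(1)$-in-$r$ part of the $\log$-producing $u$-integral). The organisation into the four stages, and the fact that \eqref{eq:nl:assNoIncoming} is exactly what is needed to annihilate the $\iminus$-boundary terms, is what makes this manageable.
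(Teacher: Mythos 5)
Your Stages 2--4 are essentially the paper's own argument: the sharp $|u|^{-2}$ asymptotics of $F$ are obtained by integrating the $v$-direction law \eqref{eq:nl:valmost} from $v=1$ and then twice in $u$ from $\iminus$ (the factor $\tfrac16$ coming from the integral \eqref{eq:nl:stupidintegral1}; note the combination $\ccc-2M\cc$ actually arises from the $\int_1^v(-10M\cc+2M\cc)r^{-5}\dd v'$ contribution of the right-hand side, not from \eqref{eq:v-u=r}), and the logarithm then comes from feeding these asymptotics into the $u$-direction law and integrating in $v$ from $\iplus$, exactly as in \S\ref{sec:nl:dvphi}.

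The genuine gap is Stage 1. The a priori bound $|r\phi_1|\lesssim|u|^{-1}$ cannot be obtained by the pointwise bootstrap on the transport equations that you propose. First, two of your bootstrap quantities are simply unbounded on $\D$: by the very asymptotics established later, $r^2\pu\pho=\cc+\tfrac{\ccc-2M\cc}{3}\bigl(\tfrac{r^2}{|u|^3}+\dots\bigr)$ and $r^2\pu(r^2\pu\pho)\sim(\ccc-2M\cc)\tfrac{r^4}{|u|^4}$, both of which blow up as $v\to\infty$ at fixed $u$ unless $\ccc-2M\cc=0$ (precisely the case of interest), so they cannot serve as bootstrap assumptions. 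Second, the bootstrap on $|r\phi_1|\leq C_{\mathrm{BS}}|u|^{-1}$ alone does not close: inserting it into \eqref{eq:nl:waveequation} and integrating from $\iminus$ only yields $|\pv\pho|\lesssim C_{\mathrm{BS}}/(vr)$ plus a $\log(1-v/u)/v^2$ term (cf.\ \eqref{eq:nl:stupidintegral1.5}), whose $v$-integral from $v=1$ loses a logarithm; recovering $|r\phi_1|\lesssim|u|^{-1}$ without this loss requires exhibiting the cancellation between $r\phi_1(u,1)=\cc/|u|+\dots$ and $\int_1^v\pv\pho\approx\cc/r-\cc/|u|$ described in \S\ref{sec:4.4.3}, i.e.\ essentially the sharp asymptotics themselves, which is circular. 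The paper supplies the missing ingredient with the $|u|^q$-weighted energy estimate of Proposition~\ref{prop:l:energy}: the $q=2$ flux together with Cauchy--Schwarz gives $|r\phi_1|\leq C|u|^{-1}$ and $|\pu(r\phi_1)|\leq C|u|^{-2}$ directly (Corollary~\ref{cor:almostsharp}), and everything downstream in your plan then goes through. (A minor further point: $\lim_{v\to\infty}r^2\pv\pho=-\cc-2\int_{-\infty}^uF$ is obtained by integrating the $r^2$-commuted wave equation in $u$ from $\iminus$, not in $v$ from $\iplus$, since the latter presupposes the output of your Stage 3.)
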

\begin{rem}
The methods of our proof can also directly be  applied to data which only have 
\[r^2\pu\pho(u,1)=C_{\mathrm{in}}^{(1)}+\mathcal{O}_1(r^{-\eta})\]
for $\eta\in(0,1)$. In that case, one would, schematically, obtain $\pv\pho=\frac{f_1(u)}{r^2}+\frac{f_2(u)}{r^3}+\mathcal{O}(r^{-3-\eta})$.
\end{rem}
In order to prove the theorem, we shall first establish the asymptotics of $r\phi_1$, using equations \eqref{eq:l=1waveequation} and \eqref{eq:l=1cons.law.in.v}, in \S\ref{sec:nl:phi}, and then establish the asymptotics of $\pv(r\phi_1)$, using \eqref{eq:l=1waveequation} and \eqref{eq:l=1cons.law.in.u}, in \S\ref{sec:nl:dvphi}. 
We shall make some important comments in \S\ref{sec:nl:comments}.

\subsection{Asymptotics for \texorpdfstring{$r\phi_1$}{r phi1}}\label{sec:nl:phi}
We recall from \S \ref{sec:thecommutedequations} the two wave equations
\begin{align}\label{eq:nl:waveequation}
\pu\pv(r\phi_1)=-\frac{2D}{r^2}r\phi_1\left(1+\frac{M}{r}\right)
\end{align}
and
\begin{align}\label{eq:nl:valmost}
\pv(r^{-2}\pu(r^2\pu(r\phi_1)))=-10 MD\frac{r^2\pu(r\phi_1)}{r^5}+2MD\frac{r\phi_1}{r^4}\left(1+\frac{4M}{r}\right).
\end{align}
The reason that we here choose to work with \eqref{eq:nl:valmost} rather than \eqref{eq:l=1cons.law.in.v} is that, in view of the no incoming radiation condition, the bad $r^{-4}$-weight multiplying $r\phi_1$ in \eqref{eq:nl:valmost} is not a problem (since  $r\phi_1$ itself will decay).

Throughout the rest of \S\ref{sec:nl}, $U_0$ will be a sufficiently large negative number (the largeness depending only on data), and $\mathcal D$ will be as in Thm.~\ref{thm:nl}.

\subsubsection{A weighted energy estimate and almost-sharp decay for \texorpdfstring{$r\phi_1$}{r phi1}}
We first prove almost-sharp decay using an energy estimate:
\begin{prop}\label{prop:l:energy}
Define the following energies:
\begin{align*}
E_{q}^{[u_1,u_2]}(v):=\int_{u_1}^{u_2} |u|^q \left( 	(\pu\pho)^2 +(r\phi_1)^2 \frac{2D}{r^2}\left(1+\frac{M}{r}\right)\right) (u,v)\dd u,\\
E_{q}^{[v_1,v_2]}(u):=\int_{v_1}^{v_2}|u|^q \left( 	(\pv\pho)^2 +(r\phi_1)^2 \frac{2D}{r^2}\left(1+\frac{M}{r}\right)\right)(u,v)\dd v.
\end{align*}
Then the following energy inequality holds for all $v_2>v_1\geq 1$, $q\geq0$ and for  $0>U_0\geq u_2>u_1$:
\begin{align}
	E_{q}^{[u_1,u_2]}(v_2)+E_{q}^{[v_1,v_2]}(u_2)\leq E_{q}^{[u_1,u_2]}(v_1)+E_{q}^{[v_1,v_2]}(u_1).
\end{align}
\end{prop}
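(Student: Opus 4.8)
The plan is to prove this as a standard energy estimate via the divergence theorem applied to the current $J^T[r\phi_1]$ associated with the static Killing field $T=\pu+\pv$, but where the weight $|u|^q$ is introduced by hand because it is not associated with a Killing field. The key point is that the wave equation \eqref{eq:nl:waveequation} for $r\phi_1$ has the form $\pu\pv(r\phi_1) = -\tfrac{2D}{r^2}(1+\tfrac{M}{r})\,r\phi_1$, i.e.\ it is a wave equation with a \emph{good-signed} potential term (the coefficient $\tfrac{2D}{r^2}(1+\tfrac{M}{r})$ is positive for $r>2M$, $M>0$). This is exactly why the potential term $(r\phi_1)^2\tfrac{2D}{r^2}(1+\tfrac{M}{r})$ appears \emph{with a plus sign} inside the energy densities, and why one gets a clean inequality without any bulk error terms needing to be absorbed.

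Concretely, I would set $\psi := r\phi_1$ and compute $\pu\pv\psi = -W\psi$ with $W := \tfrac{2D}{r^2}(1+\tfrac{M}{r}) > 0$. Then $\pu\big( |u|^q(\pv\psi)^2 + |u|^q W\psi^2 \big)$ and $\pv\big( |u|^q(\pu\psi)^2 + |u|^q W\psi^2 \big)$ should be computed. For the first: $\pu\big(|u|^q(\pv\psi)^2\big) = q|u|^{q-1}\cdot(-\mathrm{sgn}(u))(\pv\psi)^2 + 2|u|^q \pv\psi\,\pu\pv\psi$; since $u<0$ in $\D$, $\pu|u| = \pu(-u) = -1$, so $\pu(|u|^q) = -q|u|^{q-1} \le 0$. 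The cross term $2|u|^q\pv\psi\,\pu\pv\psi = -2|u|^q W\psi\,\pv\psi$ must cancel against a corresponding term coming from $\pv(|u|^q W \psi^2) = \pv(|u|^q W)\psi^2 + 2|u|^q W\psi\pv\psi$. So forming the combination $\pu(\ldots) + \pv(\ldots)$ of the two density pairs, the $W\psi\pv\psi$ and $W\psi\pu\psi$ cross terms cancel, leaving $\pu(|u|^q)(\pv\psi)^2 + \pv(|u|^q)(\pu\psi)^2 + \big(\pu(|u|^q) + \pv(|u|^q)\big)W\psi^2 + |u|^q\big(\pu W + \pv W\big)\psi^2 = T(|u|^q)W\psi^2$. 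Now $T(|u|^q) = \pu(|u|^q)+\pv(|u|^q) = -q|u|^{q-1} + q|u|^{q-1} = 0$ wait — that is not right, since $\pv|u| = \pv(-u) = 0$ because $u$ does not depend on $v$. So $\pv(|u|^q) = 0$, hence $\pu(|u|^q) = -q|u|^{q-1}$, and all contributions have a definite sign: $-q|u|^{q-1}(\pv\psi)^2 \le 0$, $-q|u|^{q-1}W\psi^2 \le 0$, and $|u|^q T(W)\psi^2$ — here $T(W) = (\pu+\pv)W$, and since $W$ depends on $r$ only and $Tr = \pu r + \pv r = -D + D = 0$, we get $T(W)=0$. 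So the full divergence of the relevant current is manifestly $\le 0$.

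The main step is then to integrate this sign-definite divergence identity over the coordinate rectangle $[u_1,u_2]\times[v_1,v_2]$ and read off the boundary terms: integrating $\pu$ of the $v$-density in $u$ and $\pv$ of the $u$-density in $v$ (i.e.\ Fubini on the rectangle), the $v_2$-slice and $u_2$-slice integrals appear with a $+$ sign on the left, the $v_1$-slice and $u_1$-slice with a $+$ sign on the right, and the bulk term is $\le 0$, yielding exactly $E_q^{[u_1,u_2]}(v_2) + E_q^{[v_1,v_2]}(u_2) \le E_q^{[u_1,u_2]}(v_1) + E_q^{[v_1,v_2]}(u_1)$. I would phrase this as: "multiply \eqref{eq:nl:waveequation} by $2|u|^q\pv\psi$ and by $2|u|^q\pu\psi$, rearrange into total derivatives, add, and integrate over $[u_1,u_2]\times[v_1,v_2]$," which is cleaner than invoking the divergence theorem \eqref{eq:divergencetheorem2} with all its normalisation subtleties on null boundaries, though either works.

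The only genuine subtlety — and the thing I'd expect to be the "main obstacle," though it is minor — is the sign bookkeeping of $\pu|u|$ versus $\pv|u|$ on $\D$, and making sure the inequality points the right way given that we are propagating \emph{towards} $i^0$ (decreasing $u$), i.e.\ that the "initial" data slices for this estimate are $v=v_1$ and $u=u_1$ with $u_1 < u_2 \le U_0 < 0$. One must also note that the $r^{-\eta}$-type error terms and the no-incoming-radiation boundary conditions play no role here at all — Proposition~\ref{prop:l:energy} is a purely local, monotonicity-type statement and holds for \emph{any} smooth solution $\psi$ of \eqref{eq:nl:waveequation} on the rectangle. I would conclude by remarking that the almost-sharp decay of $r\phi_1$ will be extracted from this in the subsequent proposition by combining the energy inequality with the no-incoming-radiation condition \eqref{eq:nl:assNoIncoming} (which makes the data on $u=u_1$ vanish in the limit $u_1\to-\infty$) and a weighted Sobolev / fundamental-theorem-of-calculus argument.
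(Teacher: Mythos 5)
Your proposal is correct and is essentially the paper's own proof: the paper likewise multiplies the wave equation \eqref{eq:nl:waveequation} by $2T(r\phi_1)$ (equivalently, by $2\pu(r\phi_1)$ and $2\pv(r\phi_1)$ and adds), inserts the weight $|u|^q$ using $\pv(|u|^q)=0$ and $\pu(|u|^q)=-q|u|^{q-1}\le 0$, and integrates the resulting sign-definite divergence identity over the rectangle. Your observations that $T(W)=0$ (since $Tr=0$) and that the no-incoming-radiation condition plays no role in this purely local monotonicity statement are both accurate and consistent with the paper's argument.
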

\begin{proof}
Multiply the wave equation \eqref{eq:nl:waveequation} with $2 T(r\phi_1)$ (recall that $T=\pu+\pv$) to obtain:
\begin{align*}
0=\pu\left((\pv\pho)^2\right)+\pv\left((\pu\pho)^2\right)+T\left(\frac{2D(r\phi_1)^2}{r^2}\left(1+\frac{M}{r}\right)\right).
\end{align*}
This would already lead to the standard energy estimate, but we can exploit a certain monotonicity to obtain a weighted energy estimate: 
For this, we multiply the above expression  with $|u|^q$ and recall that $u<0$:
\begin{align*}
0=\pv\left(|u|^q	(\pu\pho)^2 +\frac{2D|u|^q(r\phi_1)^2 }{r^2}\left(1+\frac{M}{r}\right)\right)\\
+\pu\left(|u|^q	(\pv\pho)^2 +\frac{2D|u|^q(r\phi_1)^2 }{r^2}\left(1+\frac{M}{r}\right)\right)\\
+q|u|^{q-1}\left((\pv\pho)^2+\frac{2D\pho^2}{r^2}\left(1+\frac{M}{r}\right)\right).
\end{align*}
Finally, integrating this in $u$ and $v$ using the fundamental theorem of calculus gives
\begin{align}\begin{split}
	E_{q}^{[u_1,u_2]}(v_2)&+E_{q}^{[v_1,v_2]}(u_2)= E_{q}^{[u_1,u_2]}(v_1)+E_{q}^{[v_1,v_2]}(u_1)\\
	&-\int_{v_1}^{v_2}\int_{u_1}^{u_2}q|u|^{q-1}\left((\pv\pho)^2+\frac{2D\pho^2}{r^2}\left(1+\frac{M}{r}\right)\right)\dd u\dd v.
\end{split}\end{align}
\end{proof}
\begin{rem}
A similar result holds for any fixed angular frequency solution. Moreover, in view of Lemma~\ref{poincare}, the above proof also works for any $\phi$ supported on angular frequencies $\ell\geq L$, for some $L\geq 1$.
\end{rem}
From this weighted $L^2$-estimate, we can already derive almost-sharp pointwise decay:
\begin{cor}\label{cor:almostsharp}
There is a constant $C$ depending only on data such that, throughout $\mathcal{D}$:
\begin{align}\label{eq:nl:almost.sharp}
|r\phi_1(u,v)|\leq \frac{C}{|u|},&& |\pu(r\phi_1)|\leq\frac{C}{|u|^2}.
\end{align}
Moreover, we have that, for all $v\geq 1$:
\begin{equation}
\lim_{u\to-\infty}r^2\pu(r\phi_1)(u,v)\equiv\cc.
\end{equation}
\end{cor}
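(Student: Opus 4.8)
The plan is to first observe that the data on $\mathcal{C}_{v=1}$ are themselves sharply decaying, then to extract almost-sharp pointwise decay from the weighted energy estimate of Proposition~\ref{prop:l:energy}, and finally to use the wave equations \eqref{eq:nl:waveequation} and \eqref{eq:nl:valmost} together with the no incoming radiation condition \eqref{eq:nl:assNoIncoming} both to identify the limit $\cc$ and to remove the $\delta$-loss. For the first step, assumption \eqref{eq:nl:ass1} gives $|\pu(r\phi_1)(u,1)|\lesssim r^{-2}\lesssim |u|^{-2}$ (using $r\sim |u|$ on $\mathcal{C}_{v=1}$, cf.\ \eqref{eq:v-u=r}), and since $r\phi_1$ vanishes on $\mathcal{I}^-$ by \eqref{eq:nl:assNoIncoming} with $n=0$, integrating in $u$ from $\mathcal{I}^-$ yields $|r\phi_1(u,1)|\lesssim |u|^{-1}$.

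For the second step I would apply Proposition~\ref{prop:l:energy} on $[u_1,u_2]\times[1,v_2]$ with any fixed weight $q\in[0,3)$ and let $u_1\to-\infty$: the flux through $\mathcal{C}_{v=1}\cap\{u\leq u_2\}$ is finite because its integrand decays like $|u|^{q-4}$ by the first step, while the flux through the outgoing cone $\mathcal{C}_{u=u_1}$ (which approaches $\mathcal{I}^-$) tends to $0$ by the no incoming radiation condition together with the a priori decay that comes with the scattering solution of~\cite{DRSR}. This gives the uniform bound $\int_{-\infty}^u |u'|^q (\pu(r\phi_1))^2(u',v)\,\dd u'\leq C_q$ on every ingoing cone; since $r\phi_1|_{\mathcal{I}^-}=0$, a weighted Cauchy--Schwarz estimate of $r\phi_1(u,v)=\int_{-\infty}^u\pu(r\phi_1)(u',v)\,\dd u'$ then yields $|r\phi_1(u,v)|\leq C_\delta |u|^{-1+\delta}$ throughout $\mathcal{D}$ for every $\delta\in(0,1)$, and feeding this back into \eqref{eq:nl:waveequation} and integrating in $v$ from $\mathcal{C}_{v=1}$ gives $|\pu(r\phi_1)(u,v)|\leq C_\delta |u|^{-2+\delta}$.

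The limit then follows by writing $r^2\pu(r\phi_1)(u,v)=r^2\pu(r\phi_1)(u,1)+\int_1^v\pv(r^2\pu(r\phi_1))(u,v')\,\dd v'$, with $\pv(r^2\pu(r\phi_1))=2rD\pu(r\phi_1)-2D(1+M/r)r\phi_1$ by \eqref{eq:nl:waveequation}: for fixed $v$, as $u\to-\infty$ the boundary term tends to $\cc$ by \eqref{eq:nl:ass1}, while on the bounded strip $v'\in[1,v]$ the integrand tends to $0$ pointwise (using $r\lesssim |u|$ there, the almost-sharp bound on $\pu(r\phi_1)$, and $r\phi_1|_{\mathcal{I}^-}=0$) with a $u$-independent integrable majorant, so dominated convergence gives $\lim_{u\to-\infty}r^2\pu(r\phi_1)(u,v)=\cc$. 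To upgrade the pointwise bounds to the sharp rates, I would exploit the \emph{commuted} equation \eqref{eq:nl:valmost}, whose right-hand side carries the favourable weights $r^{-5}$, $r^{-4}$ and a factor of $M$: integrating it in $v$ from $\mathcal{C}_{v=1}$ and then in $u$ from $\mathcal{I}^-$ (where the relevant limits are controlled exactly as above) closes a bootstrap for $|\pu(r\phi_1)|\leq C|u|^{-2}$, after which $|r\phi_1|\leq C|u|^{-1}$ follows by integrating $\pu(r\phi_1)$ from $\mathcal{I}^-$ once more.

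I expect the main obstacle to be the marginality of the problem: the weighted energy estimate only reaches $|u|^{-1+\delta}$ because the weight $q=3$ just fails to make the data flux through $\mathcal{C}_{v=1}$ finite, and the bare wave equation \eqref{eq:nl:waveequation} has an order-one coefficient ($\ell(\ell+1)=2$), so a naive bootstrap for the sharp rate cannot close on its own. The $\delta$-loss can only be removed by using the improved $r$-weights and cancellations encoded in the approximate conservation law \eqref{eq:nl:valmost}, which is also the structure responsible for propagating the limit $\cc$ off $\mathcal{C}_{v=1}$.
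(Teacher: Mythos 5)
Your proposal reaches the right conclusions, but it takes a longer route than the paper and leaves one step under-justified. The paper's proof uses the weighted energy estimate with $q=2$ only, and gets the \emph{sharp} rate $|r\phi_1|\leq C/|u|$ in one pass: in the Cauchy--Schwarz step
\[
|r\phi_1(u,v)|\leq \Big(\int_{-\infty}^u|u'|^{-2}\dd u'\Big)^{1/2}\Big(E_2^{[-\infty,u]}(1)+\lim_{u'\to-\infty}E_2^{[1,v]}(u')\Big)^{1/2},
\]
the first factor is $|u|^{-1/2}$ \emph{and} the data-energy tail $E_2^{[-\infty,u]}(1)\lesssim\int_{-\infty}^u|u'|^{-2}\dd u'\lesssim|u|^{-1}$ also decays, so the product is $|u|^{-1}$ with no loss. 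The "marginality" you identify is an artifact of treating the weighted energy as a single global constant $C_q$ rather than using the decay of its tail on $(-\infty,u]$; once that is used, no weight close to $q=3$ and no appeal to the commuted equation \eqref{eq:nl:valmost} is needed at this stage. Your detour through \eqref{eq:nl:valmost} does close (one pass upgrades $|u|^{-2+\delta}$ to $|u|^{-2}$ for any $\delta<1$, since the data term there contributes $|u|^{-4}$ and the bulk contributes $|u|^{-4+\delta}$), but it essentially duplicates the computation the paper performs in the subsequent proposition on the asymptotics of $r\phi_1$. Your identification of the limit $\cc$ via the uncommuted equation and dominated convergence on the strip $[1,v]$ matches the paper's argument.

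The one genuine thin spot is your treatment of $\lim_{u_1\to-\infty}E_q^{[1,v]}(u_1)=0$. The condition \eqref{eq:nl:assNoIncoming} is only a pointwise limit with no rate, and with the growing weight $|u_1|^q$ the vanishing of this flux is not automatic from "a priori decay from the scattering solution". The paper justifies it by an explicit iteration: the unweighted ($q=0$) estimate gives $\phi_1\lesssim r^{-1/2}$, feeding this into \eqref{eq:nl:waveequation} and integrating from $\mathcal I^-$ gives $\pv(r\phi_1)\lesssim r^{-1/2}\lesssim|u|^{-1/2}$, which permits $q=1/2$, and so on until the desired weight is reached. Some argument of this kind is needed to make your energy step rigorous.
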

\begin{proof}
We consider the energy estimate above with  $q=2$ and let $(u,v)\in\mathcal D$. Then
\begin{align}\label{eq:nl:almostsharponphi}
\begin{split}
r\phi_1(u,v)&=\int_{-\infty}^u\pu\pho(u',v)\dd u'\\
&\leq \left(\int_{-\infty}^u |u'|^{-2}\dd u'\right)^\frac12  \left(\int_{-\infty}^u |u'|^{2}(\pu\pho)^2(u,v)\dd u'\right)^\frac12 \\
&\leq \left(\int_{-\infty}^u |u'|^{-2}\dd u'\right)^\frac12  \left(E_2^{[-\infty,u]}(1)+\lim_{u'\to-\infty}E_2^{[1,v]}(u')\right)^\frac12\leq \frac{C}{|u|}
\end{split}
\end{align}
for some constant $C$ solely determined by initial data. 
Here, we used the no incoming radiation condition \eqref{eq:nl:assNoIncoming} in the first step, Cauchy--Schwarz in the second step, and the energy estimate in the third step. 
In the last estimate, we then inserted the initial data assumptions\footnote{Recall that, in view of \eqref{eq:v-u=r}, $r(1,u)=|u|+\mathcal{O}(\log |u|)$.} \eqref{eq:nl:ass1} and used that $\lim_{u'\to-\infty}E_2^{[1,v]}(u')=0$. To show this latter statement, consider first the energy estimate with $q=0$ to obtain a bound of the form $\phi_1\lesssim r^{-\frac12}$. Then, insert this bound into \eqref{eq:nl:waveequation}  to obtain $\pv(r\phi_1)\lesssim r^{-\frac12}$, and repeat the argument with, say, $q=1/2$, and iterate.

Plugging the bound \eqref{eq:nl:almostsharponphi} into the wave equation \eqref{eq:nl:waveequation} and integrating from initial data $v=1$, we moreover obtain that
\begin{align*}
|\pu(r\phi_1)|\leq \frac{C}{u^2},
\end{align*}
and that, in fact, the limit of $|u|^2\pu\pho$ remains constant along $\mathcal{I}^-$. 
\end{proof}

\subsubsection{Asymptotics for \texorpdfstring{$\pu\pho$}{d/du (r phi1)} and \texorpdfstring{$r\phi_1$}{r phi1}}
We now make the decay from Corollary~\ref{cor:almostsharp} sharp:
\begin{prop}
There is a constant $C$ depending only on data such that $r\phi_1$ satisfies the following asymptotic expansion throughout $\mathcal{D}$:
\begin{equation}\label{eq:nl:asy.phi.r}
\left|r\phi_1(u,v)-\frac{\cc}{r}-\frac{\ccc-2M\cc}{6|u|^2}\right|\leq\frac{C}{|u|^{2+\eta}}+\frac{C}{r|u|}.
\end{equation}
In particular, we have
\begin{equation}\label{eq:nl:asy.phi.u}
\lim_{v\to\infty}r\phi_1(u,v)=\frac{\ccc-2M\cc}{6|u|^2}+\mathcal{O}(|u|^{-2-\eta}).
\end{equation}
\end{prop}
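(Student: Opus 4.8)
The plan is to prove the sharp expansion by a bootstrap argument built on the almost-sharp bounds of Corollary~\ref{cor:almostsharp}, feeding successively improved estimates back into the wave equation \eqref{eq:nl:waveequation} and the approximate conservation law \eqref{eq:nl:valmost} (equivalently \eqref{eq:l=1cons.law.in.v}). As a preliminary step, I would pin down $r^2\pu(r\phi_1)$ on the initial hypersurface $\mathcal C_{v=1}$: from \eqref{eq:nl:ass2} one has $\pu(r^2\pu(r\phi_1))(u,1)=\ccc r^{-2}+\mathcal O(r^{-2-\eta})$, and integrating this in $u$ from $\mathcal I^-$ — where $r^2\pu(r\phi_1)\to\cc$ by Corollary~\ref{cor:almostsharp} — and using $\pu r=-D$ to evaluate $\int_{-\infty}^u r^{-2}\,\dd u'=r^{-1}+\mathcal O(r^{-2})$ yields $r^2\pu(r\phi_1)(u,1)=\cc+\ccc r^{-1}+\mathcal O(r^{-1-\eta})$ on $\mathcal C_{v=1}$, and hence, after one more $u$-integration against the no incoming radiation condition, $r\phi_1(u,1)=\cc r^{-1}+\mathcal O(r^{-1-\eta})$.

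Next I would propagate this control off $\mathcal C_{v=1}$. From \eqref{eq:nl:waveequation} one computes $\pv(r^2\pu(r\phi_1))=\tfrac{2D}{r}\,r^2\pu(r\phi_1)-2D(1+\tfrac{M}{r})\,r\phi_1$, and the leading terms cancel once $r^2\pu(r\phi_1)\approx\cc$ and $r\phi_1\approx\cc/r$, leaving $\pv(r^2\pu(r\phi_1))\approx -2M\cc D r^{-2}$; so, in the region where $r$ is comparable to $|u|$, integrating in $v$ from $\mathcal C_{v=1}$ improves the estimate and produces a $-2M\cc/r$ correction. In the complementary region $r\gg|u|$ — where $r^2\pu(r\phi_1)$ is no longer bounded, growing like $r^2|u|^{-3}$ — I would instead integrate the conservation law \eqref{eq:nl:valmost} (resp.\ \eqref{eq:l=1cons.law.in.v}) in $v$, the right-hand side being integrable thanks to the bounds already obtained, and then iterate: the first pass gives $r\phi_1=\cc/r+\mathcal O(\ldots)$, which is re-inserted to sharpen the errors, and the procedure stabilises on $\pu(r\phi_1)=\cc D r^{-2}+h(u,v)$ with $|h(u,v)|\lesssim|u|^{-3}$ uniformly and $h$ converging, as $v\to\infty$, to a fixed $|u|^{-3}$-profile. (Alternatively, one can first solve the $M=0$ problem explicitly and treat the Schwarzschild terms as a perturbation, but the hands-on bootstrap suffices here.)

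Finally I would recover $r\phi_1$ by integrating $\pu(r\phi_1)$ in $u$ from $\mathcal I^-$, where the boundary term vanishes by the no incoming radiation condition \eqref{eq:nl:assNoIncoming}: the piece $\int_{-\infty}^u\cc D r^{-2}\,\dd u'=\cc/r$ (exactly, since $\pu r=-D$) contributes the $\cc/r$ term in the expansion, while $\int_{-\infty}^u h\,\dd u'$ converges uniformly in $v$ and tends, as $v\to\infty$, to $\tfrac{\ccc-2M\cc}{6}|u|^{-2}+\mathcal O(|u|^{-2-\eta})$; combining these gives the stated expansion of $r\phi_1$, and letting $v\to\infty$ yields \eqref{eq:nl:asy.phi.u} together with the existence of $F(u)=\lim_{v\to\infty}r\phi_1$.

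The hard part will be twofold. First, the bare error bounds from Corollary~\ref{cor:almostsharp} are not uniform all the way out to $\mathcal I^+$, so the bootstrap must be organised with the domain split into $\{r\lesssim|u|\}$ and $\{r\gtrsim|u|\}$, or run on the $|u|$-weighted quantities of Proposition~\ref{prop:l:energy}, with care that each pass genuinely closes rather than merely reproducing the previous bound. Second, extracting the precise constant $\tfrac16$ and the precise combination $\ccc-2M\cc$ requires honest bookkeeping of how the Schwarzschild factor $D=1-\tfrac{2M}{r}$ and the exact coefficients of \eqref{eq:nl:valmost}/\eqref{eq:l=1cons.law.in.v} enter at the relevant order — the $2M\cc$ is a genuine Schwarzschild effect coming from the cancellation above, and the $\tfrac16$ arises from the iterated $u$-integration of $|u|^{-3}$-type terms.
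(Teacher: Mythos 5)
Your proposal is correct and follows essentially the same route as the paper: integrate the $v$-direction conservation law \eqref{eq:nl:valmost} from the data on $\mathcal C_{v=1}$, integrate twice in $u$ from $\mathcal I^-$ using the no-incoming-radiation condition and the almost-sharp bounds of Corollary~\ref{cor:almostsharp}, iterate once to sharpen, and evaluate the resulting integrals explicitly (in particular $\int_{-\infty}^u r^2|u'|^{-4}\,\dd u'$, which produces the $\tfrac13$ and then, after the second $u$-integration of the $|u'|^{-3}$ piece, the $\tfrac16$) so as to extract the combination $\ccc-2M\cc$. The region splitting $\{r\lesssim |u|\}$ versus $\{r\gg|u|\}$ and the preliminary refinement of the data on $v=1$ are unnecessary — the paper runs the conservation law \eqref{eq:nl:valmost} globally, which handles both regimes at once — but they do no harm.
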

\begin{proof}
We integrate the approximate conservation law \eqref{eq:nl:valmost} from $v=1$:
\begin{align}\label{eq:nl:proof1}
	\begin{split}
	r^{-2}\pu(r^2\pu\pho)&(u,v)=r^{-2}\pu(r^2\pu\pho)(u,1)\\
						&+\int_1^v \left(\frac{-10MD\pu\pho}{r^3}+\frac{2MDr\phi_1}{r^4}\left(1+\frac{4M}{r}\right)\right)(u,v') \dd v'.
	\end{split}
\end{align}
Using that $\pv r=D$ and plugging in the initial data assumption \eqref{eq:nl:ass2} as well as the almost sharp bounds obtained in Corollary~\ref{cor:almostsharp}, we obtain
\begin{equation}\label{eq:l:canc:1}
\pu(r^2\pu\pho)(u,v)\lesssim\frac{r^2}{|u|^{4}},
\end{equation}
from which, in turn, we obtain via integrating that 
\begin{equation}\label{eq:l:canc:2}
\left|r^2\pu\pho-\lim_{u\to -\infty}r^2\pu\pho\right|\lesssim\int_{-\infty}^u \frac{r^2}{|u'|^{4}}\dd u'\lesssim\frac{r^2}{|u|^{3}},
\end{equation}
where the last inequality can be seen by recalling that $r\sim v-u$, or by an integration by parts, see also eq.\ \eqref{eq:nl:stupidintegral1} below. Now, by Corollary~\ref{cor:almostsharp},  we have
$
\lim_{u\to -\infty}r^2\pu\pho=\cc.
$
Thus, integrating once more in $u$ and using that $r\phi_1$ vanishes on $\mathcal{I}^-$, we obtain that
\begin{equation*}
\left|r\phi_1-\cc r^{-1}\right|\lesssim |u|^{-2}.
\end{equation*}

This estimate provides us with the leading-order behaviour of $r\phi_1$ in $r$. 
To also understand the leading-order $u$-decay of $r\phi_1$, we insert our improved bounds back into equation \eqref{eq:nl:proof1}:
\begin{align}
\begin{split}
r^{-2}\pu(r^2\pu\pho)(u,v)=\frac{\ccc}{|u|^4}
+\int_{1}^v\frac{-10M\cc}{r^5}+\frac{2M\cc}{r^5}\dd v' 
+\mathcal{O}(|u|^{-4-\eta}).
\end{split}
\end{align}
 Hence, by again converting the $v$-integration into $r$-integration using $\pv r=D$,
\begin{equation}
\pu(r^2\pu\pho)=r^2\left( \frac{\ccc}{|u|^4}-\frac{2M\cc}{|u|^4}+\frac{2M\cc}{|r|^4}\right)+\mathcal{O}(r^2|u|^{-4-\eta}).
\end{equation}
Integrating this from past null infinity, we again encounter the integral 
$\int_{-\infty}^u\frac{r^2}{|u'|^4}\dd u'$. We compute this as follows:
\begin{align}\label{eq:nl:stupidintegral1}
\begin{split}
\int\frac{r^2}{|u|^4}\dd u&=\int \pu\left(\frac{r^2}{3|u|^3}\right)+\frac{2Dr}{3|u|^3}\dd u\\
&=\frac{r^2}{3|u|^3}+\int \pu\left(\frac{r}{3|u|^2}\right)+\frac{D}{3|u|^2}-\frac{4M}{3|u|^3}\dd u
=\frac{1}{3}\sum_{k=0}^2\frac{r^k}{|u|^{k+1}}+\mathcal{O}(|u|^{-2}).
\end{split}
\end{align}
We therefore obtain the following estimate for $\pu\pho$:
\begin{equation}
\pu\pho(u,v)=\frac{\cc}{r^2}+\frac{\ccc-2M\cc}{3}\left(\frac{1}{|u|^3}+\frac{1}{|u|^2r}+\frac{1}{|u|r^2}\right)+\mathcal{O}(r^{-3}+|u|^{-3-\eta}).
\end{equation}
In particular, we thus get that
\begin{equation}
\lim_{v\to\infty}\pu(r\phi_1)(u,v)=\frac{\ccc-2M\cc}{3|u|^3}+\mathcal{O}(|u|^{-3-\eta}).
\end{equation}
Integrating once more in $u$ finishes the proof of the proposition.
\end{proof}

\subsection{Asymptotics for \texorpdfstring{$\pv(r\phi_1)$}{d/dv(r phi1)} and proof of Thm.~\ref{thm:nl}}\label{sec:nl:dvphi}
Equipped with an asymptotic expression for $r\phi_1$, we can now compute the asymptotics of $\pv(r\phi_1)$. We first derive the leading-order asymptotics of $\pv\pho$ up to order $\mathcal{O}(r^{-3})$,  using only the wave equation \eqref{eq:nl:waveequation}, and then determine the next-to-leading-order asymptotics up to $\mathcal{O}(r^{-4}\log r)$ using the commuted equation \eqref{eq:l)1}.
\subsubsection{Leading-order asymptotics of \texorpdfstring{$\pv(r\phi_1)$}{d/dv(r phi1)}}
Plugging the asymptotics \eqref{eq:nl:asy.phi.r} of $r\phi_1$ into the wave equation \eqref{eq:nl:waveequation} and integrating the latter from past null infinity, we obtain
\begin{align}\label{eq:nl:pups}
\pv\pho(u,v)=-\frac{\cc}{r^2}+\mathcal{O}(r^{-2}|u|^{-1}).
\end{align}
In order to find the $\mathcal{O}(r^{-2}|u|^{-1})$-term, we commute the wave equation with $r^2$,
\begin{equation}\label{eq:l:proof2}
\pu(r^2\pv\pho)=-2Dr\pv\pho-2D\left(r\phi_1+\frac{M}{r}r\phi_1\right),
\end{equation}
to find, upon integrating, that
\begin{equation}\label{eq:nl:proof3}
r^2\pv\pho(u,v)=-\cc-\frac{\ccc-2M\cc}{3|u|}+\mathcal{O}\left(\frac{\log (v-u)-\log |u|}{v}+\frac{1}{|u|^{1+\eta}}+\frac{1}{r}\right),
\end{equation}
where we used eq.\ \eqref{eq:v-u=r} and the fact that
\begin{align}\label{eq:nl:stupidintegral1.5}
\int_{-\infty}^u\frac{1}{r(u',v)|u'|}\dd u'\sim\int_{-\infty}^u\frac{1}{(v-u')|u'|}\dd u'=\frac{\log(v-u)-\log|u|}{v}.
\end{align}
In fact, the $\mathcal O(\log r)$-terms in \eqref{eq:nl:proof3} do not appear: By writing $r\phi_1$ as\footnote{We write $(\lim_{\mathcal{I}^+}f)(u)=\lim_{v\to\infty}f(u,v)$.} $r\phi_1=\lim_{\mathcal{I}^+}r\phi_1-\int_v^\infty\pv\pho$ in eq.\ \eqref{eq:l:proof2}, we can improve the asymptotic estimate \eqref{eq:nl:proof3} to
\begin{align*}
r^2\pv\pho(u,v)=-\cc-2\int_{-\infty}^u\lim_{v\to\infty}r\phi_1(u',v)\dd u'+\mathcal{O}(r^{-1}).
\end{align*}
This cancellation is related to the one that gives rise to the approximate conservation law \eqref{eq:l=1cons.law.in.u}.
In the above, we used (see also eq.\ (4.49) of~\cite{I}) that 
\begin{equation}\label{eq:nl:stupidintegral2}
\int_{-\infty}^u \frac{\log(v-u')-\log|u'|}{vr}\dd u'\lesssim\int_{-\infty}^u \frac{\log(v-u')-\log|u'|}{v(v-u')}\dd u'\leq \frac{\pi^2}{6}\frac{1}{v-u}\lesssim \frac{1}{r}.
\end{equation}
We summarise our findings in
\begin{prop}\label{prop4.3}
We have the following asymptotics throughout $\mathcal{D}$:
\begin{align}
\pv\pho(u,v)&=\frac{\lim_{\mathcal{I}^+}r^2\pv\pho(u)}{r^2}+\mathcal{O}(r^{-3}),\\
r\phi_1(u,v)&=\lim_{\mathcal{I}^+}r\phi_1(u)-\frac{\lim_{\mathcal{I}^+}r^2\pv\pho(u)}{r}+\mathcal{O}(r^{-2}),
\end{align}
where $\lim_{\mathcal{I}^+}r\phi_1(u)$ is given by \eqref{eq:nl:asy.phi.u}, and where
\begin{equation}
\lim_{\mathcal{I}^+}r^2\pv\pho(u)=-\cc-2\int_{-\infty}^u\lim_{\mathcal{I}^+}r\phi_1(u')=-\cc-\frac{\ccc-2M\cc}{3|u|}+\mathcal{O}(|u|^{-1-\eta}).
\end{equation}
\end{prop}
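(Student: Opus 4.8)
The plan is to assemble the estimates already derived in \S\ref{sec:nl:phi}--\S\ref{sec:nl:dvphi} and repackage them in terms of the limits on $\iplus$, carefully tracking the remainders. First I would pin down the $r^{-2}$-leading behaviour of $\pv\pho$: substituting the sharp asymptotics \eqref{eq:nl:asy.phi.r} for $r\phi_1$ into the wave equation \eqref{eq:nl:waveequation} and integrating in $u$ from $\iminus$ (where $\pv\pho$ vanishes by \eqref{eq:nl:assNoIncoming}) gives \eqref{eq:nl:pups}, i.e.\ $r^2\pv\pho=-\cc+\mathcal O(|u|^{-1})$; in particular $\lim_{u\to-\infty}r^2\pv\pho(u,v)=-\cc$ on $\iminus$. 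To promote the $\mathcal O(|u|^{-1})$-error to $\mathcal O(r^{-1})$ I would commute the wave equation with $r^2$ to obtain \eqref{eq:l:proof2}, integrate it once more from $\iminus$ (now matching the value $-\cc$), and insert \eqref{eq:nl:asy.phi.r} together with the bound on $\pv\pho$ just obtained; the only delicate integrals, $\int_{-\infty}^u(v-u')^{-1}|u'|^{-1}\dd u'$ and its iterate against $r^{-1}$, are evaluated via the explicit computations \eqref{eq:nl:stupidintegral1.5} and \eqref{eq:nl:stupidintegral2}.

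The apparent $\mathcal O((\log r-\log|u|)/v)$-contribution in \eqref{eq:nl:proof3} is then removed by substituting $r\phi_1=\lim_{\iplus}r\phi_1-\int_v^\infty\pv\pho$ back into \eqref{eq:l:proof2} before integrating --- this being the manifestation at this order of the approximate conservation law \eqref{eq:l=1cons.law.in.u}. The outcome is
\[ r^2\pv\pho(u,v)=-\cc-2\int_{-\infty}^u\Big(\lim_{v'\to\infty}r\phi_1(u',v')\Big)\dd u'+\mathcal O(r^{-1}), \]
and, since the bracketed quantity is independent of $v$, it is by definition $\lim_{\iplus}r^2\pv\pho(u)$; dividing by $r^2$ gives the first displayed asymptotics. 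I would then make the $u$-profile of this limit explicit by inserting the already-established expansion \eqref{eq:nl:asy.phi.u} and using $\int_{-\infty}^u|u'|^{-2}\dd u'=|u|^{-1}$ (with $\int_{-\infty}^u|u'|^{-2-\eta}\dd u'=\mathcal O(|u|^{-1-\eta})$ for the remainder), which yields
\[ \lim_{\iplus}r^2\pv\pho(u)=-\cc-2\int_{-\infty}^u\lim_{\iplus}r\phi_1(u')\dd u'=-\cc-\frac{\ccc-2M\cc}{3|u|}+\mathcal O(|u|^{-1-\eta}). \]

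Finally, for the asymptotics of $r\phi_1$ itself I would integrate the relation $\pv\pho=\lim_{\iplus}r^2\pv\pho(u)\cdot r^{-2}+\mathcal O(r^{-3})$ in $v$ from $\iplus$: since $\pv r=D=1-\tfrac{2M}{r}$, one has $\int_v^\infty r^{-2}\dd v'=\int_r^\infty(r')^{-2}D^{-1}\dd r'=r^{-1}+\mathcal O(r^{-2})$ and $\int_v^\infty\mathcal O(r^{-3})\dd v'=\mathcal O(r^{-2})$, so that $r\phi_1(u,v)=\lim_{v'\to\infty}r\phi_1(u,v')-\int_v^\infty\pv\pho(u,v')\dd v'=\lim_{\iplus}r\phi_1(u)-\lim_{\iplus}r^2\pv\pho(u)/r+\mathcal O(r^{-2})$, with $\lim_{\iplus}r\phi_1(u)$ given by \eqref{eq:nl:asy.phi.u}. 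This is the second displayed asymptotics.

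The only genuinely delicate point is the cancellation of the logarithmic terms in the first step: a naive integration of \eqref{eq:l:proof2} only delivers \eqref{eq:nl:proof3}, with an error of size $(\log r-\log|u|)/v$, which would spoil the claimed $\mathcal O(r^{-3})$ for $\pv\pho$ (equivalently $\mathcal O(r^{-2})$ for $r\phi_1$); one must exploit the structure tied to \eqref{eq:l=1cons.law.in.u}, implemented through the rewriting $r\phi_1=\lim_{\iplus}r\phi_1-\int_v^\infty\pv\pho$ and the bound \eqref{eq:nl:stupidintegral2}, to see that these terms do not contribute at the stated order. Everything else is bookkeeping of estimates already established.
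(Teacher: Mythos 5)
Your proposal is correct and follows essentially the same route as the paper: integrate the wave equation and its $r^2$-commuted version \eqref{eq:l:proof2} from $\mathcal I^-$, eliminate the apparent $(\log r-\log|u|)/v$-error by rewriting $r\phi_1=\lim_{\mathcal I^+}r\phi_1-\int_v^\infty\pv\pho$ together with the bound \eqref{eq:nl:stupidintegral2}, and then integrate back from $\mathcal I^+$ to read off the expansion of $r\phi_1$. The delicate logarithmic cancellation you flag is exactly the point the paper emphasises, so nothing is missing.
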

\subsubsection{Next-to-leading-order asymptotics for \texorpdfstring{$\pv\pho$}{d/dv(r phi1)}  (Proof of Thm.~\ref{thm:nl})}
\begin{proof}[Proof of Theorem~\ref{thm:nl}]
Equipped with the leading-order asymptotics for $\pv\pho$ and $r\phi_1$, we now find the asymptotic behaviour of $\pv(r^2\pv\pho)$ using the commuted wave equation
\begin{align}\label{eq:nl:ualmost}
\pu(r^{-2}\pv(r^2\pv(r\phi_1)))=-10 MD\frac{r^2\pv(r\phi_1)}{r^5}-2MD\frac{r\phi_1}{r^4}\left(1+\frac{4M}{r}\right).
\end{align}
By the no incoming radiation condition \eqref{eq:nl:assNoIncoming} and the fundamental theorem of calculus, we have
\begin{align}
	\begin{split}
	r^{-2}\pv(r^2\pv\pho)&(u,v)
						=\int_{-\infty}^u \frac{-10MDr^2\pv\pho}{r^5}-\frac{2MDr\phi_1}{r^4}\left(1+\frac{4M}{r}\right) \dd u'.
	\end{split}
\end{align}
Plugging the asymptotics from Prop.~\ref{prop4.3} into the above, we obtain that
\begin{align}\label{check1}
	\begin{split}
	&r^{-2}\pv(r^2\pv\pho)(u,v)\\
	&=\int_{-\infty}^u -\frac{8MD\lim_{\mathcal{I}^+}r^2\pv\pho(u')}{r^5}-\frac{2MD\lim_{\mathcal{I}^+}r\phi_1(u')}{r^4} \dd u'+\mathcal{O}(r^{-5}).
		\end{split}
\end{align}
Evaluating the integrals in a similar way to \eqref{eq:nl:stupidintegral1.5}, we thus find
\begin{align}\label{eq:rd1}
	r^2\pv(r^2\pv\pho)=2M\cc-M\frac{\ccc-2M\cc}{3|u|}+\mathcal{O}\left(\frac{\log (1-v/u)}{v}+\frac{1}{|u|^{1+\eta}}+\frac{1}{r}\right).
\end{align}
Notice that the $\mathcal{O}$-terms in \eqref{eq:rd1} all integrate to $\mathcal{O}(1/r)$ when multiplied by $1/r$ (cf.\ \eqref{eq:nl:stupidintegral2}).

To find the next-to-leading-order logarithmic terms, we commute the approximate conservation law \eqref{eq:nl:ualmost} with $r^4$:
\begin{align*}
\pu(r^2\pv(r^2\pv\pho))&=-\frac{4D}{r} r^2\pv(r^2\pv\pho)
-\frac{10MD}{r}r^2\pv\pho-2MDr\phi_1\left(1+\frac{4M}{r}\right).
\end{align*}
Integrating this from past null infinity and plugging in (as in \eqref{check1}) the asymptotics for $r^2\pv(r^2\pv\pho)$, $r^2\pv\pho$ and $r\phi_1$ from \eqref{eq:rd1} and Prop.~\ref{prop4.3}, respectively,  we find:
\begin{nalign}
	&r^{2}\pv(r^2\pv\pho)(u,v)\\
	=&2M\cc
	+\int_{-\infty}^u	\frac{12MD}{r}\frac{\ccc-2M\cc}{3|u'|} -2M\lim_{\mathcal{I}^+}r\phi_1(u') \dd u'+\mathcal{O}(r^{-1})\\
	=&2M\cc-2M\int_{-\infty}^u\lim_{\mathcal{I}^+}r\phi_1\dd u'+4M(\ccc-2M\cc)\frac{\log (v-u)-\log|u|}{v}+\mathcal{O}(r^{-1}).
\end{nalign}

We can now fix $u$ and integrate the above in $v$ from $\mathcal I^+$ to obtain for $\pv\pho$:
\begin{align}
\begin{split}
r^2\pv\pho(u,v)=&\lim_{\mathcal{I}^+}r^2\pv\pho(u)-\frac{\lim_{\mathcal{I}^+ }r^2\pv(r^2\pv\pho)(u)}{r}\\
-&2M(\ccc-2M\cc)\frac{\log (v-u)-\log|u|}{r^2}+\mathcal{O}(r^{-2}),
\end{split}
\end{align}
where
\begin{align*}
\lim_{\mathcal{I}^+ }r^2\pv(r^2\pv\pho)(u)=2M\cc-2M\int_{-\infty}^u\lim_{\mathcal{I}^+}r\phi_1\dd u'.
\end{align*}
This concludes the proof of Theorem~\ref{thm:nl}.
\end{proof}

\subsection{Comments}\label{sec:nl:comments}
\subsubsection{The Newman--Penrose constant \texorpdfstring{$\ILn1[\phi]$}{I[phi]}}
It is instructive to also write down the asymptotics of the quantity related to the higher-order Newman--Penrose constant $\ILn1[\phi]$ (recall the definition \eqref{eq:NPl=1future}):
\begin{thm}\label{thm:NP}
Let $U_0$ be a sufficiently large negative number. Then, throughout $\mathcal{D}=(-\infty,U_0]\times [1,\infty)$, the outgoing derivative of the combination $r^2\pv\pho-Mr\phi_1$ satisfies, for fixed values of $u$, the following asymptotic expansion as $\mathcal{I}^+$ is approached:
\begin{align}
\begin{split}
r^2\pv(r^2\pv\pho-Mr\phi_1)=3M\cc+4M(\ccc-2M\cc)\frac{\log r-\log|u|}{r}+\mathcal{O}(r^{-1}).
\end{split}
\end{align}
In particular, $\ILn1[\phi](u)\equiv 3M\cc$ is conserved along $\mathcal{I}^+$.
\end{thm}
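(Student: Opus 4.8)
The proof will be a direct corollary of the asymptotics already obtained in the proof of Theorem~\ref{thm:nl}, combined with the elementary identity
\[
r^2\pv(r^2\pv\pho-Mr\phi_1)=r^2\pv(r^2\pv\pho)-Mr^2\pv\pho .
\]
The plan is to substitute into the right-hand side the two expansions established above: first, the last display in the proof of Theorem~\ref{thm:nl}, namely
\[
r^2\pv(r^2\pv\pho)(u,v)=2M\cc-2M\int_{-\infty}^u\lim_{v'\to\infty}r\phi_1(u',v')\dd u'+4M(\ccc-2M\cc)\frac{\log(v-u)-\log|u|}{v}+\mathcal O(r^{-1});
\]
and second, the statement of Proposition~\ref{prop4.3}, which gives $r^2\pv\pho(u,v)=\lim_{\iplus}r^2\pv\pho(u)+\mathcal O(r^{-1})$ together with $\lim_{\iplus}r^2\pv\pho(u)=-\cc-2\int_{-\infty}^u\lim_{\iplus}r\phi_1(u')\dd u'$.

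The key step is to observe the cancellation: the $u$-dependent integral term $-2M\int_{-\infty}^u\lim_{\iplus}r\phi_1$ appearing in $r^2\pv(r^2\pv\pho)$ is exactly killed by $-Mr^2\pv\pho=M\cc+2M\int_{-\infty}^u\lim_{\iplus}r\phi_1+\mathcal O(r^{-1})$, so that only the genuinely conserved constant survives. This is precisely the cancellation anticipated in \S\ref{sec:thecommutedequations} as the origin of the approximate conservation law~\eqref{eq:l=1cons.law.in.u}. Collecting the constant terms gives $2M\cc+M\cc=3M\cc$, and converting $\frac{\log(v-u)-\log|u|}{v}$ into $\frac{\log r-\log|u|}{r}$ by means of $r\sim v-u$ and~\eqref{eq:v-u=r} (which produces only an error of size $\mathcal O(r^{-2}\log^2 r)=\mathcal O(r^{-1})$) yields the claimed expansion throughout $\mathcal D$. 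Finally, the conservation statement follows upon fixing $u$ and letting $v\to\infty$: both the $\frac{\log r-\log|u|}{r}$-term and the $\mathcal O(r^{-1})$-remainder vanish, whence $\ILn1[\phi](u)=\lim_{v\to\infty}r^2\pv(r^2\pv\pho-Mr\phi_1)(u,v)=3M\cc$, independently of $u$.

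There is no serious obstacle here — all the analytic content (the weighted energy estimate of Proposition~\ref{prop:l:energy}, the iterated use of the approximate conservation laws~\eqref{eq:nl:valmost} and~\eqref{eq:nl:ualmost}, and the integral identities~\eqref{eq:nl:stupidintegral1}--\eqref{eq:nl:stupidintegral2}) has already been deployed in proving Theorem~\ref{thm:nl}. The only point requiring mild care is bookkeeping of the error terms: one must confirm that the $\mathcal O(r^{-1})$ remainders imported from Proposition~\ref{prop4.3} and from the commuted-equation argument in the proof of Theorem~\ref{thm:nl} are genuinely of order $r^{-1}$ and carry no hidden logarithmic growth that could contaminate the stated leading behaviour, and that the subtraction of $Mr\phi_1$ is performed at a consistent order so that the cancellation of the $u$-dependent pieces is exact rather than merely approximate.
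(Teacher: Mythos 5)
Your proposal is correct and is exactly the argument the paper intends: Theorem~\ref{thm:NP} is stated without a separate proof precisely because it follows by subtracting $M$ times the expansion of $r^2\pv\pho$ from Proposition~\ref{prop4.3} from the final display in the proof of Theorem~\ref{thm:nl}, with the $u$-dependent integrals cancelling and $r\sim v-u$ converting $\frac{\log(v-u)-\log|u|}{v}$ into $\frac{\log r-\log|u|}{r}$ up to an admissible $\mathcal O(r^{-1})$ error for fixed $u$. Your bookkeeping of the constants ($2M\cc+M\cc=3M\cc$) and of the error terms is accurate.
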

\subsubsection{The case \texorpdfstring{$\cc=0$}{Cin,1=0}: A logarithmically modified Price's law}\label{sec:nl:comments:logpricelaw}
Notice that if $\cc=0$, then $I_{\ell=1}^{\mathrm{future}}[\phi]=0$. However,  one can still define a conserved quantity along future null infinity in this case, namely
\begin{equation}
\ILlog1[\phi](u):=\lim_{v\to\infty}\frac{r^3}{\log r}\pv(r^2\pv(r\phi_1)-Mr\phi_1)(u,v),
\end{equation}
which, in our case, is given by $4M\ccc$.
In particular, by using similar methods to the ones from~\cite{II}, which combined the results of~\cite{I} and~\cite{AAG18b}, one can thus obtain that the late time asymptotics of the $\ell=1$-mode, if one smoothly extends the data to $\mathcal{H}^+$, have \textit{logarithmic corrections at leading order}. 
In particular, one can obtain that $r\phi_1(u,\infty)=Cu^{-3}\log u+\mathcal{O}(u^{-3})$ along $\mathcal{I}^+$, and that $\pv\phi_1(\infty,v)=C'v^{-5}\log v+\mathcal{O}(v^{-5})$ along the event horizon $\mathcal{H}^+$, where the constants $C$ and $C'$ can be  expressed explicitly in terms of $\ccc$. 

In order to show this, one needs to combine the results of the present paper with those of the recent~\cite{AAG21} and make modifications to~\cite{AAG21} similar to those in~\cite{II}, see the discussion of \S\ref{sec:intro:NP}.

\subsubsection{Discussion of the cancellations of Remark~\ref{rem:intro:cancel} and the case of general \texorpdfstring{$\ell$}{L}}\label{sec:4.4.3}
Recall the cancellations discussed for general $\ell$ in Remark~\ref{rem:intro:cancel}. 
Let us here give some intuition for them, restricting, of course, to the case $\ell=1$.

 Theorem~\ref{thm:nl} shows that, if $r\phi_1\sim 1/|u|$ initially, this translates to $r\phi|_{\mathcal I^+}\sim u^{-2}$ on null infinity. 
We found this "cancellation" somewhat tacitly, namely by transporting  decay for the commuted quantity $r^2\pu(r^2\pu(r\phi_1))$ along $\mathcal I^-$.  It is maybe easiest to explain why this approach produces no cancellations for $p\in(0,1)$ or $p\in(1,2)$:
If $p\in(1,2)$, then the estimate \eqref{eq:l:canc:1} becomes \textit{worse}, not better, since the initial data term of \eqref{eq:nl:proof1} now decays slower.
On the other hand, if $p\in(0,1)$, then \eqref{eq:l:canc:2} fails, as the limit $\lim r^2\pu(r\phi_1)$ diverges. In fact, this shows that the proof of the present section fails for $p<1$. 

There also is  a more direct way of understanding the cancellation for $p=1$:
In view of the estimate \eqref{eq:nl:pups}, we have that, schematically,
$$r\phi(u,v)=r\phi(u,1)+\int_1^v \pv(r\phi)\dd v'=r\phi(u,1)+\int_1^v \frac{-\cc}{r^2(u,v')}\dd v'  =\frac{\cc}{|u|}+\frac{\cc}{r}-\frac{\cc}{|u|}=\frac{\cc}{r},
$$
where we used that $r(u,1)\sim |u|$. From this point of view, it is clear that such cancellations only happen if $r\phi_1\sim 1/|u|^p$ for $p=1$. Our more systematic approach of \S\ref{sec:moreg:null}, in which we analyse general $\ell$-modes,  will understand the cancellations of Remark~\ref{rem:intro:cancel} in a generalised form of the above computation. 
Indeed, in \S\ref{sec:moreg:null}, we will avoid using the conservation law in the $v$-direction entirely, and instead only use the conservation law in the $u$-direction:
 Instead of propagating decay for $(r^2\pu)^{\ell+1}(r\phi_\ell)$ in $v$ and then integrating this $\ell+1$ times from $\mathcal I^-$ , we will directly obtain an estimate for $(r^2\pv)^{\ell+1}(r\phi_\ell)$ by integrating from $\mathcal I^-$ in $u$, and then integrate this estimate $\ell$ times from $v=1$, carefully analysing at each step the initial data contributions. 
 In particular, this approach will  also allow for slower decay in the initial data. 
 See already \S\ref{sec:mg:overview} for a more detailed overview of the approach for general $\ell$.

\newpage

\section{Boundary data on a timelike hypersurface \texorpdfstring{$\Gamma_R$}{Gamma-R}}\label{sec:timelikeconstantR}
Having obtained asymptotic estimates for solutions arising from polynomially decaying initial data on an ingoing null hypersurface in the previous section, we now want to obtain similar estimates for solutions arising from polynomially decaying \textit{boundary data on a timelike hypersurface} $\Gamma_R$. 
The main result of this section is the proof of Theorem~\ref{thm:intro:tl}. 

In contrast to the previous section, we here need to construct our solutions at the same time as we prove estimates on them.

We use the notation from \S\ref{sec:decompositionintoYlm}, that is, we write $\phi=\phi_{\ell=1}=\phi_1(u,v)\cdot Y_{1m}(\theta,\varphi)$.  
\subsection{Overview of the ideas and structure of the section}
Let us briefly recall the approach that we followed in our treatment of the $\ell=0$-mode in~\cite{I}:  
Given polynomially decaying boundary data on $\Gamma_R$, we first considered a sequence of compactly supported boundary data that would approach the original boundary data. 
This allowed us to use the method of continuity, i.e. bootstrap arguments. 
We then \textit{assumed} decay for $r\phi_0$, and improved it by essentially  integrating the wave equation \eqref{eq:l=0cons.law} first in $u$ and then in $v$ (from $\Gamma_R$) and exploiting $2M/R$ as a "small" parameter. 
In fact, we also showed that one can avoid exploiting smallness in $2M/R$ using a Gr\"onwall argument.

If we want to follow a similar approach for $\ell=1$, it is not sufficient to consider the uncommuted wave equation \eqref{eq:l=1waveequation} in view of its non-integrable $r^{-2}$-weight. 
Instead, it seems more appropriate to use the approximate conservation law \eqref{eq:l=1cons.law.in.u} and bootstrap decay on the combination 
$$\Phi:=r^2\pv(r\phi_1)-Mr\phi_1.$$ 

The first and main difficulty then becomes apparent: $\Phi|_{\Gamma_R}$ is not given by boundary data (we prescribe boundary data \textit{tangent} to $\Gamma$).
 One way of overcoming this difficulty  is to exploit certain cancellations in the wave equation; this however requires one to have knowledge on the $T$-derivative of $r\phi_1$.
  Alternatively, one can  estimate $r^2\pv\pho|_{\Gamma_R}$ using an energy estimate which only uses "a square root" of the bootstrapped decay of $r^2\pv\pho$. 
  We will make use of both of these approaches, the former for lower-order derivatives $r^2\pv T^n\pho$ (where we have room to make assumptions on $T^{n+1}\pho$), and the latter for the top-order derivative $r^2\pv T^N\pho$, $n<N$. 
  In fact,  using only the latter approach is sufficient, but we find it instructive to also include the former as it since it highlights the importance of commuting with $T$.   
   In the more systematic approach of the discussion of general $\ell$ in \S\ref{sec:general:timelike}, we will, however, exclusively use the latter approach.
  
  Equipped with a boundary estimate on $\Phi$, we can then hope to close the bootstrap argument by simply integrating \eqref{eq:l=1cons.law.in.u} first in $u$ and then in $v$, and exploiting $2M/R$ as a small parameter.
   In fact, as in the $\ell=0$-case, one can avoid this smallness assumption. 
   The only additional subtlety here is that, in order to estimate the RHS of \eqref{eq:l=1cons.law.in.u}, we need to control $r\phi_1$ and $\pv(r\phi_1)$, which is not directly provided by a bootstrap assumption on the combination $\Phi$. We will deal with this by estimating $r\phi_1$ against the integral over $\pv(r\phi_1)$ from $\Gamma_R$, and either just exploiting smallness in $2M/R$ or using a more elaborate Gr\"onwall argument.


%

\paragraph*{Structure}
We first state our initial boundary data assumptions for $\phi_1$, as well as the main theorem, in \S \ref{sec:subsec:tl:setup}. 
Then, in order to gain access to the method of continuity, we smoothly cut-off the boundary data in \S \ref{sec:tl:cutoffdata}. 
These will lead to finite solutions $\phi_1^{(k)}$ in the sense of Proposition~\ref{prop:existence:mixedboundary}.
Using bootstrap methods as outlined above, we can then estimate $r^2\pv T^n(r\phi_1^{(k)})$ and $T^n(r\phi_1^{(k)})$ in \S \ref{sec:tl:pvT}. 

In order to later show that Theorem~\ref{thm:nl} can be applied (i.e.\ to show that the limit $\lim_{u\to-\infty}(r^2\pu)^2(r\phi_1)(u,v)$ exists), we will also need to show some auxiliary estimates on the differences $r^2\pv T^n(r\phi_1^{(k)}-|u|T(r\phi_1^{(k)})$. This is done in \S \ref{sec:tl:pvT-T2}. 

In \S \ref{sec:tl:limit}, we finally show that the finite solutions $\phi_1^{(k)}$ tend to a limiting solution and show that Theorem~\ref{thm:nl} can be applied to it, thus proving Theorem~\ref{thm:intro:tl}. 
We  make some closing comments in \S \ref{sec:tl:comments}.

\subsection{The setup}\label{sec:tl:setup}
\subsubsection{The initial/boundary data and the main theorem (Theorem~\ref{thm:tl})}\label{sec:subsec:tl:setup}
Throughout the rest of this section, we shall assume that $R>2M$ is a constant. 
In particular, $T=\pu+\pv$ will be tangent to $\Gamma_R$. 
We then prescribe smooth boundary data $\hat{\phi}_1$ on $\Gamma_R=\mathcal{M}_M\cap\{v=v_R(u)\}$ that satisfy, for $u\leq U_0<0$ and $|U_0|$ sufficiently large, the upper bounds
\newcommand{\cin}{C_{\mathrm{in}}^\Gamma}
\newcommand{\ce}{C_{\mathrm{in},\epsilon}^\Gamma}
\begin{align}\label{eq:tl:boundarydata1}
\left|T^n(r\hat{\phi}_1)\right|&\leq \frac{n! \cin}{R|u|^{n+1}},&&n=0,1,\dots,N+1,\\
\left|T^n\left(r\hat{\phi}_1-|u|T(r\hat{\phi}_1)\right)\right|&\leq \frac{\ce}{R|u|^{n+1+\epsilon}},&& n=0,\dots,N'+1\label{eq:tl:boundarydata2}
\end{align}
for some positive constants $\cin$, $\ce$, $\epsilon\in(0,1)$ and $N,N'\geq 0$  integers, and which also satisfy the following lower bound:
\begin{equation}\label{eq:tl:thm:lowerbound}
\left|T(r\hat{\phi}_1)\right|\geq \frac{\cin}{2R|u|^{2}}>0.
\end{equation}
Moreover, we demand, in a limiting sense, that, for all $v$,
\begin{equation}\label{eq:tl:noincomingradiation}
\lim_{u\to-\infty}\pv^n(r\phi_1)(u,v)=0, \quad n=1,\dots, N+1.
\end{equation} 
Then the main result of this section is
\begin{thm}\label{thm:tl}
Let $R>2M$ be a constant. Then there exists a unique solution $\phi_1$ to eq.\ \eqref{eq:l=1waveequation} in $\mathcal{D}_{\Gamma_R}:=\mathcal{M}\cap\{v\geq v_R(u)\}$ that restricts correctly to $\hat{\phi}_1$ on $\Gamma_R$, $\phi_1|_{\Gamma_R}=\hat{\phi}_1$, and that satisfies \eqref{eq:tl:noincomingradiation}.
Moreover, if $U_0$ is a sufficiently large negative number, then there exists a constant $C=C(2M/R,\cin)$, depending only on data, such that $\phi_1$ obeys the following bounds throughout $\mathcal{D}_{\Gamma_R}\cap\{u\leq U_0\}$:
\begin{align}\label{eq:thm5.5}
\left|r^2\pv T^n(r\phi_1)(u,v)\right|&\leq \frac{C}{|u|^{n+1}},&&n=0,\dots,N,\\
\left|T^n(r\phi_1)(u,v)\right|&\leq \frac{C}{|u|^{n+1}}\max\left(r^{-1},|u|^{-1}\right),&&n=0,\dots,N-1.
\end{align}
Finally, if  $N\geq 4$ and $N'\geq 2$, then we have, along any ingoing null hypersurface $\mathcal{C}_v$, that
\begin{align}
r^2\pu(r\phi_1)(u,v)&=\mathcal{O}(r^{-1}),\\
r^2\pu(r^2\pu\pho)(u,v)&=\tilde{C}+\mathcal{O}(r^{-1}+|u|^{-\epsilon}),\label{eq:tl:thm:limit}
\end{align}
where $\tilde{C}$ is a constant that is non-vanishing if $R/2M$ is sufficiently large. 
In particular, $\phi_1$ satisfies the assumptions of Theorem~\ref{thm:nl} with $\cc=0$, $\ccc=\tilde{C}$ and $\epsilon=\eta$.
\end{thm}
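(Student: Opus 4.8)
The plan is to construct $\phi_1$ as a limit of compactly-supported-towards-$i^-$ (``finite'') solutions, to prove uniform-in-the-cutoff estimates via a bootstrap argument built on the approximate conservation laws \eqref{eq:l=1cons.law.in.u} and \eqref{eq:l=1cons.law.in.v}, and finally to read off from these estimates that $\phi_1$ satisfies the hypotheses of Theorem~\ref{thm:nl}. First I would multiply $\hat\phi_1$ by a smooth cutoff $\chi(u/k)$ supported in $\{u\geq -k\}$ to get boundary data $\hat\phi_1^{(k)}$ vanishing near $i^-$; by Proposition~\ref{prop:existence:mixedboundary}, posed on $\Gamma_R$ together with trivial data on the outgoing cone $\mathcal{C}_{u=-k}$, these generate a unique smooth $\phi_1^{(k)}$ on $\mathcal{D}_{\Gamma_R}\cap\{u\geq -k\}$, which I extend by zero; the no-incoming-radiation condition \eqref{eq:tl:noincomingradiation} then holds trivially for $\phi_1^{(k)}$. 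Uniqueness of the eventual limit $\phi_1$ follows from a weighted $T$-energy estimate in the style of Proposition~\ref{prop:l:energy}: for a difference of solutions, $T(r\phi_1)$ vanishes on $\Gamma_R$, so the $T$-flux through $\Gamma_R$ — which, up to the weight and measure, equals $\int_{\Gamma_R} r^2\big((T\psi)^2-2\pv\psi\, T\psi\big)$ with $\psi=r\phi_1$ — vanishes, and the flux at $\mathcal{I}^-$ vanishes by \eqref{eq:tl:noincomingradiation}.

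For the uniform estimates I would, on $\mathcal{D}_{\Gamma_R}\cap\{u\leq U_0\}$, bootstrap $|r^2\pv T^n(r\phi_1^{(k)})|\leq A|u|^{-n-1}$ and $|T^n(r\phi_1^{(k)})|\leq A|u|^{-n-1}\max(r^{-1},|u|^{-1})$ with $A$ large, using that each $T^n(r\phi_1^{(k)})$ again solves \eqref{eq:l=1waveequation} and \eqref{eq:l=1cons.law.in.u} since $T$ is Killing. The crucial difficulty is that $\pv(r\phi_1)|_{\Gamma_R}$ is \emph{not} prescribed. For $n$ below top order I would exploit the cancellation $\pv T^n(r\phi_1)=T^{n+1}(r\phi_1)-\pu T^n(r\phi_1)$: the first term is controlled by \eqref{eq:tl:boundarydata1} with one $T$-derivative to spare (hence the appearance of $N+1$ there), and $\pu T^n(r\phi_1)$ along the ingoing ray through the boundary point is recovered by integrating \eqref{eq:l=1waveequation} for $T^n(r\phi_1)$ from $\mathcal{I}^-$, where $\pv T^n(r\phi_1)$ vanishes by \eqref{eq:tl:noincomingradiation}. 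For the top-order derivative I would instead use the weighted $T$-energy identity: the flux of $J^T[T^N(r\phi_1)]$ through $\Gamma_R$ equals, up to weight and measure, $\int_{\Gamma_R}|u|^q r^2\big((T^{N+1}\psi)^2-2\pv T^N\psi\, T^{N+1}\psi\big)$, which the energy estimate bounds by the sign-definite null fluxes, themselves controlled by the bootstrap; a Cauchy--Schwarz ``square-root'' step on the cross term then yields a bound on $\int_{\Gamma_R}|u|^q r^2(\pv T^N\psi)^2$ that is \emph{independent of} $A$, and one integration along $\Gamma_R$ converts this into the required pointwise boundary bound. Given the boundary bounds, I close the bootstrap by integrating \eqref{eq:l=1cons.law.in.u} for $T^n(r\phi_1^{(k)})$ first in $u$ from $\mathcal{I}^-$ (where the limit vanishes) and then in $v$ from $\Gamma_R$, controlling the $r\phi_1$ and $r^2\pv(r\phi_1)$ factors on the right-hand side by writing $r\phi_1(u,v)=r\phi_1(u,v_R(u))+\int_{v_R(u)}^v\pv(r\phi_1)$ and invoking the data bound together with the bootstrap; $2M/R$ is either treated as small or absorbed by a Gr\"onwall argument, exactly as for $\ell=0$ in~\cite{Kerrburger}. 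This yields \eqref{eq:thm5.5} with $C=C(2M/R,C_{\mathrm{in}}^\Gamma)$ uniform in $k$, and — after integrating $\pv(r\phi_1)$ in $v$ from $\mathcal{I}^+$ — the bounds on $T^n(r\phi_1)$; the improved $u$-decay of $r\phi_1|_{\mathcal{I}^+}$ (the eventual ``$C^{(1)}_{\mathrm{in}}=0$'') is the $r|_{\Gamma_R}=\text{const}$ cancellation explained in \S\ref{sec:4.4.3}.

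To pass to the limit and verify the hypotheses of Theorem~\ref{thm:nl}, I would run the analogous bootstrap for the difference quantities $r^2\pv T^n(r\phi_1^{(k)}-|u|T(r\phi_1^{(k)}))$, now using the $\epsilon$-improved data bounds \eqref{eq:tl:boundarydata2}; this gains an extra $|u|^{-\epsilon}$ and hence a Cauchy-in-$u$ estimate for $r^2\pu(r^2\pu(r\phi_1))$ along each ingoing ray. The uniform bounds give, along a subsequence, a limiting $\phi_1$ solving \eqref{eq:l=1waveequation}, restricting to $\hat\phi_1$ on $\Gamma_R$, and satisfying \eqref{eq:tl:noincomingradiation}. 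Combining the bounds on $r^2\pv T^n(r\phi_1)$ and $T^n(r\phi_1)$ with $r\sim|u|$ along $\mathcal{C}_v$ gives $r^2\pu(r\phi_1)=\mathcal{O}(r^{-1})$; integrating \eqref{eq:l=1cons.law.in.v} in $v$ from $\Gamma_R$ — with the boundary term supplied by the boundary estimates above and the good-weight right-hand side integrated to $\mathcal{I}^+$ — then yields $r^2\pu(r^2\pu(r\phi_1))=\tilde C+\mathcal{O}(r^{-1}+|u|^{-\epsilon})$, i.e.\ \eqref{eq:tl:thm:limit}. Hence Theorem~\ref{thm:nl} applies with $C^{(1)}_{\mathrm{in}}=0$, $C^{(2)}_{\mathrm{in}}=\tilde C$, $\eta=\epsilon$. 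Finally, for $\tilde C\neq 0$ when $R/2M$ is large I expand $\tilde C=\tilde C_{M=0}+\mathcal{O}(2M/R)$ (the right-hand side of \eqref{eq:l=1cons.law.in.v} and the correction to the $\Gamma_R$-boundary term are both $O(M)$); $\tilde C_{M=0}$ is computed from the explicit Minkowskian $\ell=1$ solution $r\phi_1=f'(u)+2f(u)/r$ selected by the no-incoming-radiation condition, with $f$ solving the first-order ODE that the constant-$r$ boundary condition imposes, and, using the lower bound \eqref{eq:tl:thm:lowerbound}, this comes out to be a nonzero numerical multiple of $C_{\mathrm{in}}^\Gamma$.

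I expect the genuinely hard part to be the boundary analysis on $\Gamma_R$ — above all, extracting a bound for $\pv T^N(r\phi_1)|_{\Gamma_R}$ that does not reinject the bootstrap constant $A$ — together with carrying the $\epsilon$-gain of the difference quantities all the way back to $\mathcal{I}^-$, which is precisely what makes the modified Newman--Penrose constant $\tilde C$ \emph{exist} rather than merely be bounded.
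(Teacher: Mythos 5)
Your overall architecture coincides with the paper's: cut off the data towards $i^-$, bootstrap $r^2\pv T^n\pho$ with the boundary term at top order controlled by a $T$-energy estimate whose cross term enters only through a Cauchy--Schwarz ``square root'' (so the bootstrap constant appears sublinearly), close via the approximate conservation law \eqref{eq:l=1cons.law.in.u} integrated first in $u$ from $\mathcal I^-$ and then in $v$ from $\Gamma_R$, run the same scheme for the differences $r\phi_1-|u|T(r\phi_1)$ to gain the $|u|^{-\epsilon}$, and pass to the limit in $k$. Two remarks on the bootstrap. First, your below-top-order boundary mechanism as written does not parse: integrating \eqref{eq:l=1waveequation} in $u$ from $\mathcal I^-$ produces $\pv T^n\pho$, not $\pu T^n\pho$, and doing so only reproduces the bootstrap constant rather than improving it; the paper instead exploits the commuted identities $\pu(r^{-2}\pv(r^2\phi_1))=\frac{D T(r\phi_1)}{r^2}-\frac{8MD}{r^4}r\phi_1$ together with the \emph{sharp} decay of the extra $T$-derivative. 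This is harmless, because (as the paper notes) the top-order energy method applies at every order, and your description of that method is essentially correct.

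The genuine gap is in the last step. Establishing that the difference quantities give ``a Cauchy-in-$u$ estimate for $r^2\pu(r^2\pu\pho)$'' is asserted, not derived, and it is the crux: $r^2\pu(r^2\pu\pho)$ is a sum of several individually $O(1)$ terms, and one must show that \emph{each} of $r^4T^2\pho$, $r^4\pv T\pho$, $r^3T\pho$ and $r^2\pv(r^2\phi_1)$ attains a limit along $\mathcal C_v$ with rate $\mathcal O(r^{-1}+|u|^{-\epsilon})$, and then that the resulting combination is a fixed multiple of $\mathcal L:=\lim_{u\to-\infty}|u|^2r^2\pv T\pho$ (the paper's identity \eqref{eq:tl:pupu} and Proposition~\ref{prop:limit}, yielding $\tilde C=-3\mathcal L$). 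Your alternative of integrating \eqref{eq:l=1cons.law.in.v} in $v$ from $\Gamma_R$ does not avoid this: the boundary term is $r^{-2}\pu(r^2\pu\pho+Mr\phi_1)|_{\Gamma_R}$, two transversal derivatives on $\Gamma_R$, whose limit as $u\to-\infty$ requires exactly the same decomposition into $T$- and $\pv$-derivatives and the same chain of limit relations. Relatedly, your non-vanishing argument via $\tilde C=\tilde C_{M=0}+\mathcal O(2M/R)$ presupposes a uniform-down-to-$i^-$ comparison between the Schwarzschild and Minkowski solutions with the same boundary data, which is itself an unproven stability statement; the paper instead propagates a lower bound for $r^2\pv T\pho$ from $\Gamma_R$ directly (using the data lower bound \eqref{eq:tl:thm:lowerbound} and the smallness of $\pv(r^2\pv T\pho)$), which is the cleaner route. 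Your explicit Minkowski computation ($r\phi_1=\tfrac12 h'(u)+h(u)/r$, giving $r^2\pu(r^2\pu\pho)=\tfrac12 r^4h'''\to-3\lim|u|h$) is correct and is a nice sanity check on the value of $\tilde C$, but as stated it does not substitute for the Schwarzschild argument.
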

\begin{rem}
Let us already draw the reader's attention to the fact that the data above lead to solutions with $\cc=0$ (cf.\ \eqref{eq:nl:ass1}). In view of the comments in \S\ref{sec:nl:comments:logpricelaw}, this suggests that the data considered here lead to a logarithmically modified Price's law near $i^+$. 
\end{rem}
\begin{rem}
Instead of considering data with $\hat{\phi}_1\sim|u|^{-1}$, we can also consider data with $\hat{\phi}_1\sim|u|^{-p}$ for $p>0$ and derive a similar result with some obvious modifications.
\end{rem}
\begin{rem}
It may be instructive for the reader to keep the following solution to \eqref{eq:l=1waveequation} in the case $M=0$ in mind:
\begin{equation}\label{eq:tl:examplesolution}
\pu\pv\left(\frac{1}{2|u|^2}+\frac{1}{|u|r}\right)=-\frac{2}{r^2}\left(\frac{1}{2|u|^2}+\frac{1}{|u|r}\right).
\end{equation}
\end{rem}

\subsubsection{Cutting of the data and replacing \texorpdfstring{$\mathcal{I}^-$}{I-} with \texorpdfstring{$\mathcal{C}_{u=-k}$}{C(u=-k)}}\label{sec:tl:cutoffdata}
As mentioned before, in order to appeal to bootstrap arguments, we need to work in compact regions. 
We therefore need to cut the boundary data off and then recover the original data using a limiting argument.  Let $(\chi_k(u))_{k\in\mathbb{N}}$ be a sequence of positive smooth cut-off functions such that
    \begin{equation*}
        \chi_k=\begin{cases}
        1,& u\geq- k+1,\\
        0,&u\leq -k,
        \end{cases}
    \end{equation*}
and cut off the highest-order derivative: $\chi_k \cdot T^{N+1}\hat{\phi}$.
We then have
\begin{equation*}
\int_{-\infty}^u \chi_k T^{N+1}\hat{\phi}_1=\chi_k T^{N}\hat{\phi}_1-\int_{-\infty}^{u}(T\chi_k)( T^{N}\hat{\phi}_1 )=\chi_k T^{N}\hat{\phi}_1+\theta_k\cdot\mathcal{O}(k^{-N-1}),
\end{equation*}
where $\theta_k$ equals 1 on $\{u\geq -k\}$ and 0 elsewhere. Similarly, we obtain inductively that
\begin{equation*}
\underbrace{\int\dots\int}_{n\, \mathrm{ times}} \chi_k T^{N+1}\hat{\phi}_1=\chi_k T^{N+1-n}\hat{\phi}_1+\theta_k\cdot\mathcal{O}(k^{-N-2+n}).
\end{equation*}
In particular, if we denote $\overbrace{\int\dots\int}^{N+1\text{ times}} \chi_k T^{N+1}\hat{\phi}_1$ as $\hat{\phi}_1^{(k)}$, then the bounds  \eqref{eq:tl:boundarydata1}, \eqref{eq:tl:boundarydata2} imply, for sufficiently large negative values of $u$ and for some constant $C'_{\mathrm{in}}=C'_{\mathrm{in}}(N,N')$:
\begin{align}
\left|T^n\left(r\hat{\phi}^{(k)}_1\right)\right|&\leq \frac{n! \cin}{R|u|^{n+1}},&&n=0,1,\dots,N+1,\label{eq:tl:boundarydata1cutoff}\\
\left|T^n\left(r\hat{\phi}^{(k)}_1-|u|T\left(r\hat{\phi}^{(k)}_1\right)\right)\right|&\leq \frac{\ce}{R|u|^{n+1+\epsilon}}+C'_{\mathrm{in}}\theta_k\cdot \frac{\cin}{Rk^{n+1}},&& n=0,1,\dots,N'+1\label{eq:tl:boundarydata2cutoff}
\end{align}
Notice that, in the second line above, we lose some decay due to the $\theta_k$-term arising from the cut-off. Since we will take the limit $k\to\infty$ in the end, this only poses a minor difficulty.

Throughout the next two sections (\S \ref{sec:tl:pvT} and \S \ref{sec:tl:pvT-T2}), we shall assume initial/boundary data satisfying the estimates \eqref{eq:tl:boundarydata1cutoff} and \eqref{eq:tl:boundarydata2cutoff} and moreover satisfying
\begin{equation}\label{eq:tl:noincomingfinite}
\phi_1(u=-k,v)=0
\end{equation}
for all $v\geq v_R(-k)$. We shall denote the unique solutions to these initial/boundary value problems as $\phi_1^{(k)}$.
For the next two sections, we shall drop the superscript $(k)$, only to reinstate it in \S\ref{sec:tl:limit}, where we will show that the solutions $\phi_1^{(k)}$ tend towards a limiting solution.
\subsection{Estimates for \texorpdfstring{$\pv T^n (r\phi_1)$}{d/dv(Tn(r phi1))} and \texorpdfstring{$T^n(r\phi_1)$}{Tn(r phi1)}}\label{sec:tl:pvT}
Let $U_0$ be a sufficiently large negative number, and let $\hat{\phi}_1$ be smooth data on $\Gamma_R$, supported on $\Gamma_R\cap\{-k< u\}$ and satisfying \eqref{eq:tl:boundarydata1cutoff}. 
By Prop.~\ref{prop:existence:mixedboundary}, there exists a unique smooth solution $\phi_1$ throughout $\mathcal{D}_{\Gamma_R}\cap\{-k\leq u\}$  such that $\phi_1(-k,v)=0$ for all $v\geq v_R(-k)$ and such that $\phi_1|_{\Gamma_R}=\hat{\phi}_1$. 
We will now derive the following uniform-in-$k$ estimates on this solution $\phi_1$:
\begin{prop}\label{prop:tl:pv}
Let $\phi_1$ be the solution as described above, and let $N\geq 1$. Then, if  $|U_0|$ is sufficiently large, there exists a constant $C=C(2M/R,\cin)$ (in particular, this constant does not depend on $k$), which can be chosen to be independent of $R/2M$ for large enough $R/2M$,  such that the following estimates hold throughout $\mathcal{D}_{\Gamma_R}\cap\{-k\leq u\leq U_0\}$:
\begin{align}
\left|r^2\pv T^n (r\phi_1)(u,v)\right|&\leq \frac{C}{|u|^{n+1}},&&n=0,1,\dots,N,\\
\left|T^n (r\phi_1)(u,v)\right|&\leq \frac{C}{|u|^{n+1}}\max\left(r^{-1},|u|^{-1}\right),&&n=0,1,\dots,N-1.
\end{align}
\end{prop}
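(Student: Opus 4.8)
The plan is to argue by the \emph{method of continuity} (bootstrap) in the regions $\mathcal{D}_{\Gamma_R}\cap\{-k\leq u\leq u_*\}$, improving the parameter $u_*$ until it reaches $U_0$. Since $\phi_1=\phi_1^{(k)}$ is the finite-energy scattering solution attaining the compactly-supported-in-$u$ data on $\Gamma_R$ of \eqref{eq:tl:boundarydata1cutoff}, \eqref{eq:tl:boundarydata2cutoff} together with the vanishing data \eqref{eq:tl:noincomingfinite} on $\mathcal{C}_{u=-k}$, standard boundedness and decay estimates (e.g.\ from \cite{DRSR}) ensure that $\phi_1$ and finitely many of its derivatives decay throughout $\mathcal{D}_{\Gamma_R}$, in particular towards $\iplus$; this makes the bootstrap set non-empty and closed, so the proposition reduces to the improvement step. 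Concretely, I would fix a large constant $C^{*}=C^{*}(2M/R,\cin)$ and assume, on $\mathcal{D}_{\Gamma_R}\cap\{-k\leq u\leq u_*\}$, the bounds $|r^2\pv T^n\pho|\leq C^{*}|u|^{-n-1}$ for $n\leq N$ and $|T^n\pho|\leq C^{*}|u|^{-n-1}\max(r^{-1},|u|^{-1})$ for $n\leq N-1$, and then recover the same bounds with a constant $C$ that does \emph{not} depend on $C^{*}$.

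The improvement rests on the commuted approximate conservation law \eqref{eq:l=1cons.law.in.u}. Writing $\Phi:=r^2\pv\pho-Mr\phi_1$ and noting $Tr=0$, commutation with the Killing field $T$ gives $\pu(r^{-2}\pv T^n\Phi)=-12MDr^{-5}(r^2\pv T^n\pho)-6M^2Dr^{-5}T^n\pho$. Since \eqref{eq:tl:noincomingfinite} forces $\pv T^n\Phi=0$ on $\mathcal{C}_{u=-k}$, I integrate this identity in $u$ from $u=-k$; the favourable $r^{-5}$-weights, together with the bootstrap assumptions, yield $|\pv T^n\Phi(u,v)|\lesssim MC^{*}r^{-2}|u|^{-n-1}$. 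Integrating once more in $v$, now from $\Gamma_R$, and using $\int_{v_R(u)}^{\infty}r^{-2}\dd v'\lesssim R^{-1}$ as well as the bound on $MT^n\pho(u,v_R(u))$ supplied by \eqref{eq:tl:boundarydata1cutoff}, one obtains $|r^2\pv T^n\pho(u,v)|\leq |r^2\pv T^n\pho(u,v_R(u))|+C\,MC^{*}R^{-1}|u|^{-n-1}+C\,M|T^n\pho(u,v)|+C\,\cin\,R^{-1}|u|^{-n-1}$. The quantity $T^n\pho$ is then recovered by integrating $\pv T^n\pho$ in $v$, either from $\Gamma_R$ (producing the $r^{-1}$-branch of the $\max$) or from $\iplus$ (producing the faster $|u|^{-1}$-branch, the improvement at $\iplus$ being a cancellation of precisely the type seen in \S\ref{sec:nl}); the factors $r\phi_1$ and $\pv(r\phi_1)$ entering the right-hand side of \eqref{eq:l=1cons.law.in.u} are controlled by writing $r\phi_1(u,v)=r\phi_1(u,v_R(u))+\int_{v_R(u)}^{v}\pv\pho(u,v')\dd v'$ and invoking the bootstrap and \eqref{eq:tl:boundarydata1cutoff}.

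The crux, and the step I expect to be the main obstacle, is the boundary estimate $|r^2\pv T^n\pho(u,v_R(u))|\lesssim |u|^{-n-1}$ with a constant independent of $C^{*}$: $\pv\pho$ is a derivative transverse to $\Gamma_R$, hence not part of the prescribed (tangential) boundary data. I plan to treat it in two complementary ways. For $n\leq N-1$ there is room to use the prescribed bound \eqref{eq:tl:boundarydata1cutoff} on $T^{n+1}\pho$: exploiting cancellations in the wave equation \eqref{eq:l=1waveequation} along $\Gamma_R$ (where $T$ is tangent), one expresses $\pv T^n\pho$ on $\Gamma_R$ through $T^{n+1}\pho$, $T^n\pho$ and already-controlled quantities, yielding the claimed bound with a constant depending only on $\cin,\ce$ and $2M/R$. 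For the top-order case $n=N$, where no assumption on $T^{N+1}\pho$ is available, I instead use a weighted $T$-energy estimate of the type of Proposition~\ref{prop:l:energy}, now on a region bounded by $\Gamma_R$, $\mathcal{C}_{u=-k}$ and two null segments: the energy flux through the timelike boundary $\Gamma_R$ is quadratic in first derivatives of $T^N\pho$ and can be bounded by combining the prescribed decay of $T^{N+1}\pho=T(T^N\pho)$ with the \emph{square} of the bootstrap bound for $\pv T^N\pho$ (this is the ``square-root'' mechanism), so that only a square root of the bootstrapped decay feeds back into the estimate; a trace argument along the ingoing cones $\mathcal{C}_u$ then converts this into the pointwise bound on $\Gamma_R$. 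Granting these boundary estimates, the improvement of the constant from $C^{*}$ to $C$ closes once $2M/R$ is taken sufficiently small, so that all $MC^{*}/R$-type terms carry a small coefficient; to remove this restriction (and to obtain a constant $C$ independent of $R/2M$ for $R/2M$ large) one replaces the smallness by a Grönwall argument in $u$, exactly as in the $\ell=0$ analysis of \cite{Kerrburger}. Uniformity in $k$ is automatic, since every constant produced above depends only on $\cin,\ce,\epsilon,N,N'$ and $2M/R$.
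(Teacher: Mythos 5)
Your overall architecture coincides with the paper's: a bootstrap on the finite solutions $\phi_1^{(k)}$, the approximate conservation law \eqref{eq:l=1cons.law.in.u} for $\Phi=r^2\pv\pho-M r\phi_1$ integrated first in $u$ from $\mathcal C_{u=-k}$ and then in $v$ from $\Gamma_R$, the two-pronged treatment of the transversal boundary term (wave-equation cancellations at lower order, the weighted $T$-energy estimate with its ``square-root'' dependence on the bootstrap constant at top order), closure via smallness of $2M/R$, and removal of that smallness by Gr\"onwall. Your choice to run the conservation law at \emph{every} order $n$ rather than grading the argument through equations \eqref{2}, \eqref{3}, \eqref{4} as the paper does for $j\le N-2$ and $j=N-1$ is exactly the ``more systematic'' variant the paper itself endorses at the end of \S\ref{sec:subsubsec:timelikej=N}, so no objection there.

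There is, however, one step that as written does not deliver the stated estimate: the recovery of $\left|T^n\pho\right|\leq C|u|^{-n-1}\max(r^{-1},|u|^{-1})$ by integrating $\pv T^n\pho$ in $v$. Integrating from $\Gamma_R$ gives only $|T^n\pho|\lesssim R^{-1}|u|^{-n-1}$, which is uniformly bounded in $r$ but does \emph{not} improve to $r^{-1}|u|^{-n-1}$ in the regime $R\ll r\leq |u|$; and integrating from $\iplus$ requires knowing a priori that $\lim_{v\to\infty}T^n\pho(u,v)=\mathcal O(|u|^{-n-2})$, which is a cancellation you cannot simply import from \S\ref{sec:nl} inside the bootstrap (it is a consequence of the finished theorem, not an input). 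The paper sidesteps both issues with a purely pointwise algebraic identity: rearranging the wave equation as in \eqref{eq:waveequationtoestimatephi} gives $2D(1+M/r)|T^n\pho|\leq|\pv(r^2\pv T^n\pho)|+\tfrac{2D}{r}|r^2\pv T^n\pho|+|r^2\pv T^{n+1}\pho|$, and the three terms on the right are bounded by $r^{-2}|u|^{-n-1}$, $r^{-1}|u|^{-n-1}$ and $|u|^{-n-2}$ respectively by the already-improved estimates — note the first of these is exactly the $u$-integrated conservation law bound $\pv(r^2\pv T^n\Phi)\lesssim r^{-2}|u|^{-n-1}$ that you derive anyway before integrating in $v$. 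This also explains why the $T^n\pho$ bound is only claimed for $n\leq N-1$: it consumes the bootstrap bound at order $n+1$. You should replace your $v$-integration step by this identity; everything else in your plan then closes as in the paper.
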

\begin{proof}
The proof is divided into the sections \S \ref{sec:tl:BS}--\S \ref{sec:tl:removelargeness}. 
 In \S \ref{sec:tl:BS}--\S \ref{sec:subsubsec:timelikej=N}, we present a bootstrap argument and exploit $2M/R$ as a small parameter to improve the bootstrap assumptions. 
 An overview over this bootstrap argument will be given in \S\ref{sec:tl:BS}.
 
We will then explain how to lift the smallness assumption on $2M/R$ by partially replacing the bootstrap argument with a Gr\"onwall-type argument in \S \ref{sec:tl:removelargeness}. 
\subsubsection{The bootstrap assumptions} \label{sec:tl:BS}
\newcommand{\cbs}{C_{\mathrm{BS}}}
\newcommand{\cbsp}{C_{\mathrm{BS},\phi}}
Let $\{\cbs^{(n)}, n=1,\dots,N\}$ and $\{\cbsp^{(m)},m=0,\dots,N-1\}$ be two sets of sufficiently large positive constants. We shall make the following bootstrap assumptions on $\phi_1$:
\begin{align}\tag{BS(n)}\label{eq:tl:BSn}
\left|r^2\pv T^n (r\phi_1)(u,v)\right|\leq \frac{\cbs^{(n)}}{|u|^{n+1}}
\end{align}
for $n=1,\dots,N$, and
\begin{align}\tag{BS'(m)}\label{eq:tl:BS1n}
\left|T^m (r\phi_1)(u,v)\right|\leq \frac{\cbsp^{(m)}}{|u|^{m+1}}\max\left(r^{-1},|u|^{-1}\right)
\end{align}
for $m=0,\dots,N-1$.

We now define $\Delta$ to be the subset of all $(u,v)\in X:= \{(u,v)|-k< u\leq U_0, v_R(u)< v\}$ such that, for all $(u',v')\in X$ with $u'\leq u$ and $v'\leq v$, \eqref{eq:tl:BSn} and \eqref{eq:tl:BS1n} hold for all $n=1,\dots,N$, $m=0,\dots,N-1$, respectively.

By compactness and continuity, $\Delta$ is non-empty if the constants $\cbs^{(n)},\cbsp^{(m)}$ are chosen sufficiently large.
 Moreover, $\Delta$ is trivially closed in $X$.
  We shall show that $\Delta$ is also open by improving each of the bootstrap assumptions within $\Delta$.

We shall first improve the bootstrap assumptions for the lower-order $T$-derivatives ($n\leq N-2$) \textit{by explicitly exploiting the precise behaviour for higher $T$-derivatives}  in \S \ref{sec5.3.2} in order for the reader to get a clear intuition for the origin of the assumed rates. 
In \S \ref{sec5.3.3}, we will then improve the bootstrap assumption away from the top-order derivative ($n\leq N-1$), where we no longer have the sharp decay for $T^N(r\phi_1)$ available. 
Finally, in \S \ref{sec:subsubsec:timelikej=N}, we will improve the bootstrap assumptions for the top-order derivatives.

Since the approach of \S \ref{sec:subsubsec:timelikej=N} applies to derivatives of any order,  the reader can in principle skip \S \ref{sec5.3.2}--\S \ref{sec5.3.3}, which are  included for pedagogical reasons, and go directly to \S \ref{sec:subsubsec:timelikej=N}.
 In fact, \S \ref{sec:subsubsec:timelikej=N} only requires the bootstrap assumptions \eqref{eq:tl:BSn} (and not \eqref{eq:tl:BS1n}). 
In particular, when going through \S\ref{sec5.3.2}--\S\ref{sec5.3.3}, the reader can focus on the arguments without having to pay close attention to the bootstrap constants.

\subsubsection{Closing away from the top-order derivatives \texorpdfstring{$j\leq N-2$}{j<N-1}}\label{sec5.3.2}
The idea is to exploit the fact that, for $M=0$, $\phi_1=1/r^2$ is a stationary solution. 
 In particular, we expect $\pv(r^2\phi_1)$ to have some cancellations (see \eqref{2}), and $r^2\pv(r\phi_1)$ to remain approximately conserved in $u$ and $v$ (see \eqref{3}). (We remind the reader of the example solution \eqref{eq:tl:examplesolution}.)
\begin{prop}\label{prop:tlbs1}
Let $0\leq j\leq N-2$. Then, for sufficiently large values of $R/2M$ and $|U_0|$, and if the ratios $\cbsp^{(j)}/\cbsp^{(j+1)}$, $\cbs^{(j)}/\cbsp^{(j+1)}$  are chosen large enough, we have throughout $\Delta$ that, in fact,
  \begin{align}
  \left|r^2\pv T^j (r\phi_1)(u,v)\right|&\leq\frac12 \frac{\cbs^{(j)}}{|u|^{j+1}},\\
  \left|T^j (r\phi_1)(u,v)\right|&\leq\frac12 \frac{\cbsp^{(j)}}{|u|^{j+1}}\max\left(r^{-1},|u|^{-1}\right).
  \end{align}
\end{prop}
\begin{proof}
Fix $j\leq N-2$ and assume \eqref{eq:tl:BS1n} for $m=j,j+1$.
Motivated by the comment above, we compute
\begin{equation}\label{2}
\pu(r^{-2}\pv(r^2\phi_1))=\frac{DT(r\phi_1)}{r^2}-\frac{8MD}{r^4}r\phi_1.
\end{equation}
Commuting with $T^j$, plugging in the bootstrap assumptions, and integrating \eqref{2} from $u=-k$, we find (recall that $\pu r=-D$):
\begin{align}\label{proof5.2x}
\left|r^{-2}T^j\pv(r^2\phi_1)(u,v)\right|\leq\int_{r_v(-k)}^{r_v(u)} \frac{\cbsp^{(j+1)}}{r^3|u|^{j+2}}+\frac{8M\cbsp^{(j)}}{r^5|u|^{j+1}}\dd r\leq  \frac{\cbsp^{(j+1)}}{2r^2|u|^{j+2}}+\frac{2M\cbsp^{(j)}}{r^4|u|^{j+1}}.
\end{align}
Here, we denoted $r_v(u)$ as the unique $r$ such that $r^*(r)=v-u$. 
Now, we similarly compute 
\begin{equation}\label{3}
	\pu(r^2\pv(r\p))=-2D\pv(r^2\p)-6DM\frac{r\p}{r}.
\end{equation}
Commuting again with $T^j$, plugging in the bound \eqref{proof5.2x} for $T^j\pv(r^2\phi_1)$ from above, and integrating \eqref{3} in $u$, we then find:
\begin{align}\label{proof5.2xx}
\begin{split}
\left|r^2\pv T^j\pho(u,v)\right|&\leq \int_{r_v(-k)}^{r_v(u)}\frac{\cbsp^{(j+1)}}{|u|^{j+2}}\dd u+\int_{r_v(-k)}^{r_v(u)}\frac{4M\cbsp^{(j)}}{r^2|u|^{j+1}}+\frac{6M\cbsp^{(j)}}{r^2|u|^{j+1}}\dd r\\
&\leq\frac{\cbsp^{(j+1)}}{(j+1)|u|^{j+1}}+\frac{10M\cbsp^{(j)}}{r|u|^{j+1}}.
\end{split}
\end{align}
For large enough $R$ and $\cbs^{(j)}/\cbsp^{(j+1)}$, this proves the first part of the proposition.

Moreover, inserting \eqref{proof5.2xx} back into \eqref{proof5.2x} and writing $\pv(r^2\p)=r\pv\pho+Dr\p$, we obtain
\begin{align}
\begin{split}
D\left|T^j\pho\right|&\leq \left|T^j\pv(r^2\p)\right|+\frac{1}{r}\left|r^2T^j\pv\pho\right|\\
			&\leq \frac{\cbsp^{(j+1)}}{2|u|^{j+2}}+\frac{2M\cbsp^{(j)}}{r^2|u|^{j+1}}+\frac{\cbsp^{(j+1)}}{(j+1)r|u|^{j+1}}+\frac{10M\cbsp^{(j)}}{r^2|u|^{j+1}}.
\end{split}
\end{align}
This proves the second part of the proposition for large enough $R$ and $\cbsp^{(j)}/\cbsp^{(j+1)}$.
\end{proof}
\subsubsection{Closing away from the top-order derivatives \texorpdfstring{$j= N-1$}{j=N-1}}\label{sec5.3.3}
In the previous proof, we crucially needed the sharp decay of $T^{j+1}\pho$, which we no longer have access to if $j+1=N$. We therefore proceed differently now. We will use the approximate conservation law \eqref{eq:l=1cons.law.in.u}. In fact, since we still have sharp decay for $T^j\pho$, it will suffice to consider (the $T$-commuted)
\begin{equation}\label{4}
\pu(r^{-2}\pv(r^2\pv(r\phi_1)))=-10 MD\frac{r^2\pv(r\phi_1)}{r^5}-2MD\frac{r\phi_1}{r^4}\left(1+\frac{4M}{r}\right),
\end{equation}
since, as long as we have the extra $r$-decay of $T^j\pho$, the bad $r^{-4}$-weight multiplying $T^j\pho$ poses no problem.
\begin{prop}\label{prop:tlbs2}
Let $0\leq j\leq N-1$. Then, for sufficiently large values of $R/2M$ and $|U_0|$, and if $\cbs^{(j)}$ and $\cbsp^{(j)}$ are chosen large enough, we have throughout $\Delta$ that, in fact,
  \begin{align}
 \left |r^2\pv T^j (r\phi_1)(u,v)\right|&\leq\frac12 \frac{\cbs^{(j)}}{|u|^{j+1}},\\
 \left |T^j (r\phi_1)(u,v)\right|&\leq\frac12 \frac{\cbsp^{(j)}}{r|u|^{j+1}}\max\left(r^{-1},|u|^{-1}\right).
  \end{align}
\end{prop}
\begin{proof}
Fix $j\leq N-1$ and assume \eqref{eq:tl:BSn} for $n=j,j+1$ and \eqref{eq:tl:BS1n} for $m=j$.
The idea is to integrate \eqref{4} twice, first from $u=-k$ and then from $\Gamma_R$.
In doing so, we will pick up the boundary term $r^2T^j\pho|_{\Gamma_R}$, which is not given by data. 
 We will therefore estimate this boundary term by using the ($T$-commuted) eq.\ \eqref{2}:
First, note that, by integrating the bound  \eqref{eq:tl:BSn} for $n=j+1$ from $\Gamma_R$, we have 
\begin{equation}
\left|T^{j+1}\pho(u,v)\right|\leq \frac{(j+1)!\cin }{R|u|^{j+1}}+\frac{\cbs^{(j+1)}}{R|u|^{j+1}}.
\end{equation}
Hence, by integrating equation \eqref{2} from $u=-k$, we obtain
\begin{align*}
\begin{split}
\left|r^{-2}T^j\pv(r^2\phi_1)(u,v)\right|\leq\int_{r_v(-k)}^{r_v(u)} \frac{(j+1)!\cin+\cbs^{(j+1)}}{r^2R|u|^{j+2}}+\frac{8M\cbsp^{(j)}}{r^5|u|^{j+1}}\dd r,
\end{split}
\end{align*}
and, consequentially,
\begin{align*}
\left|r^2T^j\pv\pho+Dr\cdot T^j(r\phi_1)\right|\leq r^2\frac{(j+1)!\cin+\cbs^{(j+1)}}{R|u|^{j+2}}+\frac{2M\cbsp^{(j)}}{r|u|^{j+1}}.
\end{align*}
Evaluating this bound on $\Gamma_R$ and applying the triangle inequality gives
\begin{equation}\label{proof5.3x}
\left|r^2T^j\pv\pho|_{\Gamma_R}\right|\leq \left(1-\frac{2M}{R}\right)\frac{j!\cin}{|u|^{j+1}}+\frac{2M\cbsp^{(j)}}{R|u|^{j+1}}+R\frac{(j+1)!\cin+\cbs^{(j+1)}}{|u|^{j+2}}.
\end{equation}
Notice that, for sufficiently large $|U_0|$, the last term becomes subleading. Moreover, if $2M/R$ is suitably small, the first term, in fact, dominates.

 Equipped with this estimate for the boundary term, we can now integrate (the $T^j$-commuted) approximate conservation law \eqref{4}, first from $u=-k$:
\begin{align*}
&r^{-2}\pv(r^2\pv T^j\pho)(u,v)\\
&=\int_{-k}^u -10 MD\frac{r^2\pv T^j(r\phi_1)}{r^5}-2MD\frac{T^j(r\phi_1)}{r^4}\left(1+\frac{4M}{r}\right)\dd u' \\
&\leq \int_{-r_v(k)}^{r_v(u)} \frac{10M\cbs^{(j)}}{r^5|u|^{j+1}}+\frac{2M\cbsp^{(j)}}{r^5|u|^{j+1}}\left(1+\frac{4M}{r}\right)\dd r.
\end{align*}
We thus obtain:
\begin{equation}\label{proof5.3xx}
\left|\pv(r^2\pv T^j\pho)\right| \leq \frac{10M\cbs^{(j)}}{4r^2|u|^{j+1}}+\frac{M\cbsp^{(j)}}{2r^2|u|^{j+1}}\left(1+\frac{16M}{5R}\right).
\end{equation}
Finally, integrating \eqref{proof5.3xx} from $\Gamma_R$, and estimating the boundary term via \eqref{proof5.3x}, we obtain
\begin{align}\label{proof5.3xxx}
\begin{split}
\left|r^2\pv T^j\pho\right|\leq& D(R)\frac{j!\cin}{|u|^{j+1}}+\frac{2M\cbsp^{(j)}}{R|u|^{j+1}}+R\frac{(j+1)!\cin+\cbs^{(j+1)}}{|u|^{j+2}}\\
+&D^{-1}(R)\left(\frac{10M\cbs^{(j)}}{4R|u|^{j+1}}+\frac{M\cbsp^{(j)}}{2R|u|^{j+1}}\left(1+\frac{16M}{5R}\right)\right).
\end{split}
\end{align}
(The factor $D^{-1}$ comes from substituting $\mathrm{d}v$ with $\mathrm{d}r$ in the integral.) If $U_0$ and $R/2M$ are sufficiently large, and if $\cbs^{(j)}$ is chosen suitably large relative to $j!\cin$, then  the RHS can be shown to be smaller than $\frac{\cbs^{(j)}}{2|u|^{j+1}}$. We thus recover the first statement of the proposition. 

In order to show the second statement, we exploit the fact that we have also obtained an estimate on $r^2\pv(r^2 \pv T^j\pho)$ and use the following identity (which follows directly from the wave equation \eqref{eq:l=1waveequation}):
\begin{equation}\label{eq:waveequationtoestimatephi}
-\frac{2D}{r^2}T^j\pho\left(1+\frac{M}{r}\right)=T^j\pv\pu (r\phi_1)=-\pv^2T^j\pho+\pv T^{j+1}\pho.
\end{equation} 
The last term of the equation above is controlled by the bootstrap assumption \eqref{eq:tl:BSn} for $n=j+1$. For the other term, we can write:
\begin{align*}
\pv^2T^j\pho=\frac{1}{r^2}\pv\left(r^2\pv T^j\pho\right)-\frac{2D}{r}\pv T^{j}\pho.
\end{align*}
We therefore can estimate $T^j\pho$ as follows:
\begin{equation}\label{eq:nl:nl;nl}
2D\left(1+\frac{M}{r}\right)\left|T^j(r\phi_1)\right|\leq \left|\pv\left(r^2\pv T^j\pho\right)\right|+\frac{2D}{r}\left|r^2\pv T^{j}\pho\right|+\left|r^2\pv T^{j+1}\pho\right|.
\end{equation}
Finally, plugging in the estimates \eqref{proof5.3xx}, \eqref{proof5.3xxx}, as well as the assumption \eqref{eq:tl:BSn} for $n=j+1$, into the estimate \eqref{eq:nl:nl;nl} shows that if $U_0$ and $R/2M$ are sufficiently large, and if $\cbsp^{(j)}$ is chosen suitably large relative to $j!\cin/2$, then the RHS is smaller than $\frac{\cbsp^{(j)}}{2r|u|^{j+1}}$, thus proving the proposition.\footnote{Notice that the last term on the RHS of \eqref{eq:nl:nl;nl} is the reason why we need the $\max\left(r^{-1},|u|^{-1}\right)$ in \eqref{eq:tl:BS1n} rather than just $r^{-1}$. This only becomes relevant near $\mathcal{I}^+$.}
\end{proof}

\subsubsection{Closing the top-order derivative \texorpdfstring{$j=N$}{j=N}}\label{sec:subsubsec:timelikej=N}
In the previous proof, we used the sharp $u$-decay of $T^{j+1}\pho$ (see \eqref{proof5.3x}), combined with equation \eqref{2}, to estimate the boundary term $r^2\pv T^j\pho|_{\Gamma_R}$ on $\Gamma_R$. 
At the highest order in derivatives, we can no longer do this. 
Instead, we will estimate the boundary term using an energy estimate. This energy estimate will be wasteful in terms of $r$-decay, but sharp in terms of $u$-decay and, therefore, useful on $\Gamma_R$. Moreover, it only requires a "square root" of the bootstrap estimate on $r^2\pv T^j\pho$ and, thus, allows for improvement.

Another difference to the previous section will be that, since we can no longer assume the sharp decay of $T^{j+1}\pho$, we will have to work with the approximate conservation law \eqref{eq:l=1cons.law.in.u} instead of \eqref{4}. (Recall that the former has a better $r$-weight multiplying $T^j(r\phi_1)$.) This will give us an estimate on $T^j\Phi=r^2\pv T^j\pho-MT^j\pho$. As mentioned in the introduction to this section, we will simply exploit the largeness in $R/2M$ to estimate $r^2\pv T^j\pho$ in terms of $T^j\Phi$.
\begin{prop}\label{prop:tlbs3}
Let $0\leq j\leq N$. Then, for sufficiently large values of $R/2M$ and $|U_0|$, and if $\cbs^{(j)}$ is chosen large enough, we have throughout $\Delta$ that, in fact,
  \begin{align}
 \left |r^2\pv T^j (r\phi_1)(u,v)\right|&\leq\frac12 \frac{\cbs^{(j)}}{|u|^{j+1}}.
  \end{align}
\end{prop}
\begin{proof} Fix $j\leq N$ and assume \eqref{eq:tl:BSn} for $n=j$.
 
Recall the definition \eqref{eq:def:J}. Since $T$ is Killing and $T^j\phi_{\ell=1}$ solves the wave equation, we have
\begin{equation}\label{check2}
\div J^T[T^j\phi_{\ell=1}]=0.
\end{equation}
We want to apply the divergence theorem to this identity. We recall the notation $\phi_{\ell=1}=\phi_1\cdot Y_{1m}$, and denote by $\slashed{\nabla}_{\mathbb{S}^2}$ the covariant derivative on the unit sphere. We compute
\begin{align*}
J^T[T^j\phi_{\ell=1}]\cdot \pu&=\mathbf{T}[T^j\phi_{\ell=1}](T,\pu)			=	(\pu T^j\phi_{\ell=1})^2+\frac{D}{r^2}\left|\slashed{\nabla}_{\mathbb{S}^2} T^j\phi_{\ell=1}\right|^2,		\\
J^T[T^j\phi_{\ell=1}]\cdot \pv&=\mathbf{T}[T^j\phi_{\ell=1}](T,\pv)			=		(\pv T^j\phi_{\ell=1})^2+\frac{D}{r^2}\left|\slashed{\nabla}_{\mathbb{S}^2} T^j \phi_{\ell=1}\right|^2	,	\\
J^T[T^j\phi_{\ell=1}]\cdot (\pu-\pv)&=\mathbf{T}[T^j\phi_{\ell=1}](T,\pu-\pv)	=	T^{j+1}\phi_{\ell=1}\cdot (\pu-\pv)T^j \phi_{\ell=1}	.
\end{align*}
Let now $(u,v)\in\Delta$. Then, applying the divergence theorem as in \eqref{eq:divergencetheorem} to \eqref{check2}, we obtain
\begin{align}
\begin{split}
&\int_{\mathcal{C}_{v}\cap\{-k\leq u'\leq u\}} r^2\dd u'\,  \dd\Omega\, J^T[T^j\phi_{\ell=1}]\cdot \pu \\
\leq &\int_{\Gamma_R\cap\{-k\leq u'\leq u\}}  r^2 \dd (u'+v')\dd \Omega	\,J^T[T^j\phi_{\ell=1}]\cdot(\pu-\pv).
\end{split}
\end{align}
Doing the integrals over the sphere, using that $u+v=2u+r^*(R)$ along $\Gamma_R$, and plugging in the expressions for the fluxes from above, we obtain
\begin{align}\label{random1}
\begin{split}
&\int_{-k}^u\left( r^2(\pu T^j\phi_{1})^2+D|T^j\phi_{1}|^2\right)(u',v)\dd u'	 \\
\leq&\int_{\Gamma_R\cap\{-k\leq u'\leq u\}}  \left(2r^2 T^{j+1}\phi_{1}\cdot (T-2\pv)T^j \phi_{1}\right)\left(u',u'+r^*(R)\right)	 \dd u'.
\end{split}
\end{align}
Observe that we can estimate the right-hand side of \eqref{random1} by using the boundary data assumption \eqref{eq:tl:boundarydata1cutoff} for $T^{j+1}\phi_1$ and the bootstrap assumption \eqref{eq:tl:BSn} with $n=j$ for $\pv T^j\phi_1$. 

On the other hand, the left-hand side of \eqref{random1} controls $\sqrt{r}T^j\phi$, as can be seen by applying first the fundamental theorem of calculus and then the Cauchy--Schwarz inequality:
\begin{align*}
\left|T^j\phi_1(u,v)\right|\leq \left(\int_{-k}^u \frac{1}{r^2(u',v)}\dd u'\right)^\frac12\left(\int_{-k}^u r^2\left(\pu T^j\phi_1\right)^2(u',v)\dd u'\right)^\frac12.
\end{align*}
Applying the energy identity \eqref{random1} to the above estimate gives
\begin{align*}
Dr(T^j\phi_1)^2&\leq \int_{\Gamma_R} 2|T^{j+1}(r\phi_1)|\left(|T^{j+1}(r\phi_1)|+2|\pv T^j(r\phi_1)|+\frac{2D}{r} |T^j(r\phi_1)|\right)\dd u'\\
&\leq \int_{\Gamma_R} \frac{2(j+1)!\cin}{R|u'|^{j+2}}\left( \frac{2(j+1)!\cin}{R|u'|^{j+2}}+\frac{2\cbs^{(j)}}{R^2|u'|^{j+1}}+\frac{2D}{R}  \frac{2j!\cin}{R|u'|^{j+1}}\right)\dd u'\\
&\leq \frac{1}{R^3|u|^{2j+2}}\underbrace{\left(4D(j!)^2(\cin)^2+2j!\cin\cbs^{(j)}\right)}_{:=A}+\mathcal{O}\left(\frac{1}{|u|^{2j+3}R^2}\right).
\end{align*}
Plugging this bound\footnote{We from now on ignore the $\mathcal{O}(|u|^{-2j-3})$-term, for it can be easily absorbed into the slightly wasteful estimates we make at each step by choosing $U_0$ large enough.} into the wave equation \eqref{eq:l=1waveequation} and integrating \eqref{eq:l=1waveequation} from $u=-k$ results in the following bound on the boundary term $\pv T^j(r\phi_1)|_{\Gamma_R}$:
\begin{equation}\label{proof5.4x}
\left|\pv T^j(r\phi_1)|_{\Gamma_R}\right|\leq \int \frac{2\sqrt{AD}}{R^{\frac32}|u|^{j+1}}\frac{1}{r^{3/2}}\left(1+\frac{M}{r}\right) \dd u\leq \frac{4\sqrt{A}}{R^2|u|^{j+1}}\frac{1+\frac{M}{3R}}{\sqrt{1-\frac{2M}{R}}}.
\end{equation}
\textit{In fact, we see that the estimate on the boundary term closes by itself!}

Having obtained a bound on the boundary term, we can proceed as in the previous proof. We  insert the bootstrap estimate \eqref{eq:tl:BSn} for $n=j$ and the estimate 
\begin{equation}\label{proof5.4y}
 |T^j(r\phi_1)|\leq \frac{j!\cin}{R|u|^{j+1}}+\frac{\cbs^{(j)}}{R|u|^{j+1}}
\end{equation}
implied by it into the approximate conservation law \eqref{eq:l=1cons.law.in.u} to find (recall $\Phi=r^2\pv\pho-Mr\phi_1$):
\begin{align}\label{proof5.4xx}
\begin{split}
|r^{-2}\pv T^j\Phi(u,v)|&\leq \int_{-k}^{u} \frac{6M^2D}{r^5}\frac{j!\cin+\cbs^{(j)}}{R|u'|^{j+1}}+\frac{12MD}{r^5}\frac{\cbs^{(j)}}{|u'|^{j+1}}\dd u'\\
&\leq \frac{3M^2}{2r^4}\frac{j!\cin+\cbs^{(j)}}{R|u|^{j+1}}+\frac{3M\cbs^{(j)}}{r^4|u|^{j+1}}.
\end{split}
\end{align}
Multiplying the above estimate by $r^2$, integrating  from $\Gamma_R$ and using the bound \eqref{proof5.4x} to estimate the boundary term, we thus obtain
\begin{align}\label{proof5.4xxx}
\begin{split}
|T^j\Phi|\leq \frac{4\sqrt{A}}{|u|^{j+1}}\frac{1+\frac{M}{3R}}{\sqrt{1-\frac{2M}{R}}}+\frac{Mj!\cin}{R|u|^{j+1}}+\frac{1}{1-\frac{2M}{R}}\left(\frac{3M^2}{2}\frac{j!\cin+\cbs^{(j)}}{R^2|u|^{j+1}}+\frac{3M\cbs^{(j)}}{R|u|^{j+1}}\right).
\end{split}
\end{align}
Importantly, $\cbs^{(j)}$ in the above estimate is either multiplied by decaying $R$-weights, or appears sublinearly inside a square root (in $A$).

We can now combine \eqref{proof5.4xxx} with \eqref{proof5.4y} and write
\begin{equation}
|r^2\pv T^j\pho|\leq |T^j\Phi|+M|T^j\pho|
\end{equation}
to close the bootstrap assumption, provided that $R$ and $\cbs^{(j)}$ are chosen large enough.
\end{proof}

In order to close the entire bootstrap argument, one can now first apply Proposition~\ref{prop:tlbs3} to $j=N$, then apply Proposition~\ref{prop:tlbs2} to $j=N-1$ and, finally, apply Proposition~\ref{prop:tlbs1} to all $j\leq N-2$. 
 One thus obtains that $\Delta$ is open and hence closes the bootstrap argument. In particular, we have established the proof of Proposition~\ref{prop:tl:pv}.

More systematically, one could instead make only the bootstrap assumptions \eqref{eq:tl:BSn} (without assuming \eqref{eq:tl:BS1n}), apply Proposition~\ref{prop:tlbs3} to all $j\leq N$ in order to close the bootstrap argument, and then use the identity \eqref{eq:waveequationtoestimatephi} in order to obtain the remaining estimates for $T^j(r\phi_1)$. This will be the approach followed in section~\ref{sec:general:timelike}.
\end{proof}
\subsubsection{Removing the smallness assumption on \texorpdfstring{$2M/R$}{2M/R}.}\label{sec:tl:removelargeness}
In the proofs of the previous sections \S\ref{sec5.3.2}--\S\ref{sec:subsubsec:timelikej=N}, we exclusively followed continuity methods, which required us to exploit $2M/R$ as a small parameter at various steps. 
It turns out that one can partially replace the continuity argument with a Gr\"onwall argument to remove all smallness assumptions on $2M/R$. 
Let us briefly sketch how this works. 

Let $\phi$ denote the finite solution as described in the beginning of \S\ref{sec:tl:pvT}.
First, we remark that the proof of Proposition~\ref{prop:tlbs3} shows that one can obtain an estimate of the form 
\begin{equation}\label{eq:tl:beyondN}
|\pv T^{j}(r\phi_1)|_{\Gamma_R}| \leq\frac{C}{\sqrt{R}}|u|^{-j-1}
\end{equation}
without requiring largeness in $R$ (this can be obtained by assuming a bootstrap estimate on $r^2\pv T^j(r\phi_1)|_{\Gamma_R}$ and improving it using the energy estimate, cf.\ \eqref{proof5.4x}). 

Equipped with this boundary term estimate, one can then obtain an estimate on $r^2\pv T^j(r\phi)$ throughout $\mathcal D_{\Gamma_R}\cap\{u\geq -k\}$ as follows: Let $(u,v)$ in $\mathcal D_{\Gamma_R}\cap\{u\geq -k\}$. For simpler notation, set $j=0$. Then, by the fundamental theorem of calculus,
\begin{align}
\left|\Phi(u,v)-\Phi(u,v_{R}(u))\right|\leq\int_{v_{R}(u)}^v r^2(u,v')\int_{-k}^u \left|	12MD\frac{r^2\pv(r\phi_1)}{r^5}-6M^2 D\frac{r\phi_1}{r^5}	\right|(u',v')\dd u'\dd v'.
\end{align}
Recalling the definition of $\Phi=r^2\pv(r\phi_1)-Mr\phi_1$, estimating the $Mr\phi_1$-term against the integral over $\pv(r\phi_1)$ from $\Gamma$, and applying Tonelli in the inequality above, we obtain:
\begin{multline}
|r^2\pv(r\phi_1)(u,v)|\leq  \left|r^2\pv(r\phi_1)(u,v_{R}(u))\right|+\int_{v_{R}(u)}^vM \frac{|r^2\pv(r\phi_1)|}{r^2}(u,v')\dd v'\\
+
\int_{-k}^u\int_{v_{R}(u)}^v r^2(u,v') \left|	12MD\frac{r^2\pv(r\phi_1)}{r^5}-6M^2 D\frac{r\phi_1}{r^5}	\right|(u',v')\dd v'\dd u'.
\end{multline}
We now also write the $r\phi_1$-term in the double integral as an integral over $\pv(r\phi_1)$ from $\Gamma$ and pull out the relevant suprema:
\begin{multline}
|r^2\pv(r\phi_1)(u,v)|\leq  \left|r^2\pv(r\phi_1)(u,v_{R}(u))\right|+\int_{v_{R}(u)}^v M\frac{|r^2\pv(r\phi_1)|}{r^2}(u,v')\dd v'\\
+
C\int_{-k}^u\frac{\sup_{v'\in[v_{R}(u),v]} |r^2\pv(r\phi_1)(u',v')|}{r^2(u',v_{\Gamma_R}(u))} + \frac{ \sup_{v'\in[v_{R}(u'),v]}|r^2\pv(r\phi_1)(u',v')|}{r^2(u',v_{R}(u))} \dd u'.
\end{multline}
We already control the  boundary  term on the RHS. 
If we now fix $u$ and just regard the last two integrals on the RHS as some monotonically increasing function of $v$, we can apply Gr\"onwall's inequality in the $v$-direction to obtain that
\begin{nalign}
|r^2\pv(r\phi_1)(u,v)|\leq  C\left(\left|r^2\pv(r\phi_1)(u,v_{R}(u))\right|
+
\int_{-k}^u \frac{ \sup_{v'\in[v_{R}(u'),v]}|r^2\pv(r\phi_1)(u',v')|}{r^2(u',v_{\Gamma_R}(u))} \dd u'\right).
\end{nalign}
Finally, we take the supremum in $v$, $\sup_{v'\in[v_R(u),v]}$, on the RHS and apply another Gr\"onwall inequality, this time in $u$. This then shows that
\begin{equation}
\sup_{v'\in[v_{R}(u),v]}|r^2\pv(r\phi_1)(u,v')|\leq  C\left|r^2\pv(r\phi_1)(u,v_{R}(u))\right|,
\end{equation}
and thus shows \eqref{eq:thm5.5} for $n=0$. 

Clearly, this approach requires no smallness assumption on $2M/R$ other than $R>2M$.
 Nevertheless, in hopes of simplifying the presentation, we will keep exploiting $2M/R$ as a small parameter throughout the remainder of the paper (i.e.\ \S\ref{sec:tl:pvT-T2} and \S \ref{sec:general:timelike}). 
 However, as the argument above shows, these smallness assumptions can always be lifted if one only wants to show \textit{upper bounds}. 
 The only times where we really need $2M/R$ as a small parameter is when we show \textit{lower bounds} on $r\phi_1$ etc., see already \S \ref{sec:tl:limitsub}.

\subsection{Estimates for \texorpdfstring{$\pv T^n(r\phi_1-|u|T(r\phi_1))$}{d/dv(Tn(r phi1+u T(r phi1))}}\label{sec:tl:pvT-T2}
The results obtained thus far are sufficient to show the first two estimates of Theorem~\ref{thm:tl}. 
In fact, not much modification is needed to also show certain lower bounds. 
However, something different needs to be done in order to establish the existence of the limit $\lim_{u\to-\infty}r^2\pu(r^2\pu\pho)$ (i.e.\ to prove eq.\ \eqref{eq:tl:thm:limit}).
 A crucial ingredient for this is to prove decay estimates for the differences $T^j(\pv(r\phi_1)-|u|\pv T(r\phi_1))$ (the reader may wish to already have a look at \S\ref{sec:tl:limitsub} to understand the role played by these quantities).

Therefore, let from now on $\phi_1$ be as described in the beginning of \S \ref{sec:tl:pvT}, but with the additional assumption that  also the lower bound \eqref{eq:tl:boundarydata2cutoff} holds on the boundary data.
We will now establish the following uniform decay estimates:
\begin{prop}\label{prop:tl:pvT-T2}
Let $\phi_1$ be the solution described above, and let $1\leq N'\leq N+2$. Then, if  $|U_0|$  is sufficiently large, there exists a constant $C=C(2M/R,\cin\,\ce)$ (in particular, this constant does not depend on $k$), which can be chosen independent of $R$ for large enough $R$, such that the following estimates hold throughout $\mathcal{D}_{\Gamma_R}\cap\{-k\leq u\leq U_0\}$:
\begin{align}
\left|r^2T^n\pv\left(r\phi_1-|u|T(r\phi_1)\right)\right|\leq \frac{C}{|u|^{n+1+\epsilon}}+\theta_k\cdot \frac{C}{k^{n+1}},&& n=0,\dots,N'.
\end{align}
\end{prop}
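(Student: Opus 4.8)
The plan is to mimic the bootstrap structure of the proof of Proposition~\ref{prop:tl:pv}, but now applied to the \emph{difference} quantity $\psi_1 := r\phi_1 - |u| T(r\phi_1)$, exploiting the improved decay rate $|u|^{-n-1-\epsilon}$ available on $\Gamma_R$ from \eqref{eq:tl:boundarydata2cutoff}. The starting observation is that $\psi_1$ (or rather, the relevant combination involving it) satisfies an equation that is \emph{almost} of the same form as the one for $r\phi_1$: since $T = \pu + \pv$ commutes with $\Box_g$ and since $|u|$ is, up to a bounded error via \eqref{eq:v-u=r}, a function well-adapted to the problem, applying $r^2\pv$, the commuted approximate conservation law \eqref{eq:l=1cons.law.in.u}, and the identity \eqref{eq:waveequationtoestimatephi} to $\psi_1$ will produce the same good $r$-weights as before, with the only new feature being commutator terms of the schematic form $[\![r^2\pv,|u|]\!]\,T^{n+1}(r\phi_1)$ and lower-order pieces, all of which are controlled \emph{sharply} by the already-established estimates of Proposition~\ref{prop:tl:pv}. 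First I would make bootstrap assumptions
\[
\left| r^2 \pv T^n \psi_1 (u,v)\right| \leq \frac{\cbs^{(n)}}{|u|^{n+1+\epsilon}} + \theta_k \frac{\cbs^{(n)}}{k^{n+1}}, \qquad n=0,\dots,N',
\]
on the region $\Delta$ as in \S\ref{sec:tl:BS}, and argue that $\Delta$ is open by improving each of these.

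The core of the argument then splits, exactly as before, into a boundary estimate on $\Gamma_R$ and a bulk propagation. For the boundary term $r^2\pv T^n\psi_1|_{\Gamma_R}$, I would run the energy estimate of \S\ref{sec:subsubsec:timelikej=N}: since $T^{n}\psi_{\ell=1}$ is a solution of the wave equation (being a linear combination of $T$-derivatives of $\phi_{\ell=1}$ with a $|u|$-weight, the latter handled by a commutation step that produces solution-valued error terms of size controlled by Proposition~\ref{prop:tl:pv}), the current $J^T[T^n\psi_{\ell=1}]$ is divergence-free up to these controlled errors, and the divergence theorem \eqref{eq:divergencetheorem} combined with Cauchy--Schwarz gives $\sqrt{r}\,|T^n\psi_1| \lesssim |u|^{-n-1-\epsilon}$ on $\Gamma_R$, hence via the wave equation \eqref{eq:l=1waveequation} integrated from $u=-k$ a bound on $\pv T^n\psi_1|_{\Gamma_R}$; crucially the flux input on $\Gamma_R$ uses \eqref{eq:tl:boundarydata2cutoff} for the $(n+1)$-st $T$-derivative of the difference, which carries the $\epsilon$-improvement and the $\theta_k k^{-n-1}$ cut-off error. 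For the bulk, I would integrate the $T^n$-commuted version of \eqref{eq:l=1cons.law.in.u} first from $u=-k$ and then from $\Gamma_R$, plugging in the bootstrap assumption, the sharp bounds of Proposition~\ref{prop:tl:pv} for all terms \emph{not} involving the difference, the difference bound \eqref{eq:tl:boundarydata2cutoff}-type input where it appears, and the boundary estimate just derived; the good $r^{-5}$ weights ensure the $u$-integrals converge and reproduce the $|u|^{-n-1-\epsilon}$ rate with a constant improvable for large $R/2M$. Finally, the identity \eqref{eq:waveequationtoestimatephi} applied to $T^n\psi_1$, exactly as in the proof of Proposition~\ref{prop:tlbs2}, recovers whatever pointwise estimate on $T^n\psi_1$ is needed to close, and as in \S\ref{sec:tl:removelargeness} the smallness in $2M/R$ can be replaced by a Gr\"onwall argument if one wishes (though here we keep it for simplicity).

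The main obstacle I anticipate is bookkeeping the commutator terms that arise from the $|u|$-weight in $\psi_1 = r\phi_1 - |u| T(r\phi_1)$: when one applies $r^2\pv$, the double-null coordinate derivatives do not annihilate $|u|$, so one generates terms like $\pv(|u|) \cdot r^2 T^{n+1}(r\phi_1) = 0$ (harmless, since $\pv|u|=0$) but also, after the second integration and the $\pu$-derivatives present in \eqref{eq:l=1cons.law.in.u}, terms like $\pu(|u|)\cdot(\cdots) = \mp(\cdots)$ that shift the $u$-power by one; one must check these always land at the \emph{same or better} order, i.e.\ that the $|u|$-weight is exactly calibrated so that $r\phi_1 - |u|T(r\phi_1)$ decays $\epsilon$ faster and not merely at the base rate. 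This is precisely the point of the lower bound \eqref{eq:tl:boundarydata2cutoff} being imposed on $\psi_1$ and its $T$-derivatives rather than on $r\phi_1$: it guarantees the cancellation is genuine. A secondary nuisance is tracking the $\theta_k$ cut-off errors through all the integrations — but since each integration in $u$ against an $r^{-k}$-weight with $k\geq 2$ only improves or preserves the $k^{-n-1}$ factor (it never worsens it), and we ultimately take $k\to\infty$ in \S\ref{sec:tl:limit}, this is routine.
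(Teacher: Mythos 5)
Your proposal is correct and follows essentially the same route as the paper: a bootstrap on $r^2\pv T^n(r\phi_1-|u|T(r\phi_1))$ closed by (i) an energy estimate for the boundary term on $\Gamma_R$ using the current $J^T$ applied to the difference, whose divergence is no longer zero but produces an error of the form $\pv T^{n+1}(r\phi_1)\cdot T^{n+1}(r\phi_1-|u|T(r\phi_1))$ controlled sharply by Proposition~\ref{prop:tl:pv}, and (ii) integration of the $T^n$-commuted conservation law \eqref{eq:l=1cons.law.in.u} for $\Phi-|u|T\Phi$, whose error term $r^{-2}\pv T^{n+1}\Phi$ is again subleading by Proposition~\ref{prop:tl:pv}. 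The only imprecision is your passing claim that $T^n\psi_{\ell=1}$ "is a solution of the wave equation" — it is not, but you immediately correct this by treating the current as divergence-free only up to controlled errors, which is exactly how the paper proceeds.
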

\begin{proof}The proof will be very similar to the proof of Proposition~\ref{prop:tl:pv}. We will again treat $2M/R$ as a small parameter, keeping in mind that this restriction can lifted as in \S \ref{sec:tl:removelargeness}.

\subsubsection{The bootstrap assumptions}
\newcommand{\cbse}{C_{\mathrm{BS},\epsilon}}
Let $\{\cbse^{(n)},0=1,\dots,N'\}$ be a set of sufficiently large positive constants, and let $\Delta$ be defined as in~\ref{sec:tl:BS}, with the additional requirement that also 
\begin{equation}\label{eq:tl:BS2}
\tag{BS''(n)}
\left|r^2T^n\pv\left(r\phi_1-|u|T(r\phi_1)\right)\right|\leq \frac{\cbse^{(n)}}{|u|^{n+1+\epsilon}}+\theta_k\cdot \frac{\cbse^{(n)}}{k^{n+1}}
\end{equation}
holds for $n=0,\dots,N$'.
We shall improve these estimates in the following. Note that we only assume estimates on the $\pv$-derivatives, so we can just use the method of the proof of Proposition~\ref{prop:tlbs3} with some adaptations. The crucial observation is that, while the differences $T^n(\phi_1-|u|T\phi_1)$ do not solve the wave equation, the error term is of the form $\pv T^{n+1}(r\phi_1)$, over which we already have sharp control by Proposition~\ref{prop:tl:pv}.
\subsubsection{Improving the bootstrap assumptions}
\begin{prop}
Let $j\in\{0,\dots,N'\}$ for $N'\leq N+2$. Then, for sufficiently large values of $R/2M$ and $|U_0|$, and if $\cbse^{(j)}$ is chosen large enough, we have throughout $\Delta$ that, in fact,
\begin{equation}
\left|r^2T^j\pv\left(r\phi_1-|u|T(r\phi_1)\right)\right|\leq \frac12\frac{\cbse^{(j)}}{|u|^{j+1+\epsilon}}+\frac12 \theta_k\cdot \frac{\cbse^{(j)}}{k^{n+1}}.
\end{equation}
\end{prop}
\begin{proof}
We shall only need to assume \eqref{eq:tl:BS2} for $n=j$ and, in addition, the results of Proposition~\ref{prop:tl:pv}. We shall follow the structure of the proof of Proposition~\ref{prop:tlbs3}.

First, we require an estimate of $r^2T^j\pv(r\phi_1-|u|T(r\phi_1))$ on the boundary $\Gamma_R$. 
Recall that, in the previous proof, we obtained such an estimate by using an energy estimate to obtain a bound on $\sqrt{r}T^j\phi_1$ with sharp decay in $u$, and by then using the wave equation to convert this into a bound for $\pv T^j(r\phi_1)$ on $\Gamma_R$. 
Proceeding along the same lines for the differences under consideration, we are led to consider the current $J^T[T^j(\phi_1-|u|T\phi_1)]$. The divergence of this current is no longer vanishing. Instead, we have\footnote{We abuse notation and write $\phi_{\ell=1}=\phi_1$.}
\newcommand{\pvv}{r^2T^j\pv(r\phi_1-|u|T(r\phi_1))}
\begin{align}
\begin{split}
\div J^T\left[T^j\left(\phi_1-|u|T\phi_1\right)\right]&=\Box_g\left(T^j\left(\phi_1-|u|T\phi_1\right)\right)\cdot T\left(T^j\left(\phi_1-|u|T\phi_1\right)\right)\\
&=-\frac{1}{Dr}\pv T^{j+1}(r\phi_1)\cdot\frac{1}{r}T^{j+1}\left(r\phi_1-|u|T(r\phi_1)\right),
\end{split}
\end{align}
where we used the formula \eqref{eq:box} for $\Box_g$. Using the estimates from Proposition~\ref{prop:tl:pv} and the fact that $j+2\leq N'+2\leq N$, we can thus bound $\div J^T$ as follows:
\begin{align}\label{eq:bulkestimate}
\left|\div J^T\left[T^j\left(\phi_1-|u|T\phi_1\right)\right]\right|\leq 2\cdot\frac{C^2}{R\cdot r^{4}|u|^{2j+4}}.
\end{align}
Applying the divergence theorem, in the form of \eqref{eq:divergencetheorem},  to the current $J^T$, and doing the integrals over the sphere, we then arrive at (compare to eq.\ \eqref{random1})
\begin{align}\label{proof5.6z}
\begin{split}
&\int_{-k}^u r^2\left(\pu T^j(\phi_1-|u'|T\phi_1)\right)^2\dd u'	 \\
\leq &\int_{\Gamma_R\cap\{-k\leq u'\leq u\}}  2r^2 T^{j+1}(\phi_1-|u'|T\phi_1)\cdot (2\pv-T)T^j(\phi_1-|u'|T\phi_1)	 \dd u'\\
+&\int_{-k}^u\int_{v_R(u')}^v r^2\left|\div J^T\left[T^j(\phi_1-|u'|T\phi_1)\right]\right|\dd u'\dd v'.
\end{split}
\end{align}
We have already estimated the terms inside the bulk term in \eqref{eq:bulkestimate}. Indeed, 
we can see that the contribution to the RHS of the estimate \eqref{proof5.6z} above is  subleading: 
\begin{equation}
\int_{-k}^u\int_{v_R(u)}^v r^2\left|\div J^T[T^j(\phi_1-|u'|T\phi_1)]\right|\dd u'\dd v'\leq \frac{\tilde{C}}{R^2|u|^{2j+3}}
\end{equation}
for some constant $\tilde{C}$.
On the other hand, we  can estimate the boundary term in \eqref{proof5.6z} by plugging in the boundary data assumptions \eqref{eq:tl:boundarydata2cutoff} for $n=j,j+1$ and the bootstrap assumption \eqref{eq:tl:BS2} for $n=j$. This gives:
\begin{align*}
& \int_{\Gamma_R\cap\{-k\leq u'\leq u\}}  2r^2 T^{j+1}(\phi_1-|u'|T\phi_1)\cdot (2\pv-T)T^j(\phi_1-|u'|T\phi_1)	 \dd u'\\
 \leq&  \int_{\Gamma_R\cap\{-k\leq u'\leq u\}}2\left(\frac{\ce}{R|u'|^{2+j+\epsilon}}+\frac{\cin}{Rk^{j+2}}	\right)\\
 &\cdot\left(\frac{2}{R^2}\left(\frac{\cbse^{(j)}}{|u'|^{j+1+\epsilon}}+\frac{\cbse^{(j)}}{k^{j+1}}\right)+\left(\frac{2D}{R}+\frac{1}{|u'|}\right)\left(\frac{\ce}{R|u'|^{1+j+\epsilon}}+\frac{\cin}{Rk^{j+1}}	\right)\right)\dd u' \\
 \leq&\frac{8}{R^3}\max(\ce,\cin)\left(\cbse^{(j)}+\max(\ce,\cin)\right)\left(	\frac{1}{k^{j+1}}+\frac{1}{|u|^{j+1+\epsilon}}	\right)^2\cdot (1+\mathcal{O}(|u|^{-1}).
\end{align*}
We thus find, using the fundamental theorem of calculus, Cauchy--Schwarz and the energy estimate \eqref{proof5.6z} above, that
\begin{equation}\label{proof5.6x}
\sqrt{Dr}|T^j(\phi_1-|u|T\phi_1)|\leq \frac{\sqrt{A'}}{R^{\frac32}}\left(\frac{1}{k^{j+1}}+\frac{1}{|u|^{j+1+\epsilon}}\right)(1+\mathcal{O}(|u|^{-\frac12})),
\end{equation}
where $A'$ is a constant which, importantly, only depends \textit{linearly} on $\cbse^{(j)}$. We now use the wave equation \eqref{eq:l=1waveequation} in order to derive an estimate for the $\pv$-derivative on the boundary. We compute from \eqref{eq:l=1waveequation} that
\begin{align}\label{proof5.6yy}
\begin{split}
\pu\pv(T^j(r\phi_1&-|u| T(r\phi_1)))=-\frac{2D}{r^2}\left(1+\frac{M}{r}\right)(T^j(r\phi_1-|u| T(r\phi_1)))+\pv T^{j+1}(r\phi_1).
\end{split}
\end{align}
Note that we control the error term $\pv T^{j+1}(r\phi_1)$ by Proposition~\ref{prop:tl:pv}; in fact, it is subleading in terms of $u$-decay.
Integrating \eqref{proof5.6yy} from $u=-k$, and plugging in the estimate \eqref{proof5.6x}, we find that (see also \eqref{proof5.4x})
\begin{equation}
|\pv T^j(r\phi_1-|u|T(r\phi_1))|\leq \frac{4\sqrt{A'}}{R^2}\frac{1+\frac{M}{3R}}{\sqrt{1-\frac{2M}{R}}}\left(\frac{1}{k^{j+1}}+\frac{1}{|u|^{j+1+\epsilon}}\right)(1+\mathcal{O}(|u|^{-\frac12})).
\end{equation}
We have thus established an estimate on the boundary term. Now, in order to improve the bootstrap assumption, we want to appeal to the approximate conservation law \eqref{eq:l=1cons.law.in.u}. We compute that
\begin{align}
\begin{split}
	&\pu\left(r^{-2}\pv T^j(\Phi-|u|T\Phi)\right)
	=r^{-2}\pv T^{j+1}\Phi\\
	&-\frac{6M^2D}{r^5}(T^j(r\phi_1-|u| T(r\phi_1)))-\frac{12MD}{r^5}r^2\pv T^j(r\phi_1-|u| T(r\phi_1)).
\end{split}
\end{align}
Again, we control the error term $r^{-2}\pv T^{j+1}\Phi$ by Proposition~\ref{prop:tl:pv}; in fact, it has more $u$-decay than the other terms:
\[|r^{-2}\pv T^{j+1}\Phi|\leq 	\frac{C}{r^4|u|^{j+2}}.	\]
 Converting some of the additional $|u|$-decay present in $\pv T^{j+1}\Phi$ into $r$-decay,
 \[|r^{-2}\pv T^{j+1}\Phi|\leq 	\frac{C}{r^{5-\delta}|u|^{j+1+\delta}},	\]
 for some suitable $1>\delta>\epsilon$, and repeating the computations leading to \eqref{proof5.4xxx}, we thus find
\begin{multline}\label{proof5.6xx}
|T^j(\Phi-|u|T\Phi)|\leq \frac{4\sqrt{A'}}{R^2}\frac{1+\frac{M}{3R}}{\sqrt{1-\frac{2M}{R}}}\left(\frac{1}{k^{j+1}}+\frac{1}{|u|^{j+1+\epsilon}}\right)+\frac{M\ce}{R|u|^{j+1+\epsilon}}+ \frac{M\cin}{Rk^{j+1}}\\
+\frac{1}{1-\frac{2M}{R}}\left(\frac{3M^2}{2}\frac{\ce+\cbse^{(j)}}{R^2}+\frac{3M\cbse^{(j)}}{R}\right)\left(\frac{1}{k^{j+1}}+\frac{1}{|u|^{j+1+\epsilon}}\right)
+\mathcal O(|u|^{-{j-1-\delta}}).
\end{multline}
Importantly, $\cbse^{(j)}$ in the above estimate appears either multiplied by decaying $R$-weights or sublinearly inside a square root (in $A'$).
Therefore, if $R/2M$ and $\cbse^{(j)}$ are sufficiently large, we can improve the bootstrap assumption (and thus prove the proposition) by again writing
\begin{equation}
\left|r^2 T^j\pv\left(r\phi_1-|u|T(r\phi_1)\right)\right|\leq \left|T^j\left(\Phi-|u|T\Phi\right)\right|+M\left|T^j\left(r\phi_1-|u|T(r\phi_1)\right)\right|.
\end{equation}
\end{proof}
This concludes the proof of Proposition~\ref{prop:tl:pvT-T2}.
\end{proof}

\subsection{Proof of Thm.~\ref{thm:tl}}\label{sec:tl:limit}
Recall from \S\ref{sec:tl:cutoffdata} the definition of the sequence of solutions $\phi_1^{(k)}$, each arising from  data satisfying \eqref{eq:tl:boundarydata1cutoff}, \eqref{eq:tl:boundarydata2cutoff} and \eqref{eq:tl:noincomingfinite}. We have shown sharp, uniform-in-$k$ decay for these solutions in Propositions~\ref{prop:tl:pv} and~\ref{prop:tl:pvT-T2}. 

We will now smoothly extend these solutions to the zero solution for $u\leq -k$ and show that they  converge uniformly to a pointwise limit $\phi_1$ as $k\to -\infty$, which therefore still satisfies the uniform bounds of Propositions~\ref{prop:tl:pv} and~\ref{prop:tl:pvT-T2}.
\subsubsection{Sending \texorpdfstring{$\mathcal{C}_{u=-k}\to\mathcal{I}^-$}{C(u=-k) to I-}}
\begin{prop}\label{prop:tl:limit}
Let $\{\phi_1^{(k)}\}_{k\in\mathbb{N}}$ be the sequence of solutions described in \S\ref{sec:tl:cutoffdata} extended with the zero solution for $u\leq -k$. This sequence $\{\phi_1^{(k)}\}_{k\in\mathbb{N}}$ tends to a uniform limit $\phi_1$ as $k\to\infty$,
\begin{equation}
\lim_{k\to\infty}||\phi_1^{(k)}-\phi_1 	||_{C^N(\mathcal{D}_{\Gamma_R})}=0.
\end{equation}
In fact, this limiting solution is the unique smooth solution that restricts correctly to the data of \S \ref{sec:subsec:tl:setup}, and it satisfies, throughout $\mathcal{D}_{\Gamma_R}\cap\{u\leq U_0\}$, and for sufficiently large negative values of $U_0$, the following bounds for some constant $C=C(2M/R,\cin,\ce)$ which can be chosen independent of $R$ for large enough $R$:
\begin{align}\label{eq:proplimit2}
\left|r^2\pv T^n (r\phi_1)(u,v)\right|&\leq \frac{C}{|u|^{n+1}},&&n=0,1,\dots,N,\\
\left|T^n (r\phi_1)(u,v)\right|&\leq \frac{C}{|u|^{n+1}}\max\left(r^{-1},|u|^{-1}\right),&&n=0,1,\dots,N-1.\label{eq:proplimit3}
\end{align}
Moreover, if $N'\leq N+2$, we also have
\begin{align}\label{eq:proplimit4}
\left|r^2T^n\pv(r\phi_1-|u|T(r\phi_1))\right|\leq \frac{C}{|u|^{n+1+\epsilon}},&& n=0,1,\dots,N'.
\end{align}
\end{prop}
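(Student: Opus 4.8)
\textbf{The plan} is to show that $\{\phi_1^{(k)}\}_k$ is Cauchy in $C^N$ by applying the \emph{linear} a priori estimates of Propositions~\ref{prop:tl:pv} and~\ref{prop:tl:pvT-T2} to the differences $w^{(k,k')}:=\phi_1^{(k)}-\phi_1^{(k')}$ for $k>k'$. A preliminary observation is that each $\phi_1^{(m)}$, extended by zero for $u\le -m$, is a smooth solution of $\Box_g\phi=0$ on all of $\mathcal{D}_{\Gamma_R}$: by construction the cut-off data $\hat\phi_1^{(m)}$ vanishes to infinite order on $\Gamma_R\cap\{u\le -m\}$ and $\phi_1^{(m)}\equiv 0$ on $\mathcal{C}_{u=-m}$, so by finite speed of propagation $\phi_1^{(m)}$ vanishes near the corner, and then the transport ODEs (read off from the wave equation, since all $\pv$-derivatives vanish along $\mathcal{C}_{u=-m}$) force $\partial_u^j\phi_1^{(m)}\equiv 0$ along $\mathcal{C}_{u=-m}$ for every $j$. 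Consequently $w^{(k,k')}$ is a smooth solution on $\mathcal{D}_{\Gamma_R}$ that vanishes for $u\le -k$ and restricts on $\Gamma_R$ to $\hat\phi_1^{(k)}-\hat\phi_1^{(k')}$.

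From the iterated-integral identities of \S\ref{sec:tl:cutoffdata}, this latter quantity together with its $T$-derivatives up to order $N+1$ obeys both the clean bound $\lesssim \cin|u|^{-n-1}$ and a uniform-in-$u$ cut-off bound $\lesssim\varepsilon_{k'}$ on $\{u\le U_0\}$, with $\varepsilon_{k'}\to 0$. Interpolating, for any small $\theta>0$ one gets $|T^n(r(\hat\phi_1^{(k)}-\hat\phi_1^{(k')}))|\lesssim\varepsilon_{k'}^{\theta}\,|u|^{-(n+1)(1-\theta)}$; this is exactly the boundary-data hypothesis of \S\ref{sec:tl:cutoffdata} (with decay rate degraded by the harmless amount $\theta$, which the bootstraps of Propositions~\ref{prop:tl:pv},~\ref{prop:tl:pvT-T2} accommodate verbatim) with data constant $\lesssim\varepsilon_{k'}^{\theta}$ and trivial data on $\mathcal{C}_{u=-k}$. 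Since those bootstrap arguments are homogeneous in the data, running them for $w^{(k,k')}$ yields $\|w^{(k,k')}\|_{C^N(\mathcal{D}_{\Gamma_R}\cap\{u\le U_0\})}\lesssim\varepsilon_{k'}^{\theta}\to 0$. On the complementary region $\{u\ge U_0\}$, which is bounded in $u$, the $\Gamma_R$-data and the characteristic data induced on $\mathcal{C}_{u=U_0}$ (the latter controlled by the estimates just obtained, uniformly in $k$) both converge, so continuous dependence for the mixed problem of Proposition~\ref{prop:existence:mixedboundary}, together with the standard boundedness estimates for $\Box_g\phi=0$ on Schwarzschild propagated up to $\mathcal{H}^+$ and $\mathcal{I}^+$, gives uniform $C^N$-convergence there too. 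Hence $\phi_1^{(k)}\to\phi_1$ in $C^N(\mathcal{D}_{\Gamma_R})$.

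The limit $\phi_1$ solves $\Box_g\phi_1=0$ by $C^2$-convergence, restricts to $\hat\phi_1$ on $\Gamma_R$ since $\hat\phi_1^{(k)}\to\hat\phi_1$, and inherits the uniform-in-$k$ bounds of Propositions~\ref{prop:tl:pv},~\ref{prop:tl:pvT-T2}, i.e.\ \eqref{eq:proplimit2}--\eqref{eq:proplimit4}. It satisfies the no incoming radiation condition \eqref{eq:tl:noincomingradiation} as a consequence of those bounds: using the wave equation \eqref{eq:l=1waveequation} to trade each pure $\pv$-derivative for a $T$-derivative plus an $r$-weighted lower-order term, one expresses $\pv^n(r\phi_1)$, $1\le n\le N+1$, as a finite sum of terms $r^{-a}\pv^bT^c(r\phi_1)$ with $b\le 1$, each of which $\to 0$ as $u\to-\infty$ for fixed $v$ by \eqref{eq:proplimit2}--\eqref{eq:proplimit3}. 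Finally, for uniqueness: any other smooth solution $\tilde\phi_1$ with $\tilde\phi_1|_{\Gamma_R}=\hat\phi_1$ and satisfying \eqref{eq:tl:noincomingradiation} differs from $\phi_1$ by a $w$ solving $\Box_gw=0$ that vanishes, with all $T$-derivatives, on $\Gamma_R$ and has trivial radiation field on $\mathcal{I}^-$; a backwards $T$-energy estimate — the $\Gamma_R$-flux of $J^T[w]$ vanishes since $w|_{\Gamma_R}=0$, and the $\mathcal{I}^-$-flux vanishes by the no incoming radiation condition — equivalently scattering-theoretic uniqueness as in~\cite{DRSR}, forces $w\equiv 0$.

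\textbf{The main obstacle} is the uniform-in-$k$ smallness of the differences: $w^{(k,k')}$ trivially meets the structural hypotheses, and its $\Gamma_R$-data tends to $0$ uniformly in $u$, but it is \emph{not} small relative to the $|u|^{-n-1}$-weights that drive the bootstrap — in the far past it is merely comparable to the full data, since there $\cin|u|^{-n-1}$ is already $\lesssim\cin(k')^{-n-1}$. The interpolation above is what converts uniform smallness into weighted smallness at the price of an arbitrarily small loss in the decay exponent; verifying that the bootstrap constants in Propositions~\ref{prop:tl:pv},~\ref{prop:tl:pvT-T2} depend linearly on $\cin,\ce$ and that all the arguments tolerate the rate $(n+1)(1-\theta)$ is where the technical work lies. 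A secondary point is to ensure, in the uniqueness step, that \eqref{eq:tl:noincomingradiation} combined with $w|_{\Gamma_R}=0$ genuinely kills the $\mathcal{I}^-$-flux in a limiting divergence-theorem argument.
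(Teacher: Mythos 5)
Your proposal is correct in substance, but the mechanism by which you obtain smallness of the differences is genuinely different from the paper's. The paper splits $\mathcal{D}_{\Gamma_R}$ into three regions for $n>k$: on $\{u\le -n\}$ both solutions vanish; on $\{-n\le u\le -k+1\}$ the difference is a solution with boundary data of full size, and the weighted estimates of Proposition~\ref{prop:tl:pv} applied to it give smallness simply because $|u|\ge k-1$ there; on $\{u\ge -k+1\}$ the difference has \emph{vanishing} boundary data on $\Gamma_R$, so a single $T$-energy estimate with flux supported only on $\Gamma_R\cap\{-n\le u\le -k+1\}$ propagates the $O(1/k)$ smallness forward, after which the wave equation recovers the $\pv$-derivatives. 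Your route instead interpolates the two available bounds on the difference data (uniform smallness $\varepsilon_{k'}$ versus the weighted bound $\cin|u|^{-n-1}$) to get globally small \emph{weighted} data, and then reruns the whole bootstrap once. Both work; the paper's decomposition avoids touching the bootstrap at all (it is applied as a black box with the original rates), whereas yours requires re-examining it with perturbed rates but yields a cleaner one-pass argument and an explicit modulus of convergence $\varepsilon_{k'}^{\theta}$. Your treatment of uniqueness and of the no-incoming-radiation condition for the limit is more explicit than the paper's (which only asserts these follow from "the methods of the proof"), and is a welcome addition.

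One technical wrinkle you should fix: the interpolation as written produces the exponent $(n+1)(1-\theta)$, which destroys the rigid "one extra power of $|u|$ per $T$-derivative" hierarchy that the bootstrap exploits (e.g.\ in the boundary-term estimate \eqref{proof5.3x}, the $T^{j+1}$-data term is subleading precisely because it decays exactly one power faster than the $T^j$-terms). Choose instead $\theta_n=\theta/(n+1)$ in the interpolation at each order, so that all rates are uniformly shifted to $|u|^{-n-1+\theta}$ with data constant $\lesssim\varepsilon_{k'}^{\theta/(N+2)}$; this preserves the hierarchy and the bootstrap then goes through with only the harmless uniform loss $\theta$. With that adjustment your argument is complete.
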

\begin{proof}
We show that the sequence is Cauchy. Let $\delta>0$ arbitrary. We need to show that there exists $K\in\mathbb{N}$ such that
\begin{equation}\label{eq:tl:nu}
||\phi_1^{(n)}-\phi_1^{(k)}||_{C^N(\mathcal{D}_{\Gamma_R})}<\delta
\end{equation}
for all $n,k>K$. 
This is done by splitting $\mathcal{D}_{\Gamma_R}$ into three regions: $u\leq -n$, $-n\leq u\leq -k+1$ and $-k+1\leq u$, where we assumed without loss of generality that $n>k$. 

In the first region, $u\leq -n$, both solutions vanish, so there is nothing to show. 

Notice that, by linearity, the difference $\Delta\phi_1:=\phi_1^{(n)}-\phi_1^{(k)}$ is itself a solution to the wave equation \eqref{eq:l=1waveequation}, with vanishing data on $u=-n$ and compactly supported boundary data on $\Gamma_R\cap\{-n\leq u\leq -k+1\}$. 
We can therefore simply apply the results of Proposition~\ref{prop:tl:pv} to $\Delta\phi_1$ in the second region, $-n\leq u\leq -k+1$, and obtain, for some constant $C_1$, that
\begin{equation}\label{eq:tl:nuk}
||\phi_1^{(n)}-\phi_1^{(k)}||_{C^N(\mathcal{D}_{\Gamma_R}\cap\{-n\leq u\leq -k+1\})}\leq \frac{C_1}{k}.
\end{equation}

In the third region\footnote{In fact, the approach for the third region can be used in all of $\mathcal D$.}, $-k+1\leq u$, we  apply the energy estimate \eqref{random1} to the difference $\Delta\phi_1$:
\begin{align*}
\begin{split}
&\int_{-n}^u r^2(\pu T^j\Delta\phi_{1})^2+D|T^j\Delta \phi_{1}|^2\dd u'	 \\
\leq&\int_{\Gamma_R\cap\{-n\leq u'\leq -k+1 \}}  2r^2 T^{j+1}\Delta\phi_{1}\cdot (2\pv-T)T^j \Delta\phi_{1}	 \dd u'.
\end{split}
\end{align*}
Here, we used that the boundary data for $\Delta\phi_1$ are compactly supported in $u\leq -k+1$.
We can now estimate the integral over $\Gamma_R$ by plugging in the boundary data assumptions for the $T^j\Delta\phi_1$-terms and by plugging in the previously obtained estimate \eqref{eq:tl:nuk} for the terms $\pv T^j \Delta\phi_1$. We thus find that
\begin{align*}
|T^j\Delta\phi(u,v)|\leq \left(\int_{-n}^u \frac{1}{r^2}\dd u'\right)^{\frac12}\left(\int_{-n}^u	r^2(\pu T^j\Delta\phi_1(u,v))^2\dd u'	\right)^{\frac12}\leq\frac{1}{\sqrt{r}} \frac{C}{k^{j+1}}
\end{align*}
for some constant $C>0$. From these estimates on $T^j\Delta\phi_1$, we can obtain estimates on $\pv T^j\Delta\phi_1$ by simply integrating \eqref{eq:l=1waveequation} from $u=-k+1$ (where $\pv T^j \Delta\phi_1\lesssim k^{-j-1}$). This shows that there exists a constant $C_2$ such that
 \begin{equation}\label{eq:tl:nuk2}
||\phi_1^{(n)}-\phi_1^{(k)}||_{C^N(\mathcal{D}_{\Gamma_R}\cap\{u\geq -k+1\})}\leq \frac{C_2}{k}.
\end{equation}
Combining \eqref{eq:tl:nuk} and \eqref{eq:tl:nuk2} shows that \eqref{eq:tl:nu} holds for all $n>k>K$ provided that $K>\frac{C_1+C_2}{2\delta}$.

We have thus established the uniform convergence of the sequence $\{\phi_1^{(k)}\}$. In view of the uniformity of the convergence, the bounds from Propositions~\ref{prop:tl:pv} and~\ref{prop:tl:pvT-T2}  carry over to the limiting solution, thus proving the estimates  \eqref{eq:proplimit2}--\eqref{eq:proplimit4}. Moreover, the methods of the proof show that this is the unique solution that has vanishing energy flux on $\mathcal I^-$ and satisfies the assumptions of \S \ref{sec:subsec:tl:setup}. This concludes the proof.
\end{proof}
\subsubsection{The limit \texorpdfstring{$\lim_{u\to-\infty,v=\text{constant}}r^2\pu\left(r^2\pu(r\phi)\right)$}{lim (r2 d/du (r2 d/du(r phi1)))}}\label{sec:tl:limitsub}
Finally, we establish that the limiting solution constructed above satisfies \eqref{eq:tl:thm:limit}. For this, we will also need to assume the lower bound  \eqref{eq:tl:thm:lowerbound} on data.
\begin{prop}\label{prop:limit}
Consider the solution of Proposition~\ref{prop:tl:limit}, and assume in addition that $N\geq 4$ and $N'\geq2$, as well as the lower bound on data \eqref{eq:tl:thm:lowerbound}. Then  the following limit exists, is independent of $v$, and is non-vanishing so long as $\cin$ is non-vanishing and $R/2M$ is sufficiently large:
\begin{equation}\label{eq:prop8limit1}
\lim_{u\to -\infty}r^2\pu\left(r^2\pu(r\phi_1(u,v))\right)=\tilde C\neq 0. 
\end{equation}
Moreover, we have that $\lim_{u\to-\infty}r^2\pu(r\phi_1)(u,v)=0$ and
\begin{equation}\label{eq:prop8limit2}
r^2\pu\left(r^2\pu(r\phi_1(u,v))\right)-\lim_{u\to-\infty}r^2\pu\left(r^2\pu(r\phi_1(u,v))\right)=\mathcal{O}(\max(|u|^{-\epsilon},r^{-1})).
\end{equation}
\end{prop}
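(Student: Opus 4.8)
The plan is to work throughout with the limiting solution $\phi_1$ furnished by Proposition~\ref{prop:tl:limit}, so that the uniform decay estimates \eqref{eq:proplimit2}--\eqref{eq:proplimit4} are at our disposal; since, as explained in \S\ref{sec:tl:removelargeness}, these hold without smallness of $2M/R$, we may freely take $2M/R$ small where convenient. I would first dispose of $\lim_{u\to-\infty}r^2\pu(r\phi_1)=0$. Write $\pu=T-\pv$, so that $r^2\pu(r\phi_1)=r^2T(r\phi_1)-r^2\pv(r\phi_1)$; the second term is $\mathcal O(|u|^{-1})$ by \eqref{eq:proplimit2}. For the first term the key is the algebraic identity $|u|T(r\phi_1)=r\phi_1-\bigl(r\phi_1-|u|T(r\phi_1)\bigr)$, whence $r^2T(r\phi_1)=|u|^{-1}r^2(r\phi_1)-|u|^{-1}r^2\bigl(r\phi_1-|u|T(r\phi_1)\bigr)$. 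By \eqref{eq:proplimit3} the first summand is $\mathcal O(r|u|^{-2})$, and integrating \eqref{eq:proplimit4} (with $n=0$) in $v$ from $\Gamma_R$, where $r\hat\phi_1-|u|T(r\hat\phi_1)=\mathcal O(|u|^{-1-\epsilon})$ by \eqref{eq:tl:boundarydata2} ($n=0$), gives $|r\phi_1-|u|T(r\phi_1)|\leq C|u|^{-1-\epsilon}$ throughout $\mathcal D_{\Gamma_R}$, so the second summand is $\mathcal O(r^2|u|^{-2-\epsilon})$. On an ingoing cone $\mathcal C_v$ one has $r\sim|u|$ near $i^0$, hence $r^2\pu(r\phi_1)=\mathcal O(\max(|u|^{-1},|u|^{-\epsilon}))\to0$.

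For the main limit \eqref{eq:prop8limit1} the mechanism is the approximate conservation law \eqref{eq:l=1cons.law.in.v}: setting $\underline\Phi:=r^2\pu(r\phi_1)+Mr\phi_1$, it reads $\pv\bigl(r^{-2}\pu\underline\Phi\bigr)=-12MD\,r^{-5}\,r^2\pu(r\phi_1)+6M^2D\,r^{-5}\,r\phi_1$, i.e.\ $r^{-2}\pu\underline\Phi$ is conserved in $v$ up to a flux carrying ample $r$-decay; and since $r^2\pu(r^2\pu(r\phi_1))=r^2\pu\underline\Phi-Mr^2\pu(r\phi_1)$ with the last term vanishing by the first part, it suffices to show that $r^2\pu\underline\Phi=r^4\bigl(r^{-2}\pu\underline\Phi\bigr)$ tends, along $\mathcal C_v$, to a $v$-independent non-zero limit. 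This is modelled on the exact Minkowski identity $\pu\bigl(r^{-2}(r^2\pu)^2(r\phi_1)\bigr)=0$, which makes $(r^2\pu)^2(r\phi_1)$ equal to $r^4$ times a function of $u$ alone that decays like $|u|^{-4}$ near $i^0$ precisely because $r\phi_1|_{\Gamma_R}\sim|u|^{-1}$. Accordingly I would establish $r^{-2}\pu\underline\Phi=\tilde C\,|u|^{-4}+\mathcal O(|u|^{-4-\epsilon})$ near $i^0$ by: (i) using the $T$-commuted wave equation \eqref{eq:l=1waveequation} to re-express $r^{-2}\pu\underline\Phi$ on $\Gamma_R$ in terms only of the sharply-controlled quantities $T^n(r\phi_1)$ and the differences $T^n\bigl(r\phi_1-|u|T(r\phi_1)\bigr)$, and exhibiting the cancellations of the would-be $\mathcal O(|u|^{-1})$, $\mathcal O(|u|^{-2})$, $\mathcal O(|u|^{-3})$ contributions — the Schwarzschild analogue of the Minkowski identity — so that the surviving leading coefficient $\tilde C$ is a fixed non-zero numerical multiple of the leading data coefficient (hence of $\cin$), up to an $\mathcal O(2M/R)$-correction; (ii) integrating the flux of \eqref{eq:l=1cons.law.in.v}, bounded by means of \eqref{eq:proplimit2}, \eqref{eq:proplimit3} and, crucially, the difference estimate \eqref{eq:proplimit4}, which supplies the extra decay making the flux integral $\mathcal O\bigl(r^{-4}\max(|u|^{-\epsilon},r^{-1})\bigr)$; (iii) multiplying by $r^4=r_v(u)^4$ and invoking $r_v(u)=|u|+\mathcal O(\log|u|)$ uniformly in $v$ (from \eqref{eq:v-u=r}) to get $r^2\pu\underline\Phi(u,v)=\tilde C+\mathcal O(\max(|u|^{-\epsilon},r^{-1}))$, which yields \eqref{eq:prop8limit1}, \eqref{eq:prop8limit2} together with the $v$-independence. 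Non-vanishing then follows: the lower bound \eqref{eq:tl:thm:lowerbound} forces the leading data coefficient to be non-zero, and taking $R/2M$ large makes the $\mathcal O(2M/R)$-correction (and the whole flux contribution) subdominant, so $\tilde C\neq0$. It is here that smallness of $2M/R$ is genuinely used.

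The main obstacle is step (i)–(ii): the crude bounds \eqref{eq:proplimit2}--\eqref{eq:proplimit4} show only that $r^2\pu(r^2\pu(r\phi_1))$ is \emph{bounded} near $i^0$, so extracting its sharp $|u|^{-4}$ behaviour — equivalently, pinning down the leading $|u|^{-4}$-term of $r^{-2}\pu\underline\Phi$ with a non-zero constant — requires carefully tracking several layers of cancellation among the $\pv$- and $T$-derivative terms produced by the wave equation (the same cancellations responsible for $r\phi_1|_{\mathcal I^+}\sim|u|^{-2}$ despite $r\phi_1|_{\Gamma_R}\sim|u|^{-1}$, cf.\ \S\ref{sec:4.4.3}), and then verifying that the flux of \eqref{eq:l=1cons.law.in.v} — pointwise small, but integrated over an $r$-range growing like $|u|$ — contributes only at the level $\mathcal O(\max(|u|^{-\epsilon},r^{-1}))$ after multiplication by $r^4$. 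It is precisely to make this work, and to obtain the error exponent $\epsilon$ in \eqref{eq:prop8limit2}, that the $\epsilon$-improved difference estimates of Proposition~\ref{prop:tl:pvT-T2} were proved; correspondingly the hypotheses $N\geq4$, $N'\geq2$ guarantee enough commutations to carry out the bookkeeping.
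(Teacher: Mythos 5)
Your first step---showing $\lim_{u\to-\infty}r^2\pu(r\phi_1)=0$ by writing $\pu=T-\pv$ and controlling $r^2T(r\phi_1)$ through the difference $r\phi_1-|u|T(r\phi_1)$---is fine and close to what the paper does. The genuine gap is in your treatment of the main limit. You propose to propagate $r^{-2}\pu\underline{\Phi}$ (with $\underline{\Phi}=r^2\pu(r\phi_1)+Mr\phi_1$) in $v$ from $\Gamma_R$ via \eqref{eq:l=1cons.law.in.v}, claiming the flux contributes only $\mathcal{O}\bigl(r^{-4}\max(|u|^{-\epsilon},r^{-1})\bigr)$. This is quantitatively false. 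In the region $R\leq r\ll|u|$ one has $\pu(r\phi_1)=T(r\phi_1)-\pv(r\phi_1)=\mathcal{O}(r^{-2}|u|^{-1})$, with the $\pv$-term dominant (the model solution \eqref{eq:tl:examplesolution} shows this is sharp), so the flux $-12MD\,r^{-3}\pu(r\phi_1)+6M^2D\,r^{-5}r\phi_1$ is of size $Mr^{-5}|u|^{-1}$ and its integral from $v_R(u)$ outward is $\mathcal{O}(MR^{-4}|u|^{-1})$, accumulated entirely near $\Gamma_R$ where the $r$-decay of the flux buys nothing because $r=R$ is bounded. Likewise the boundary value $r^{-2}\pu\underline{\Phi}|_{\Gamma_R}$ is itself $\mathcal{O}(MR^{-4}|u|^{-1})$, dominated by the $Mr^{-2}\pu(r\phi_1)$ piece, and is \emph{not} of the form $\tilde C|u|^{-4}+\mathcal{O}(|u|^{-4-\epsilon})$ as you assert. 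After multiplying by $r^4\sim|u|^4$ in your step (iii), each of these contributions is $\mathcal{O}(M|u|^3R^{-4})$; for the limit to exist they must cancel to absolute precision $o(1)$, i.e.\ to relative precision $|u|^{-3}$, and nothing in Propositions~\ref{prop:tl:pv},~\ref{prop:tl:pvT-T2} or in your step (i) exhibits such a cancellation. The $v$-direction law \eqref{eq:l=1cons.law.in.v} is the right tool only when one integrates from a surface on which $r$ is already comparable to $|u|$---exactly the situation of \S\ref{sec:nl}, where the integration starts at $v=1$ with $r(u,1)\sim|u|$ and the flux integral genuinely is $\mathcal{O}(M|u|^{-4})$---but not from the timelike boundary.

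A secondary problem sits in step (i) itself: the leading $|u|^{-4}$-coefficient of $r^{-4}[r^2\pu]^2(r\phi_1)$ on $\Gamma_R$ is not determined by the tangential data $T^n(r\phi_1)$ and the differences alone. The identity \eqref{eq:tl:pupu} expresses it through $\pv T(r\phi_1)$, $\pv(r^2\phi_1)$, etc., whose individual contributions on $\Gamma_R$ are $\mathcal{O}(|u|^{-2})$ and $\mathcal{O}(|u|^{-3})$; pinning down the surviving $|u|^{-4}$-term requires the precise leading coefficients of these transversal derivatives, which are obtained only by integrating \eqref{2}, \eqref{3} from $\mathcal{I}^-$, and hence require the interior limits $\lim_{u\to-\infty}|u|^{j+1}rT^j(r\phi_1)$ computed in \eqref{proof5.8y}--\eqref{proof5.8yyy}. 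Carried out honestly, your step (i) therefore reproduces the paper's argument, at which point the detour through $\Gamma_R$ and the $v$-law is superfluous: the paper simply expands $[r^2\pu]^2(r\phi_1)=[r^2T-r^2\pv]^2(r\phi_1)$ directly at $\mathcal{I}^-$ along $\mathcal{C}_v$, where every term is $O(1)$ with a limit expressible through $\mathcal{L}=\lim_{u\to-\infty}|u|^2r^2\pv T(r\phi_1)$, whose existence and $v$-independence follow from \eqref{eq:l=1cons.law.in.u} together with the difference estimates \eqref{eq:proplimit4}, and whose non-vanishing follows from the lower bound on $\Gamma_R$.
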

\begin{proof}
We first establish the existence of the limit
\begin{equation}
\lim_{u\to-\infty}|u|^2 r^2\pv T(r\phi_1)(u,v)=:\mathcal{L}(v)
\end{equation}
by computing 
\begin{align}\label{eq:tl:limdifference}
\begin{split}
\pu(|u|^2 r^2\pv T(r\phi_1))
=\underbrace{-2|u|r^2\pv T(r\phi_1)+|u|^2r^2\pv T^2(r\phi_1)}_{=\mathcal{O}(|u|^{-1-\epsilon})}-\underbrace{|u|^2\pv(r^2\pv T(r\phi_1))}_{=\mathcal{O}(r^{-2})},
\end{split}
\end{align}
where we bounded the first two terms using \eqref{eq:proplimit4}, and the third term by plugging in the bounds \eqref{eq:proplimit2}, \eqref{eq:proplimit3} into the approximate conservation law \eqref{eq:l=1cons.law.in.u} and integrating in $u$ (see also \eqref{proof5.3xx}). 
In fact, the bound on the last term also shows that $\mathcal{L}(v)$ is independent of $v$, $\mathcal L(v)\equiv \mathcal L$.  We have thus shown that
\begin{equation}
|u|^2 r^2\pv T(r\phi_1)-\lim_{u\to-\infty}|u|^2 r^2\pv T(r\phi_1)=\mathcal{O}(\max(|u|^{-\epsilon},r^{-1})).
\end{equation}

We now show that $\mathcal{L}$ is non-vanishing by using the lower bound \eqref{eq:tl:thm:lowerbound}: We have from (the $T$-commuted) equation \eqref{2} that
\begin{align*}
r^{-2}\pv T(r^2\phi_1)=\int \frac{D T^2(r\phi_1)}{r^2}-\frac{8MD}{r^4}T(r\phi_1)\dd u
		\leq \frac{MC}{r^4|u|^2}+\mathcal{O}(|u|^{-3}r^{-2}).
\end{align*}
Evaluating the above on $\Gamma_R$ gives
\begin{align*}
\left|R^2\pv T(r\phi_1)+D R^2 T\phi_1|_{\Gamma_R}\right|\leq \frac{MC}{R|u|^2}+\mathcal{O}(|u|^{-3}R^{-2}),
\end{align*}
which, if $R$ is chosen large enough (recall that the constant $C$ in the estimates above can be chosen independently of $R$ if $R$ is large enough), can be chosen to be less than $\frac{\cin}{4|u|^2}$. This results in the following lower bound on $\Gamma_R$:
\begin{align}
\left|r^2\pv T(r\phi_1)|_{\Gamma_R}\right|\geq \frac{\cin}{4|u|^2}.
\end{align}
Using once more the estimate on $|u|^2\pv(r^2\pv T(r\phi_1))\sim r^{-2}$ and integrating it from $\Gamma_R$ shows that, if $R$ is chosen suitably large, we in fact have
\begin{align}
\left|r^2\pv T(r\phi_1)|_{\Gamma_R}\right|\geq \frac{\cin}{8|u|^2}.
\end{align}
In summary, we have thus established that
$
\lim_{u\to-\infty}|u|^2 r^2\pv T(r\phi_1)(u,v)=\mathcal{L}(v)\equiv \mathcal{L}\neq 0.
$

Finally, in order to relate $\mathcal L$ to the limit of $r^2\pu(r^2\pu(r\phi_1))$ in question, we write
\begin{nalign}\label{eq:tl:pupu}
r^2\pu(r^2\pu(r\phi_1))=&r^4 T^2(r\phi_1)-r^2\pv(r^2 T(r\phi_1))-r^2\pu(r^2\pv(r\phi_1))\\
=&r^4 T^2(r\phi_1)-r^4\pv T(r\phi_1)-2D r^3 T(r\phi_1)+2D r^2\pv(r^2\phi_1)+6MD r^2\phi_1,
\end{nalign}
where we used eq.\ \eqref{3} in the last line. Notice that the last term decays like $|u|^{-1}$, whereas the other terms do not decay. We now express the limits of each of the other terms in terms of $\mathcal{L}$.
\paragraph{Computing $\lim_{u\to-\infty} |u|^{j+1}rT^j(r\phi_1)$.}
Observe that \eqref{eq:proplimit4} implies that 
\begin{align}\label{proof5.8x}
\begin{split}
\lim_{u\to-\infty} |u|^2r^2\pv T(r\phi_1)=&\frac12\lim_{u\to-\infty} |u|^3r^2\pv T^2(r\phi_1)\\
=\frac16 \lim_{u\to-\infty}|u|^4r^2\pv T^3(r\phi_1)
=&\dots=\frac{1}{(j+1)!}\lim_{u\to-\infty} |u|^{2+j}r^2\pv T^{j+1}(r\phi_1)
\end{split}
\end{align}
for all $j\leq N'$.
Now, use the wave equation \eqref{eq:l=1waveequation} to write
\begin{align*}
-2D T^j(r\phi_1)\left(1+\frac{M}{r}\right)&=r^2\pv T^{j+1}(r\phi_1)-r^2\pv^2 T^{j}(r\phi_1)\\
&=r^2\pv T^{j+1}(r\phi_1)+2Dr\pv T^j(r\phi_1)-\pv(r^2\pv T^j(r\phi_1)).
\end{align*}
The last term decays faster than the others, and we conclude that
\begin{align}
\begin{split}
-2\lim_{u\to-\infty} |u|^{j+1}r T^j(r\phi_1)
=\lim_{u\to-\infty}|u|^{j+2}r^2\pv T^{j+1}(r\phi_1) +2\lim_{u\to-\infty}|u|^{j+1}r^2\pv T^j(r\phi_1).
\end{split}
\end{align}
Plugging \eqref{proof5.8x} into the expression above, and setting $j=1, 2$, respectively, we thus obtain
\begin{align}\label{proof5.8y}
\lim_{u\to-\infty}|u|^2r T(r\phi_1) &=-\frac{1}{2}(2\mathcal L+2\mathcal L)=-2\mathcal{L},\\
\lim_{u\to-\infty}|u|^3 r T^2(r\phi_1)&=-\frac12 (6\mathcal L+2\cdot 2\mathcal L)=-5\mathcal{L}.\label{proof5.8yy}
\end{align}
\paragraph{Computing $\lim_{u\to-\infty}r^2\pv(r^2\phi_1)$.}
In order to compute the limit of $r^2\pv(r^2\phi_1)$, we use equation \eqref{2} to write
\begin{align*}
r^2\pv(r^2\phi_1)&=r^4\int \frac{DT(r\phi_1)}{r^2}-\frac{8MD(r\phi_1)}{r^4}\dd u\\
				 &=r^4\int \frac{\lim_{u\to-\infty} |u|^2 r T(r\phi_1)}{r^3|u|^2}\dd u+\mathcal{O}(|u|^{-1}+r|u|^{-1-\epsilon}),
\end{align*}
from which we conclude, using \eqref{eq:v-u=r}, that
\begin{equation}\label{proof5.8yyy}
\lim_{u\to-\infty}r^2\pv(r^2\phi_1)=\frac14 \lim_{u\to-\infty}|u|^2r T(r\phi_1)=-\frac12\mathcal{L}.
\end{equation}

Finally, inserting the identities \eqref{proof5.8y}, \eqref{proof5.8yy} and \eqref{proof5.8yyy} back into \eqref{eq:tl:pupu}, we find that
\begin{equation}
\lim_{u\to-\infty}r^2\pu(r^2\pu(r\phi_1))=-5\mathcal{L}-\mathcal{L}+2\cdot2\mathcal{L}-\frac{2}{2}\mathcal{L}=-3\mathcal{L}.
\end{equation}
This proves equation \eqref{eq:prop8limit1}. Estimate \eqref{eq:prop8limit2} as well as the vanishing of $\lim_{u\to-\infty}r^2\pu(r\phi)$ follow similarly.
\end{proof}
Combining the previous two propositions, Propositions~\ref{prop:tl:limit} and~\ref{prop:limit}, proves Theorem~\ref{thm:tl}.

\subsection{A comment on the stationary solution}\label{sec:tl:comments}
We have already remarked in \S\ref{sec:nl:comments} that we expect the vanishing of $\lim_{u\to-\infty} r^2\pu(r\phi_1)$ to lead to late-time asymptotics with logarithmic terms appearing at leading order if the data on $\Gamma_R$ are smoothly extended to $\mathcal{H}^+$ (for instance, we would have $r\phi_1|_{\mathcal{I}^+}(u)\sim \frac{\log u}{u^3}$ as $u\to\infty$).
 In other words, we expect our choice of polynomially decaying boundary data to lead to a logarithmically modified Price's law for $\ell=1$.

Note that the limit $\lim_{u\to-\infty} r^2\pu(r\phi_1)$ would not vanish if we included the stationary solution, that is to say: if we added a constant to our initial data. Using the structure of equations\footnote{For the sake of completeness, we mention here that these equations are special cases of \begin{align}\label{55}
\pu(r^{-2L}\pv(r^L\cdot r\phi_L))&=\frac{LD}{r^{L+1}}T(r\phi_L)-\frac{2MD(L+1)^2}{r^{L+3}}r\phi_L,\\
\pu(r^{L+1}\pv(r\phi_L))&=-(L+1)D\pv(r^L\cdot r\phi_L)-(1+L(L+1))2MD r^{L-2}r\phi_L.\label{66}
\end{align}
} \eqref{2}, \eqref{3} presented in the proof of Proposition~\ref{prop:tlbs1}, or other methods, it is indeed not difficult to see that the stationary solution behaves like $\pv(r\phi_1)\sim -\pu(r\phi_1)\sim r^{-2}$. 

On the other hand, we see that if we prescribe \textit{decaying} data on $\Gamma_R$, then the solution will behave, roughly speaking, like the stationary solution multiplied by that decay.
Now, since the stationary solution for higher $\ell$-modes will decay faster in $\ell$, $r\phi_\ell\sim r^{-\ell}$ (see eqns. \eqref{55}, \eqref{66}), we thus expect that if we prescribe decaying boundary data for higher $\ell$-modes, then the corresponding solution will decay increasingly faster towards $\mathcal{I}^-$, and $(r^2\pu)^j(r\phi_\ell)$ will vanish to higher and higher orders. 
We will build on this intuition and make it precise in \S\ref{sec:general:timelike}. 

\newpage
\section{Boundary data on a timelike hypersurface \texorpdfstring{$\Gamma_f$}{Gamma-f}}\label{sec:timelikenonconstantr}
In the previous section, we showed how to construct solutions and prove sharp decay in the case of polynomially decaying boundary data on hypersurfaces of constant $r=R$. We now outline how to generalise to spherically symmetric hypersurfaces on which $r$ is allowed to vary. In fact, not much modification will be needed.
\newcommand{\te}{T^{(s)}}
\subsection{The setup}\label{sec:tf:setup}
For the sake of notational simplicity, we restrict our attention to spherically symmetric hypersurfaces $\Gamma_{f}\subset\mathcal M$ that have timelike generators  that are given by\footnote{The proofs presented below also directly apply to slightly more general spherically symmetric timelike generators, e.g.\ $\te \sim T-f_s\partial_v$, with $f_s\sim |u|^{-s}$, or $\te \sim T-f_s\log|u|\partial_v$ etc.}
\begin{align}
\te=\pu+\frac{1}{1+|u|^{-s}}\pv=T-\frac{|u|^{-s}}{1+|u|^{-s}}\pv,&&s>0.
\end{align}
Notice that we normalised $\te$ such that $\te u=1$.


Since the cases $s>1$ and $s\leq1$ are quite different, we shall treat them separately. Let's first consider the case $s>1$:
\subsection{The case where \texorpdfstring{$r|_{\Gamma_f}$}{r-Gamma-f} attains a finite limit (\texorpdfstring{$s>1$}{s>1}):}
\subsubsection{Initial/boundary data assumptions and the main theorem}
Let $\Gamma_f$ be as described above, and let $s>1$. We then prescribe smooth boundary data $\hat{\phi}_1$ for $\phi_1$ on $\Gamma_f$ which satisfy, for $u\leq U_0< 0$ and $|U_0|$ sufficiently large:
\begin{align}\label{eq:tlf:boundarydata1}
\left|(\te)^n(r\hat{\phi}_1)\right|&\leq \frac{n! \cin}{r|u|^{n+1}},&&n=0,1,\dots,N+1,\\
\left|(\te)^n(r\hat{\phi}_1-|u|\te(r\hat{\phi}_1))\right|&\leq \frac{\ce}{r|u|^{n+1+\epsilon}},&& n=0\dots,N'+1\label{eq:tlf:boundarydata2}
\end{align}
for some positive constants $\cin$, $\ce$, $\epsilon\in(0,1)$ and for  $N,N'>1$ positive integers.
Moreover, we demand, in a limiting sense, that, for all $v$
\begin{equation}\label{eq:tlf:noincomingradiation}
\lim_{u\to-\infty}\pv^n(r\phi_1)(u,v)=0,\quad n=1,\dots N+1.
\end{equation} 
Then we obtain, as in the previous section:
\begin{thm}\label{thm:tlf}
Let $N\geq 4$ and $N'\geq 2$. Then there exists a unique solution $\phi_1$ to eq.\ \eqref{eq:l=1waveequation} in $\mathcal{D}_{\Gamma_f}=\mathcal{M}\cap\{v\geq v_{\Gamma_f}(u)\}$ which restricts correctly to $\hat{\phi}_1$ on $\Gamma_f$, $\phi_1|_{\Gamma_f}=\hat{\phi}_1$, and which satisfies \eqref{eq:tlf:noincomingradiation}.

Moreover, the estimates from Theorem~\ref{thm:tl} apply to this solution, with $\tilde C\neq 0$ being non-zero provided that a lower bound on data is specified and that $R/2M$ is sufficiently large.
\end{thm}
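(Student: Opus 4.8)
The plan is to reduce Theorem~\ref{thm:tlf} to Theorem~\ref{thm:tl} by a change of the timelike vector field used for commutation, since essentially all the analysis of \S\ref{sec:timelikeconstantR} only used that $T$ is Killing and tangent to $\Gamma_R$, together with the energy estimate \eqref{random1}. First I would set up the scattering/existence part exactly as in \S\ref{sec:tl:cutoffdata}: cut off the highest-order $\te$-derivative of the boundary data on $\Gamma_f$ with the sequence $\chi_k$, obtain finite solutions $\phi_1^{(k)}$ via Proposition~\ref{prop:existence:mixedboundary} (the corner $\Gamma_f\cap\mathcal C_{u=-k}$ is handled as before), and aim to prove uniform-in-$k$ estimates that then pass to a limit as $k\to\infty$, as in Proposition~\ref{prop:tl:limit}. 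Since $s>1$, the function $r|_{\Gamma_f}$ tends to a finite limit $R_\infty>2M$, so near $i^-$ the hypersurface $\Gamma_f$ is, up to decaying-in-$|u|$ corrections, a constant-$r$ hypersurface; this is why we expect the constants $\cin,\ce$ together with a lower bound to produce $\tilde C\neq 0$ exactly as in Theorem~\ref{thm:tl}.

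The key steps, in order, are as follows. (1) Replace $T=\pu+\pv$ by $\te=\pu+\frac{1}{1+|u|^{-s}}\pv$ as the commutation field: $\te$ is tangent to $\Gamma_f$ by construction, but it is \emph{not} Killing. Compute $K^{\te}[\cdot]$ and show $\div J^{\te}[\psi] = K^{\te}[\psi]$ for any solution $\psi$, with $K^{\te}$ carrying an extra $|u|^{-s-1}$-type weight coming from $\partial_u$ of the coefficient $\frac{1}{1+|u|^{-s}}$; since $s>1$ this is integrable in $|u|$, so the energy estimate \eqref{random1} survives with a harmless bulk term that is absorbed by a Gr\"onwall argument (this is the analogue of the bulk terms already handled in \S\ref{sec:tl:pvT-T2}, cf.\ \eqref{eq:bulkestimate}). (2) Commute the wave equation \eqref{eq:l=1waveequation} and the approximate conservation laws \eqref{eq:l=1cons.law.in.u}, \eqref{eq:l=1cons.law.in.v} with powers of $\te$ instead of $T$; each commutator of $\te$ with $\pu,\pv$ produces lower-order terms weighted by $|u|^{-s-1}$ or better, which are strictly subleading relative to the main $|u|$-powers being tracked and hence do not affect the bootstrap hierarchy. (3) Redo the bootstrap of Propositions~\ref{prop:tlbs1}--\ref{prop:tlbs3} verbatim, now bootstrapping $r^2\pv (\te)^n(r\phi_1)$ and $(\te)^n(r\phi_1)$; the only new feature is that the boundary term $r^2\pv(\te)^n(r\phi_1)|_{\Gamma_f}$ is estimated by the same energy-estimate-plus-wave-equation trick as in \S\ref{sec:subsubsec:timelikej=N} (using that $r|_{\Gamma_f}\to R_\infty$ so $D(r|_{\Gamma_f})$ is bounded away from $0$). (4) Redo the difference estimates of \S\ref{sec:tl:pvT-T2} for $(\te)^n(r\phi_1-|u|\te(r\phi_1))$, which again only needs that the error in the wave equation for these differences is $\pv(\te)^{n+1}(r\phi_1)$, already controlled. (5) Pass to the limit $k\to\infty$ (Cauchy argument as in Proposition~\ref{prop:tl:limit}), and finally run the computation of \S\ref{sec:tl:limitsub}: the only change is that in the analogues of \eqref{2}, \eqref{3} and \eqref{eq:tl:pupu} one must carry the $|u|^{-s}$-corrections coming from $\te = T - \frac{|u|^{-s}}{1+|u|^{-s}}\pv$, but these enter the limits $\lim_{u\to-\infty}|u|^{j+1}r(\te)^j(r\phi_1)$ only at subleading order (since $s>1$), so the algebraic relations \eqref{proof5.8y}--\eqref{proof5.8yyy} and hence $\lim_{u\to-\infty}r^2\pu(r^2\pu(r\phi_1)) = -3\mathcal L$ are unchanged, and the lower bound \eqref{eq:tl:thm:lowerbound} (now phrased with $\te$ and with $R$ replaced by $R_\infty$, for $R_\infty/2M$ large) still forces $\mathcal L \neq 0$, hence $\tilde C\neq 0$.

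The main obstacle I anticipate is bookkeeping the non-Killing error terms: one must check carefully that every place where the constant-$r$ proof used ``$T$ Killing'' ($K^T = 0$, $[T,\Box_g]=0$, $T$ tangent to $\Gamma_R$ with $u+v$ an affine parameter) is replaced by an estimate in which the $\te$-error is both (a) integrable in $|u|$ (this is exactly where $s>1$ is used, and is the reason this theorem is stated separately from the $s\le 1$ case) and (b) subleading in every $|u|$-power appearing in the bootstrap hierarchy and in the limit computation of \S\ref{sec:tl:limitsub}. A secondary, more cosmetic, point is that along $\Gamma_f$ the relation between $u$ and $r$ is now $|u|$-dependent, so in converting $\dd v$-integrals to $\dd r$-integrals on and off $\Gamma_f$ (as in \eqref{proof5.2x}, \eqref{proof5.3xxx}) one picks up factors $D(r_v(u))$ that must be uniformly controlled; since $s>1$ ensures $r|_{\Gamma_f}\in[R_\infty - o(1), R_\infty + o(1)]$ near $i^-$, these are bounded above and below, so this causes no real trouble. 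I therefore expect the proof to be a faithful, essentially mechanical, adaptation of \S\S\ref{sec:tl:setup}--\ref{sec:tl:limit}, with the substitution $T\rightsquigarrow\te$, $R\rightsquigarrow R_\infty$, and with all $\te$-commutator errors absorbed as integrable perturbations.
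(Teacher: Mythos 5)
Your proposal is correct in its overall architecture (cut-off, energy estimate for the boundary term, bootstrap via the approximate conservation law, limit $k\to\infty$, computation of $\mathcal L$), but it commits to the \emph{first} of the two routes that the paper explicitly identifies in \S\ref{sec:timelikenonconstantr} and then declines to follow: you replace $T$ by $\te$ as the commutation field everywhere, and therefore have to track the non-Killing bulk term $K^{\te}\neq 0$ in the energy identity and the commutator errors $[\te,\Box_g]\neq 0$ in every equation of the hierarchy. The paper instead keeps the Killing field $T$ throughout the bulk (so $\div J^T[T^j\phi_{\ell=1}]=0$ exactly and all the commuted equations \eqref{2}, \eqref{3}, \eqref{eq:l=1cons.law.in.u} are unchanged), and confines the entire effect of the tilted boundary to two places: (i) the flux terms in the divergence theorem on $\Gamma_f$ acquire the extra $\frac{|u'|^{-s}}{1+|u'|^{-s}}\bigl((T^j\pv\phi_1)^2+\tfrac{2D}{r^2}(T^j\phi_1)^2\bigr)$ contributions of \eqref{random11}, and (ii) the boundary data, given in terms of $(\te)^j$, are converted to $T^j$ via $T(r\phi_1)=\te(r\phi_1)+\frac{|u|^{-s}}{1+|u|^{-s}}\pv(r\phi_1)$, the correction being absorbable precisely because $s>1$. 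What your route buys is uniformity of notation with the data (everything is phrased in the tangential field); what the paper's route buys is that no deformation-tensor or commutator errors ever appear in the bulk, which is why its proof is essentially a two-line modification of \S\ref{sec:timelikeconstantR}. One small correction to your reasoning: the integrability of the $|u|^{-s-1}$-weight in $K^{\te}$ is not where $s>1$ enters (that holds for all $s>0$); the role of $s>1$ is that the difference $T f-\te f=\frac{|u|^{-s}}{1+|u|^{-s}}\pv f$ is \emph{subleading} relative to $\te f$ on $\Gamma_f$ (where $r$ is bounded, so $\pv(r\phi_1)\sim|u|^{-1}$ against $\te(r\phi_1)\sim|u|^{-2}$), and that $r|_{\Gamma_f}$ converges to a finite $R_\infty$ — both of which you do use correctly elsewhere in the proposal.
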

\subsubsection{Outline of the proof}
As the proof only requires small modifications to the proof of Theorem~\ref{thm:tl}, we will only give an outline. 
There are two closely related ways one can go about this:
One can either work with the generators of $\Gamma_f$, i.e.\ replace all $T$'s from the proof of Theorem~\ref{thm:tl} with $\te$'s, and estimate the resulting error terms (which would always exhibit faster decay than the other terms) -- this was the approach of~\cite{I}.
Alternatively, one can continue working with $T$ and exploit the fact that the difference of, say, $T(r\phi_1)-\te(r\phi_1)=\frac{|u|^{-s}}{1+|u|^{-s}}\pv(r\phi_1)$ decays faster than either of the terms on the left-hand side as long as $s>1$. 
Thus, an estimate on $T(r\phi_1)$ immediately gives control on $\te(r\phi_1)$ and vice versa.\footnote{Note that this is no longer true if $s\leq 1$.} 
We shall follow the second approach:
\begin{proof}[Proof of Theorem~\ref{thm:tlf}]
First, we cut off the data as in section~\ref{sec:tl:cutoffdata}. We then introduce the set of bootstrap assumptions as in section~\ref{sec:tl:BS} (with the only modification that the set $X$ defined below eq.\ \eqref{eq:tl:BS1n} now contains all $v\geq v_{\Gamma_f}(u)$). The proof of Proposition~\ref{prop:tlbs1} remains unchanged. The proof of Proposition~\ref{prop:tlbs2} requires the modification that, now, it isn't $T^j(r\phi_1)$ which on $\Gamma_f$ is given by data, but $(\te)^j\pho$.
However, this can easily be dealt with by writing
\begin{equation}
T(r\phi_1)=\te(r\phi_1)+\frac{|u|^{-s}}{1+|u|^{-s}}\pv(r\phi_1)
\end{equation}
(and similarly for $T^j$), and then plugging in the bootstrap assumption for $\pv(r\phi_1)$, using the fact that, because $s>1$, the $\pv(r\phi_1)$-term has more $u$-decay than the $\te(r\phi_1)$-term. It can thus be absorbed into the latter for large enough $|U_0|$.

Let's now move to the proof of Proposition~\ref{prop:tlbs3}. Applying the divergence theorem gives (we denote the induced volume element on $\Gamma_f$ by $r^2\dd t_{\Gamma_f}\dd \Omega$)
\begin{align}
\begin{split}
&\int_{\mathcal{C}_{v}\cap\{-k\leq u'\leq u\}} r^2\dd u' \dd\Omega\, J^T[T^j\phi_{\ell=1}]\cdot \pu \\
\leq&\int_{\Gamma_f\cap\{-k\leq u'\leq u\}}  r^2 \dd t_{\Gamma_f}\dd \Omega	\,J^T[T^j\phi_{\ell=1}]\cdot\left(\pu-\pv+\frac{|u'|^{-s}}{1+|u'|^{-s}}\pv\right),
\end{split}
\end{align}
which implies (cf.\ \eqref{random1})
\begin{nalign}\label{random11}
&\int_{-k}^u r^2(\pu T^j\phi_{1})^2+D|T^j\phi_{1}|^2\dd u'	 \\
\lesssim &\int\limits_{\Gamma_f\cap\{-k\leq u'\leq u\}}  r^2 \left(T^{j+1}\phi_{1}\cdot (T-2\pv)T^j \phi_{1}+\frac{|u'|^{-s}}{1+|u'|^{-s}}\left((T^j \pv \phi_1)^2+\frac{2D}{r^2}(T^j\phi_1)^2\right)	\right) \dd u'.
\end{nalign}
As before, we can now write 
$
T^j(r\phi_1)|_{\Gamma_f}=(\te)^{j}(r\phi_1)|_{\Gamma_f}+\mathcal{O}(|u|^{-j-s})
$
to find that
\begin{multline*}
\int_{-k}^u r^2(\pu T^j\phi_{1})^2+D|T^j\phi_{1}|^2\dd u'	
\lesssim \int\limits_{\Gamma_f\cap\{-k\leq u'\leq u\}}  -2r^2 \left((\te)^{j+1}\phi_{1}\cdot 2\pv T \phi_{1}	\right) +\mathcal{O}(|u'|^{-2j-3-s})\dd u'.
\end{multline*}
From here, we arrive at the analogue of the estimate \eqref{proof5.4xxx}. We can thus prove the analogue of Proposition~\ref{prop:tlbs3}. 

In a similar fashion, one can then follow the steps of sections~\ref{sec:tl:pvT-T2} and~\ref{sec:tl:limit} to conclude the proof of Theorem~\ref{thm:tlf}.
\end{proof}
\subsection{The case where \texorpdfstring{$r|_{\Gamma_f}$}{r-Gamma-f} diverges (\texorpdfstring{$s\leq 1$}{s leq 1}):}
There are two main differences in the case $s\leq 1$. On the one hand, if we write, as above, 
\begin{equation}
T(r\phi_1)=\underbrace{\te(r\phi_1)}_{\sim |u|^{-2}r^{-1}}+\frac{|u|^{-s}}{1+|u|^{-s}}\underbrace{\pv(r\phi_1)}_{\sim |u|^{-1}r^{-2}},
\end{equation}
then we immediately see that, on $\Gamma_f$, where $r\sim|u|^{1-s}$ if $s\neq 1$, both terms on the RHS can be expected to have the same decay.\footnote{Notice that if $s=1$, the second term on the RHS decays faster by $\log^{-2}|u|$.} This means that we have to be more careful in estimating the boundary terms in the energy estimate. On the other hand, since now $r|_{\Gamma}$ tends to infinity, it will be much more straight-forward to show the existence of the limit $\lim_{u\to-\infty}|u|^2r^2\pv T(r\phi_1)$. 

\subsubsection{Initial/boundary data assumptions and the main theorem}
Let $\Gamma_f$ be as described in section~\ref{sec:tf:setup}, and let $s\leq 1$. We prescribe smooth boundary data $\hat{\phi}_1$ for $\phi_1$ on $\Gamma_f$ which satisfy, for $u\leq U_0< 0$ and $|U_0|$ sufficiently large:
\begin{align}\label{eq:tlf2:boundarydata1}
\left|(\te)^n(r^2\hat{\phi}_1)- \frac{n!\cin}{|u|^{n+1}}\right|=\mathcal{O}(|u|^{-n-1-\epsilon}),&&n=0,\dots,5
\end{align}
for some positive constant $\cin$.
Moreover, we demand, in a limiting sense, that, for all $v$
\begin{equation}\label{eq:tlf2:noincomingradiation}
\lim_{u\to-\infty}\pv^n(r\phi_1)(u,v)=0\quad n=1,\dots 5.
\end{equation} 
Then we obtain, as in the previous section:
\begin{thm}\label{thm:tlf2}
There exists a unique solution $\phi_1$ to eq.\ \eqref{eq:l=1waveequation} in $\mathcal{D}_{\Gamma_f}=\mathcal{M}\cap\{v\geq v_{\Gamma_f}(u)\}$ which restricts correctly to $\hat{\phi}_1$ on $\Gamma_f$, $\phi_1|_{\Gamma_f}=\hat{\phi}_1$, and which satisfies \eqref{eq:tlf2:noincomingradiation}.

Moreover, if $U_0$ is a sufficiently large negative number, then there exists a constant $C=C(\cin)$ (depending only on data) such that $\phi_1$ obeys the following bounds throughout $\mathcal{D}_{\Gamma_f}\cap\{u\leq U_0\}$:
\begin{align}
|r^2\pv T^n(r\phi_1)(u,v)|&\leq \frac{C}{|u|^{n+1}},&&n=0,\dots,4\label{eq:thm:tf1},\\
|T^n(r\phi_1)(u,v)|&\leq \frac{C}{|u|^{n+1}}\max\left(r^{-1},|u|^{-1}\right),&&n=0,\dots,3.\label{eq:thm:tf2}
\end{align}
Finally, along any ingoing null hypersurface $\mathcal{C}_v$, we have
\begin{align}
r^2\pu(r\phi_1)(u,v)&=\mathcal{O}(|u|^{-1})\label{eq:thm:tf3},\\
r^2\pu(r^2\pu\pho))(u,v)&=\begin{cases}	\tilde{C}+\mathcal{O}(|u|^{-\epsilon'}),& \text{if }s<1,\\
											\tilde{C}+\mathcal{O}(\log^{-1} |u|),&\text{if }s=1, \end{cases}\label{eq:tf:thm:limit}
\end{align}
where $\tilde{C}=3\cin$ is determined explicitly by initial data, and $\epsilon'=\min(\epsilon, s, 1-s)$.
\end{thm}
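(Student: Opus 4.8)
The plan is to mirror the proof of Theorem~\ref{thm:tl} almost verbatim, treating the generator $\te$ of $\Gamma_f$ in place of $T$, and tracking the new feature that, for $s\leq 1$, the quantity $r|_{\Gamma_f}\sim|u|^{1-s}$ (with a logarithmic correction when $s=1$) diverges. First I would cut off the data as in \S\ref{sec:tl:cutoffdata} to replace $\mathcal I^-$ with $\mathcal C_{u=-k}$, producing solutions $\phi_1^{(k)}$, and set up the bootstrap assumptions \eqref{eq:tl:BSn} (with $X$ now comprising all $v\ge v_{\Gamma_f}(u)$). The estimates away from top order, i.e.\ the analogues of Propositions~\ref{prop:tlbs1} and~\ref{prop:tlbs2}, go through unchanged except that $T^j(r\phi_1)|_{\Gamma_f}$ must be rewritten as $(\te)^j(r\phi_1)|_{\Gamma_f}+\tfrac{|u|^{-s}}{1+|u|^{-s}}\pv(\cdots)$; unlike the $s>1$ case, this $\pv$-term is now \emph{not} automatically subleading, but on $\Gamma_f$ we have $\pv(r\phi_1)\sim|u|^{-1}r^{-2}\sim|u|^{-3+2s}$ and $(\te)^j(r\phi_1)\sim|u|^{-j-2}r^{-1}\sim|u|^{-j-3+s}$, so for $j\ge 0$ the $\pv$-term is comparable rather than dominant — harmless, since it is controlled by the same bootstrap constants with an extra power of $2M/R$ (or can be absorbed via the Grönwall argument of \S\ref{sec:tl:removelargeness}).

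The main obstacle is the top-order energy estimate (analogue of Proposition~\ref{prop:tlbs3}). Applying the divergence theorem \eqref{eq:divergencetheorem} with the current $J^T[T^j\phi_{\ell=1}]$ over the region bounded by $\mathcal C_{u=-k}$, $\Gamma_f$ and $\mathcal C_v$ produces the flux through $\Gamma_f$ contracted with $\pu-\pv+\tfrac{|u|^{-s}}{1+|u|^{-s}}\pv$, yielding, as in \eqref{random11},
\begin{equation*}
\int_{-k}^u r^2(\pu T^j\phi_1)^2+D|T^j\phi_1|^2\dd u'\lesssim \int_{\Gamma_f}r^2\Big(T^{j+1}\phi_1\cdot(T-2\pv)T^j\phi_1+\tfrac{|u'|^{-s}}{1+|u'|^{-s}}\big((T^j\pv\phi_1)^2+\tfrac{2D}{r^2}(T^j\phi_1)^2\big)\Big)\dd u'.
\end{equation*}
The new positive-definite boundary terms on the RHS are genuinely of the same size as the main term (since $r\sim|u|^{1-s}$), so one cannot simply absorb them; however, since $|u|^{-s}r^2\sim|u|^{2-3s}$ and the bootstrap assumption gives $(T^j\pv\phi_1)^2\lesssim r^{-4}|u|^{-2j-2}\cbs$, their $u$-integral still decays like $|u|^{-2j-3}$ with a coefficient that is \emph{linear} in $\cbs^{(j)}$, hence can be beaten by choosing $|U_0|$ large. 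Rewriting $T^j(r\phi_1)|_{\Gamma_f}=(\te)^j(r\phi_1)|_{\Gamma_f}+\mathcal O(|u|^{-j-2s})$ and feeding in the data assumption \eqref{eq:tlf2:boundarydata1}, the dominant boundary contribution is the cross term, and one arrives at a bound $\sqrt{Dr}\,|T^j\phi_1|\lesssim |u|^{-j-1}$ with $\sqrt{r}\sim|u|^{(1-s)/2}$, i.e.\ $|T^j\phi_1|\lesssim|u|^{-j-1}r^{-1}$, matching what is needed. Converting via the wave equation \eqref{eq:l=1waveequation} then controls $\pv T^j(r\phi_1)|_{\Gamma_f}$, after which integrating the approximate conservation law \eqref{eq:l=1cons.law.in.u} in $u$ and then in $v$ from $\Gamma_f$ closes the bootstrap and gives \eqref{eq:thm:tf1}, \eqref{eq:thm:tf2}; \eqref{eq:thm:tf3} follows by plugging these into \eqref{eq:l=1waveequation} and integrating.

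For the limit \eqref{eq:tf:thm:limit}, I would follow \S\ref{sec:tl:limitsub} but note the key simplification, already anticipated in the excerpt: because $r|_{\Gamma_f}\to\infty$, the existence of $\mathcal L(v):=\lim_{u\to-\infty}|u|^2r^2\pv T(r\phi_1)(u,v)$ is much easier, since the boundary data \eqref{eq:tlf2:boundarydata1} directly prescribe the leading $|u|^{-1}$ behaviour of $r^2\hat\phi_1$ (with error $|u|^{-1-\epsilon}$), and one gets $\mathcal L\equiv 3\cin$ after tracking the same chain of identities \eqref{proof5.8x}--\eqref{proof5.8yyy}. Concretely one computes $\pu(|u|^2r^2\pv T(r\phi_1))$ as in \eqref{eq:tl:limdifference}: the terms coming from $(\te)^2(r\phi_1)$ are $\mathcal O(|u|^{-1-\epsilon})$ by the data assumption and the estimates \eqref{eq:thm:tf1}--\eqref{eq:thm:tf2}, while $|u|^2\pv(r^2\pv T(r\phi_1))=\mathcal O(r^{-2})$ from \eqref{eq:l=1cons.law.in.u}; the latter also shows $v$-independence of $\mathcal L$. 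Then \eqref{eq:tl:pupu} expresses $r^2\pu(r^2\pu(r\phi_1))$ in terms of $r^4T^2(r\phi_1)$, $r^4\pv T(r\phi_1)$, $r^3T(r\phi_1)$, $r^2\pv(r^2\phi_1)$ and a term decaying like $|u|^{-1}$; taking $u\to-\infty$ and using \eqref{proof5.8y}--\eqref{proof5.8yyy} (with $\te$ replacing $T$, the difference being lower order) gives $\lim_{u\to-\infty}r^2\pu(r^2\pu(r\phi_1))=-3\mathcal L$. The error rate $\epsilon'=\min(\epsilon,s,1-s)$ arises from: $\epsilon$ from the data, $s$ from the error $T^j-(\te)^j$ terms, and $1-s$ from $r^{-1}$-type errors converted to $|u|^{-(1-s)}$ along $\mathcal C_v$; when $s=1$ the $1-s$ slot degenerates and is replaced by the $\log^{-1}|u|$ coming from $r|_{\Gamma_f}\sim|u|/\log|u|$, giving the second case of \eqref{eq:tf:thm:limit}. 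Finally, uniqueness and convergence of $\phi_1^{(k)}$ to $\phi_1$ proceed exactly as in Proposition~\ref{prop:tl:limit} via the same Cauchy-sequence argument. I expect no new conceptual difficulty beyond bookkeeping the competing $|u|$- and $r$-weights on $\Gamma_f$ in the top-order energy estimate.
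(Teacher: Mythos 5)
Your overall strategy coincides with the paper's (cut off the data, bootstrap with $\te$ replacing $T$ on $\Gamma_f$, and use the divergence of $r|_{\Gamma_f}$ to read the limit $\mathcal L$ off the boundary data), but there is a genuine gap at precisely the one step where the case $s\leq 1$ differs from $s>1$: the quadratic flux terms through $\Gamma_f$ in the top-order energy estimate. You correctly note that the terms $r^2\tfrac{|u|^{-s}}{1+|u|^{-s}}(\pv T^j\phi_1)^2$ are of the same size as the main boundary term, but your resolution fails. Squaring the bootstrap bound necessarily produces $(\cbs^{(j)})^2$, not a linear dependence as you assert: with $|\pv T^j\phi_1|\lesssim(\cin+\cbs^{(j)})r^{-3}|u|^{-j-1}$ one finds $\int_{\Gamma_f}r^2\tfrac{|u'|^{-s}}{1+|u'|^{-s}}(\pv T^j\phi_1)^2\,\dd u'\lesssim (\cin+\cbs^{(j)})^2\,|u|^{-2j-5+3s}$ (using $r|_{\Gamma_f}\sim|u|^{1-s}$), which is \emph{exactly} the size of the target energy $r|_{\Gamma_f}^{-3}|u|^{-2j-2}\sim|u|^{-2j-5+3s}$. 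A critical term carrying $(\cbs^{(j)})^2$ cannot be beaten by taking $|U_0|$ large, and there is no small parameter $2M/R$ to exploit here, so the bootstrap does not close as written. The paper's fix is a purely algebraic sign observation: substituting $T\phi_1=\te\phi_1+\tfrac{|u|^{-s}}{1+|u|^{-s}}\pv\phi_1$ into the full boundary integrand $T\phi_1\cdot(T-2\pv)\phi_1+\tfrac{|u|^{-s}}{1+|u|^{-s}}\bigl((\pv\phi_1)^2+\tfrac{2D}{r^2}\phi_1^2\bigr)$ shows that the net coefficient of $(\pv\phi_1)^2$ is $-\tfrac{|u|^{-s}}{(1+|u|^{-s})^2}\leq 0$, so the dangerous contribution may be discarded; the only surviving piece quadratic in $\cbs^{(j)}$ is $\tfrac{|u|^{-2s}}{(1+|u|^{-s})^2}(\pv\phi_1)^2$, which carries an extra factor $|u|^{-s}$ and is therefore genuinely subleading. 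Without this cancellation (or an equivalent device) the key estimate is not established.

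Two smaller points. First, from the energy estimate one gets $\sqrt{Dr}\,|T^j\phi_1|\lesssim|u|^{-j-1}\cdot(\cdots)$, which yields $|T^j\phi_1|\lesssim r^{-1/2}|u|^{-j-1}\cdot(\cdots)$, not $r^{-1}|u|^{-j-1}$ as you wrote; the correct route to the boundary bound on $\pv T^j(r\phi_1)|_{\Gamma_f}$ is to feed the pointwise bound on $\sqrt r\,T^j\phi_1$ into the wave equation \eqref{eq:l=1waveequation} and integrate in $u$, exactly as in \eqref{proof5.4x}. Second, your identification $\mathcal L\equiv 3\cin$ is inconsistent with your later formula $\lim_{u\to-\infty}r^2\pu(r^2\pu(r\phi_1))=-3\mathcal L$; the paper obtains $\mathcal L=-\cin$ from integrating the $T$-commuted \eqref{2} along $\mathcal C_v$ up to $\Gamma_f$ (using $r|_{\Gamma_f}\to\infty$), whence $\tilde C=-3\mathcal L=3\cin$. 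These are slips rather than structural problems; the remainder of your outline (in particular dispensing with the auxiliary difference estimates of \S\ref{sec:tl:pvT-T2} because the boundary data \eqref{eq:tlf2:boundarydata1} directly prescribe the asymptotics on a hypersurface reaching $r=\infty$) matches the paper.
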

\begin{rem}
We remark that, in the case $s=1$, the fact that the $\mathcal{O}(\log^{-1} |u|)$-term in \eqref{eq:tf:thm:limit} is non-integrable means that Theorem~\ref{thm:nl} cannot be applied directly. Since this is a very specific issue, we make no attempts to fix it in this presentation.
\end{rem}
\subsubsection{Outline of the proof}
\begin{proof}
As before, only a sketch of the proof will be provided.

We cut the data off as before. Let us first show \eqref{eq:thm:tf1} for $n=0$:
\paragraph{Proof of \eqref{eq:thm:tf1} for $\mathbf{n=0}$:}
We follow the proof of Proposition~\ref{prop:tlbs3}. We first need to acquire an estimate for $\pv(r\phi_1)$ on $\Gamma_f$. We assume as a bootstrap assumption that
\begin{equation}
|r^2\pv(r\phi_1)|\leq \frac{C_{\mathrm{BS}}}{|u|}
\end{equation} 
for a suitable constant $C_{\mathrm{BS}}$.
We recall from the energy estimate \eqref{random11}:
\begin{align}\label{random2}
\begin{split}
&\int_{-k}^u r^2(\pu \phi_{1})^2+D|\phi_{1}|^2\dd u'	 \\
\lesssim &\int\limits_{\Gamma_f\cap\{-k\leq u'\leq u\}}  r^2 \left(T\phi_{1}\cdot (T-2\pv) \phi_{1}+\frac{|u'|^{-s}}{1+|u'|^{-s}}\left((\pv \phi_1)^2+\frac{2D}{r^2}\phi_1^2\right)	\right) \dd u'
\end{split}
\end{align}
Note that the $(\pv\phi_1)^2$-terms in the above are potentially dangerous since they could lead to a $C_{\mathrm{BS}}^2$-term, which would make it impossible to improve the bootstrap assumption. However, upon writing again 
\begin{equation}
T\phi_1=\te \phi_1 +\frac{|u|^{-s}}{1+|u|^{-s}}\pv\phi_1,
\end{equation}
we find that that the $(\pv\phi_1)^2$-terms, in fact, appear with a benign sign:
\begin{align*}
&\int_{\Gamma_f}  r^2 \left(T\phi_{1}\cdot (T-2\pv) \phi_{1}+\frac{|u'|^{-s}}{1+|u'|^{-s}}\left((\pv \phi_1)^2+\frac{2D}{r^2}\phi_1^2\right)	\right) \dd u'\\
=&\int_{\Gamma_f}   r^2 \left(		(T\phi_1)^2-\frac{|u'|^{-s}}{1+|u'|^{-s}}(\pv\phi_1)^2-2\pv\phi_1\te\phi_1+\frac{|u'|^{-s}}{1+|u'|^{-s}}\frac{2D}{r^2}\phi_1^2\right)\dd u'\\
\leq& \int_{\Gamma_f}   r^2 \left(		(\te\phi_1)^2+2|\pv\phi_1||\te\phi_1|\frac{1+2|u'|^{-s}}{1+|u'|^{-s}}+\frac{|u'|^{-s}}{1+|u'|^{-s}}\frac{2D}{r^2}\phi_1^2\right)
				+r^2 \frac{|u'|^{-2s}}{(1+|u'|^{-s})^2}	(\pv\phi_1)^2					\dd u'\\
\lesssim& \int_{\Gamma_f} r^2 \left(\frac{(\cin)^2}{r^4|u'|^4}+\frac{(\cin)^2}{r^6|u'|^{2+2s}} +\frac{\cin(\cin+C_{\mathrm{BS}})}{r^5|u'|^2}\left(\frac{1}{|u'|}+\frac{1}{r|u'|^{s}}\right)+\frac{(\cin)^2}{r^6|u'|^{2+s}}\right)
				+r^2 \frac{(\cin+C_{\mathrm{BS}})^2}{r^6|u'|^{2+2s}}				\dd u' ,
\end{align*}
where we used the boundary data assumption and the bootstrap assumption in the last estimate. Using now the fact that $|u|^{1-s}\lesssim r$ if $s\neq 1$ (or $\log |u|\lesssim r$ if $s=1$), as well as the fundamental theorem of calculus and the Cauchy--Schwarz inequality, combined with the energy estimate \eqref{random2}, we obtain that
\begin{align}
r\phi_1^2\lesssim \frac{\cin(\cin+C_{\mathrm{BS}})}{r^3|_{\Gamma_f}|u|^2}
\end{align}
Importantly, $C_{\mathrm{BS}}$ does not appear quadratically in the above estimate. Plugging this bound into the wave equation \eqref{eq:l=1waveequation} and integrating in $u$, we obtain that
\begin{equation}
\left|r^2\pv(r\phi_1)|_{\Gamma_f}\right| \lesssim \frac{\sqrt{C_{\mathrm{BS}}^2+C_{\mathrm{BS}}\cin}}{|u|^2}.
\end{equation}

Having obtained a bound for the boundary term, we can now, as in the proof of Proposition~\ref{prop:tlbs3}, use the approximate conservation law \eqref{eq:l=1cons.law.in.u} to close the bootstrap argument for $\pv(r\phi_1)$. Indeed, we can obtain a bound for $\Phi$ (similarly to how we obtained \eqref{proof5.4xxx}) and then use the fact that, by integrating the bootstrap assumption from $\Gamma_f$, we have
\begin{equation}
|r\phi_1|\leq \frac{\cin+C_{\mathrm{BS}}}{r|_{\Gamma_f}|u|}.
\end{equation}
In view of $\log |u|\lesssim r|_{\Gamma_f}$, this decays faster than $r^2\pv(r\phi_1)$. Therefore, the bound for $\Phi$ immediately translates into a bound for $r^2\pv\pho$. This closes the bootstrap argument.

\textbf{Proof of \eqref{eq:thm:tf1} for $\mathbf{n>0}$:}
Having proved \eqref{eq:thm:tf1} for $n=0$, we now outline the proof for $n>0$. In fact, the only thing that changes is that, in the energy equality \eqref{random2}, we now need to express $T^j\phi_1$ in terms of $(\te)^j\phi_1$ for $j>1$, which leads to more "error" terms.
For instance, we have
\begin{equation}
T^2\phi_1=(\te)^2\phi_1+2\frac{|u|^{-s}}{1+|u|^{-s}}\pv T\phi_1+T\left(\frac{|u|^{-s}}{1+|u|^{-s}}\right)\pv\phi_1-\frac{|u|^{-2s}}{(1+|u|^{-s})^2}\pv^2\phi_1.
\end{equation}
We  have already obtained estimates for the last two terms. 
Moreover, we can estimate the first term above from the boundary data assumptions, and the second one via a bootstrap assumption on $\pv T(r\phi_1)$. 
Plugging these estimates into \eqref{random11} for $j=2$ then improves the bootstrap assumption.

We leave the cases $j>2$ to the reader. 
(Notice that when e.g.\ expressing $T^4\phi_1$ in terms of $(\te)^4 \phi_1$, there will also be, for instance, a term containing $\pv^4\phi_1$. 
We can estimate this by simply commuting the wave equation twice with $\pv$ and appealing to the proof of \eqref{eq:thm:tf1} for $n=0$. 
The other terms can be dealt with similarly.)

\paragraph{Proof of \eqref{eq:thm:tf2}:}
We can obtain the estimates \eqref{eq:thm:tf2} for $n\leq 3$ by using the wave equation as in \eqref{eq:waveequationtoestimatephi} and the already obtained bounds \eqref{eq:thm:tf1}.

\paragraph{Proof of \eqref{eq:thm:tf3}}
The proof of \eqref{eq:thm:tf3} is straight-forward. We simply write:
\begin{equation}
r^2\pu(r\phi_1)=r^2T(r\phi_1)-r^2\pv(r\phi_1).
\end{equation}

\paragraph{Proof of \eqref{eq:tf:thm:limit}:}
Finally, we prove \eqref{eq:tf:thm:limit}. As in the proof of Proposition~\ref{prop:limit}, we will first compute the limit $\lim_{u\to-\infty}|u|^2r^2\pv T(r\phi_1)$. 

In view of the approximate conservation law \eqref{eq:l=1cons.law.in.u} and the fact that $r|_{\Gamma_f}$ tends to infinity, we have that
\begin{equation}
\lim_{u\to-\infty}|u|^2r^2\pv T(r\phi_1)(u,v)=\lim_{u\to-\infty}|u|^2r^2\pv T(r\phi_1)(u,v_{\Gamma_f}(u)).
\end{equation}
We estimate $\pv T(r\phi_1)|_{\Gamma_f}$ as follows. Integrating the $T$-commuted \eqref{2}, we obtain that
\begin{equation}
r^2\pv T(r\phi_1)+r^2T \phi_1=\mathcal{O}\left(\frac{r}{|u|^3}+\frac{1}{r|u|^2}\right),
\end{equation}
from which we read off that
\begin{equation}
|u|^2r^2\pv T(r\phi_1)(u,v)=-\cin+\begin{cases} \mathcal{O}\left(\frac{1}{\log |u|}\right),&s=1,\\
												\mathcal{O}\left(\frac{1}{|u|^{s}}+\frac{1}{|u|^{1-s}}+\frac{1}{|u|^{\epsilon}}\right),& s<1,
\end{cases}
\end{equation}
and, in particular, that 
\begin{align}
\mathcal{L}:=\lim_{u\to-\infty}|u|^2r^2\pv T(r\phi_1)(u,v)=-\cin.
\end{align}
Here, we used that 
\begin{align*}
r^2T\phi_1=T(r^2\phi_1)=\te(r^2\phi_1)+\frac{|u|^{-s}}{1+|u|^{-s}}\pv (r^2\phi_1)
\end{align*}
and the fact that, in view of \eqref{2}, the second term above decays faster than the first.

Similarly, we find that
\begin{align}
\lim_{u\to-\infty}|u|^{j+1}r^2\pv T^j(r\phi_1)(u,v)=-j!\cin
\end{align}
for $j\leq 3$.
We can now compute, exactly as in the proof of Proposition~\ref{prop:limit}, the expressions \eqref{proof5.8y}, \eqref{proof5.8yy} and \eqref{proof5.8yyy}, from which it follows, using the identity \eqref{eq:tl:pupu}, that
\begin{equation}
\lim_{u\to-\infty}r^2\pu(r^2\pu(r\phi_1))(u,v)=-3\mathcal{L}=3\cin.
\end{equation}
This concludes the proof.
\end{proof}

\newpage
\part{Generalising to all \texorpdfstring{$\ell\geq 0$}{L}.}\label{part2}
Having understood the case $\ell=1$ in detail in the previous sections \S \ref{sec:nl}--\S \ref{sec:timelikenonconstantr}, we now want to analyse the general case. As explained in \S\ref{sec:intro:gouda}, this second part of the paper can be understood mostly independently of part~\ref{part1}.

As a zeroth step, we need to establish higher $\ell$-analogues of the approximate conservation laws \eqref{eq:l=1cons.law.in.u}, \eqref{eq:l=1cons.law.in.v}. 
This is achieved in \S \ref{sec:generalNP}. 
We then treat the case of timelike boundary data in \S \ref{sec:general:timelike}, restricting the presentation however to cases of hypersurfaces of constant area radius.
 Then, we treat the case of characteristic initial data in \S \ref{sec:general:null}  and~\S \ref{sec:moreg:null}. 

The sections \S\ref{sec:general:timelike} and \S\ref{sec:general:null} are similar in spirit to \S\ref{sec:timelikeconstantR} and \S\ref{sec:nl}, respectively. (The reasons for the reversed order of the sections are solely of presentational, not of mathematical nature.)
On the other hand,  \S\ref{sec:moreg:null} follows a different mathematical structure than \S\ref{sec:nl}: While the methods of \S \ref{sec:nl} and \S\ref{sec:general:null} can only treat data that decay like $r\phi_\ell\lesssim|u|^{-\ell}$,  the approach of \S\ref{sec:moreg:null} allows us to also treat slowly decaying (and even growing) initial data.

\section{The higher-order Newman--Penrose constants}\label{sec:generalNP}
In this section, we derive higher-order conservation laws and define the Newman--Penrose constants associated with them. 
\subsection{Generalising the approximate conservation law (\ref{eq:l=1cons.law.in.u})}
In order to generalise the approximate conservation law in $u$ \eqref{eq:l=1cons.law.in.u},  we first  require a general formula for commutations of the wave equation with $[r^2\pv]^N$:
\begin{prop}\label{prop:gl:dvcommute}
Let $\phi$ be a smooth solution to $\Box_g\phi=0$, and let $N\geq 0$. Then $\phi$ satisfies
\begin{multline}\label{eq:gl:dvcommute}
\pu\pv[r^2\pv]^N(r\phi)=-\frac{2DN}{r}\pv[r^2\pv]^N(r\phi)\\+
\sum_{j=0}^N\frac{D}{r^2}(2M)^j\left(a_j^N+b_j^N\slashed{\Delta}_{\mathbb{S}^2}-c_j^N\cdot\frac{2M}{r}\right)[r^2\pv]^{N-j}(r\phi),
\end{multline}
where the constants $a_j^N, b_j^N$ and $c_j^N$ are given explicitly by
\begin{align}
a_j^N&=(2^j-1)\binom{N}{j}+(2^{j+2}-2)\binom{N+1}{j+2},\\
b_j^N&=\binom{N}{j},\\
c_j^N&=2^j\binom{N}{j}+2^{j+2}\binom{N+1}{j+2},
\end{align}
and we use the convention that $\binom{N}{j}=0$ if $j>N$.
\end{prop}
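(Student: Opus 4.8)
The plan is to prove \eqref{eq:gl:dvcommute} by induction on $N$. The base case $N=0$ is precisely the wave equation for the radiation field \eqref{eq:pupv rphi general}, rewritten using $\slashed\Delta_{\mathbb S^2}(r\phi)$ and the $-2MD/r^3\cdot r\phi$-term; one checks $a_0^0 = 0$, $b_0^0 = 1$, $c_0^0 = 1$, which matches $\pu\pv(r\phi) = \frac{D}{r^2}\slashed\Delta_{\mathbb S^2}(r\phi) - \frac{2MD}{r^3}r\phi$. For the inductive step, I would apply $r^2\pv$ to both sides of \eqref{eq:gl:dvcommute} and commute it past $\pu$ on the left-hand side, using the commutator identity $[\pu, r^2\pv] = \pu(r^2)\pv = -2Dr\pv = -\frac{2D}{r}(r^2\pv)$ (since $\pu r = -D$). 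This produces $\pu\pv[r^2\pv]^{N+1}(r\phi)$ plus a correction $-\frac{2D}{r}\pv[r^2\pv]^{N+1}(r\phi)$, which already accounts for part of the leading term with the new coefficient $-\frac{2D(N+1)}{r}$.

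The core of the computation is then to apply $r^2\pv$ to the sum on the right-hand side of \eqref{eq:gl:dvcommute}. Each summand has the schematic form $D\cdot r^{-2}\cdot (2M)^j\cdot(\text{constant} + \text{constant}\cdot\slashed\Delta_{\mathbb S^2} + \text{constant}\cdot 2M/r)\cdot [r^2\pv]^{N-j}(r\phi)$. Acting with $r^2\pv$, I would use $\pv r = D$, $\pv D = \frac{2M}{r^2}D$ (or rather $\pv D = \frac{2MD}{r^2}$; in any case one tracks the $2M$-weights carefully), and the fact that $\slashed\Delta_{\mathbb S^2}$ commutes with $r^2\pv$. The key bookkeeping point is that $r^2\pv$ acting on $r^{-2}(2M)^j(\cdots)[r^2\pv]^{N-j}(r\phi)$ yields a term where $r^2\pv$ hits the bracket $[r^2\pv]^{N-j}(r\phi)$ directly — producing $[r^2\pv]^{N+1-j}(r\phi)$ — plus lower-order terms where $r^2\pv$ differentiates the $r$-weights and $D$, each of which lowers the power of $[r^2\pv]$ by zero but raises the $2M$-power by one (i.e.\ shifts $j \to j+1$). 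After collecting everything, one re-indexes the sum and reads off recursion relations for $a_j^{N+1}, b_j^{N+1}, c_j^{N+1}$ in terms of $a_j^N, a_{j-1}^N$, etc.

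The main obstacle — and really the only nontrivial part — will be verifying that the proposed closed forms $a_j^N = (2^j-1)\binom{N}{j} + (2^{j+2}-2)\binom{N+1}{j+2}$, $b_j^N = \binom{N}{j}$, $c_j^N = 2^j\binom{N}{j} + 2^{j+2}\binom{N+1}{j+2}$ actually satisfy the recursions obtained in the previous step. The $b_j^N$ case is immediate from Pascal's rule $\binom{N+1}{j} = \binom{N}{j} + \binom{N}{j-1}$, since the $\slashed\Delta_{\mathbb S^2}$-terms only get transported and shifted with unit coefficients. The $a_j^N$ and $c_j^N$ identities require combining Pascal's rule with the arithmetic of the powers of $2$; I expect that after writing down the recursion one obtains an identity of the form $X_j^{N+1} = 2 X_{j-1}^N + (\text{contributions from differentiating } r\text{-weights})$, and the $(2^{j+2}-2)\binom{N+1}{j+2}$-type correction terms are precisely engineered to absorb the inhomogeneity coming from the $-\frac{2DN}{r}$-term and the explicit $r$-derivatives of the $r^{-2}$ and $-2M/r$ factors. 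This is a finite, elementary verification with binomial coefficients — tedious but routine — so the strategy is to set up the recursion cleanly, isolate the three independent recursions for $a, b, c$, and check each against the claimed formula using Pascal's identity and the geometric-series identities $2^{j+1}-1 = 2(2^j-1)+1$ and $2^{j+3}-2 = 2(2^{j+2}-2)+2$.
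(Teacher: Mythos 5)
Your proposal follows essentially the same route as the paper's proof in Appendix~A.1: induction on $N$ from the base case $\pu\pv(r\phi)=\frac{D}{r^2}\slashed{\Delta}_{\mathbb{S}^2}(r\phi)-\frac{2MD}{r^3}r\phi$, commuting the extra $r^2\pv$ through the equation (the paper organizes this by first expanding $\pu[r^2\pv]^{N+1}(r\phi)=r^2\pu\pv[r^2\pv]^N(r\phi)-2Dr\,\pv[r^2\pv]^N(r\phi)$ and then applying $\pv$), reading off the recursions $a_j^{N+1}=a_j^N+a_{j-1}^N+c_{j-1}^N$, $b_j^{N+1}=b_j^N+b_{j-1}^N$, $c_j^{N+1}=c_j^N+2c_{j-1}^N$, and solving them with Pascal's rule and powers of $2$. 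Two small cautions that do not affect the viability of the plan: commuting $r^2\pv$ fully past $\pu\pv$ generates more correction terms than the single $-\tfrac{2D}{r}\pv[r^2\pv]^{N+1}(r\phi)$ you record (e.g.\ a $2Dr\,\pu\pv[r^2\pv]^N(r\phi)$-term requiring a second use of the induction hypothesis, which is where the inhomogeneities $a_0^{N+1}=a_0^N+2(N+1)$ and $c_0^{N+1}=c_0^N+4(N+1)$ come from), and the three recursions are not independent, since the $a$-recursion couples to $c_{j-1}^N$ — both points are absorbed by exactly the bookkeeping you describe.
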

\begin{proof}
A proof is given in the appendix~\ref{A1}.
\end{proof}
Notice that, in particular, $a_0^N=N(N+1)$ and $b_0^N=1$. Hence, if we restrict to solutions supported on the $\ell=L$-angular frequencies, and consider \eqref{eq:gl:dvcommute} for $N=L$, there will be a cancellation in the highest-order derivatives. One can then iteratively subtract \eqref{eq:gl:dvcommute} for $N<L$, multiplied with a suitable constant, to obtain an approximate conservation law. This is done in
\begin{cor}\label{cor:gl}
Let $\phi=\sum_{|m|\leq L}\phi_{Lm}\cdot Y_{Lm}$ be a smooth solution to $\Box_g\phi=0$ supported on the angular frequencies $\ell=L\geq 0$. In what follows, we shall suppress the $m$-index. Let $N\geq 0$, and define, for $j\leq N$,
\begin{equation}
\tilde{a}_j^{N,L}:=a_j^N-b_j^N\cdot L(L+1),
\end{equation}
and let $\{x_i^{(L)}\}_{0\leq i\leq L}$ be a set of constants with $x_0^{(L)}=1$. 
Then $\phi$ satisfies
\begin{multline}
\pu\pv\left([r^2\pv]^L(r\phi_L)+\sum_{i=1}^L(2M)^ix_i^{(L)}[r^2\pv]^{L-i}(r\phi_L)\right)\\
=-\frac{2LD}{r}\pv[r^2\pv]^L(r\phi_L)-\sum_{i=1}^L(2M)^ix_i^{(L)}\frac{2(L-i)D}{r}\pv[r^2\pv]^{L-i}(r\phi_L)\\
+\sum_{j=0}^L\frac{D}{r^2}(2M)^j[r^2\pv]^{L-j}(r\phi_L)\sum_{i=0}^j\left( x_i^{(L)}\tilde{a}_{j-i}^{L-i,L}-x_i^{(L)} c_{j-i}^{L-i}\cdot \frac{2M}{r} \right).
\end{multline}
\end{cor}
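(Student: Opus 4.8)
\textbf{Proof plan for Corollary~\ref{cor:gl}.}
The plan is to derive the stated identity directly from Proposition~\ref{prop:gl:dvcommute} by taking a suitable linear combination of the commuted wave equations \eqref{eq:gl:dvcommute} for the various values $N = L, L-1, \dots, 0$, using the fact that $\phi_L$ is supported on $\ell = L$ so that $\slashed{\Delta}_{\mathbb{S}^2}$ acts as multiplication by $-L(L+1)$. First I would substitute $\slashed{\Delta}_{\mathbb{S}^2} [r^2\pv]^{N-j}(r\phi_L) = -L(L+1)[r^2\pv]^{N-j}(r\phi_L)$ into \eqref{eq:gl:dvcommute}; this collapses the $a_j^N + b_j^N \slashed{\Delta}_{\mathbb{S}^2}$ operator into the scalar $\tilde a_j^{N,L} = a_j^N - b_j^N L(L+1)$, so that for each $N$ one has
\begin{equation*}
\pu\pv[r^2\pv]^N(r\phi_L) = -\frac{2DN}{r}\pv[r^2\pv]^N(r\phi_L) + \sum_{j=0}^N \frac{D}{r^2}(2M)^j\Bigl(\tilde a_j^{N,L} - c_j^N\cdot\frac{2M}{r}\Bigr)[r^2\pv]^{N-j}(r\phi_L).
\end{equation*}

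Next I would apply this with $N = L-i$ for $i = 0, 1, \dots, L$, multiply the $i$-th equation by $(2M)^i x_i^{(L)}$ (with $x_0^{(L)} = 1$), and sum over $i$. On the left-hand side, linearity of $\pu\pv$ and of $[r^2\pv]^{L-i}$ immediately produces $\pu\pv\bigl([r^2\pv]^L(r\phi_L) + \sum_{i=1}^L (2M)^i x_i^{(L)}[r^2\pv]^{L-i}(r\phi_L)\bigr)$, which is the left-hand side of the asserted identity. The first term on the right-hand side, $-\frac{2D(L-i)}{r}\pv[r^2\pv]^{L-i}(r\phi_L)$, after multiplication by $(2M)^i x_i^{(L)}$ and summation, gives exactly the transport terms $-\frac{2LD}{r}\pv[r^2\pv]^L(r\phi_L) - \sum_{i=1}^L (2M)^i x_i^{(L)}\frac{2(L-i)D}{r}\pv[r^2\pv]^{L-i}(r\phi_L)$ appearing in the statement (the $i=0$ term being separated out). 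The only real bookkeeping step is the double sum $\sum_{i=0}^L (2M)^i x_i^{(L)} \sum_{j=0}^{L-i} \frac{D}{r^2}(2M)^j(\tilde a_j^{L-i,L} - c_j^{L-i}\cdot\frac{2M}{r})[r^2\pv]^{L-i-j}(r\phi_L)$: here I would reindex by setting $k = i+j$ (the total power of $2M$, apart from the $\frac{2M}{r}$ correction), so that the coefficient of $\frac{D}{r^2}(2M)^k[r^2\pv]^{L-k}(r\phi_L)$ becomes $\sum_{i=0}^k x_i^{(L)}(\tilde a_{k-i}^{L-i,L} - c_{k-i}^{L-i}\cdot\frac{2M}{r})$, matching the summand in the statement after relabelling $k \to j$.

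The step I expect to require the most care is precisely this reindexing of the double sum and the verification that the ranges match up — in particular checking that the convention $\binom{N}{j} = 0$ for $j > N$ (hence $\tilde a_j^{N,L} = c_j^N = 0$ for $j > N$) makes the inner sum over $i$ extend cleanly from $0$ to $j$ rather than being cut off at $j - (L - \text{something})$, and that no boundary terms in $i$ or $j$ are lost. There is no analytic obstacle here: the entire argument is algebraic, valid pointwise wherever $\phi$ is smooth, and requires only Proposition~\ref{prop:gl:dvcommute} together with the eigenvalue property of $\slashed{\Delta}_{\mathbb{S}^2}$ on fixed-$\ell$ functions. I would note that the constants $x_i^{(L)}$ are at this stage completely arbitrary (subject to $x_0^{(L)} = 1$); the point of the corollary is to record the identity for \emph{all} such choices, leaving the specific choice that produces the desired cancellations (so that the top-order term on the right carries a good $r$-weight) to be fixed later.
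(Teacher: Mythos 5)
Your proposal is correct and is precisely the "straight-forward computation" the paper itself invokes: restrict $\slashed{\Delta}_{\mathbb{S}^2}$ to its eigenvalue $-L(L+1)$, sum the identities \eqref{eq:gl:dvcommute} for $N=L-i$ weighted by $(2M)^i x_i^{(L)}$, and reindex the double sum via $j=i+j'$ (note the inner index $j-i$ automatically satisfies $j-i\leq L-i$, so the vanishing convention is not even needed). Nothing is missing.
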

\begin{proof}This is a straight-forward computation.\end{proof}
\begin{defi}[The generalised higher-order future Newman--Penrose constant]
Let $\phi$ be as in Corollary~\ref{cor:gl}, and define the constants $x_i^{(L)}$ for $1\leq i\leq L$ as follows: 
\begin{equation}
x_i^{(L)}:=-\dfrac{1}{\tilde{a}_0^{L-i,L}}\sum_{k=0}^{i-1}\tilde{a}_{i-k}^{L-k,L}x^{(L)}_k.
\end{equation}
This is well-defined since $\tilde{a}_0^{L-i,L}\neq 0$ for $i>0$ and since $x^{(L)}_0=1$.  
We further denote
\begin{equation}\label{eq:NPfuture}
\Phi_{L}:=[r^2\pv]^L(r\phi_{L})+\sum_{i=1}^L(2M)^ix_i^{(L)}[r^2\pv]^{L-i}\phi_{L},
\end{equation}
and define, for any smooth function $f(r)=o(r^3)$, the generalised higher-order Newman--Penrose constant according to
\begin{equation}
\IL{L}[\phi](v):=\lim_{v\to\infty}f\pv\Phi_L(u,v).
\end{equation}
\end{defi}
\begin{cor}
The quantity $\Phi_{L}$ defined above satisfies the following approximate conservation law:
\begin{equation}\label{eq:gl:approx.u}
\pu(r^{-2L}\pv\Phi_L)
=\sum_{j=0}^L\frac{D}{r^{3+2L}}(2M)^{j+1}[r^2\pv]^{L-j}(r\phi_{L})\left(2(j+1)x^{(L)}_{j+1}-\sum_{i=0}^j x_i^{(L)} c_{j-i}^{L-i}\right) .
\end{equation}
Here, we used the notation that $x_i^{(L)}=0$ for $i\geq L$.
\end{cor}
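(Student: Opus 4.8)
The plan is to derive \eqref{eq:gl:approx.u} directly from the expression provided by Corollary~\ref{cor:gl}, by showing that, under the specific choice of the constants $x_i^{(L)}$ in the Definition, all terms on the right-hand side \emph{except} those carrying the factor $2M/r$ collapse into a total $u$-derivative structure, and that what remains has precisely the claimed form. Concretely, I would start from the identity in Corollary~\ref{cor:gl} and rearrange it: moving the transport terms $-\frac{2LD}{r}\pv[r^2\pv]^L(r\phi_L)$ and $-\sum_{i\geq 1}(2M)^i x_i^{(L)}\frac{2(L-i)D}{r}\pv[r^2\pv]^{L-i}(r\phi_L)$ to the left and combining them with $\pu\pv\Phi_L$, one recognizes the combination as $\pu$ acting on $r^{-2L}\pv\Phi_L$ up to the weight. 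Indeed, $\pu(r^{-2L}\pv\Phi_L) = r^{-2L}\pu\pv\Phi_L - 2L r^{-2L-1}(\pu r)\pv\Phi_L = r^{-2L}\pu\pv\Phi_L + 2LD r^{-2L-1}\pv\Phi_L$, and expanding $\pv\Phi_L = \sum_{i=0}^L(2M)^i x_i^{(L)}\pv[r^2\pv]^{L-i}(r\phi_L)$ shows that multiplying the transport terms by $r^{-2L}$ and adding $2LDr^{-2L-1}\pv\Phi_L$ exactly accounts for them, leaving only the bulk sum from Corollary~\ref{cor:gl}.

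The heart of the argument is then the algebraic claim that, for each $j$ with $0\leq j\leq L$, the non-$(2M/r)$ part of the inner coefficient, namely $\sum_{i=0}^j x_i^{(L)}\tilde a_{j-i}^{L-i,L}$, vanishes. This is precisely the recursion defining $x_i^{(L)}$: writing out $\sum_{i=0}^j x_i^{(L)}\tilde a_{j-i}^{L-i,L} = x_j^{(L)}\tilde a_0^{L-j,L} + \sum_{i=0}^{j-1} x_i^{(L)}\tilde a_{j-i}^{L-i,L}$ and substituting the definition $x_j^{(L)} = -\frac{1}{\tilde a_0^{L-j,L}}\sum_{k=0}^{j-1}\tilde a_{j-k}^{L-k,L}x_k^{(L)}$ gives zero identically, for $1\leq j\leq L$; for $j=0$ the term is $x_0^{(L)}\tilde a_0^{L,L} = \tilde a_0^{L,L} = a_0^L - L(L+1) = L(L+1) - L(L+1) = 0$, using $a_0^N = N(N+1)$ and $b_0^N = 1$ as noted after Proposition~\ref{prop:gl:dvcommute}. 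One must check $\tilde a_0^{L-i,L}\neq 0$ for $0<i\leq L$ so the recursion is well-defined: $\tilde a_0^{L-i,L} = a_0^{L-i} - L(L+1) = (L-i)(L-i+1) - L(L+1)$, which is strictly negative for $i\geq 1$, hence nonzero — this is exactly the well-definedness remark already made in the Definition, so I may simply cite it. Consequently only the $-\sum_{i=0}^j x_i^{(L)} c_{j-i}^{L-i}\cdot\frac{2M}{r}$ pieces survive in the bulk, and together with the term $2(j+1)x_{j+1}^{(L)}$ — which I claim arises from re-indexing the transport-correction terms for $[r^2\pv]^{L-i}(r\phi_L)$ with $i\geq 1$ after the $r^{-2L}$ multiplication — one obtains \eqref{eq:gl:approx.u}.

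The one genuinely fiddly point, and the step I expect to be the main obstacle, is bookkeeping the coefficient $2(j+1)x_{j+1}^{(L)}$ in front of $[r^2\pv]^{L-j}(r\phi_L)$: it does not come from the bulk sum of Corollary~\ref{cor:gl} but from the transport terms $-\sum_{i=1}^L(2M)^i x_i^{(L)}\frac{2(L-i)D}{r}\pv[r^2\pv]^{L-i}(r\phi_L)$ combined with the $2LDr^{-2L-1}\pv\Phi_L$ contribution coming from differentiating the weight $r^{-2L}$. One has to verify that, after multiplying by $r^{-2L}$, these transport terms do \emph{not} fully cancel (unlike the $\ell=1$ case where they did, cf.\ \eqref{eq:l=1cons.law.in.u}), and that the residual is $\frac{D}{r^{3+2L}}(2M)^{j+1}\cdot 2(j+1)x_{j+1}^{(L)}[r^2\pv]^{L-j}(r\phi_L)$. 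The mechanism is that $\pv[r^2\pv]^{L-i}(r\phi_L) = r^{-2}[r^2\pv]^{L-i+1}(r\phi_L)$, so each such term feeds into the coefficient of a \emph{higher} power $[r^2\pv]^{L-j}(r\phi_L)$ with $j = i-1$, producing exactly the shift $i\mapsto j+1$ in the index of $x$; the factor $2(L-i) - 2L = -2i = -2(j+1)$ from combining the transport term with the weight derivative, together with the overall sign, yields $+2(j+1)x_{j+1}^{(L)}$. I would carry this out carefully with the convention $x_i^{(L)}=0$ for $i\geq L$ to handle the boundary term $j=L$ uniformly, and then the identity \eqref{eq:gl:approx.u} follows. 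Given that Corollary~\ref{cor:gl} already states "This is a straight-forward computation," I would present this derivation at a similar level of brevity, emphasizing the two nontrivial inputs — the vanishing of $\sum_i x_i^{(L)}\tilde a_{j-i}^{L-i,L}$ and the index-shift producing $2(j+1)x_{j+1}^{(L)}$ — and leaving the purely mechanical re-indexing to the reader.
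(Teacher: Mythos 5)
Your derivation is correct and is precisely the computation the paper leaves implicit: the corollary follows from Corollary~\ref{cor:gl} via the product rule on $\pu(r^{-2L}\,\cdot\,)$, the recursive definition of $x_i^{(L)}$ forcing $\sum_{i=0}^j x_i^{(L)}\tilde{a}_{j-i}^{L-i,L}=0$ for every $0\leq j\leq L$, and the identity $\pv[r^2\pv]^{L-i}(r\phi_L)=r^{-2}[r^2\pv]^{L-i+1}(r\phi_L)$ producing the index shift and the coefficient $2(j+1)x_{j+1}^{(L)}$. (Your parenthetical that the transport terms fully cancel for $\ell=1$ is not quite accurate — they leave exactly the $2x_1^{(1)}$ contribution to the $-12M$ in \eqref{eq:l=1cons.law.in.u} — but this aside plays no role in the argument.)
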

In particular, under suitable assumptions on $\phi$, the generalised higher-order N--P constant defined above is conserved along $\mathcal{I}^+$:
\begin{equation}
 \IL{L}[\phi](v) \equiv \IL{L}[\phi] .
\end{equation}
\begin{rem}
It is helpful to keep in mind that the quantity $\pv\Phi_L$ can always be written as
\begin{equation}\label{eq:NPpolynomial}
\pv\Phi_L=\pv(r^2p_1(r^2p_2\pv(\dots r^2p_L\pv(r\phi_L)\dots))),
\end{equation}
where the $p_i$ are polynomials in $1/r$. Intuitively, this indicates that one should typically be able to recover an estimate for $r\phi_L$ from $\Phi_L$ by simply integrating $L$ times. However, we will never need to make use of the form \eqref{eq:NPpolynomial} in this paper.
\end{rem}
\subsection{Generalising the approximate conservation law (\ref{eq:l=1cons.law.in.v})}
We follow a similar procedure to derive approximate conservation laws in the $\pv$-direction. We have
\begin{prop}\label{prop:gl:ducommute}
Let $\phi$ be a smooth solution to $\Box_g\phi=0$, and let $N\geq 0$. Then $\phi$ satisfies
\begin{multline}\label{eq:gl:ducommute}
\pv\pu[r^2\pu]^N(r\phi)=\frac{2DN}{r}\pu[r^2\pu]^N(r\phi)\\+
\sum_{j=0}^N\frac{D}{r^2}(2M)^j\left(\underline{a}_j^N+\underline{b}_j^N\slashed{\Delta}_{\mathbb{S}^2}-\underline{c}_j^N\cdot\frac{2M}{r}\right)[r^2\pu]^{N-j}(r\phi),
\end{multline}
where $\underline{a}_j^N=(-1)^ja_j^N$, $\underline{b}_j^N=(-1)^jb_j^N$ and $\underline{c}_j^N=(-1)^jc_j^N$. 
\end{prop}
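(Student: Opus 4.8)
The statement to prove is Proposition~\ref{prop:gl:ducommute}, which is the $u$-analogue of Proposition~\ref{prop:gl:dvcommute}. The plan is to deduce it from Proposition~\ref{prop:gl:dvcommute} by exploiting the discrete symmetry of the wave equation under the interchange $u \leftrightarrow v$, rather than redoing the commutation computation from scratch.

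First I would observe that the wave operator written in the form \eqref{eq:box}, namely $\Box_g f = -\frac{\pu\pv f}{D} + \frac1r\pv f - \frac1r\pu f + \frac{1}{r^2}\slashed{\Delta}_{\mathbb S^2}f$, together with the relations $\pv r = D = -\pu r$, is invariant under the substitution $(u,v) \mapsto (-v,-u)$: this swaps $\pu \leftrightarrow -\pv$ while preserving $r$ (since $r$ depends only on $v-u$, cf.\ \eqref{eq:v-u=r}, \eqref{eq:geometry:tortoise}), hence preserves $D$. Therefore, if $\phi(u,v)$ solves $\Box_g\phi=0$, so does $\tilde\phi(u,v) := \phi(-v,-u)$. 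Applying \eqref{eq:gl:dvcommute} to $\tilde\phi$ and then unwinding the substitution: under $(u,v)\mapsto(-v,-u)$ we have $\pv \leftrightarrow -\pu$, while $r$, $D$, $M$ and $\slashed{\Delta}_{\mathbb S^2}$ are all left fixed. Thus $[r^2\pv]^N(r\tilde\phi)$ evaluated at $(u,v)$ corresponds to $(-1)^N[r^2\pu]^N(r\phi)$ evaluated at $(-v,-u)$, and the operator $\pu\pv$ is invariant. Carrying the substitution through every term of \eqref{eq:gl:dvcommute}, the left-hand side $\pu\pv[r^2\pv]^N(r\tilde\phi)$ becomes $(-1)^N\pv\pu[r^2\pu]^N(r\phi)$; the first term on the right, $-\frac{2DN}{r}\pv[r^2\pv]^N(r\tilde\phi)$, becomes $-\frac{2DN}{r}\cdot(-1)\cdot(-1)^N \pu[r^2\pu]^N(r\phi) = (-1)^N\frac{2DN}{r}\pu[r^2\pu]^N(r\phi)$; and the $j$-th summand, carrying a factor $[r^2\pv]^{N-j}(r\tilde\phi) \mapsto (-1)^{N-j}[r^2\pu]^{N-j}(r\phi)$, picks up $(-1)^{N-j}$. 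Dividing the whole identity by $(-1)^N$, the left side and the first right-hand term come out exactly as claimed in \eqref{eq:gl:ducommute}, and each summand acquires an extra factor $(-1)^{-j} = (-1)^j$ relative to \eqref{eq:gl:dvcommute}, i.e.\ $a_j^N \mapsto (-1)^j a_j^N = \underline a_j^N$, and likewise $b_j^N \mapsto \underline b_j^N$, $c_j^N \mapsto \underline c_j^N$. This is precisely the asserted formula.

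Alternatively — and this is the fallback if one is uneasy about the limits/domains in the reflection argument (the map $(u,v)\mapsto(-v,-u)$ does not preserve $\mathcal M$, but it preserves the coordinate ranges and all the identities are local and algebraic, so this is not a genuine obstruction) — one can simply reproduce the inductive proof of Proposition~\ref{prop:gl:dvcommute} given in Appendix~\ref{A1} verbatim with $\pv$ and $\pu$ interchanged, tracking the single sign difference that arises because $\pu r = -D$ whereas $\pv r = +D$. Each commutation of a factor $r^2\pu$ past the equation produces a term $2r(\pu r)\pu(\cdot) = -2Dr\,\pu(\cdot)$, which is where the $(-1)^j$ bookkeeping originates; the combinatorial identities defining $a_j^N, b_j^N, c_j^N$ are unaffected.

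I do not expect any serious obstacle here: the content is entirely a symmetry/bookkeeping reduction to the already-established Proposition~\ref{prop:gl:dvcommute}. The only point requiring a little care is getting the signs right — specifically, verifying that the net effect of the reflection on the $j$-th term is exactly one factor of $(-1)^j$ (one $(-1)^N$ from the top-order operator combination cancels against the overall $(-1)^N$ we divide by, leaving $(-1)^{-j}=(-1)^j$ from the lower-order insertion), and that the first right-hand term flips sign from $-\frac{2DN}{r}\pv$ to $+\frac{2DN}{r}\pu$ as stated. Since the paper has already done the hard combinatorial work in the appendix, this proposition should follow in a few lines.
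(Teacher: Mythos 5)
Your argument is correct, and your primary route is genuinely different from the paper's. The paper proves Proposition~\ref{prop:gl:ducommute} by rerunning the induction of Appendix~\ref{A1} with $\pu$ and $\pv$ interchanged, obtaining a new linear recurrence system for $\underline{a}_j^N,\underline{b}_j^N,\underline{c}_j^N$ (with the signs flipped on the $j-1$ terms because $\pu r=-D$), and then solving it by observing that the substitution $\underline{a}_j^N=(-1)^ja_j^N$, etc., maps it onto the already-solved system — i.e.\ exactly your fallback. Your main argument instead invokes the time-reflection isometry $(u,v)\mapsto(-v,-u)$ of the Schwarzschild exterior (equivalently $t\mapsto -t$, which fixes $r$, $D$ and $\slashed{\Delta}_{\mathbb S^2}$ and sends $\pu\mapsto-\pv$, $\pv\mapsto-\pu$), applies the established identity \eqref{eq:gl:dvcommute} to the pulled-back solution $\tilde\phi(u,v)=\phi(-v,-u)$, and divides by the overall $(-1)^N$; I have checked your sign bookkeeping ($(-1)^N$ on the left-hand side and on the top-order term, $(-1)^{N-j}$ on the $j$-th summand, hence a net $(-1)^j$) and it is right. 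The symmetry argument is shorter and makes the origin of the $(-1)^j$ conceptually transparent, at the price of having to verify that the reflection really is an isometry preserving $r$ and the coordinate ranges — which you address adequately; the paper's route is more pedestrian but requires no global/geometric input beyond the local computation already performed.
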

\begin{proof}
The proof follows along the same steps as the one of Proposition~\ref{prop:gl:dvcommute}. See the appendix~\ref{A1} for details.
\end{proof}
\begin{defi}[The generalised higher-order past Newman--Penrose constant]
Let $\phi$ be as in Corollary~\ref{cor:gl}, let $\tilde{\underline{a}}_j^{N,L}:=(-1)^j\tilde{a}_j^{N,L}$, let $x^{(L)}_0=1$, and define, for $1\leq i\leq L$,
\begin{equation}
\underline{x}_i^{(L)}=-\dfrac{1}{\tilde{\underline{a}}_0^{L-i,L}}\sum_{k=0}^{i-1}\tilde{\underline{a}}_{i-k}^{L-k,L}\underline{x}^{(L)}_k.
\end{equation}
We then denote
\begin{equation}\label{eq:NPpast}
\underline{\Phi}_{L}:=[r^2\pu]^L(r\phi_{L})+\sum_{i=1}^L(2M)^i\underline{x}_i^{(L)}[r^2\pu]^{L-i}(r\phi_{L}),
\end{equation}
and, moreover, define for any smooth function $f(r)=o(r^3)$ the generalised higher-order Newman--Penrose constant according to
\begin{equation}
\ILp{L}[\phi](u):=\lim_{u\to-\infty}f\pu\underline{\Phi}_L(u,v).
\end{equation}
\end{defi}
\begin{cor}
The quantity $\underline{\Phi}_{L}$ defined above satisfies the following approximate conservation law:
\begin{equation}\label{eq:gl:approx.v}
\pv(r^{-2L}\pu\underline{\Phi}_L)
=\sum_{j=0}^L\frac{D}{r^{3+2L}}(2M)^{j+1}[r^2\pu]^{L-j}(r\phi_{L})\left(-2(j+1)\underline{x}^{(L)}_{j+1}-\sum_{i=0}^j \underline{x}_i^{(L)} \underline{c}_{j-i}^{L-i}\right) .
\end{equation}
Here, we used the notation that $\underline{x}_i^{(L)}=0$ for $i\geq L$.
\end{cor}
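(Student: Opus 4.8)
The statement to prove is the Corollary asserting that $\underline{\Phi}_L$ satisfies the approximate conservation law \eqref{eq:gl:approx.v}. This is the ``$\pv$-direction'' analogue of the Corollary giving \eqref{eq:gl:approx.u} for $\Phi_L$, whose proof is stated to be ``a straight-forward computation'' and which in turn is obtained from Corollary~\ref{cor:gl} by the specific choice of the constants $x_i^{(L)}$. So the plan is to mirror that argument exactly, replacing $\pv$ by $\pu$, $a_j^N, b_j^N, c_j^N$ by their barred (sign-flipped) counterparts, and $x_i^{(L)}$ by $\underline{x}_i^{(L)}$, and to track carefully the one sign difference that appears in the ``transport'' term $\tfrac{2DN}{r}\pu[r^2\pu]^N(r\phi)$ of Proposition~\ref{prop:gl:ducommute} (which carries a $+$ sign, whereas the $\pv$ version carries a $-$ sign).

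Concretely, I would proceed as follows. First, write down the barred analogue of Corollary~\ref{cor:gl}: starting from \eqref{eq:gl:ducommute} for $N = L$ applied to $\phi_L$ (supported on $\ell = L$), the top-order term is $(\underline a_0^L + \underline b_0^L \slashed\Delta_{\mathbb S^2})[r^2\pu]^L(r\phi_L) = (a_0^L - b_0^L L(L+1))[r^2\pu]^L(r\phi_L) = \tilde{\underline a}_0^{L,L}[r^2\pu]^L(r\phi_L)$ with $\tilde{\underline a}_0^{L,L}=0$ since $a_0^L = L(L+1)$ and $b_0^L = 1$. Then form the combination $\underline{\Phi}_L = [r^2\pu]^L(r\phi_L) + \sum_{i=1}^L (2M)^i \underline x_i^{(L)}[r^2\pu]^{L-i}(r\phi_L)$ and compute $\pv\pu\underline{\Phi}_L$ using \eqref{eq:gl:ducommute} term by term. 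Collecting the coefficient of $(2M)^j \tfrac{D}{r^2}[r^2\pu]^{L-j}(r\phi_L)$ (not counting the $\tfrac{2M}{r}$-correction) gives $\sum_{i=0}^j \underline x_i^{(L)}\tilde{\underline a}_{j-i}^{L-i,L}$, which vanishes identically for $1\le j\le L$ precisely by the definition $\underline x_i^{(L)} = -\tfrac{1}{\tilde{\underline a}_0^{L-i,L}}\sum_{k=0}^{i-1}\tilde{\underline a}_{i-k}^{L-k,L}\underline x_k^{(L)}$; the $j=0$ term also vanishes because $\tilde{\underline a}_0^{L,L}=0$. What survives is exactly the $\tfrac{2M}{r}$-correction terms $-\sum_{i=0}^j \underline x_i^{(L)}\underline c_{j-i}^{L-i}$ together with the transport terms $+\tfrac{2(L-j)D}{r}\pu[r^2\pu]^{L-j}(r\phi_L)$ coming from the first line of \eqref{eq:gl:ducommute}.

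The final step is to recognize that the transport terms can be absorbed by rewriting everything in the divergence form $\pv(r^{-2L}\pu\underline{\Phi}_L)$. Indeed, $\pv(r^{-2L}\pu\underline{\Phi}_L) = r^{-2L}\pv\pu\underline{\Phi}_L - 2L D r^{-2L-1}\pu\underline{\Phi}_L$, and $\pu\underline{\Phi}_L$ decomposes into $\pu[r^2\pu]^L(r\phi_L) + \sum_i (2M)^i\underline x_i^{(L)}\pu[r^2\pu]^{L-i}(r\phi_L)$; the $-2LDr^{-2L-1}$ factor times these pieces exactly cancels the $+\tfrac{2(L-j)D}{r}$ transport terms after shifting indices, since $2L - 2(L-j) = 2j$ and the $j$-th term carries an extra $r^{-2j}$ relative to $[r^2\pu]^L$. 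Carrying this bookkeeping through, one arrives at
\begin{equation*}
\pv(r^{-2L}\pu\underline{\Phi}_L) = \sum_{j=0}^L \frac{D}{r^{3+2L}}(2M)^{j+1}[r^2\pu]^{L-j}(r\phi_L)\left(-2(j+1)\underline x_{j+1}^{(L)} - \sum_{i=0}^j \underline x_i^{(L)}\underline c_{j-i}^{L-i}\right),
\end{equation*}
with the convention $\underline x_i^{(L)}=0$ for $i\ge L$, which is \eqref{eq:gl:approx.v}. The factor $-2(j+1)\underline x_{j+1}^{(L)}$ (with the minus sign, in contrast to the $+2(j+1)x_{j+1}^{(L)}$ of \eqref{eq:gl:approx.u}) is precisely the residue of the $+$ sign in the transport term of Proposition~\ref{prop:gl:ducommute}, so the only genuine care needed is keeping this sign straight.

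The main obstacle is purely combinatorial: verifying that the coefficient $\sum_{i=0}^j \underline x_i^{(L)}\tilde{\underline a}_{j-i}^{L-i,L}$ of the non-$\tfrac{2M}{r}$ terms vanishes for all $1\le j\le L$ and that the index shifts in the transport-term cancellation are done correctly (one must be careful that $\tilde{\underline a}_{j-i}^{L-i,L}$ has the superscript $L-i$, matching the argument $[r^2\pu]^{L-i}(r\phi_L)$ after it has been hit by $[r^2\pu]^{j-i}$ worth of derivatives, which lands at order $L-j$ — the indices in the barred recursion are identical to those in the unbarred case since $\tilde{\underline a}_j^{N,L} = (-1)^j\tilde a_j^{N,L}$ and the recursion for $\underline x$ mirrors that for $x$). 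Since the unbarred version is asserted to be ``a straight-forward computation,'' and every barred quantity is defined as $(-1)^j$ times its unbarred counterpart with an otherwise identical recursion, the barred computation is formally the same modulo this tracked sign, so no new difficulty arises.
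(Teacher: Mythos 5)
Your proposal is correct and follows essentially the same route the paper intends: the corollary is the verbatim analogue of the $\pv$-direction computation, obtained by expanding $\pv\pu\underline{\Phi}_L$ via Proposition~\ref{prop:gl:ducommute}, cancelling the $\tilde{\underline a}$-coefficients through the defining recursion for $\underline{x}_i^{(L)}$, and absorbing the transport terms into the divergence $\pv(r^{-2L}\pu\underline{\Phi}_L)$. Your index bookkeeping (the residue $-2i$ becoming $-2(j+1)\underline{x}_{j+1}^{(L)}$ after the shift $j=i-1$) is right; the only slight imprecision is attributing the final sign flip solely to the $+$ in the transport term of \eqref{eq:gl:ducommute}, when in fact it arises from that flip \emph{together} with $\pv(r^{-2L})=-2LDr^{-2L-1}$ versus $\pu(r^{-2L})=+2LDr^{-2L-1}$, but this does not affect the validity of the computation.
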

In particular, under suitable assumptions on $\underline{\Phi}$, the generalised higher-order N--P constant defined above is conserved along $\mathcal{I}^-$:
\begin{equation}
 \ILp{L}[\phi](u) \equiv \ILp{L}[\phi] .
\end{equation}

\section{Boundary data on a timelike hypersurface \texorpdfstring{$\Gamma_R$}{Gamma-R}}\label{sec:general:timelike}
Equipped with the approximate conservation laws \eqref{eq:gl:approx.u}, \eqref{eq:gl:approx.v}, we now generalise the results of \S \ref{sec:timelikeconstantR}. More precisely, we construct higher $\ell$-mode solutions to \eqref{waveequation} (and derive estimates for them) that arise from polynomially decaying boundary data on a timelike hypersurface $\Gamma_R$ of constant area radius $r=R$ and the no incoming radiation condition. In particular, the present section contains the proof of Theorem~\ref{thm:intro:gtl}. The generalisation to boundary data on hypersurfaces on which $r$ is allowed to vary then proceeds as in \S\ref{sec:timelikenonconstantr} and is left to the reader.

Throughout the rest of this section, we shall assume that $R>2M$ is a constant and that $\phi$ is a solution to \eqref{waveequation} supported on a single angular frequency $(L,m)$, with $|m|\leq L$ and $L\geq0$. In the usual abuse of notation of \S\ref{sec:decompositionintoYlm}, we omit the $m$-index, that is, we write $\phi=\phi_{Lm}Y_{Lm}=\phi_L Y_{Lm}$.
\subsection{Initial/boundary data assumptions}\label{sec:gtl:ass}
 We prescribe smooth boundary data $\hat{\phi}_L$ on $\Gamma_R=\mathcal{M}_M\cap\{v=v_R(u)\}$ that satisfy, for $u\leq U_0<0$ and $|U_0|$ sufficiently large, the following upper bounds:
\begin{align}\label{eq:gtl:boundarydata1}
\left|T^n(r\hat{\phi}_L)\right|&\leq \frac{n! \cin}{R^L|u|^{n+1}},&&n=0,1,\dots,N+1,\\
\left|T^n(r\hat{\phi}_L-|u|T(r\hat{\phi}_L))\right|&\leq \frac{\ce}{R^L|u|^{n+1+\epsilon}},&& n=0,\dots,N'+1\label{eq:gtl:boundarydata2}
\end{align}
for some positive constants $\cin$, $\ce$, $\epsilon\in(0,1)$ and $N,N'\geq 0$  integers, and which also satisfy the lower bound
\begin{equation}
\left|T(r\hat{\phi}_L)\right|\geq \frac{\cin}{2R^L|u|^{n+1}}>0.\label{eq:gtl:boundarydata3}
\end{equation}
Moreover, we demand, in a limiting sense, that, for all $v$,
\begin{equation}\label{eq:gtl:noincomingradiation}
\lim_{u\to-\infty}\pv^n(r\phi_L)(u,v)=0,\quad n=1,\dots N+1.
\end{equation} 
This latter condition is to be thought of as the no incoming radiation condition.
\subsection{The main theorem (Theorem~\ref{thm:gtl})}
The main result of this section is
\begin{thm}\label{thm:gtl}
Let $R>2M$. Then there exists a unique solution $\phi_L\cdot Y_{Lm}$ to eq.\ \eqref{waveequation} in $\mathcal{D}_{\Gamma_R}=\mathcal{M}\cap\{v\geq v_R(u)\}$ that restricts correctly to $\hat{\phi}_L \cdot Y_{Lm}$ on $\Gamma_R$, $\phi_L|_{\Gamma_R}=\hat{\phi}_L$, and that satisfies \eqref{eq:gtl:noincomingradiation}.

Moreover, if $U_0$ is a sufficiently large negative number, then there exists a constant $C=C(2M/R,\cin,L)$, depending only on data, such that $\phi_L$ obeys the following bounds throughout $\mathcal{D}_{\Gamma_R}\cap\{u\leq U_0\}$:
\begin{align}\label{eq:thm:gtlmain}
\left|[r^2\pv]^{L-j}T^i(r\phi_L)\right|\leq \frac{C}{|u|^{i+1}}\min(r,|u|)^{-\min{(\tilde{j},N-i)}}
\end{align}
for all $j=-1,\dots, L$ and for all $i=0,\dots, N$,  and $\tilde{j}:=\max(j,0)$.

Finally, if  $N-2\geq N'\geq L+1$, then we have along any ingoing null hypersurface $\mathcal{C}_v$:
\begin{align}
[r^2\pu]^{L-j}(r\phi_L)(u,v)&=\mathcal{O}(r^{-1-j}),&&j=0,\dots, L,\label{eq:thm:gtl:notlimit}\\
[r^2\pu]^{L+1}(r\phi_L)(u,v)&=\tilde{C}+\mathcal{O}(r^{-1}+|u|^{-\epsilon})&&\label{eq:thm:gtl:limit}
\end{align}
for some constant $\tilde{C}$ which can be shown to be non-vanishing if $R/2M$ is sufficiently large.
\end{thm}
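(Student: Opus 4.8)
The plan is to follow the structure of the $\ell=1$ case (Theorem~\ref{thm:tl}) very closely, but with all bootstrap quantities organised according to the hierarchy of derivatives $[r^2\pv]^{L-j}T^i(r\phi_L)$. First, as in \S\ref{sec:tl:cutoffdata}, I would cut the boundary data off at $u=-k$, obtaining compactly supported data $\hat\phi_L^{(k)}$, and replace $\mathcal I^-$ by $\mathcal C_{u=-k}$ with vanishing data there; by Proposition~\ref{prop:existence:mixedboundary} this produces a sequence of finite solutions $\phi_L^{(k)}$. The existence and uniqueness of the limiting solution would then be established exactly as in Proposition~\ref{prop:tl:limit} (a Cauchy-sequence argument in $C^N$, splitting $\mathcal D_{\Gamma_R}$ into $u\le -n$, $-n\le u\le -k+1$, $-k+1\le u$, using the energy estimate on differences and uniform-in-$k$ bounds). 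So the heart of the matter is the uniform bound \eqref{eq:thm:gtlmain}, which I would prove by a single large bootstrap argument.

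The bootstrap assumptions should be of the form $|[r^2\pv]^{L-j}T^i(r\phi_L)|\le C_{\mathrm{BS}}^{(j,i)}|u|^{-i-1}\min(r,|u|)^{-\min(\tilde j,N-i)}$, imposed on the appropriate region $\Delta\subset X$. The improvement step would use the approximate conservation law \eqref{eq:gl:approx.u} for $\Phi_L$ (commuted with $T^i$), integrated first from $u=-k$ and then from $\Gamma_R$, exactly mirroring \S\ref{sec:subsubsec:timelikej=N}. The non-data boundary term $[r^2\pv]^{L-j}T^i(r\phi_L)|_{\Gamma_R}$ is handled — for the top-order $T$-derivative $i=N$ — by the $T$-energy estimate $\div J^T[T^N\phi_{\ell=L}]=0$ together with the divergence theorem \eqref{eq:divergencetheorem}, which controls $\sqrt r\,T^N\phi_L$ with sharp $u$-decay on $\Gamma_R$ (using the Poincaré inequality \eqref{poincare} for the angular term); for lower $i$ one can alternatively exploit the extra structure of the commuted wave equation as in Propositions~\ref{prop:tlbs1},~\ref{prop:tlbs2}. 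The decay in the second argument $\min(r,|u|)^{-\min(\tilde j,N-i)}$ is built up inductively in $j$: once $[r^2\pv]^{L-j+1}T^i(r\phi_L)$ is controlled, one integrates in $v$ from $\Gamma_R$ to gain one power of $r^{-1}$ (capped at $|u|^{-1}$ near $\mathcal I^+$, as in \eqref{eq:tl:BS1n}), which is precisely the Minkowskian "stationary solution decays like $r^{-\ell}$" intuition explained in \S\ref{sec:tl:comments}. As in \S\ref{sec:tl:removelargeness}, the smallness of $2M/R$ used at various steps can be removed by a Grönwall argument; I would keep $2M/R$ small for the exposition.

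For the final assertions \eqref{eq:thm:gtl:notlimit}, \eqref{eq:thm:gtl:limit}, the strategy is the generalisation of Proposition~\ref{prop:limit}. First, using the analogue of \eqref{eq:proplimit4} — the decay estimates on the differences $[r^2\pv]^{L-j}T^i(r\phi_L-|u|T(r\phi_L))$, which are proved exactly as in Proposition~\ref{prop:tl:pvT-T2} (energy estimate on $J^T$ of the difference, with an error term $\pv T^{i+1}(r\phi_L)$ that is subleading by Theorem~\ref{thm:gtl}'s own upper bounds) — one shows that $\lim_{u\to-\infty}|u|^{i+1}r^{2}\pv T^i[r^2\pv]^{L-1}(r\phi_L)$ exists and is independent of $v$, call the $i=0$ value $\mathcal L$; the $v$-independence comes from integrating \eqref{eq:gl:approx.u}. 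One then uses the identities $T(r\phi_L)=\te$-type relations, i.e. writes $[r^2\pu]^{L+1}(r\phi_L)$ as a polynomial combination of $[r^2\pv]^{a}T^b(r\phi_L)$ and lower terms via the commutation formulae of Propositions~\ref{prop:gl:dvcommute},~\ref{prop:gl:ducommute} (the $\ell=1$ instance being \eqref{eq:tl:pupu}), takes $u\to-\infty$ term by term, and collects the result as a fixed nonzero rational multiple of $\mathcal L$. Non-vanishing of $\mathcal L$ (hence of $\tilde C$) for $R/2M$ large follows from the lower bound \eqref{eq:gtl:boundarydata3} combined with the structure equations as in the $\ell=1$ proof. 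I expect the main obstacle to be bookkeeping: tracking the correct powers $\min(\tilde j,N-i)$ through the two-step integration of the $T^i$-commuted conservation law while keeping the bootstrap constants consistent across the entire $(j,i)$-lattice — in particular ensuring the boundary-term energy estimate closes by itself at top order (as it did in \eqref{proof5.4x}) for every $j$, and that the angular Poincaré inequality is applied uniformly. Once the $\ell=1$ template is in place, no genuinely new idea is needed, only the systematic induction, much of which is deferred to the appendix.
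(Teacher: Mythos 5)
Your overall architecture (cut-off, energy estimate for the boundary terms at top $T$-order, bootstrap on the conservation law \eqref{eq:gl:approx.u}, auxiliary difference estimates, limit $k\to\infty$, and the computation of $\tilde C$ as a rational multiple of $\mathcal L$ via the expansion of $[r^2T-r^2\pv]^{L+1}$) matches the paper's proof. But there is a genuine gap in the mechanism you propose for the sharp $r$-decay in \eqref{eq:thm:gtlmain}. You claim that once $[r^2\pv]^{L-j+1}T^i(r\phi_L)$ is controlled, "one integrates in $v$ from $\Gamma_R$ to gain one power of $r^{-1}$". This cannot work: writing $[r^2\pv]^{L-j}T^i(r\phi_L)(u,v)$ as its value on $\Gamma_R$ plus $\int_{v_R(u)}^v r^{-2}[r^2\pv]^{L-j+1}T^i(r\phi_L)\,\dd v'$, both the boundary term and the integral are of size $R^{-j}|u|^{-i-1}$ --- a quantity that is \emph{constant in $r$}, not decaying in $r$. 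Integration outward from a hypersurface of constant area radius $r=R$ can only ever produce smallness in the fixed parameter $R$, never decay as $r\to\infty$ along the outgoing cone; to get decay by integration in $v$ you would have to integrate inward from $\mathcal I^+$, which presupposes knowing that the limit there vanishes --- precisely what is to be proved. (The null-data sections \S\ref{sec:nl}, \S\ref{sec:moreg:null} do gain $r$-decay by integrating from $v=1$, but only because $r(u,1)\sim|u|\to\infty$ there; that does not transfer to $\Gamma_R$.)

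What the paper actually does is a two-stage argument: Steps II--III establish the estimates with $R^{-\max(j,0)}$ weights only (uniform in $r$), and Step VI then converts these into $\min(r,|u|)^{-\min(\tilde j,N-i)}$ weights purely algebraically, by solving the commuted wave equation \eqref{eq:gl:dvcommute} for the lower-order quantity: since $a_0^{n+1}-b_0^{n+1}L(L+1)\neq 0$ for $n+1<L$, the identity \eqref{eq:gtl:overview6} expresses $[r^2\pv]^{n+1}T^i(r\phi_L)$ in terms of $[r^2\pv]^{n+2}T^{i+1}(r\phi_L)$, $r^{-1}[r^2\pv]^{n+2}T^i(r\phi_L)$ and $r^{-2}[r^2\pv]^{n+3}T^i(r\phi_L)$, each carrying one extra unit of decay; an upwards-downwards induction in $(n,j)$ then yields \eqref{eq:thm:gtlmain}. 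This is the higher-$\ell$ analogue of \eqref{eq:waveequationtoestimatephi} in the $\ell=1$ proof, and it also explains the cap $\min(\tilde j,N-i)$ in the exponent: each application consumes one $T$-derivative, so the extractable $r$-decay is limited by the number $N-i$ of remaining commutations --- a feature your proposed $v$-integration mechanism cannot account for. Relatedly, bootstrapping directly on the full $\min(r,|u|)$-weighted norms across the whole $(j,i)$-lattice would be hard to close, since the conservation-law step only improves the top-order quantity $[r^2\pv]^LT^i(r\phi_L)$ and does not feed the $r$-weights back into the lower $j$. The remaining steps of your proposal are sound and coincide with the paper's.
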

\begin{rem}
A similar result holds true for more general timelike hypersurfaces $\Gamma_f$ (on which, in particular, $r$ is allowed to tend to infinity) as discussed in \S \ref{sec:timelikenonconstantr}. We leave the proof to the interested reader.
\end{rem}
\subsection{Overview of the proof}\label{sec:tl:overviewofproof}
\newcommand{\pvx}[1]{[r^2\pv]^{#1}}
We shall first give an overview over the proof of Theorem~\ref{thm:gtl}. 
 \renewcommand{\labelenumi}{\Roman{enumi}}
\begin{enumerate}
\item 
In a first step, we construct a sequence of smooth compactly supported data $\hat{\phi}^{(k)}_L$ as in \S \ref{sec:tl:cutoffdata}, which lead to solutions $\phi_L^{(k)}$ in the sense of Prop.~\ref{prop:existence:mixedboundary}. The purpose of this is that we will then be able to use the method of continuity (i.e.\ bootstrap arguments) on these finite solutions $\phi_L^{(k)}$ .
\item 
We then assume (in the form of a bootstrap assumption) that the estimate $\left|r^2\pv(r\phi^{(k)}_L)|_{\Gamma_R}\right|\leq \frac{C_{\mathrm{BS}}}{R^{L-1}|u|}$ holds on $\Gamma_R$. 
An application of an energy estimate will imply that $\left|r^2\pv(r\phi^{(k)}_L)\right|\leq C'(\mathrm{data})\cdot \frac{\sqrt{ C_{\mathrm{BS}}}}{R^L|u|}$ and, thus, improve this assumption.
 From this, we then inductively integrate equation \eqref{eq:gl:dvcommute} to obtain estimates for the boundary terms $\left|[r^2\pv]^{L-j}(r\phi^{(k)}_L)|_{\Gamma_R}\right|$, $j=0,\dots L$. 
 The same estimates hold upon commuting with $T^i$.
\item 
In a third step, we assume decay on $[r^2\pv]^{L}(r\phi^{(k)}_L)$ and integrate the approximate conservation law \eqref{eq:gl:approx.u} in $u$ and in $v$ (the integration in $v$ from $\Gamma_R$ outwards is why we need the estimates on the boundary terms from step II) to improve this decay, exploiting $2M/R$ as a small parameter. 
(We recall from \S \ref{sec:tl:removelargeness} that any smallness assumptions on $2M/R$ can be recovered by replacing the bootstrap argument with a Gr\"onwall argument.)
Integrating this estimate for $[r^2\pv]^{L}(r\phi^{(k)}_L)$ then $j$ times from $\Gamma_R$ and also commuting with $T$ establishes the following estimates:
\begin{align}\label{eq:gtl:overview3}
\left|[r^2\pv]^{L-j}T^i(r\phi^{(k)}_L)\right|\leq \frac{C}{|u|^{i+1}}R^{-\max{(j,0)}}
\end{align}
for $i=0,\dots, N$, $j=-1,0,\dots, L$, and for $C\neq C(k)$ a constant.
\item 
In a fourth step, we adapt the methods of steps II and III as in \S \ref{sec:tl:pvT-T2} to obtain estimates on the boundary terms $[r^2\pv]^{L-j}T^i (r\phi^{(k)}_L-|u| T(r\phi^{(k)}_L))|_{\Gamma_R}$ and, from these, establish the auxiliary estimates (modulo corrections arising from the cut-off terms, cf.\ \eqref{eq:proof:gtl:claim5}):
\begin{align}\label{eq:gtl:overview4}
\left|[r^2\pv]^{L-j}T^i\left(r\phi^{(k)}_L-|u|T(r\phi^{(k)}_L)\right)\right|\leq \frac{C}{|u|^{i+1+\epsilon}}R^{-\max(j,0)}
\end{align}
for $i=0,\dots,N'$ and $j=-1,0, \dots, L$. 
\item 
In a fifth step, we show, as in \S \ref{sec:tl:limit}, that the solutions $\phi_L^{(k)}$  tend uniformly to a limiting solution $\phi_L$, which still satisfies the estimates \eqref{eq:gtl:overview3} and \eqref{eq:gtl:overview4} above.
\item
In a sixth step, we use the estimate \eqref{eq:gtl:overview3}, together with the identities
\begin{multline}\label{eq:gtl:overview6}
\sum_{j=0}^N\left(a_j^N-L(L+1)b_j^N-c_j^N\frac{2M}{r}\right)\pvx{N-j}T^i(r\phi_L)\\
=[r^2\pv]^{N+1} T^{i+1}(r\phi_L)
+\frac{2D(N+1)}{r}\pvx{N+1}T^i(r\phi_L)-\frac{1}{r^2}\pvx{N+2}T^i(r\phi_L)
\end{multline}
implied by \eqref{eq:gl:dvcommute}, to obtain the improved estimates \eqref{eq:thm:gtlmain}, i.e.\ to convert the $R$-weights of \eqref{eq:gtl:overview3} into $r$-weights, using an "upwards-downwards induction".
\item
In a seventh step, we use equation \eqref{eq:gtl:overview6} to obtain a \textit{lower bound} for $\pvx{L}T(r\phi_L)|_{\Gamma}$ on $\Gamma$, provided that the lower bound \eqref{eq:gtl:boundarydata3} for $T(r\phi_L)|_{\Gamma}$ on data is specified. Using the estimate \eqref{eq:gtl:overview3} with $j=-1$, we can then obtain a global lower bound for $\pvx{L}T(r\phi_L)$, provided that  $R/2M$ is sufficiently large. 

Furthermore, and independently of this lower bound, we can use the auxiliary estimate \eqref{eq:gtl:overview4} to show that the following limit exists and is independent of $v$:
\begin{equation}
\lim_{u\to-\infty}|u|^2[r^2\pv]^LT(r\phi_L)(u,v)=:\mathcal{L}.
\end{equation}
By the lower bounds obtained before, this limit is non-vanishing.
\item
Finally, we prove \eqref{eq:thm:gtl:notlimit} and \eqref{eq:thm:gtl:limit} by writing
\begin{equation}
[r^2\pu]^{L+1-j}(r\phi_L)=[r^2T-r^2\pv]^{L+1-j}(r\phi_L),
\end{equation}
and by expressing each term in the expansion of the above expression in terms of $\mathcal{L}$, using the relations \eqref{eq:gtl:overview4} and \eqref{eq:gtl:overview6}.
\end{enumerate}

\subsection{Proof of Theorem~\ref{thm:gtl}}
We now prove Theorem~\ref{thm:gtl}, following the structure outlined above. The proof will be self-contained, with the exceptions of steps IV and V, for which we will refer to \S \ref{sec:timelikeconstantR} for details.

\begin{proof}
Throughout this proof, $C$ shall denote a constant that depends only on $\cin,\ce$, $ 2M/R, M, N, N', L$ (and, in particular, not on $k$) and can be bounded independently of $R$ for sufficiently large $R$. Moreover, $C$ is  allowed to vary from line to line. We will also assume $U_0$ to be sufficiently large, where this largeness, again, only depends on data, i.e.\ on $C$.
\subsubsection*{Step I: Cutting off the data}
We let $(\chi_k(u))_{k\in\mathbb{N}}$ be a sequence of smooth cut-off functions such that
\begin{align*}
\chi_k(u)=\begin{cases}1,&u\geq -k+1,
				\\0,&u\leq -k,
\end{cases}
\end{align*}
and cut-off the the highest-order derivative: $\chi_kT^{N+1}\hat{\phi}_L$. We then define, as in \S \ref{sec:tl:cutoffdata}, $\hat{\phi}_L^{(k)}$ to be the $N+1$-th $T$-integral of $\chi_kT^{N+1}\hat{\phi}_L$ from $-\infty$. Then $\hat{\phi}_L^{(k)}$  satisfies the following bounds:
\begin{align}\label{eq:gtl:boundarydata1cutoff}
\left|T^n(r\hat{\phi}^{(k)}_L)\right|&\leq \frac{n! \cin}{R^L|u|^{n+1}},&&n=0,1,\dots,N+1,\\
\left|T^n\left(r\hat{\phi}^{(j)}_L-|u|T(r\hat{\phi}^{(k)}_L)\right)\right|&\leq \frac{\ce}{R^L|u|^{n+1+\epsilon}}+C\theta_k\cdot\frac{\cin}{Rk^{n+1}},&& n=0,1,\dots,N'+1,\label{eq:gtl:boundarydata2cutoff}
\end{align}
where $\theta_k$ equals 1 if $u\geq -k$, and 0 otherwise. 

The boundary data $\hat{\phi}^{(k)}_L$, combined with the no incoming radiation condition \eqref{eq:gtl:noincomingradiation}, lead to unique solutions $\phi_L^{(k)}$, which vanish identically for $u\leq -k$, and which solve the finite initial/boundary value problem (in the sense of Prop.~\ref{prop:existence:mixedboundary}) where $\hat{\phi}^{(k)}_L$ is specified on $\Gamma_R$, and where $r\phi_L^{(k)}=0$ on $\{u=-k\}$. 

In steps II--IV below, we will show uniform-in-$k$ estimates on these solutions $\phi_L^{(k)}$, temporarily dropping the superscript $(k)$ and denoting them simply by $\phi_L$. We will re-instate this superscript in step V, where we will show that the solutions $\phi_L^{(k)}$ tend to a limiting solution as $k\to-\infty$.

\subsubsection*{Step II: Estimates on the boundary terms}
	\begin{claim}\label{claim1}
	Let $U_0$ be a sufficiently large negative number.	Then there exist constants $B^{(i)}$ such that
	\begin{equation}
	\left|r^2\pv T^i(r\phi_L)\right|(u,v_{\Gamma_R}(u))\leq\frac{B^{(i)}}{R^{L-1}|u|^{i+1}}
		\label{eq:proof:gtl:claim1}
	\end{equation} for $i=0,\dots, N$ and for all $u\leq U_0$.
	\end{claim}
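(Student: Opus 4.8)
\textbf{Proof proposal for Claim~\ref{claim1}.}

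The plan is to estimate the boundary term $r^2\pv T^i(r\phi_L)|_{\Gamma_R}$ using an energy estimate of the same flavour as the one used in the proof of Proposition~\ref{prop:tlbs3}, suitably adapted to higher $\ell$-modes using the Poincar\'e inequality \eqref{poincare}. Fix $i\leq N$ and set up a bootstrap assumption of the form $|r^2\pv T^i(r\phi_L)(u,v)|\leq \cbs^{(i)}R^{-(L-1)}|u|^{-i-1}$ on the region $\{-k\leq u'\leq u,\ v_{\Gamma_R}(u')\leq v'\leq v\}$; by compactness and continuity the set where this holds is nonempty, closed, and (once we improve it) open, hence everything. Since $T$ is Killing and $T^i\phi_{\ell=L}$ solves the wave equation, $\div J^T[T^i\phi_{\ell=L}]=0$, so applying the divergence theorem \eqref{eq:divergencetheorem} over the characteristic rectangle bounded by $\mathcal C_{u=-k}$ (where the solution vanishes), $\Gamma_R$, $\mathcal C_v$ and $\mathcal C_{u}$, and doing the sphere integrals, one gets
\begin{equation}
\int_{-k}^u \left( r^2(\pu T^i\phi_L)^2 + D\,\frac{L(L+1)}{1}\,(T^i\phi_L)^2\,\tfrac{1}{L(L+1)}\right)(u',v)\dd u' \leq \int_{\Gamma_R\cap\{-k\leq u'\leq u\}} 2r^2\, T^{i+1}\phi_L\cdot(T-2\pv)T^i\phi_L\, \dd u',
\end{equation}
where the angular term on the left is controlled below by $D\,(T^i\phi_L)^2$ after using $\int_{\mathbb S^2}|\sl_{\mathbb S^2}Y_{Lm}|^2 = L(L+1)\int_{\mathbb S^2}Y_{Lm}^2$ (i.e.\ the eigenvalue relation, which is the equality case of \eqref{poincare}). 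The right-hand side is estimated by feeding in the boundary data bound \eqref{eq:gtl:boundarydata1cutoff} for $T^{i+1}(r\phi_L)$ (which gives $R^{-L}|u'|^{-i-2}$) together with the bootstrap assumption for $\pv T^i(r\phi_L)$ (giving $R^{-(L-1)}|u'|^{-i-1}$) and the trivial bound for $T^i(r\phi_L)$ obtained by integrating the bootstrap assumption from $\Gamma_R$.

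Next, the left-hand side controls $\sqrt{r}\,|T^i\phi_L|$ pointwise: by the fundamental theorem of calculus (using that $r\phi_L$ vanishes on $u=-k$) and Cauchy--Schwarz,
\begin{equation}
|T^i\phi_L(u,v)|^2 \leq \left(\int_{-k}^u r^{-2}\dd u'\right)\left(\int_{-k}^u r^2(\pu T^i\phi_L)^2\dd u'\right),
\end{equation}
so that $Dr\,(T^i\phi_L)^2 \lesssim R^{-3}|u|^{-2i-2}\,A^{(i)}$ for a constant $A^{(i)}$ that depends on $\cin$ and \emph{linearly} (hence sublinearly under the square root) on $\cbs^{(i)}$. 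Translating to $T^i(r\phi_L)$ costs a factor $r^2$, giving $|T^i(r\phi_L)|\lesssim R^{-3/2}\sqrt{A^{(i)}}\,r^{1/2}|u|^{-i-1}$ on $\Gamma_R$, i.e.\ $\lesssim R^{-1}\sqrt{A^{(i)}}\,|u|^{-i-1}$. Then plug this into the wave equation \eqref{eq:l=1waveequation} — more precisely its $\ell=L$ analogue $\pu\pv(r\phi_L) = -\tfrac{D}{r^2}(L(L+1)+\tfrac{2M}{r})(r\phi_L)$ — and integrate in $u$ from $u=-k$ (where $\pv T^i(r\phi_L)$ vanishes), which converts the $r^{1/2}|u|^{-i-1}$ bound on $r\phi_L$ into an $r^{-3/2}\cdot r^2 = r^{1/2}$ ... actually into an $r^{-1/2}$-type bound after the $r^{-2}$ weight, integrating against $\dd r$ to give $\lesssim R^{-1/2}$; the net effect (as in \eqref{proof5.4x}) is
\begin{equation}
\left|\pv T^i(r\phi_L)|_{\Gamma_R}\right| \lesssim \frac{\sqrt{A^{(i)}}}{R^{L}|u|^{i+1}},
\end{equation}
which, crucially, closes by itself since $\cbs^{(i)}$ enters only under a square root: choosing $\cbs^{(i)}$ large relative to the data constants improves the bootstrap assumption and yields \eqref{eq:proof:gtl:claim1} with $B^{(i)}$ a numerical multiple of $\sqrt{A^{(i)}}$.

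The main obstacle I anticipate is purely bookkeeping rather than conceptual: tracking the precise powers of $R$ through the $r^2$-conversions between $r\phi_L$ and $\phi_L$ and through the $u$-integration of the wave equation, so that the bad bootstrap constant $\cbs^{(i)}$ consistently appears sublinearly (under the square root) or multiplied by a decaying power of $R$, never quadratically. This is exactly the structural feature that made the $\ell=1$ argument in \S\ref{sec:subsubsec:timelikej=N} work, and the higher-$\ell$ version just needs the Poincar\'e inequality to handle the angular energy; once that is in place the self-improvement of the boundary estimate goes through verbatim. (Since the remaining boundary terms $[r^2\pv]^{L-j}T^i(r\phi_L)|_{\Gamma_R}$ for $j\geq 1$ are presumably handled in a subsequent claim by integrating \eqref{eq:gl:dvcommute} inward from $\Gamma_R$, I would not address those here.)
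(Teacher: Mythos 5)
Your proposal follows essentially the same route as the paper's proof of Claim~\ref{claim1}: a bootstrap on the boundary quantity, the $T$-energy identity applied over the characteristic rectangle with the angular flux handled via the spherical eigenvalue relation, the fundamental theorem of calculus plus Cauchy--Schwarz to extract a $\sqrt{r}$-weighted pointwise bound on $T^i\phi_L$ in which the bootstrap constant appears only under a square root, and an integration of the ($N=0$) wave equation from $u=-k$ to recover $\pv T^i(r\phi_L)$ on $\Gamma_R$. The only discrepancies are the $R$-power bookkeeping slips you already flag (the correct exponents are $R^{-2L-1}$ in the energy flux and $R^{-L-1}$ in the final bound for $\pv T^i(r\phi_L)$ on $\Gamma_R$, so that multiplication by $r^2=R^2$ yields exactly the claimed $R^{-(L-1)}$), which do not affect the self-improving structure of the argument.
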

\begin{proof}
We fix $i\leq N$,  and assume \eqref{eq:proof:gtl:claim1} with $B^{(i)}$ sufficiently large as a bootstrap assumption. 

Recall the definition of the energy current \eqref{eq:def:J}. We apply the divergence theorem in the form \eqref{eq:divergencetheorem} to the identity (we abuse notation and omit the $Y_{Lm}$)
\begin{equation}
\div J^T[T^i\phi_L]=0
\end{equation}
as in equation \eqref{random1} in order to obtain
\begin{align}\begin{split}
&\int_{-k}^u r^2 \left(\pu T^i \phi_1\right)^2(u',v)\dd u'\\
 \leq &\int_{\Gamma_R\cap \{-k\leq u' \leq u\}}2r^2 \left(\left|T^{i+1}\phi_L\right|^2+2\left|T^{i+1}\phi_L\cdot \pv T^i\phi_L\right|\right)(u', v_{\Gamma_R}(u'))\dd u'.
\end{split}\end{align}
We estimate, on $\Gamma_R$:
\[\left|r\pv T^i\phi_L\right|=\left|\pv T^i(r\phi_L)-DT^i\phi_L\right|\leq \frac{B^{(i)}}{R^{2+L-1}|u|^{i+1}}+\frac{(i+1)!\cin}{R^{L+1}|u|^{i+1}}.\]
(In the above, we used the bootstrap assumption \eqref{eq:proof:gtl:claim1} and the boundary data assumption \eqref{eq:gtl:boundarydata1cutoff}.)
We thus obtain that
\begin{align*}
&\int_{-k}^u r^2 \left(\pu T^j \phi_1\right)^2(u',v)\dd u'\\
&\leq \int_{\Gamma_R\cap \{-k\leq u' \leq u\}}2R^2
\cdot \left(\left(\frac{(i+2)!\cin}{R^{L+1}|u|^{i+2}}\right)^2\right.+\left.2\frac{(i+2)!\cin\left(B^{(i)}+(i+1)!\cin\right)}{R^{2L+1}|u|^{2i+3}}\right)(u', v_{\Gamma_R}(u'))\dd u'\\
&\leq C\cdot \frac{B^{(i)}}{R^{2L+1}|u|^{2i+2}}
\end{align*}
for some constant $C$ as described in the beginning of the proof. 
We now apply the fundamental theorem of calculus and the Cauchy--Schwarz inequality to obtain
\begin{align*}
T^i\phi_L(u,v)=\int_{-k}^u \pu T^i\phi_{L}(u',v)\dd u'\leq \frac{1}{\sqrt{Dr}}\cdot\frac{\sqrt{CB^{(i)}}}{R^{L+\frac12}|u|^{i+1}}.
\end{align*}
Inserting this bound into \eqref{eq:gl:dvcommute} with $N=0$, and integrating the latter from $u=-k$, we find that 
\begin{align}\label{eq:claim1prr}
\left|\pv T^i(r\phi_L)\right|(u,v_{\Gamma_R}(u))\leq \int_{-k}^u\frac{\sqrt{CB^{(i)}}}{R^{L+\frac12}|u'|^{i+1}}\left(L(L+1)+\frac{2M}{r}\right)\frac{D}{r^{\frac32}}\dd u'\leq \frac{C\sqrt{B^{(i)}}}{R^{L+1}|u|^{i+1}}.
\end{align}
This improves the bootstrap assumption \eqref{eq:proof:gtl:claim1}, provided that $B^{(i)}$ is chosen sufficiently large.
\end{proof}
	\begin{claim}\label{claim2}
	Let $U_0$ be a sufficiently large negative number. Then there exists a constant $C$ such that 
	\begin{equation}
	\left|[r^2\pv]^{L-j}(r\phi_L)\right|(u,v_{\Gamma_R}(u))\leq\frac{C}{R^{j}|u|^{i+1}}\label{eq:proof:gtl:claim2}
	\end{equation}
	for $i=0, \dots, N$ and $j=0,\dots, L-1$, and for all $u\leq U_0$.
	\end{claim}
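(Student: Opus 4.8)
The plan is to prove \eqref{eq:proof:gtl:claim2} by a finite downward induction on $j$, starting from $j = L-1$ and descending to $j = 0$, with the base of the whole argument being the estimate \eqref{eq:proof:gtl:claim1} from Claim~\ref{claim1} (which handles the quantity $[r^2\pv]^{L-(L-1)}(r\phi_L)=r^2\pv(r\phi_L)$, i.e.\ the case formally "$j=L-1$" but with the derivative acting once). The mechanism for passing from one value of $j$ to the next is to integrate the commuted wave equation \eqref{eq:gl:dvcommute} along $\Gamma_R$ in the $u$-direction from $u=-k$, where the solution vanishes. First I would rewrite \eqref{eq:gl:dvcommute} with $N=L-1-j$ (or the appropriate index) as a transport equation for $\pv[r^2\pv]^{L-1-j}(r\phi_L)$ along ingoing null directions, then restrict it to $\Gamma_R$ and integrate; since $r=R$ is constant on $\Gamma_R$, all the $D/r$ and $2M/r$ weights are just harmless constants depending on $2M/R$, and the $\slashed\Delta_{\mathbb S^2}$ acting on $\phi_L$ is just multiplication by $-L(L+1)$.

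The key observation making this work is the telescoping/hierarchical structure of \eqref{eq:gl:dvcommute}: the right-hand side of the equation for $\pv[r^2\pv]^{N}(r\phi_L)$ involves only the lower-order quantities $[r^2\pv]^{N-j}(r\phi_L)$ for $j=0,\dots,N$, each carrying an explicit factor $(2M)^j$. So, in the downward induction, when I want to control $[r^2\pv]^{L-j}(r\phi_L)$ on $\Gamma_R$, I would start from the already-established estimates for $[r^2\pv]^{L-j'}(r\phi_L)|_{\Gamma_R}$ with $j' > j$ (i.e.\ higher-order $\pv$-derivatives), which by induction decay like $R^{-j'}|u|^{-i-1}$ or faster, plug these and the low-order term $r\phi_L|_{\Gamma_R}$ (bootstrapped, or already estimated) into the right-hand side of \eqref{eq:gl:dvcommute}, integrate once in $u$ to get $\pv[r^2\pv]^{L-1-j}(r\phi_L)|_{\Gamma_R}$, and then integrate once more in $u$ from $u=-k$ to recover $[r^2\pv]^{L-1-j}(r\phi_L)|_{\Gamma_R}$. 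Because the extra factor $(2M)^j$ in \eqref{eq:gl:dvcommute} pairs with the induction hypothesis's $R^{-j'}$ to produce powers of $2M/R$, one gains a factor of $R^{-1}$ at each descending step, which is exactly what is needed to match the claimed $R^{-j}$ in \eqref{eq:proof:gtl:claim2}; the $|u|$-decay is simply propagated since the integrations in $u$ against the weights $|u'|^{-i-2}$ converge and produce $|u|^{-i-1}$. The $T^i$-commuted version follows verbatim since $T$ commutes with $\Box_g$ and preserves the structure of \eqref{eq:gl:dvcommute}.

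The main technical point to be careful about — and the step I expect to be the real obstacle — is correctly tracking the $u$-integrations against the mixed $r$- and $|u|$-weights on $\Gamma_R$: although $r=R$ is fixed on $\Gamma_R$ itself, the source term on the right of \eqref{eq:gl:dvcommute} has already been fed with bulk estimates that may involve $\min(r,|u|)$-type weights from the interior, and one must check that restricting those to $\Gamma_R$ and integrating in $u$ still yields a convergent integral with the sharp $|u|^{-i-1}$ behaviour rather than a $\log|u|$ loss. Concretely, one should verify at each inductive step that the worst contribution on the right-hand side, after multiplication by the constant weights and integration $\int_{-k}^u(\cdot)\,\dd u'$, is dominated by $C R^{-j}|u|^{-i-2}$ before the final integration, so that one more integration gives $C R^{-j}|u|^{-i-1}$ with a genuinely finite constant $C$ independent of $k$ (the lower limit $-k$ contributes nothing since $\phi_L$ and all its derivatives vanish there). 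Apart from this bookkeeping — which is routine given the explicit constants $a_j^N, b_j^N, c_j^N$ and the fact that $\tilde a_0^{N,L}\neq 0$ guarantees no degenerate cancellation obstructs the hierarchy — the argument is a straightforward iteration of Claim~\ref{claim1}'s method, and I would present it as such, referring back to \eqref{eq:claim1prr} for the model computation.
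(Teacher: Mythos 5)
Your overall architecture --- an induction on the number of $[r^2\pv]$-commutations, with Claim~\ref{claim1} as base case and equation \eqref{eq:gl:dvcommute} integrated in $u$ from $u=-k$ at each step --- is indeed the paper's strategy. But two of your key mechanisms are wrong, and together they constitute a genuine gap. First, you propose to ``restrict \eqref{eq:gl:dvcommute} to $\Gamma_R$ and integrate,'' treating the $D/r$ and $2M/r$ weights as constants because $r=R$ on $\Gamma_R$. This cannot work: the quantity $\pv[r^2\pv]^{N}T^i(r\phi_L)$ at a point of $\Gamma_R$ is obtained by integrating the $\pu$-transport equation along the ingoing null cone $v=v_{\Gamma_R}(u)=\mathrm{const}$ from $u'=-k$, and along that cone $r(u',v)$ ranges from $\sim k$ down to $R$; it is precisely the $r$-decay of the integrand (converting $\dd u'$ into $\dd r$) that makes the integral converge, since the $|u'|^{-i-1}$ factor alone is not integrable (your remark that one integrates ``against the weights $|u'|^{-i-2}$'' has no source --- the induction hypothesis only supplies $|u'|^{-i-1}$). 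Consequently the induction must carry \emph{bulk} estimates valid for all $v\ge v_{\Gamma_R}(u)$ with an explicit $r$-weight; the paper's choice is $|\pv[r^2\pv]^n T^i(r\phi_L)|\lesssim r^{n-\frac12}R^{-L-\frac12}|u|^{-i-1}$, whose $r^{-1/2}$ seed comes from the pointwise bound on $T^i\phi_L$ extracted from the energy estimate in the proof of Claim~\ref{claim1}, and only the final evaluation happens at $r=R$.

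Second, your explanation of where the factor $R^{-j}$ comes from --- ``the extra factor $(2M)^j$ \dots pairs with the induction hypothesis's $R^{-j'}$ to produce powers of $2M/R$'' --- is not correct. The dominant, $j=0$ term on the right-hand side of \eqref{eq:gl:dvcommute} carries no power of $2M$ at all, and its coefficient $a_0^N-b_0^N L(L+1)=N(N+1)-L(L+1)$ does not vanish for $N<L$, so there is no $R^{-1}$ gain to be had from the $(2M)^j$ hierarchy. The $R^{-j}$ in \eqref{eq:proof:gtl:claim2} instead falls out of the bulk weight: the induction gives $|[r^2\pv]^{n+1}T^i(r\phi_L)|\lesssim r^{\frac32+n}R^{-L-\frac12}|u|^{-i-1}$, which at $r=R$ equals $R^{\,n+1-L}|u|^{-i-1}=R^{-j}|u|^{-i-1}$ with $n+1=L-j$. (A further small slip: for $j'>j$ the quantities $[r^2\pv]^{L-j'}$ are \emph{lower}-order, not higher-order, $\pv$-derivatives.) With the bulk-weighted induction put in place, the remainder of your outline goes through as in the paper.
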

\begin{proof}
In the proof of the previous claim, we have in fact shown that (cf.\ \eqref{eq:claim1prr})
\begin{equation}
\left|\pv T^i(r\phi_L)\right|(u,v)\leq \frac{C}{R^{L+\frac12}\sqrt{r}|u|^{i+1}}
\end{equation}
for all $v\geq v_{\Gamma_R}(u)$.
Let us assume inductively that 
\begin{equation}\label{induc1}
\left|\pv[r^2\pv]^n T^i(r\phi_L)\right|(u,v)\leq \frac{Cr^{-\frac12+n}}{R^{L+\frac12}|u|^{i+1}}
\end{equation}
for some fixed $n<\max(L-2,1)$ and for all $u\leq U_0, v\geq v_{\Gamma_R}(u)$, noting that we have already established the case $n=0$. 
We then insert this inductive assumption into \eqref{eq:gl:dvcommute} with $N=n+1$ and integrate the latter in $u$ to find
\begin{multline*}
\left|r^{-2(n+1)}\pv[r^2\pv]^{n+1} T^i(r\phi_L)\right|(u,v) \\\leq
\int_{-k}^u \sum_{j=0}^{n+1}\frac{D}{r^{2(n+1)+2}}\cdot\frac{Cr^{\frac32+n}}{R^{L+\frac12}|u'|^{i+1}}\dd u'\leq \frac{1}{r^{2(n+1)}}\cdot\frac{C r^{\frac{1}{2}+n}}{R^{L+\frac{1}{2}}|u|^{i+1}};
\end{multline*}
so \eqref{induc1} holds for $n+1$ as well. Evaluating on $\Gamma_R$ completes the proof.
\end{proof}
\subsubsection*{Step III: The main estimates}
	\begin{claim}\label{claim3}
	Let $U_0$ be a sufficiently large negative number. There exists a constant $C$ such that 
	\begin{align}\label{eq:proof:gtl:claim3}
	\left|[r^2\pv]^{L-j}T^i(r\phi_L)\right|(u,v)\leq \frac{C}{|u|^{i+1}}R^{-\max{(j,0)}}
	\end{align}
	for $i=0,\dots, N$, $j=-1,0,\dots, L$, and for all $u\leq U_0$, $v\geq v_{\Gamma_R}(u)$. 
	\end{claim}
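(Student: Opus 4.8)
\textbf{Proof proposal for Claim~\ref{claim3}.}

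The plan is to run a bootstrap (method of continuity) argument in the region $\{-k\leq u\leq U_0,\ v\geq v_{\Gamma_R}(u)\}$, taking as the bootstrap quantity the estimate \eqref{eq:proof:gtl:claim3} itself (with a sufficiently large constant $C_{\mathrm{BS}}$), and improving it using the approximate conservation law \eqref{eq:gl:approx.u} together with the boundary-term estimates of Claims~\ref{claim1} and~\ref{claim2}. Since $T$ commutes with the wave equation and with $[r^2\pv]^{L-j}$ (in the sense that commuting \eqref{eq:gl:approx.u} with $T^i$ is harmless), it suffices to explain the argument for a fixed $i$ and suppress the $T^i$'s, reinstating them at the end; the boundary data assumptions \eqref{eq:gtl:boundarydata1cutoff} provide the needed control on $T^{i}$-commuted quantities on $\Gamma_R$.

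The key steps, in order, are as follows. First, I would establish the $j=L$ estimate, i.e.\ sharp decay for $r\phi_L$ itself. Integrating the bootstrap assumption for $[r^2\pv]^L(r\phi_L)$ (the $j=0$ case) $L$ times from $\Gamma_R$ outward — picking up at each stage a boundary term $[r^2\pv]^{L-m}(r\phi_L)|_{\Gamma_R}$ controlled by Claim~\ref{claim2} — yields $|r\phi_L|\lesssim |u|^{-i-1}R^{-L}\min(r,|u|)^{\cdots}$; the point is that each integration from $\Gamma_R$ converts one factor of $r^2$ into roughly one power of $R$ on the boundary contribution and into $r$-decay on the bulk contribution. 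Second, and this is the heart of the matter, I would improve the $j=0$ bootstrap assumption for $[r^2\pv]^L(r\phi_L)$ by integrating \eqref{eq:gl:approx.u} twice: once from $u=-k$ in the $u$-direction (the RHS of \eqref{eq:gl:approx.u} carries a good $r^{-3-2L}$ weight times the quantities $[r^2\pv]^{L-j}(r\phi_L)$ already controlled by the bootstrap assumption, and each such term comes with a factor $2M$), and then from $\Gamma_R$ in the $v$-direction, where the boundary term $[r^2\pv]^L(r\phi_L)|_{\Gamma_R}$ is handled by Claim~\ref{claim2} (with $j=0$) — note this requires the constant there, so the order Claim~\ref{claim1}$\to$Claim~\ref{claim2}$\to$Claim~\ref{claim3} is essential. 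Exploiting $2M/R$ as a small parameter (the smallness being removable by a Gr\"onwall argument exactly as in \S\ref{sec:tl:removelargeness}), one recovers the bootstrap assumption with a factor $\tfrac12$, closing it. Third, having closed $j=0$, I would obtain $j=1,\dots,L$ by integrating $j$ times from $\Gamma_R$ as in the first step, and $j=-1$ (i.e.\ the quantity $\pv[r^2\pv]^L(r\phi_L)$, one extra $\pv$-derivative) directly from the $u$-integrated form of \eqref{eq:gl:approx.u}, which after multiplying by $r^{2L}$ gives precisely $\pv\Phi_L$ and hence $\pv[r^2\pv]^L(r\phi_L)$ up to lower-order terms already controlled.

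The main obstacle I anticipate is bookkeeping the interplay of $R$-weights and $r$-weights correctly at each integration, so that the bootstrap genuinely closes rather than merely reproducing itself: one must check that the bulk contributions from \eqref{eq:gl:approx.u} come with enough $2M/R$ smallness (resp.\ yield a Gr\"onwall-able structure) and enough $r$-decay that, after the double integration and the subsequent $j$-fold integration from $\Gamma_R$, the resulting constant is strictly smaller than the bootstrap constant. The subtle point — familiar from the analogous Propositions~\ref{prop:tlbs1}--\ref{prop:tlbs3} in the $\ell=1$ case — is that $C_{\mathrm{BS}}$ must appear on the improved side either multiplied by a decaying $R$-weight or (via the energy estimate feeding Claim~\ref{claim1}) only sublinearly, which is exactly why Step~II goes through an energy estimate for the top-order $T$-derivative rather than through the conservation law alone. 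Everything else is a routine iteration of the estimates \eqref{eq:v-u=r} and the integral computations of the type \eqref{eq:nl:stupidintegral1}.
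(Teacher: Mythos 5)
Your proposal is correct and follows essentially the same route as the paper: a bootstrap on the top-order quantity $[r^2\pv]^L T^i(r\phi_L)$ (the $j=0$ case), with the $j>0$ cases derived by $j$-fold $v$-integration from $\Gamma_R$ using the boundary estimates of Claims~\ref{claim1}--\ref{claim2}, the improvement coming from integrating \eqref{eq:gl:approx.u} first in $u$ from $u=-k$ and then in $v$ from $\Gamma_R$, the relation $|[r^2\pv]^L T^i(r\phi_L)|\leq 2|T^i\Phi_L|$ for large $R$, and the $j=-1$ case read off from the $u$-integrated conservation law; the $2M/R$-smallness is removable by Gr\"onwall exactly as you say.
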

\begin{proof}
In order to simplify the presentation, we will additionally assume that $R/2M$ is sufficiently large. This largeness assumption can be lifted by replacing bootstrap argument below by a  Gr\"onwall argument as in \S \ref{sec:tl:removelargeness}.

Let us fix $i\leq N$. We make the following bootstrap assumption:
\newcommand{\bs}{C_{\mathrm{BS}}^{(i)}}
\begin{equation}
\left|\pvx{L}T^i(r\phi_L)\right|(u,v)\leq \frac{\bs}{|u|^{i+1}},\label{claim4bs}
\end{equation}
where $\bs$ is a constant to be specified later.
Notice that, by integrating this up to $L$ times from $\Gamma$, estimating at each step the boundary term by \eqref{eq:proof:gtl:claim2}, this implies
\begin{equation}
\left|\pvx{L-j}T^i(r\phi_L)\right|(u,v)\leq \frac{C+\bs}{R^j|u|^{i+1}}\label{claim41}
\end{equation}
for $j=0,\dots, L$. In particular, if $R$  and $\bs$ are chosen sufficiently large, then we have
\begin{equation}
\left|\pvx{L}T^i(r\phi_L)\right|(u,v)\leq 2\cdot|\Phi_L|(u,v),
\end{equation}
where we recall the definition \eqref{eq:NPfuture} of $\Phi_L$.
We now plug the bounds \eqref{claim41} into the approximate conservation law \eqref{eq:gl:approx.u} and integrate the latter in $u$ from $u=-k$ to obtain that
\begin{equation}
\left|r^{-2L}\pv T^i\Phi_L\right|(u,v)\leq \frac{C\cdot\bs}{r^{2L+2}|u|^{i+1}}.\label{claim42}
\end{equation}
We then integrate this bound in $v$ from $\Gamma$, estimating the boundary term $T^i\Phi_L|_{\Gamma_R}$ using \eqref{eq:proof:gtl:claim1}, to obtain that
\begin{equation}
\left|T^i\Phi_L\right|(u,v)\leq \frac{C}{|u|^{i+1}}+\frac{C\cdot\bs}{R|u|^{i+1}}.
\end{equation} 
Finally, we can choose $R$ and $\bs$ large enough such that 
\begin{equation}
|\pvx{L}T^i(r\phi_L)|\leq|2T^i\Phi_L |\leq\frac{\bs}{2|u|^{i+1}}.
\end{equation}
This improves the bootstrap assumption \eqref{claim4bs} and thus proves \eqref{eq:proof:gtl:claim3} for $j=0$. The result for $j>0$ then follows in view of the estimates \eqref{claim41}, and the result for $j=-1$ follows from~\eqref{claim42}.
\end{proof}
\subsubsection*{Step IV: The auxiliary estimates}
As in the $\ell=1$-case (cf.\ \S\ref{sec:tl:pvT-T2}), we will need some auxiliary estimates in order to later be able to show that certain quantities attain limits on $\mathcal{I}^-$. These auxiliary estimates will be estimates on the differences $[r^2\pv]^{L}T^i \left(r\phi^{(k)}_L-|u| T(r\phi^{(k)}_L)\right)$. As in step II, we first need estimates on the boundary terms:
	\begin{claim}\label{claim4}
	Let $U_0$ be a sufficiently large negative number, and let $N'\leq N-2$. Then there exists a constant $C$ such that 
	\begin{equation}\label{eq:proof:gtl:claim4}
	\left|[r^2\pv]^{L-j}T^i (r\phi_L-|u| T(r\phi_L))\right|(u,v_{\Gamma_R}(u))\leq \frac{C}{R^{j}|u|^{i+1+		\epsilon}}+\theta_k\cdot\frac{C}{R^{j}k^{i+1}}
	\end{equation}
	 for $i=0,\dots,N'$ and $j=0, \dots, L$, and for all $u\leq U_0$.
	\end{claim}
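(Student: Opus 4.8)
\textbf{Proof proposal for Claim~\ref{claim4}.}

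The plan is to mirror the strategy of Step~II (Claims~\ref{claim1} and~\ref{claim2}), but now applied to the differences $D\phi_L := r\phi_L - |u|T(r\phi_L)$ rather than to $r\phi_L$ itself, exploiting the fact that these differences \emph{almost} solve the wave equation, with an error term that is controlled by the already-established estimates of Step~III. First I would record the commuted equation satisfied by $T^i(\phi_L - |u|T\phi_L)$: a short computation from \eqref{eq:box} (exactly as in \eqref{proof5.6yy} for $\ell=1$) gives that $\Box_g$ applied to $T^i(\phi_L - |u|T\phi_L)$ equals a multiple of $\tfrac1r \pv T^{i+1}(r\phi_L)$, which, by Claim~\ref{claim3} with $i$ replaced by $i+1$ and $j=L$ (legitimate since $N' \le N-2$, so $i+1 \le N'+1 \le N-1 < N$), obeys $|r^{-1}\pv T^{i+1}(r\phi_L)| \le C r^{-3}|u|^{-i-2}$. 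Hence the current $J^T[T^i(\phi_L - |u|T\phi_L)]$ has divergence bounded by $C R^{-2L}r^{-4}|u|^{-2i-4}$ after doing the sphere integrals (as in \eqref{eq:bulkestimate}).

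Next I would run the energy argument of Claim~\ref{claim1}: apply the divergence theorem in the form \eqref{eq:divergencetheorem} to this current over the region bounded by $\Gamma_R$, $\mathcal C_{u=-k}$ and an ingoing cone. The bulk term is integrable and subleading by the bound above; the $\Gamma_R$-flux term is estimated using the boundary-data assumptions \eqref{eq:gtl:boundarydata2cutoff} for $T^i$ and $T^{i+1}$ of the difference, together with the already-improved bootstrap bound on $\pv T^i$ of the difference. (The crucial structural point, as emphasised in \S\ref{sec:tl:pvT-T2}, is that the bootstrap constant enters only linearly, inside a square root, so the estimate closes.) This yields $\sqrt{Dr}\,|T^i(\phi_L - |u|T\phi_L)| \le C R^{-L-\frac12}\big(|u|^{-i-1-\epsilon} + \theta_k k^{-i-1}\big)$ after Cauchy--Schwarz and the fundamental theorem of calculus in $u$. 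Feeding this pointwise bound on $T^i(\phi_L - |u|T\phi_L)$ back into the commuted equation and integrating from $u=-k$ produces the boundary estimate $|\pv T^i(r\phi_L - |u|T(r\phi_L))|_{\Gamma_R}| \le C R^{-L-1}(|u|^{-i-1-\epsilon} + \theta_k k^{-i-1})$, i.e.\ \eqref{eq:proof:gtl:claim4} for $j=0$.

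Finally, for $j=1,\dots,L$ I would proceed inductively exactly as in Claim~\ref{claim2}: from the bound on $\pv T^i$ of the difference valid for all $v \ge v_{\Gamma_R}(u)$ (not just on $\Gamma_R$ — the energy estimate gives an $\sqrt{r}$-weighted bound on the whole cone), insert into the commuted form of \eqref{eq:gl:dvcommute} for successively higher $N$, integrate in $u$, and pick up one power of $r$ at each step; evaluating on $\Gamma_R$ converts the $r$-weights into the claimed $R^{-j}$-weights. The only new bookkeeping relative to Claim~\ref{claim2} is carrying the $\theta_k k^{-(i+1)}$ cut-off contribution through the induction, which is harmless since it behaves just like a constant in $u$. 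The main obstacle I anticipate is purely organisational: ensuring that the error term $\tfrac1r\pv T^{i+1}(r\phi_L)$ is genuinely subleading at every order of the $j$-induction (this is why $N'\le N-2$ is needed, giving two derivatives of room) and that one can convert some of its extra $|u|$-decay into $r$-decay — as in the $\delta$-trick around \eqref{proof5.6xx} in \S\ref{sec:tl:pvT-T2} — so that the bulk contribution to the energy is integrable in $v$. Beyond that, the argument is a routine, if lengthy, repetition of Steps~II--III, and I would simply refer to \S\ref{sec:timelikeconstantR} for the details that are identical.
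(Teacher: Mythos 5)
Your proposal follows essentially the same route as the paper's proof: an energy estimate for the current $J^T[T^i(\phi_L-|u|T\phi_L)]$, whose non-trivial bulk term is controlled by Claim~\ref{claim3}, followed by integration in $u$ of the commuted wave equation for the difference (whose error term $\pv T^{i+1}(r\phi_L)$ is again controlled by Claim~\ref{claim3}), and then an induction in the number of $[r^2\pv]$-factors exactly as in Claim~\ref{claim2}. The only slip is notational: the base estimate you derive (one factor of $r^2\pv$, weight $R^{-(L-1)}$ after multiplying by $r^2=R^2$ on $\Gamma_R$) is the case $j=L-1$ in the claim's indexing, not $j=0$, and the subsequent induction runs from $j=L-1$ down to $j=0$, with $j=L$ being just the cut-off boundary data assumption.
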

\begin{proof}
The proof for the case $j=L-1$ is similar to that of Claim~\ref{claim1}:
 We first assume \eqref{eq:proof:gtl:claim4} (with $j=L-1$) as a bootstrap assumption. 
 The main difference to the proof of Claim~\ref{claim1} is that we then apply the divergence theorem to 
\begin{align*}
\div J^T[T^i(\phi_L-|u|T\phi_L)]&=\Box_g(T^i(\phi_L-|u|T\phi_L))\cdot T(T^i(\phi_L-|u|T\phi_L))\\
&=-\frac{1}{Dr}\pv T^{i+1}(r\phi_L)\cdot\frac{1}{r}T^{i+1}(r\phi_L-|u|T(r\phi_L)),
\end{align*}
rather than to $\div J^T[T^i\phi_L]=0$. (Here, we used the expression \eqref{eq:box} for $\Box_g$.) This gives rise to a non-trivial bulk term. However, the estimates we established in Claim~\ref{claim3} provide sufficient bounds for this term.
 
Using the fundamental theorem of calculus and the Cauchy--Schwarz inequality, one then obtains an estimate on (cf.\ \eqref{proof5.6x})
$$\sqrt{r}T^i(\phi_L-|u|T\phi_L)\leq \frac{C}{R^{\frac{2L+1}{2}}}\left(\frac{1}{|u|^{i+1+\epsilon}}+\frac{1}{k^{i+1}}\right).$$ 
In order to translate this into an estimate on $\pv T^i(r\phi_L-|u|T(r\phi_L))$, we consider the wave equation satisfied by $\pu\pv T^i(r\phi_L-|u|T(r\phi_L))$,
\begin{nalign}
\pu\pv T^i(r\phi_L-|u|T(r\phi_L))
=-\frac{D}{r^2}\left(L(L+1)+\frac{2M}{r}\right)T^i(r\phi_L-|u|T(r\phi_L))+\overbrace{\pv T^{i+1}(r\phi_L)}^{\leq Cr^{-L-1}|u|^{-i-2}},
\end{nalign}
where we again note that the error term $\pv T^{i+1}(r\phi_L)$ can be bounded by the previous estimates (Claim~\ref{claim3}), and integrate in $u$. This improves the bootstrap assumption.

The general case $j\leq L-1$ then follows as in the proof of Claim~\ref{claim2}, noting that, when considering the wave equations for $\pu\pv[r^2\pv]^j T^i(r\phi_L-|u|T(r\phi_L))$, the error terms compared to \eqref{eq:gl:dvcommute} will always be given by $\pv[r^2\pv]^jT^{i+1}(r\phi_L)$, which we already control by Claim~\ref{claim3}.
 See also the proof of Proposition~\ref{prop:tl:pvT-T2} for more details. 
\end{proof}
Having obtained estimates on the boundary terms, we can now prove:	
	\begin{claim}\label{claim5}
	Let $U_0$ be a sufficiently large negative number, and let $N'\leq N-2$. Then there exists a constant $C$ such that 
	\begin{align}\label{eq:proof:gtl:claim5}
	\left|[r^2\pv]^{L-j}T^i(r\phi^{(k)}_L-|u|T(r\phi^{(k)}_L))\right|(u,v)\leq \frac{C}{R^{	\max(j,0)}}\left(\frac{1}{|u|^{i+1+\epsilon}}+\frac{1}{k^{i+1}}\right)
	\end{align} for $i=0,\dots,N'$, $j=-1,0, \dots, L$, and for all $u\leq U_0$, $v\geq v_{\Gamma_R}(u)$.
	\end{claim}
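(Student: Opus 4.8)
The plan is to rerun the argument of Claim~\ref{claim3} (Step~III), now for the differences $[r^2\pv]^{L}T^i\big(r\phi_L-|u|T(r\phi_L)\big)$ rather than for $[r^2\pv]^{L}T^i(r\phi_L)$, using the boundary estimates of Claim~\ref{claim4} in place of Claims~\ref{claim1}--\ref{claim2}; this is the higher-$\ell$ analogue of the passage from Proposition~\ref{prop:tl:pv} to Proposition~\ref{prop:tl:pvT-T2} in \S\ref{sec:tl:pvT-T2}. The structural point, exactly as there, is that although the function $r\phi_L-|u|T(r\phi_L)$ does not solve the wave equation, it solves it with source $\pv T(r\phi_L)$ (use $\pu|u|=-1$ and $Tr=0$, so that $T$ and $[r^2\pv]$ commute with the potential term), and hence, forming the $\Phi_L$-weighted combination of \eqref{eq:NPfuture} and applying $\pu$, one reproduces the right-hand side of the approximate conservation law \eqref{eq:gl:approx.u} with $(r\phi_L)$ replaced everywhere by $(r\phi_L-|u|T(r\phi_L))$, up to a single error term of schematic form $r^{-2L}\pv T^{i+1}\Phi_L$. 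This error is already controlled by Claim~\ref{claim3} and, carrying one extra $T$, has an additional power of $|u|^{-1}$; as in \S\ref{sec:tl:pvT-T2} I would trade part of this surplus for $r$-decay, bounding it by $Cr^{-2L-3+\delta}|u|^{-i-2+\delta}$ for some fixed $\delta\in(\epsilon,1)$, so that after the two integrations below it remains strictly subleading.

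Concretely, fix $i\leq N'$ and assume as a bootstrap assumption
\begin{equation*}
\left|[r^2\pv]^{L}T^i\big(r\phi_L-|u|T(r\phi_L)\big)\right|(u,v)\leq \frac{C_{\mathrm{BS}}^{(i)}}{|u|^{i+1+\epsilon}}+\theta_k\cdot\frac{C_{\mathrm{BS}}^{(i)}}{k^{i+1}},
\end{equation*}
with $C_{\mathrm{BS}}^{(i)}$ to be fixed large. Integrating this up to $L$ times from $\Gamma_R$, picking up at each step the boundary terms bounded by Claim~\ref{claim4}, gives the same estimate for all $j=0,\dots,L$ with the weight $R^{-j}$, and in particular (for $R/2M$ large) shows that $[r^2\pv]^{L}T^i(r\phi_L-|u|T(r\phi_L))$ is controlled, up to a factor $2$ and the already-bounded lower-order terms, by $T^i(\Phi_L-|u|T\Phi_L)$. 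Plugging these bounds into the differenced conservation law and integrating in $u$ from $u=-k$ yields $\big|r^{-2L}\pv T^i(\Phi_L-|u|T\Phi_L)\big|\leq Cr^{-2L-2}(|u|^{-i-1-\epsilon}+k^{-i-1})$, the cut-off remainders being tracked exactly as in \S\ref{sec:tl:cutoffdata}; multiplying by $r^{2L}$ and integrating in $v$ from $\Gamma_R$, with the boundary term estimated by Claim~\ref{claim4} (case $j=0$), gives a bound $\leq C(|u|^{-i-1-\epsilon}+k^{-i-1})+\frac{C\,C_{\mathrm{BS}}^{(i)}}{R}(|u|^{-i-1-\epsilon}+k^{-i-1})$. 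Since $C_{\mathrm{BS}}^{(i)}$ now appears only multiplied by a decaying $R$-weight, the bootstrap assumption improves once $R/2M$ and $C_{\mathrm{BS}}^{(i)}$ are large enough, which establishes \eqref{eq:proof:gtl:claim5} for $j=0$; the cases $j=1,\dots,L$ are the intermediate estimates already produced by integrating from $\Gamma_R$, and $j=-1$ follows from the $u$-integrated estimate for $r^{-2L}\pv T^i$ of the difference.

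The main obstacle is not conceptual but bookkeeping: one must carry the error term $r^{-2L}\pv T^{i+1}\Phi_L$ and the cut-off remainders $\theta_k\cdot k^{-i-1}$ through both integrations without damaging the $|u|^{-i-1-\epsilon}$ rate — this is the origin of the auxiliary exponent $\delta\in(\epsilon,1)$ and of the regularity requirement $N'\leq N-2$, since differencing already costs one $T$-derivative and the commutation of the conservation law costs another, so that $T^{i+2}(r\phi_L)$ must still be covered by Claim~\ref{claim3}. As in \S\ref{sec:tl:removelargeness}, the smallness assumption on $2M/R$ used above to close the bootstrap can be removed in favour of a Grönwall argument, but since Claim~\ref{claim5} asserts only upper bounds I would keep $2M/R$ small here for brevity.
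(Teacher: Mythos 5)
Your proposal is correct and follows essentially the same route as the paper: the paper's proof of Claim~\ref{claim5} likewise reruns the bootstrap of Claim~\ref{claim3} for the differenced quantities, uses Claim~\ref{claim4} for the boundary terms, and works with the differenced conservation law $\pu\big(r^{-2L}\pv T^i(\Phi_L-|u|T\Phi_L)\big)$ whose sole extra error term $r^{-2L}\pv T^{i+1}\Phi_L$ is controlled by Claim~\ref{claim3} (deferring the remaining bookkeeping, including the $\delta\in(\epsilon,1)$ trade of $u$-decay for $r$-decay and the origin of $N'\leq N-2$, to the proof of Proposition~\ref{prop:tl:pvT-T2}, exactly as you reconstruct it).
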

\begin{proof}
The proof is similar to that of Claim~\ref{claim3}, with the main modifications being that we now use Claim~\ref{claim4} in order to estimate the boundary terms. Furthermore, instead of the approximate conservation law \eqref{eq:gl:approx.u}, we now consider the equations
\begin{multline}
\pu(r^{-2L}\pv T^i(\Phi_L-|u|T \Phi_L))=\overbrace{r^{-2L}\pv T^{i+1}\Phi_L}^{\leq Cr^{-2L-2}|u|^{-i-2}}\\
+\sum_{j=0}^L\frac{D(2M)^{j+1}}{r^{2L+3}}[r^2\pv]^{L-j}T^i(r\phi_L-|u|T(r\phi_L)) \left(2(j+1)x^{(L)}_{j+1}-\sum_{i=0}^j x_i^{(L)} c_{j-i}^{L-i}\right) ,
\end{multline} in which we again control the error terms $r^{-2L}\pv T^{i+1}\Phi_L$ by Claim~\ref{claim3}. Indeed, they  decay faster near $\Gamma_R$ (that is, they have more $u$-decay).\footnote{Notice, however, that they do \emph{not} decay faster near $\mathcal{I}^-$. It is for this reason that the $R$-weights in \eqref{eq:proof:gtl:claim5} cannot directly be upgraded to $r$-weights.} See the proof of Proposition~\ref{prop:tl:pvT-T2} for more details.
\end{proof}

\subsubsection*{Step V: Taking the limit $k\to\infty$}
So far, we have proved uniform-in-$k$ estimates on the sequence of solutions $\phi_L^{(k)}$ (whose elements vanish on $u\leq -k$) constructed in Step I. We now show that these solutions converge uniformly to another solution $\phi_L$:
\begin{claim}\label{claim6}
The sequence $\{\phi_L^{(k)}\}_{k\in\mathbb{N}}$ tends to a uniform limit $\phi_L$ as $k\to\infty$,
\begin{equation}
\lim_{k\to\infty}||\phi_L^{(k)}-\phi_L 	||_{C^N(\mathcal{D}_{\Gamma_R})}=0.
\end{equation}
In fact, this limiting solution is the unique smooth solution that restricts correctly to the data of \S \ref{sec:gtl:ass}, and it satisfies for all $u\leq U_0$ and $v\geq v_{\Gamma_R}(u)$, for sufficiently large negative values of $U_0$, the following bounds for some constant $C$:
	\begin{align}
	\left|[r^2\pv]^{L-j}T^i(r\phi^{(k)}_L)\right|(u,v)\leq \frac{C}{|u|^{i+1}}R^{-\max{(j,0)}}
	\end{align}
	for $i=0,\dots, N$ and $j=-1,0,\dots, L$. Furthermore, if $N'\leq N-2$, then we also have
	\begin{align}\label{eq:proof:gtl:claim6}
	\left|[r^2\pv]^{L-j}T^i(r\phi^{(k)}_L-|u|T(r\phi^{(k)}_L))\right|(u,v)\leq \frac{C}{|u|^{i+1+\epsilon}}R^{	-\max(j,0)}.
	\end{align} for $i=0,\dots,N'$ and $j=-1,0, \dots, L$.
\end{claim}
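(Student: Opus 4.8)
The plan is to prove Claim~\ref{claim6} by a Cauchy-sequence argument in the $C^N$-norm on $\mathcal D_{\Gamma_R}$, exactly in the spirit of Proposition~\ref{prop:tl:limit} of the $\ell=1$-case, but now invoking the main and auxiliary estimates of Steps II--IV. Since each $\phi_L^{(k)}$ vanishes identically for $u\leq -k$ and attains the cut-off data $\hat\phi_L^{(k)}$ on $\Gamma_R$, for $n>k$ the difference $\Delta\phi_L:=\phi_L^{(n)}-\phi_L^{(k)}$ is itself a solution of \eqref{waveequation} (restricted to the $(L,m)$-frequency) with vanishing data on $\{u=-n\}$ and boundary data on $\Gamma_R$ supported in $\{-n\leq u\leq -k+1\}$, on which, by \eqref{eq:gtl:boundarydata1cutoff}, $|T^i(r\Delta\hat\phi_L)|\lesssim R^{-L}k^{-i-1}$.

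First I would split $\mathcal D_{\Gamma_R}$ into the three regions $\{u\leq -n\}$, $\{-n\leq u\leq -k+1\}$ and $\{u\geq -k+1\}$. In the first region there is nothing to prove since both solutions vanish. In the middle region, the difference $\Delta\phi_L$ has compactly supported boundary data, hence vanishing energy flux on $\mathcal I^-$, so Claim~\ref{claim3} (equivalently the already-established uniform estimates of Steps II--III, applied to $\Delta\phi_L$) yields $\|\Delta\phi_L\|_{C^N(\mathcal D_{\Gamma_R}\cap\{-n\leq u\leq -k+1\})}\leq C_1/k$ for a constant $C_1$ independent of $n,k$. For the third region, $\{u\geq -k+1\}$, I would rerun the energy estimate \eqref{eq:divergencetheorem} applied to $\div J^T[T^i\Delta\phi_L]=0$ over $\mathcal C_v\cap\{-n\leq u'\leq u\}$ with boundary contribution only from $\Gamma_R\cap\{-n\leq u'\leq -k+1\}$; bounding the $T^{i+1}\Delta\phi_L$ boundary terms by the data assumption and the $\pv T^i\Delta\phi_L$ boundary terms by the middle-region estimate gives, via the fundamental theorem of calculus and Cauchy--Schwarz, $|T^i\Delta\phi_L|\lesssim r^{-1/2}k^{-i-1}$, and then integrating \eqref{eq:gl:dvcommute} (with $N=0$, and its $T^i$-commuted version) from $u=-k+1$ converts this into $\|\Delta\phi_L\|_{C^N(\mathcal D_{\Gamma_R}\cap\{u\geq -k+1\})}\leq C_2/k$. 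Combining the three regions shows $\{\phi_L^{(k)}\}$ is Cauchy in $C^N$, hence converges uniformly to a limit $\phi_L$.

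By uniformity of the convergence, every uniform-in-$k$ bound of Claims~\ref{claim3} and~\ref{claim5} passes to the limit, which gives \eqref{eq:proof:gtl:claim6} and its companion for $[r^2\pv]^{L-j}T^i(r\phi_L)$; the limit restricts correctly to the original (uncut) data of \S\ref{sec:gtl:ass}, since $\hat\phi_L^{(k)}\to\hat\phi_L$ in $C^{N}$ on $\Gamma_R$, and satisfies the no incoming radiation condition \eqref{eq:gtl:noincomingradiation} because each $\phi_L^{(k)}$ has identically vanishing flux across $\mathcal I^-$ (being supported in $\{u\geq -k\}$) and the convergence is uniform down to $\mathcal I^-$ in the relevant weighted norms. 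Uniqueness follows as in Proposition~\ref{prop:tl:limit}: any two solutions with vanishing energy flux on $\mathcal I^-$ satisfying the data of \S\ref{sec:gtl:ass} have a difference that is a finite-energy solution with vanishing data on $\Gamma_R\cup\mathcal I^-$, and the energy estimate forces it to vanish.

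I expect the main obstacle to be the third region, $\{u\geq -k+1\}$: here one cannot simply apply Claim~\ref{claim3} to $\Delta\phi_L$ because $\Delta\phi_L$ does not have compactly supported boundary data on $\Gamma_R\cap\{u\geq -k+1\}$ (it has none there, but the relevant boundary piece sits at $u\leq -k+1$), so one must redo the energy argument carefully, tracking that the boundary flux only picks up the region $\{-n\leq u'\leq -k+1\}$ and that the higher $[r^2\pv]^j$-derivatives are recovered by iterating \eqref{eq:gl:dvcommute} rather than by a direct transport estimate — this is precisely the point where one uses that the error terms in the commuted wave equations are lower-order (more $u$-decaying) and already controlled. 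Everything else is a routine transcription of the $\ell=1$ arguments of \S\ref{sec:tl:limit}, which is why I would be content to refer to \S\ref{sec:timelikeconstantR} for the details, as the statement of the proof announces.
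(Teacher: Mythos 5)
Your proposal is correct and follows essentially the same route as the paper: the paper's proof of this claim consists of the single line "the proof proceeds, \emph{mutatis mutandis}, as the proof of Proposition~\ref{prop:tl:limit}", and your three-region Cauchy-sequence argument (vanishing region, middle region via the uniform estimates applied to the difference, outer region via the energy identity on $\Gamma_R\cap\{-n\leq u'\leq -k+1\}$ followed by integration of the commuted equations \eqref{eq:gl:dvcommute}) is precisely that transcription, including the correct identification of where the adaptation is non-trivial. No gaps worth flagging beyond what the paper itself glosses over.
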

\begin{proof}
The proof proceeds, \textit{mutatis mutandis}, as the proof of Proposition~\ref{prop:tl:limit}.
\end{proof}

\subsubsection*{Step VI: Proving sharp decay for $\pvx{L-j} T^i(r\phi_L)$ (Proof of \eqref{eq:thm:gtlmain})}

	\begin{claim}\label{claim7}
	Let $U_0$ be a sufficiently large negative number. There exists a constant $C$ such that the solution $\phi_L$ from Claim~\ref{claim6} satisfies
	\begin{align}\label{eq:proof:gtl:claim7}
	\left|[r^2\pv]^{L-j}T^i(r\phi_L)\right|(u,v)\leq \frac{C}{|u|^{i+1}}\min(r,|u|)^{-\min{(\tilde{j},N-i)}}
	\end{align}
	for all $j=-1,\dots, L$, $i=0,\dots, N$, and for all $u\leq U_0$, $v\geq v_{\Gamma_R}(u)$. Here, $\tilde{j}:=\max(j,0)$.
	\end{claim}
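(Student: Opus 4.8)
The plan is to prove Claim~\ref{claim7} by an "upwards-downwards induction" in the number of $\pv$-derivatives, exactly as announced in Step VI of the overview. The starting point is Claim~\ref{claim6}, which already gives the estimate \eqref{eq:proof:gtl:claim7} but with $R$-weights $R^{-\max(j,0)}$ in place of the desired $\min(r,|u|)^{-\min(\tilde j, N-i)}$; since $R>2M$ is fixed, these $R$-weights are harmless constants, so Claim~\ref{claim6} already establishes \eqref{eq:proof:gtl:claim7} in the degenerate case where we only ask for the bound $\tfrac{C}{|u|^{i+1}}$ without any $r$- or $|u|$-gain. The content of the claim is therefore the upgrade of constant weights to decaying $\min(r,|u|)$-weights, and the mechanism for this upgrade is the algebraic identity \eqref{eq:gtl:overview6} (which is just a rearrangement of \eqref{eq:gl:dvcommute}).

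First I would set up the induction. Fix $i\in\{0,\dots,N\}$; I will induct downward on $j$ starting from $j=L$ (i.e.\ from the lowest number of $\pv$-derivatives, which is the quantity $[r^2\pv]^{0}T^i(r\phi_L)=T^i(r\phi_L)$), but it is cleaner to phrase it as an induction on the number $n:=L-j$ of $\pv$-derivatives, going \emph{upward} in $n$ for fixed total budget, combined with a \emph{downward} pass in $i$ — hence "upwards-downwards". Concretely: the identity \eqref{eq:gtl:overview6}, written for $N=n$, expresses the combination $\sum_j\big(a_j^n - L(L+1)b_j^n - c_j^n\tfrac{2M}{r}\big)\pvx{n-j}T^i(r\phi_L)$ — whose leading term, since $a_0^n - L(L+1)b_0^n = n(n+1)-L(L+1)\neq 0$ for $n<L$, is a nonzero multiple of $\pvx{n}T^i(r\phi_L)$ — in terms of $\pvx{n+1}T^{i+1}(r\phi_L)$, $\pvx{n+1}T^i(r\phi_L)$ and $\pvx{n+2}T^i(r\phi_L)$, all of which carry \emph{more} $\pv$-derivatives (and, for the first, one more $T$-derivative). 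The strategy is to use this to trade an estimate on $\pvx{n}T^i$ against estimates on $\pvx{n'}T^{i'}$ with $n'>n$; the price paid is one power of $r$ on the right-hand side coming from the explicit $r^2$ and $r^{-1}$ factors, which is precisely what produces the gain. The base of the induction is $n=L$ (i.e.\ $j=0$, $\tilde j =0$), where $\min(\tilde j, N-i)=0$ and the bound is just Claim~\ref{claim6}; I would also need to handle $j=-1$, i.e.\ $\pvx{L+1}T^i$, which by \eqref{claim42}-type reasoning already decays like $\tfrac{C}{|u|^{i+1}}$ and needs no $r$-gain since $\min(\tilde j, N-i) = \min(0, N-i)=0$ there too (recall $\tilde j=\max(-1,0)=0$).

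The inductive step itself: suppose \eqref{eq:proof:gtl:claim7} holds for all pairs with strictly more $\pv$-derivatives than $n=L-j$ (and all admissible $i$). By \eqref{eq:gtl:overview6} with $N=n$, and using $n(n+1)\neq L(L+1)$, solve for $\pvx{n}T^i(r\phi_L)$:
\begin{equation*}
\big(n(n+1)-L(L+1)\big)\pvx{n}T^i(r\phi_L) = \pvx{n+1}T^{i+1}(r\phi_L) + \tfrac{2D(n+1)}{r}\pvx{n+1}T^i(r\phi_L) - \tfrac{1}{r^2}\pvx{n+2}T^i(r\phi_L) - \sum_{j\geq 1}\big(a_j^n - L(L+1)b_j^n - c_j^n\tfrac{2M}{r}\big)\pvx{n-j}T^i(r\phi_L).
\end{equation*}
The first three terms on the right are controlled by the inductive hypothesis; the $\tfrac{1}{r}$ and $\tfrac{1}{r^2}$ prefactors on the latter two supply an extra $r^{-1}$ each, which is what upgrades $\min(\tilde j',N-i)$ for $n+1$ to $\min(\tilde j,N-i)$ for $n$ after noting $\tilde j = \tilde j' + 1$ — here one must be careful that the $T^{i+1}$ term forces $i+1\le N$, which is exactly why the $r$-gain saturates at $\min(\tilde j, N-i)$ rather than $\tilde j$ (when $i=N$ there is no $T^{i+1}$ estimate available, so that term must instead be absorbed using Claim~\ref{claim6} with no gain). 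The sum over $j\geq1$ consists of terms $\pvx{n-j}T^i$ with \emph{fewer} $\pv$-derivatives — these are the terms that break a naive downward induction — but each comes without any $r^{-1}$ prefactor beyond those already in the coefficients, so they are bounded by the \emph{already-proven, weaker-budget} version of \eqref{eq:proof:gtl:claim7} for smaller $n-j$, which we may assume has been established in an earlier sweep; the key point is that the target bound for $\pvx{n}T^i$ has \emph{strictly more} $r$-decay than that for $\pvx{n-j}T^i$, so these low-derivative terms are strictly subleading and cause no circularity. This is the place where the ordering of the double induction matters, and organizing it correctly — proving the statement first for all $(n,i)$ with a fixed small value of $\min(\tilde j, N-i)$, then increasing it — is the main obstacle; the rest is bookkeeping with the explicitly-bounded coefficients $a_j^n,b_j^n,c_j^n$ from Proposition~\ref{prop:gl:dvcommute} and repeated use of $\pv r = D \le 1$ together with $r\gtrsim |u|$ versus $r \lesssim |u|$ to split into the $\min(r,|u|)$ cases. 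Finally, combining all sweeps over $i=0,\dots,N$ and $n=0,\dots,L+1$ yields \eqref{eq:proof:gtl:claim7} for all admissible $j,i$, which is \eqref{eq:thm:gtlmain}, completing Step VI.
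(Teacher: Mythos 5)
You have the right key identity (\eqref{eq:gtl:overview6}, i.e.\ \eqref{claim7n}), the right non-degeneracy condition ($n(n+1)\neq L(L+1)$ for $n<L$), and the right mechanism (the explicit $r^{-1}$ and $r^{-2}$ prefactors on the higher-derivative terms supply the gain, capped at $N-i$ because of the $T^{i+1}$ term). This is the paper's route. But the organization of the double induction — which you yourself flag as "the main obstacle" — is where your argument as written breaks, and it is not a bookkeeping issue.

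Your stated induction hypothesis ("the claim holds for all pairs with strictly more $\pv$-derivatives than $n$") does not cover the terms $[r^2\pv]^{n-j}T^i(r\phi_L)$, $j\geq 1$, on the right-hand side of the identity, and your reason for dismissing them is backwards: the target gain for $[r^2\pv]^{m}T^i$ is $\min(L-m,\,N-i)$, which \emph{increases} as $m$ decreases, so the lower-derivative terms require \emph{more} decay than the quantity you are estimating, not less. Since they enter with $O(1)$ coefficients (no extra $r^{-1}$), bounding them by any "weaker-budget" version with less gain than the target would spoil the estimate rather than being "strictly subleading". The fix is the one the paper implements in Lemma~\ref{claim7lem} and Sublemma~\ref{subclaim}: an outer induction on a level $n$ whose statement assigns gain $\min(j+1,N-i)$ to $[r^2\pv]^{n-j}T^i$ — i.e.\ progressively \emph{more} gain to quantities with fewer derivatives — so that at level $n+1$ the lower-derivative terms already carry at least the gain targeted for $[r^2\pv]^{n+1-j}T^i$; within each level one then sweeps downward in the derivative count, using the just-upgraded higher-derivative bounds together with the level-$n$ bounds for the lower-derivative ones. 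Your closing remark about "proving the statement first for all $(n,i)$ with a fixed small value of $\min(\tilde j, N-i)$, then increasing it" points at a workable reorganization of the same triangle of statements, but it is only named, not executed, and it is inconsistent with the downward-in-derivatives hypothesis you actually state. (Two minor further points: the paper's "downward" pass is in the derivative count $j$, not in $i$ — no ordering in $i$ is needed, since for $i=N$ the target gain is zero and Claim~\ref{claim6} already suffices, so one never needs a $T^{N+1}$ bound; and when you do invoke the $T^{i+1}$ term for $i<N$ you need it at the appropriate stage of the induction, which again requires the ordering to be pinned down.)
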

\begin{proof}
We will prove this inductively by showing the following lemma:
\begin{lemma}\label{claim7lem}
Let $n<L$. Then there exists a constant $C$ such that
\begin{equation}
\left|\pvx{n-j}T^i(r\phi_L)\right|(u,v)\leq \frac{C}{|u|^{i+1}}\min(r, |u|)^{-\min(j+1, N-i)}\label{inducclaim7}
\end{equation}
for all $j=0,\dots, n$, for all $i=0,\dots N$, and for all $u\leq U_0$, $v\geq v_{\Gamma_R}(u)$.
\end{lemma}
Indeed, once this lemma is shown for $n=L-1$, then \eqref{eq:proof:gtl:claim7} follows in view of Claim~\ref{claim3} (which already provides the sharp estimates for $n=L, L+1$). \end{proof}
\begin{proof}[Proof of Lemma~\ref{claim7lem}]
We first show \eqref{inducclaim7} for $n=0$. We derive from \eqref{eq:gl:dvcommute} with $N=0$ that
\begin{multline}\label{claim7n0}
-\frac{D}{r^2}T^i(r\phi_L)\left(L(L+1)+\frac{2M}{r}\right)=-\pv^2T^i(r\phi_L)+\pv T^{i+1}(r\phi_L)\\
=-\frac{1}{r^2}\pv(r^2\pv T^i(r\phi_L))+\frac{2D}{r}\pv T^i(r\phi_L)+\pv T^{i+1}(r\phi_L).
\end{multline}
We can assume, without loss of generality, that $i<N$, as \eqref{inducclaim7} follows directly from \eqref{eq:proof:gtl:claim3} if $i=N$. 
If $i<N$, then we can insert the estimates from \eqref{eq:proof:gtl:claim3} into \eqref{claim7n0} to find that
\begin{equation}
|T^i(r\phi_L)|\leq \frac{C}{r^2|u|^{i+1}}+\frac{C}{r|u|^{i+1}}+\frac{C}{|u|^{i+2}}.
\end{equation}
This establishes \eqref{inducclaim7} for $n=0$. 

Let us now assume that \eqref{inducclaim7} holds for some fixed $n<L-1$. We shall show that it then also holds for $n+1$. We derive from \eqref{eq:gl:dvcommute} the following generalisation of \eqref{claim7n0}:
\begin{multline}\label{claim7n}
D\left(a_0^{n+1}-b_0^{n+1}L(L+1)-c_0^{n+1}\frac{2M}{r}\right)[r^2\pv]^{n+1}T^i(r\phi_L)\\
=-\sum_{j=0}^n(2M)^{j+1}D\left(a_{j+1}^{n+1}-b_{j+1}^{n+1}L(L+1)-c_{j+1}^{n+1}\frac{2M}{r}\right)[r^2\pv]^{n-j}T^i(r\phi_L)\\
+[r^2\pv]^{n+2} T^{i+1}(r\phi_L)+\frac{2D(n+2)}{r}[r^2\pv]^{n+2}T^i(r\phi_L)-\frac{1}{r^2}\pvx{n+3}T^i(r\phi_L).
\end{multline}
Notice that, since $n+1<L$, the difference $a_0^{n+1}-b_0^{n+1}L(L+1)$ is non-zero. Therefore, estimating the terms in the second line of \eqref{claim7n} using the induction assumption, and the terms in the third line of \eqref{claim7n} using \eqref{eq:proof:gtl:claim3} (keeping in mind that $n+3\leq L+1$ and assuming as before that $i<N$), we obtain
\begin{multline}
\left|	[r^2\pv]^{n+1}T^i(r\phi_L)	\right| \leq \frac{C}{|u|^{i+1}}\min(r, |u|)^{-\min(1,N-i)}\\
+\frac{C}{|u|^{i+2}}+\frac{C}{r|u|^{i+1}}+\frac{C}{r^2|u|^{i+1}}\leq \frac{C}{|u|^{i+1}}\min(r, |u|)^{-\min(1,N-i)}.
\end{multline}
This establishes \eqref{inducclaim7} for $n+1$, \textit{restricted to} $j=0$. For $j>0 $, we use another induction, this time going down in derivatives:
	\begin{subclaim}\label{subclaim}
	There exists a constant $C$ such that
	\begin{equation}
	\left|[r^2\pv]^{n+1-j}T^i(r\phi_L)\right|(u,v)\leq \frac{C}{|u|^{i+1}}\min(r,|u|)^{-\min(1+j, N-i)}\label{inducclaim72}
	\end{equation}
	for all $j=0,\dots, n+1$, and for all $u\leq U_0$, $v\geq v_{\Gamma_R}(u)$.
	\end{subclaim}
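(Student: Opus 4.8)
The plan is to prove Sublemma~\ref{subclaim} by a second (descending) induction on $j$, running downward from the already-established base case $j=0$, which is exactly the estimate \eqref{inducclaim72} restricted to $[r^2\pv]^{n+1}T^i(r\phi_L)$ proved in the previous paragraph. So suppose \eqref{inducclaim72} holds for some fixed $j$ with $0\leq j\leq n$, i.e.\ we control $[r^2\pv]^{n+1-j}T^i(r\phi_L)$ to order $\min(r,|u|)^{-\min(1+j,N-i)}$; we want to deduce the same statement with $j$ replaced by $j+1$, i.e.\ to gain one further power of $\min(r,|u|)$ in the estimate for $[r^2\pv]^{n+1-(j+1)}T^i(r\phi_L)=[r^2\pv]^{n-j}T^i(r\phi_L)$.

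The mechanism is, once again, equation \eqref{eq:gl:dvcommute} (equivalently \eqref{eq:gtl:overview6}), now read at the level $N=n-j$: the leading coefficient $a_0^{n-j}-b_0^{n-j}L(L+1)=\tilde a_0^{n-j,L}$ is \emph{nonzero} since $n-j<L$, so we may solve for $[r^2\pv]^{n-j}T^i(r\phi_L)$ in terms of
\begin{itemize}\end{itemize}
the lower-order terms $[r^2\pv]^{n-j-m}T^i(r\phi_L)$ for $m\geq 1$ (each carrying a $(2M)^{m}$-prefactor and already estimated, with better decay, by Lemma~\ref{claim7lem}'s induction hypothesis for smaller indices — or trivially if $n-j-m<0$), and the terms $[r^2\pv]^{n-j+1}T^{i+1}(r\phi_L)$, $r^{-1}[r^2\pv]^{n-j+1}T^i(r\phi_L)$ and $r^{-2}[r^2\pv]^{n-j+2}T^i(r\phi_L)$ appearing on the right-hand side of \eqref{eq:gtl:overview6}. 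For the latter three, note $n-j+2\leq n+1$, so they are all controlled by the descending-induction hypothesis \eqref{inducclaim72} at index $j$ (for the first and third) and, for the middle term, by \eqref{inducclaim72} at index $j$ together with an extra $r^{-1}$; commuting with one extra $T$ converts a power of $\min(r,|u|)$-decay into a power of $|u|$-decay, which is at least as good. Collecting these, every term on the right-hand side of the rearranged equation decays like $\frac{D}{r^2}$ or better times $\frac{C}{|u|^{i+1}}\min(r,|u|)^{-\min(1+j,N-i)}$, i.e.\ like $\frac{C}{|u|^{i+1}}\min(r,|u|)^{-\min(2+j,N-i)}$ after absorbing the $\frac{D}{r^2}\lesssim\min(r,|u|)^{-2}$ (capped appropriately at $N-i$ whenever $i<N$, and with the case $i=N$ handled directly by \eqref{eq:proof:gtl:claim3} as in the base case). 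Dividing by the nonzero constant $\tilde a_0^{n-j,L}(1+\mathcal O(r^{-1}))$ — which is bounded away from zero for $r\geq r(U_0)$ large — yields \eqref{inducclaim72} with $j$ replaced by $j+1$, completing the descending induction and hence the Sublemma. Feeding the Sublemma's output at $j=n+1-m$ back into Lemma~\ref{claim7lem} closes its induction step at level $n+1$, and then taking $n=L-1$ and combining with Claim~\ref{claim3} (which supplies the sharp bounds for the top levels $n=L,L+1$) gives \eqref{eq:proof:gtl:claim7}, i.e.\ \eqref{eq:thm:gtlmain}.

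The one point requiring care — and the main obstacle — is the bookkeeping of \emph{which} estimate is being invoked for each term so that the two nested inductions (the ascending one on $n$ in Lemma~\ref{claim7lem}, the descending one on $j$ in the Sublemma) are genuinely well-founded and never circular: the terms $[r^2\pv]^{n-j+1}T^i(r\phi_L)$ etc.\ must be controlled by the \emph{current} level's descending hypothesis, the $(2M)^{m}$-terms by the \emph{previous} ascending level, and the $i=N$ edge case by \eqref{eq:proof:gtl:claim3}; one must also check that passing an extra $T$ through $[r^2\pv]^{\cdot}$ and re-capping the decay exponent at $N-i$ is consistent. Once this dependency graph is laid out the estimates themselves are routine: each step is a single integration-free algebraic rearrangement of \eqref{eq:gl:dvcommute} followed by the triangle inequality and the elementary bound $D/r^{2}\lesssim \min(r,|u|)^{-2}$.
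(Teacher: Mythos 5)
Your argument is correct and coincides with the paper's own proof of the Sublemma: a descending induction on $j$ driven by the algebraic rearrangement \eqref{claim7n} of the commuted equation at level $n-j$, exploiting $\tilde a_0^{n-j,L}\neq 0$, with the lower-order $(2M)^m$-terms estimated by the ascending hypothesis \eqref{inducclaim7} and the higher-order terms by the descending hypothesis. The only cosmetic difference is that at the first step $j=0\to 1$ the term $r^{-2}[r^2\pv]^{n+2}T^i$ falls outside the descending hypothesis (which only reaches exponent $n+1$) and must instead be taken from \eqref{eq:proof:gtl:claim3}, exactly as in your base case; otherwise there is no gap.
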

\begin{proof}[Proof of Sublemma~\ref{subclaim}]
We have already established \eqref{inducclaim72} for $j=0$. 
Let us now assume that it holds for $j\leq n$  fixed. 
We shall show that \eqref{inducclaim72} then also holds for $j+1$. 
We can restrict to $i< N-(j+2)$ since the result would be trivial otherwise.
Using \eqref{claim7n}, we obtain the estimate (writing $n+1-(j+1)=n-j$):
\begin{multline}
|[r^2\pv]^{n-j}T^i(r\phi_L)|\leq \sum_{k=0}^{n-j-1}\frac{C}{|u|^{i+1}}\min(r,|u|)^{-\min(k+2+j, N-i)}\\
+\frac{C}{|u|^{i+2}}\min(r,|u|)^{-\min(j+1, N-(i+1))}+\frac{C}{|u|^{i+1}r}\min(r,|u|)^{-\min(j+1, N-i))},
\end{multline}
where we used the induction assumption \eqref{inducclaim7} to estimate the first term on the RHS. This establishes \eqref{inducclaim72} for $j+1$ and, thus, proves the sublemma.
\end{proof}
Sublemma~\ref{subclaim}  proves \eqref{inducclaim7} for $n+1$ and, hence, completes the proof of Lemma~\ref{claim7lem}.
\end{proof}
\subsubsection*{Step VII: The limit $\mathcal{L}=\lim_{u\to-\infty}|u|^2[r^2\pv]^LT(r\phi_L)$}

Throughout the rest of the proof, we shall assume that $N-2\geq N'\geq L+1$, and that $\phi_L$ denotes the solution from Claim~\ref{claim6}.
	\begin{claim}
	The limit $\mathcal{L}(v):=\lim_{u\to-\infty}|u|^2[r^2\pv]^LT(r\phi_L)(u,v)$ exists and is independent of $v$. Moreover, along any ingoing null hypersurfaces of constant $v$,
	\begin{equation}
	|u|^2[r^2\pv]^LT(r\phi_L)(u,v)-\mathcal{L}=\mathcal{O}(r^{-1}+ |u|^{-\epsilon}).
	\end{equation}
	In fact, if the lower bound \eqref{eq:gtl:boundarydata3} is assumed, and if $R/2M$ is chosen large enough, then $\mathcal{L}\neq 0$.
	\end{claim}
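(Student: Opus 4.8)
The plan is to establish the existence, $v$-independence, and non-vanishing of $\mathcal{L}$ in three stages, mirroring the $\ell=1$ argument from Proposition~\ref{prop:limit} but now leaning on the approximate conservation law \eqref{eq:gl:approx.u} and the identities \eqref{eq:gtl:overview6}. First, for the existence of the limit, I would compute
\begin{align*}
\pu\left(|u|^2[r^2\pv]^LT(r\phi_L)\right)=-2|u|[r^2\pv]^LT(r\phi_L)+|u|^2[r^2\pv]^LT^2(r\phi_L)-|u|^2\pv([r^2\pv]^LT(r\phi_L)),
\end{align*}
and bound the right-hand side. The first two terms together are $\mathcal{O}(|u|^{-1-\epsilon})$ by the auxiliary estimate \eqref{eq:proof:gtl:claim6} with $j=0$ (rewriting $[r^2\pv]^L(T^2-|u|^{-1}\cdot 2T+\dots)$-type combinations, or more simply noting $[r^2\pv]^L(r\phi_L-|u|T(r\phi_L))$ and its $T$-commuted version decay with the extra $|u|^{-\epsilon}$). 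The third term is $\mathcal{O}(r^{-2})$: integrate \eqref{eq:gl:approx.u} (commuted with $T$) in $u$ from $\mathcal{I}^-$, using the main estimates \eqref{eq:proof:gtl:claim7} for $[r^2\pv]^{L-j}T(r\phi_L)$ with $j=0,\dots,L$ to bound the right-hand side of \eqref{eq:gl:approx.u} by $\lesssim Dr^{-3-2L}\cdot r^{2L}|u|^{-2}$, which integrates to $\lesssim r^{-2}|u|^{-2}$ after using $r\sim v-u$. Integrating $\pu(\cdots)$ in $u$ then gives both the existence of $\mathcal{L}(v)$ and the error estimate $|u|^2[r^2\pv]^LT(r\phi_L)(u,v)-\mathcal{L}(v)=\mathcal{O}(r^{-1}+|u|^{-\epsilon})$; the fact that the $r^{-2}$-bound on the third term is integrable in $u$ (giving an $\mathcal{O}(1/v)$ contribution) shows that $\mathcal{L}(v)$ is in fact independent of $v$, exactly as in \eqref{eq:tl:limdifference}.

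For the non-vanishing of $\mathcal{L}$, I would first establish a lower bound on the boundary $\Gamma_R$ and then propagate it inward. On $\Gamma_R$, the lower bound assumption \eqref{eq:gtl:boundarydata3} gives $|T(r\hat\phi_L)|\geq \tfrac{\cin}{2R^L|u|^2}$ (taking $n=1$). I would use the identity \eqref{eq:gtl:overview6} with $i=1$ and $N=0$, i.e.
\begin{align*}
\left(L(L+1)+\frac{2M}{r}\right)T(r\phi_L)=-r^2[r^2\pv]T^2(r\phi_L)\cdot\frac{D}{r^2}\cdot(-1)+\dots,
\end{align*}
or rather rewrite \eqref{claim7n0} at one derivative order: $-\frac{D}{r^2}T(r\phi_L)(L(L+1)+\frac{2M}{r})=-\frac{1}{r^2}\pv(r^2\pv T(r\phi_L))+\frac{2D}{r}\pv T(r\phi_L)+\pv T^2(r\phi_L)$, and integrate up in $r^2\pv$ to relate $[r^2\pv]^LT(r\phi_L)|_{\Gamma_R}$ to $T(r\phi_L)|_{\Gamma_R}$ modulo lower-order terms that carry an extra factor of $R^{-1}$ or extra $|u|$-decay (controlled by \eqref{eq:proof:gtl:claim7} and \eqref{eq:proof:gtl:claim6}); for $R/2M$ sufficiently large these corrections are a small fraction of the main term, yielding $|[r^2\pv]^LT(r\phi_L)|_{\Gamma_R}|\gtrsim \cin|u|^{-2}$ (up to a dimensional constant, with appropriate powers of $R$ absorbed — more precisely the combination $|u|^2[r^2\pv]^LT(r\phi_L)|_{\Gamma_R}$ is bounded below by a numerical multiple of $\cin$). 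Then I integrate the $T$-commuted \eqref{eq:gl:approx.u} in $v$ outward from $\Gamma_R$: since $|u|^2\pv([r^2\pv]^LT(r\phi_L))=\mathcal{O}(r^{-2})$ and $\int_{v_{\Gamma_R}(u)}^v r^{-2}\dd v'=\mathcal{O}(R^{-1})$, for large enough $R/2M$ the value of $|u|^2[r^2\pv]^LT(r\phi_L)$ stays within a small fraction of its boundary value throughout $\mathcal{D}_{\Gamma_R}$, hence $\mathcal{L}=\lim_{u\to-\infty}|u|^2[r^2\pv]^LT(r\phi_L)\neq 0$.

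I expect the main obstacle to be the careful bookkeeping of the $R$-weights versus $r$-weights in the two integration arguments — the inward propagation for existence of the limit, and the outward propagation from $\Gamma_R$ for the lower bound — making sure that in each step the "error" contributions (the lower-order $[r^2\pv]^{L-j}T(r\phi_L)$ terms in \eqref{eq:gl:approx.u} and \eqref{eq:gtl:overview6}, and the cut-off corrections from Step V) genuinely come with either an extra power of $R^{-1}$ (to be beaten by largeness of $R/2M$) or extra $u$- or $r$-decay (to be beaten by largeness of $|U_0|$), and never appear at the same strength as the main term. The identities \eqref{eq:gtl:overview6} are essential here because they are the only mechanism that converts between $T$-derivatives and $[r^2\pv]$-derivatives while keeping track of the numerical coefficients $a_0^N-b_0^N L(L+1)\neq 0$ for $N<L$, which is what makes the inversion at each order possible. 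Once the lower bound on $\Gamma_R$ is in hand and the outward $v$-integration is controlled, the conclusion $\mathcal{L}\neq 0$ is immediate; the error estimate $|u|^2[r^2\pv]^LT(r\phi_L)(u,v)-\mathcal{L}=\mathcal{O}(r^{-1}+|u|^{-\epsilon})$ is a byproduct of the existence argument and needs no separate work.
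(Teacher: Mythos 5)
Your proposal is correct and follows essentially the same route as the paper: the existence and $v$-independence of $\mathcal{L}$ via the computation of $\pu\bigl(|u|^2[r^2\pv]^LT(r\phi_L)\bigr)$ controlled by Claims~\ref{claim6} and~\ref{claim7}, and the non-vanishing via a lower bound on $\Gamma_R$ obtained from \eqref{claim7n0}/\eqref{claim7n} and propagated outward using the $\mathcal{O}(r^{-2})$-bound on $\pv[r^2\pv]^LT(r\phi_L)$. The only cosmetic difference is that you re-derive the $\mathcal{O}(r^{-2})$-bound by integrating \eqref{eq:gl:approx.u} from $\mathcal I^-$, whereas the paper simply quotes the $j=-1$ case of \eqref{eq:proof:gtl:claim7}; these are the same estimate.
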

\begin{proof}
We show that the limit exists by computing
\begin{multline}
\pu(|u|^2\pvx{L}T(r\phi_L))\\
=\underbrace{-2|u| \pvx{L}T(r\phi_L)+|u|^2\pvx{L}T^2(r\phi_L)}_{\leq C|u|^{-1-\epsilon}}-\underbrace{|u|^2\pv\pvx{L}T(r\phi_L)}_{\leq Cr^{-2}}.
\end{multline}
The first two terms together can be bounded by $|u|^{-1-\epsilon}$ in view of  estimate \eqref{eq:proof:gtl:claim6} from Claim~\ref{claim6}. 
The third term can be bounded by $r^{-2}$ in view of  estimate \eqref{eq:proof:gtl:claim7} from Claim~\ref{claim7}.
 This establishes the existence of the limit $\mathcal L(v)$. 
 The independence of $v$ follows directly from the bound on the third term.

It is left to show that this limit is not zero. 
For this, we will also need to establish a lower bound for $|u|^2[r^2\pv]^LT(r\phi_L)$ on $\Gamma_R$. 
First, we observe that, in view of the identity \eqref{claim7n0}, we have, on $\Gamma_R$, that
\begin{equation}
\left|R^L\cdot T(r\phi_L)(L(L+1))+2R^{L-1}[r^2\pv]T(r\phi_L)\right|\leq \frac{C}{|u|^{3}}+\frac{C}{R|u|^2}.
\end{equation}
Thus, if $R$ and $|U_0|$ are chosen sufficiently large, we have, as a consequence of the lower bound~\eqref{eq:gtl:boundarydata3}:
\begin{equation}
\left|r^2\pv T(r\phi_L)\right|(u,v_{\Gamma_R}(u))\geq \frac{\cin}{4R^{L-1}|u|^2}
\end{equation}
Similarly, we can now show inductively, using \eqref{claim7n} instead of \eqref{claim7n0}, that, say, 
\begin{equation}
\left|[r^2\pv]^{L-j} T(r\phi_L)\right|(u,v_{\Gamma_R}(u))\geq \frac{\cin}{2^{L-j+1}R^{j}|u|^2}.
\end{equation}
for $j=0,\dots, L-1$, provided that $R$ is chosen sufficiently large.\footnote{Notice that this leads to an extremely wasteful lower bound on $R$. One can improve this using a different approach, or just not show the lower bound. We do not analyse this issue any further.}  

Using the lower bound above for $j=0$ and then integrating the bound \eqref{eq:proof:gtl:claim7} for $j=-1$ from $\Gamma_R$, one obtains, provided that $R$ is chosen sufficiently large,
\begin{equation}
\left|[r^2\pv]^{L} T(r\phi_L)\right|(u,v)\geq \frac{\cin}{2^{L+2}|u|^2}
\end{equation} for all $v\geq v_{\Gamma_R}(u), u\leq U_0$.
This shows that $\mathcal L\neq 0$ and thus completes the proof. (Notice that this approach only works for the highest-order $v$-derivative (i.e.\ for $j=0$).)
\end{proof}
	We now show that various different limits can be computed from this limit.	
	\begin{lemma}\label{claim9}
	The following limits exist and satisfy the relations
	\begin{equation}\label{induclimits}
	\lim_{u\to-\infty}|u|^{1+i}r^j[r^2\pv]^{L-j}T^i(r\phi_L)=\mathcal{L}^{(i,j)}
	\end{equation}
	for $i=0,\dots, N'+1$ and $j=0,\dots, L$, provided that $i+j\leq N'+1$, where $\mathcal{L}^{(i,j)}$ are rational multiples of $\mathcal{L}$. 
	\end{lemma}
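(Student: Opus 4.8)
The plan is to prove Lemma~\ref{claim9} by a double induction that follows the structure already used in Step~VI (Lemma~\ref{claim7lem} and Sublemma~\ref{subclaim}), now tracking not merely the \emph{size} of the quantities $|u|^{1+i}r^j[r^2\pv]^{L-j}T^i(r\phi_L)$ but their \emph{limits} as $u\to-\infty$. The base case $i=1$, $j=0$ is exactly the preceding claim, which established $\lim_{u\to-\infty}|u|^2[r^2\pv]^LT(r\phi_L)=\mathcal L$; the case $i=0$, $j=0$ is then immediate by integrating $\pv[r^2\pv]^L(r\phi_L)$ in $u$ from $\mathcal I^-$ (where $[r^2\pv]^L(r\phi_L)$ vanishes by Claim~\ref{claim3} with $j=0$ read off as decay, together with the no incoming radiation condition \eqref{eq:gtl:noincomingradiation}), giving $\mathcal L^{(0,0)}=0$. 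More generally, the auxiliary estimate \eqref{eq:proof:gtl:claim6} in the form $[r^2\pv]^LT^i(r\phi_L)-|u|[r^2\pv]^LT^{i+1}(r\phi_L)=\mathcal O(|u|^{-i-1-\epsilon})$ yields, exactly as in \eqref{proof5.8x}, the chain of identities
\begin{equation}
\mathcal L^{(i,0)}=\frac{1}{i+1}\lim_{u\to-\infty}|u|^{i+2}[r^2\pv]^LT^{i+1}(r\phi_L),
\end{equation}
so that once $\mathcal L^{(1,0)}=\mathcal L$ is known, all the limits $\mathcal L^{(i,0)}$ with $j=0$ are determined as rational multiples of $\mathcal L$ (namely $\mathcal L^{(i,0)}=\frac{1}{i!}\mathcal L$ up to sign conventions, though the precise rational constant is not needed).

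The second direction of the induction is the one that lowers $j$, i.e.\ that passes from $[r^2\pv]^{L-j}$ to $[r^2\pv]^{L-j-1}$. Here I would use the commuted wave equation \eqref{eq:gtl:overview6} (equivalently \eqref{claim7n} with $n+1$ replaced by the relevant order), which expresses, modulo the non-vanishing factor $a_0^{m}-b_0^m L(L+1)$ for $m<L$, the quantity $[r^2\pv]^{m}T^i(r\phi_L)$ in terms of lower $\pv$-powers of $T^i(r\phi_L)$, of $[r^2\pv]^{m+1}T^{i+1}(r\phi_L)$, and of $\frac1r[r^2\pv]^{m+2}T^i(r\phi_L)$, $\frac1{r^2}[r^2\pv]^{m+3}T^i(r\phi_L)$. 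Multiplying this relation by $|u|^{1+i}r^{L-m}$ and taking the limit $u\to-\infty$: by Claim~\ref{claim7} every term carrying an extra power of $r^{-1}$ or $r^{-2}$, or an extra factor $\frac{2M}{r}$, or an extra $|u|^{-1}$ (from the $T^{i+1}$-term), either vanishes in the limit or is controlled; the surviving terms reproduce, up to explicit rational coefficients built from the $a^N_j,b^N_j,c^N_j,x^{(L)}_i$, a linear relation among the $\mathcal L^{(i',j')}$ with $j'\le j$ and $i'\le i$ that have already been shown to exist. This both proves existence of $\mathcal L^{(i,j)}$ and exhibits it as a rational combination of previously constructed limits, hence ultimately of $\mathcal L$. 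The quantitative error term $\mathcal O(r^{-1}+|u|^{-\epsilon})$ for the approach to each limit, if needed later, comes out of the same computation by keeping track of the remainders in Claims~\ref{claim6} and~\ref{claim7} rather than only their leading behaviour.

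The bookkeeping constraint $i+j\le N'+1$ enters precisely because the auxiliary estimates \eqref{eq:proof:gtl:claim6} are only available for $i\le N'$ (and one needs them at order $i+1$), and because \eqref{eq:gtl:overview6} trades one unit of $\pv$ for one unit of $T$, so pushing $j$ down by one costs one extra $T$-derivative of headroom; I would make this explicit by doing the induction on $i+j$ rather than on $i$ and $j$ separately, with the inductive hypothesis being \eqref{induclimits} for all pairs $(i',j')$ with $i'+j'<i+j$ (and $j'\le L$). The main obstacle I anticipate is not conceptual but organisational: ensuring at each application of \eqref{eq:gtl:overview6} that the terms one discards genuinely vanish in the relevant weighted limit — in particular the $\frac1{r^2}[r^2\pv]^{m+3}T^i(r\phi_L)$ term when $m+3$ approaches $L+1$, where Claim~\ref{claim3} gives only boundedness of $[r^2\pv]^{L+1}T^i(r\phi_L)$ after multiplication by $|u|^{i+1}$, so that the factor $r^{-2}r^{L-m}=r^{L-m-2}\le r^{-1}$ is exactly what is needed. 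One should therefore verify the edge cases $m=L-1, L-2$ by hand and then run the generic inductive step; the coefficient $a_0^{m}-b_0^{m}L(L+1)\neq0$ for all $m<L$ (noted after Proposition~\ref{prop:gl:dvcommute} and used in Lemma~\ref{claim7lem}) guarantees the linear relations can always be solved for the top term, so the induction never stalls.
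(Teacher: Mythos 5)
Your overall strategy is the one the paper itself uses: the auxiliary estimates \eqref{eq:proof:gtl:claim6} relate consecutive $T$-orders at fixed $\pv$-order, and the commuted equation \eqref{claim7n} lowers the $\pv$-order at the cost of one $T$-derivative, which is exactly where the constraint $i+j\leq N'+1$ comes from. However, there are two concrete gaps.

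First, the base case $\mathcal L^{(0,0)}=0$ is false, and the argument offered for it is fallacious: integrating from $\mathcal I^-$ shows only that $[r^2\pv]^L(r\phi_L)\to 0$, i.e.\ that the \emph{unweighted} limit vanishes; it says nothing about the weighted limit $\lim_{u\to-\infty}|u|\,[r^2\pv]^L(r\phi_L)$, which is precisely the decay coefficient. The correct value follows from \eqref{eq:proof:gtl:claim6} with $i=j=0$, which gives $[r^2\pv]^L(r\phi_L)=|u|[r^2\pv]^LT(r\phi_L)+\mathcal O(|u|^{-1-\epsilon})$ and hence $\mathcal L^{(0,0)}=\mathcal L\neq 0$. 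Your own recurrence $\mathcal L^{(i,0)}=\tfrac{1}{i+1}\mathcal L^{(i+1,0)}$ already contradicts $\mathcal L^{(0,0)}=0$ at $i=0$. Relatedly, when $T^i$ is commuted past the weight $|u|$ one picks up the term $(1+i)T^i(\cdot)$, which is the source of that factor $\tfrac{1}{i+1}$; the form of the auxiliary estimate you wrote down omits it, and the resulting constants are $\mathcal L^{(i,0)}=i!\,\mathcal L$, not $\tfrac{1}{i!}\mathcal L$. This is harmless for the lemma as stated (any rational multiple will do) but not for the explicit evaluation of $\tilde C$ in \eqref{combination}.

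Second, the induction cannot be organised on $i+j$ with the hypothesis ``all pairs of strictly smaller sum'': the surviving term $[r^2\pv]^{L-j}T^{i+1}$ produces $\mathcal L^{(i+1,j)}$, whose index sum \emph{equals} that of the target $\mathcal L^{(i,j+1)}$, so the induction stalls; one must instead induct on $j$ alone, establishing at each level all admissible $i\leq N'+1-j$ simultaneously before descending (this is what the paper does). Your claim that the surviving terms involve only $i'\leq i$ is wrong for the same reason. Finally, your dismissal of the $r^{-2}[r^2\pv]^{(\cdot)+2}T^i$ term via the bound $r^{L-m-2}\leq r^{-1}$ only covers the edge case $j=0$; for $j\geq 1$ the weighted term does \emph{not} vanish but converges to $\mathcal L^{(i,j-1)}$, which is harmless (it is a previously established limit at level $j-1$) but must be carried in the linear relation rather than discarded. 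All of this is fixable, but as written the base case and the induction ordering are genuine gaps.
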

\begin{proof}
We will first show \eqref{induclimits} for $j=0$. Indeed, we have $\mathcal L^{(0,0)}=\mathcal{L}^{(1,0)}=\mathcal L$, and, in view of the estimates \eqref{eq:proof:gtl:claim6} from Claim 6, we have the relations
\begin{align}
\lim_{u\to-\infty} |u|^2[r^2\pv]^L T(r\phi_L)&=\frac12\lim_{u\to-\infty} |u|^3[r^2\pv]^L T^2(r\phi_L)\\
=\frac16 \lim_{u\to-\infty}|u|^4[r^2\pv]^L T^3(r\phi_L)
&=\dots=\frac{1}{i!}\lim_{u\to-\infty} |u|^{1+i}[r^2\pv]^L T^i(r\phi_L)\nonumber,
\end{align}
provided that $i\leq N'+1$. We thus have established that, for all $i\leq N'+1$:
\begin{equation}
\mathcal{L}^{(i,0)}=i!\mathcal{L}.
\end{equation}

We now prove \eqref{induclimits} inductively in $j$. Without loss of generality, we may assume that $L>0$.
We assume that we have already established \eqref{induclimits} for $(i,j)$, for some fixed $j<L$ and for all $i\leq N'+1-j$. 
We then show that \eqref{induclimits} also holds for $(i,j+1)$, provided that $i+j+1\leq N'+1$: 
Indeed, if $i+j\leq N'$, we can appeal to equation \eqref{claim7n} (with $n+1$ replaced by $L-j-1$) to obtain
\begin{multline}
(a_0^{L-j-1}-b_0^{L-j-1}L(L+1))\lim_{u\to-\infty}|u|^{2+i}r^{j+1}[r^2\pv]^{L-j-1}T^i(r\phi_L)\\
=\lim_{u\to-\infty}|u|^{2+i+1}r^j[r^2\pv]^{L-j}T^{i+1}(r\phi_L)+2(L-j)\lim_{u\to-\infty}|u|^{2+i}r^j[r^2\pv]^{L-j}T^i(r\phi_L),
\end{multline}
where we used \eqref{eq:proof:gtl:claim7} to estimate the lower-order derivatives $[r^2\pv]^{L-j-2-k}$ for $k\geq 0$. We have thus established that
\begin{equation}
\mathcal{L}^{(i,j+1)}=\frac{1}{a_0^{L-j-1}-b_0^{L-j-1}L(L+1)}\cdot\left(\mathcal{L}^{(i+1,j)}+2(L-j)\mathcal{L}^{(i,j)}\right)
\end{equation}
for all $0\leq i\leq N'+1-(j+1)$, $0\leq j<L$. (Notice that, in this range of indices,  $a_0^{L-j-1}-b_0^{L-j-1}L(L+1)\neq 0$.) 
We leave it to the reader to derive explicit expressions for $\mathcal{L}^{(i,j)}$ from the recurrence relations above. 
\end{proof}

\subsubsection*{Step VIII: The limit $\lim_{u\to-\infty}[r^2\pu]^{L+1}(r\phi_L)$ (Proof of \eqref{eq:thm:gtl:notlimit} and \eqref{eq:thm:gtl:limit})}
\begin{claim}\label{calim9}
The limit $\lim_{u\to-\infty}[r^2\pu]^{L+1}(r\phi_L)=\tilde C\neq 0$ exists and can be computed explicitly in terms of the constants $\mathcal{L}^{(i,j)}$. Moreover, we have along any ingoing null hypersurface of constant $v$:
\begin{align}
[r^2\pu]^{L-j}(r\phi_L)(u,v)=\mathcal{O}(\min(r,|u|)^{-1-
j}),&&j=0,\dots, L,\label{eq:thm:gtl:notlimitclaim}\\
[r^2\pu]^{L+1}(r\phi_L)(u,v)=\tilde{C}+\mathcal{O}(r^{-1}+|u|^{-\epsilon}).&&\label{eq:thm:gtl:limitclaim}
\end{align}
\end{claim}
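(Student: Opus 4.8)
\textbf{Proof strategy for Claim~\ref{calim9}.}

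The plan is to mirror the final step of the $\ell=1$ argument (Proposition~\ref{prop:limit}), namely to expand $[r^2\pu]^{L+1}(r\phi_L)$ in terms of $r^2\pv$ and $r^2 T$ and to read off the limit from the quantities $\mathcal{L}^{(i,j)}$ produced in Step~VII. First I would write
\begin{equation}
[r^2\pu]^{L+1-j}(r\phi_L)=[r^2 T-r^2\pv]^{L+1-j}(r\phi_L),
\end{equation}
expand the $(L+1-j)$-th power into a sum of monomials of the form $[r^2 T]^{a}[r^2\pv]^{b}$ acting on $r\phi_L$ (being careful that $r^2 T$ and $r^2\pv$ do not commute, so each monomial has to be re-ordered modulo commutators; every commutator $[r^2\pv, r^2 T]$ reduces the total number of weighted derivatives and produces factors of $D$ and powers of $r$, so these terms are lower order by the estimates of Claim~\ref{claim7}). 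After reordering, each top-order monomial is $r^{2(L+1-j)}\,T^{a}\,[r^2\pv]^{b}(r\phi_L)$ with $a+b=L+1-j$; more precisely, I would first convert all the $r^2T$-weights that stand to the right of $\pv$-weights into the standard ordering $[r^2\pv]^{L-k}T^{i}$, which is exactly the form whose limits were computed in Lemma~\ref{claim9}. The point is that, after multiplying by the appropriate $|u|$- and $r$-weights, every term either converges to a rational multiple of $\mathcal L$ (via \eqref{induclimits}) or is $\mathcal O(\min(r,|u|)^{-1})$ or $\mathcal O(|u|^{-\epsilon})$ by Claims~\ref{claim7} and~\ref{claim6}.

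The estimates \eqref{eq:thm:gtl:notlimitclaim} for $j=0,\dots,L$ follow directly from this expansion together with Claim~\ref{claim7}: the term $[r^2\pv]^{L-j}(r\phi_L)$ is $\mathcal O(|u|^{-1}\min(r,|u|)^{-j})=\mathcal O(\min(r,|u|)^{-1-j})$, and all other monomials in $[r^2 T-r^2\pv]^{L-j}(r\phi_L)$ carry at least one $r^2T$ and hence gain an extra power of $|u|^{-1}$, so they are subleading; the commutator terms are subleading by the same counting. For \eqref{eq:thm:gtl:limitclaim}, one takes $j=-1$: here $[r^2\pv]^{L+1}(r\phi_L)$ does \emph{not} decay (it converges, by Claim~\ref{claim3}, since $[r^2\pv]^{L}(r\phi_L)\sim $const and one more $\pv$ derivative only produces the $\Phi_L$-limit), and precisely the monomials with the maximal number of $r^2 T$-factors balanced against $|u|$-weights survive in the limit. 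Collecting these, $\tilde C$ is a (computable) linear combination of the $\mathcal L^{(i,j)}$, hence a rational multiple of $\mathcal L$; since $\mathcal L\neq 0$ by Step~VII (under the lower bound \eqref{eq:gtl:boundarydata3} and for $R/2M$ large), and since one checks that the rational coefficient multiplying $\mathcal L$ is nonzero (this is the combinatorial identity generalising the computation $-5\mathcal L-\mathcal L+4\mathcal L-\mathcal L=-3\mathcal L$ of Proposition~\ref{prop:limit}), we conclude $\tilde C\neq 0$. The error terms $\mathcal O(r^{-1}+|u|^{-\epsilon})$ come from the quantitative versions of the convergences in Claims~\ref{claim6},~\ref{claim7} and the Step~VII claim, exactly as in \eqref{eq:prop8limit2}.

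The main obstacle is bookkeeping: correctly expanding the non-commuting operator $[r^2 T-r^2\pv]^{L+1-j}$, tracking which reordered monomials contribute at leading order, and verifying that the resulting rational coefficient of $\mathcal L$ in $\tilde C$ is genuinely nonzero rather than accidentally vanishing. In the $\ell=1$ case this was a short explicit computation; for general $L$ one needs either an inductive identity for these coefficients or an appeal to the Minkowskian ($M=0$) model solution — where $[r^2\pu]^{L+1}(r\phi_L)$ is literally conserved — to pin down the nonvanishing of the leading coefficient. I would handle it by induction on $L$, using the recurrence relations for $\mathcal L^{(i,j)}$ from Lemma~\ref{claim9} together with the relation \eqref{eq:gtl:overview6}, and defer the explicit closed-form expression for $\tilde C$ (as the paper does for $\mathcal L^{(i,j)}$) to the reader.
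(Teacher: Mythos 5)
Your expansion $[r^2\pu]^{L+1-j}(r\phi_L)=[r^2T-r^2\pv]^{L+1-j}(r\phi_L)$ into reordered monomials, with all commutator terms relegated to lower order, is exactly the paper's route: this is Lemma~\ref{lemma:pu=t-pv} (proved in the appendix by precisely the bookkeeping you describe), and both \eqref{eq:thm:gtl:notlimitclaim} and the expression \eqref{combination} for $\tilde C$ as a linear combination of the $\mathcal L^{(k-i,L+1-k)}$ then follow by inserting Claim~\ref{claim7} and Lemma~\ref{claim9}, as you say. Up to that point your proposal and the paper coincide.

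The genuine gap is the non-vanishing of $\tilde C$. You correctly flag it as the main obstacle, but then only gesture at two possible resolutions (an inductive combinatorial identity for the coefficients $\alpha^{(L+1,k)}_i$, or a comparison with the $M=0$ model) without carrying either out; for general $L$ the alternating sum in \eqref{combination} is not manifestly nonzero, and ``one checks that the rational coefficient is nonzero'' is precisely the step that needs an argument. The paper avoids the combinatorics entirely with a soft contradiction: if $\tilde C=0$, then together with \eqref{eq:thm:gtl:notlimitclaim} the data induced on $v=1$ satisfy $r\phi_L(u,1)=\mathcal O(|u|^{-L-1-\epsilon})$; feeding this improved initial decay through the approximate conservation law \eqref{eq:gl:approx.u} (integrating inductively in $u$, cf.\ Proposition~\ref{prop:prop.10.1}) yields $[r^2\pv]^L(r\phi_L)=\mathcal O(|u|^{-1-\epsilon})$, which forces $\mathcal L=0$ and contradicts Step VII. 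You should either adopt this contradiction argument or actually prove the combinatorial identity; as written, the nonvanishing of $\tilde C$ is asserted rather than established.
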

\begin{proof}
We will prove this by writing $\pu=T-\pv$ and using the following lemma:
\begin{lemma}\label{lemma:pu=t-pv}
Let $f$ be a smooth function, and let $n\in\mathbb{N}$. Then
\begin{nalign}\label{eq:lemmapp}
(r^2\pu)^n f=(r^2T-r^2\pv)^nf
=\sum_{k=0}^n (-1)^{n-k}\binom{n}{k}\left(\sum_{i=0}^{k-1}\alpha^{(n,k)}_i(r+\mathcal{O}(1))^i(r^2 T)^{k-i}\right)[r^2\pv]^{n-k}f
\end{nalign}
for some constants $\alpha^{(n,k)}_i$.
\end{lemma}
\begin{proof}
A proof is provided in the appendix~\ref{A2}.
\end{proof}
Indeed, applying this lemma with $n=L-j$ for $j\geq0$ immediately proves \eqref{eq:thm:gtl:notlimitclaim} upon inserting the bounds \eqref{eq:proof:gtl:claim7} from Claim 7.
On the other hand, applying the lemma with $n=L+1$ and recalling Lemma~\ref{claim9}, we find
\begin{equation}\label{combination}
\lim_{u\to-\infty}[r^2\pu]^{L+1}(r\phi_L)=\sum_{k=0}^{L+1}(-1)^{L+1-k}\binom{L+1}{k}\sum_{i=0}^{k-1}\alpha^{(L+1,k)}_i \mathcal{L}^{(k-i,L+1-k)}=\tilde C
\end{equation}
since we assumed that $N'\geq L+1$. Equation \eqref{eq:thm:gtl:limitclaim} then follows similarly.

Finally, we need to show that $\tilde C\neq 0$. 
Instead of explicitly computing the sum above, we proceed by contradiction: 
Suppose that \eqref{eq:thm:gtl:limitclaim} holds with $\tilde C= 0$. 
Then, in view of \eqref{eq:thm:gtl:notlimitclaim}, we have that, say, on $v=1$, $r\phi_L(u,1)=\mathcal O(|u|^{-L-1-\epsilon})$. 
Inductively inserting this estimate into \eqref{eq:gl:approx.u} and integrating in $u$ (cf.\ Proposition~\ref{prop:prop.10.1}), this implies that $[r^2\pv]^L(r\phi_L)(u,v)=\mathcal O(|u|^{-1-\epsilon})$, which would imply that $\mathcal L=0$, a contradiction.
\end{proof}

This concludes the proof of Theorem~\ref{thm:gtl}.
\end{proof}

\newpage
\section{Data on an ingoing null hypersurface \texorpdfstring{$\mathcal C_{v=1}$}{C(v=1)} I}\label{sec:general:null}
Having obtained an understanding of solutions arising from timelike boundary data in the previous section, we now aim to understand solutions arising from polynomially decaying initial data on an ingoing null hypersurface. 
We will, in the present section, focus on initial data which decay as predicted by Theorem~\ref{thm:gtl} of the previous section. 
While the present section completely generalises the methods of \S \ref{sec:nl} and contains a proof of Theorem~\ref{thm:intro:gnl}, it requires fast initial decay on the data (depending on $\ell$).
The case of more slowly decaying data will thus need to be treated differently and is discussed in \S \ref{sec:moreg:null}.

Throughout this section, we shall again assume that $\phi$ is a solution to \eqref{waveequation}, supported on a single angular frequency $(L,m)$ with $|m|\leq L$. In the usual abuse of notation, we omit the $m$-index, that is, we write $\phi=\phi_{Lm}\cdot Y_{Lm}=\phi_L\cdot  Y_{Lm}$.
\subsection{Initial data assumptions}\label{sec.9.1}
\newcommand{\cl}[1]{C_{\mathrm{in}}^{(L,#1)}}
Prescribe smooth characteristic/scattering data for \eqref{waveequation}, restricted to the angular frequency $(L,m)$,  that satisfy on $\mathcal C_{v=1}$
\begin{align}\label{eq:gnl:ass1}
	\lim_{u\to-\infty}[r^2\pu]^{L+1-i}(r\phi_L)(u,1)&=\cl{i},&& i=1,\dots,L,\\
	[r^2\pu]^{L+1}(r\phi_L)(u,1)&=\cl{0}+\mathcal{O}(r^{-\epsilon})\label{eq:gnl:ass2}
\end{align}
for some $\epsilon\in(0,1)$, where the $\cl{i}$ are constants, and which moreover satisfy for all $v\geq 1$:
\begin{equation}\label{eq:gnl:assNoIncoming}
\lim_{u\to-\infty}\pv^n(r\phi_L)(u,v)=0
\end{equation}
for $n=0,\dots, L+1$. We interpret this latter assumption as the no incoming radiation condition.

\subsection{The main theorem (Theorem~\ref{thm:gnl})}
Motivated by the previous Theorem~\ref{thm:gtl}, we will only consider the case where $\cl{i}=0$ for $i>0$ for now.  
The other cases, and further generalisations that do not require conformal regularity on the initial data, will be treated in \S \ref{sec:moreg:null}, as they have to be dealt with in a different way. 
Let us mention, however, that the proof of the present section still works if additionally $\cl{1}\neq 0$.

\begin{thm}\label{thm:gnl}
By standard scattering theory~\cite{DRSR18}, there exists a unique smooth scattering solution $\phi_L\cdot Y_{Lm}$ in $\mathcal{M}\cap\{v\geq 1\}$ attaining the data of \S\ref{sec.9.1}.

Let $U_0$ be a sufficiently large negative number. Assume moreover that $\cl{i}=0$ for all $i=1,\dots, L$ and that $\cl0\neq0$. Then, for all $u\leq U_0$, the limit of the radiation field on future null infinity is given by
\begin{align}\label{eq:thm:gnl limphi}
\lim_{v\to\infty}(r\phi_L)(u,v)=\frac{L!\cl{0}}{(2L+1)!|u|^{L+1}}+\mathcal{O}(|u|^{-L-1-\epsilon}),
\end{align}
and, throughout $\mathcal{D}=(-\infty,U_0]\times[1,\infty)$, the outgoing derivative of $r\phi_L$ satisfies, for fixed values of $u$, the following asymptotic expansion as $\mathcal{I}^+$ is approached:
\begin{align}\label{eq:thm:gnl asydvphi}
\pv(r\phi_L)(u,v)=\sum_{i=0}^L \frac{f_i^{(L)}(u)}{r^{2+i}}+\frac{ (-1)^LM\cdot B^*}{(L+1)!}\frac{\log r-\log|u|}{r^{3+L}}+\mathcal{O}(r^{-3-L}).
\end{align}
Here, the $f^{(L)}_i(u)$ are smooth functions of $u$ which satisfy $f^{(L)}_i(u)=\dfrac{(-1)^i\beta_{L-1-i}^{(L)}}{i!|u|^{L-i}}+\mathcal{O}(|u|^{-L+i-\epsilon})$ for $i<L$ and $f^{(L)}_i(u)=\dfrac{2M(-1)^L\tilde a^{L,L}_1\beta_1^{(L)}}{L!|u|}+\mathcal{O}(|u|^{-1-\epsilon})$ for $i=L$, and $
B^*=2(2x_1^{(L)}-c_0^L)\beta_0^{(L)}
$.   
The constants $\beta_i^{(L)}$ are given explicitly by the formulae\footnote{For the readers convenience, we recall that $\tilde a_1^{L,L}=-L^2$, that $2x_1^{(L)}=-L$, and that $c_0^L=1+2L(L+1)$. 
Finally, we recall that $a_0^k-b_0^kL(L+1)=k(k+1)-L(L+1)$. 
All these constants have been defined in \S \ref{sec:generalNP}.}
\begin{align}
\beta_i^{(L)}=\frac{L!\cl{0}}{(2L+1)!}\cdot\frac{i!}{L!}\prod_{k=0}^{L-i-1} ( a_0^k -b_0^kL(L+1))=(-1)^{L-i}\frac{(2L-i)!\cl{0}}{(2L+1)!}.
\end{align}

Moreover, the quantity $\Phi_L$ defined in eq.\ \eqref{eq:NPfuture} has the expansion
\begin{equation}\label{eq:thm:gnl asyPhi}
\pv\Phi_L=\frac{ M\cdot B^*(\log r-\log|u|)}{r^{3}}+\mathcal{O}(r^{-3}),
\end{equation}
and, in particular, the logarithmically modified Newman--Penrose constant is finite and conserved:
\begin{equation}\label{eq:thm:gnl NP}
I_{\ell=L}^{\mathrm{future},\frac{\log r}{r^3}}[\phi](u):=\lim_{v\to\infty}\frac{r^3}{\log r}\pv\Phi_L(u,v)=M\cdot B^*\neq 0.
\end{equation}

\end{thm}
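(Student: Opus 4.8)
The proof of Theorem~\ref{thm:gnl} will follow the same two-phase strategy as the proof of Theorem~\ref{thm:nl} in \S\ref{sec:nl}: first establish sharp asymptotics for $r\phi_L$ and its transversal derivatives $[r^2\pu]^{j}(r\phi_L)$, then feed these into the approximate conservation law \eqref{eq:gl:approx.u} to obtain the asymptotics of $\pv\Phi_L$ and of $\pv(r\phi_L)$. The first step is the one where the hypotheses $\cl{i}=0$ for $1\le i\le L$ are used decisively: because of the Minkowskian conservation law $\pv(r^{-2L-2}[r^2\pu]^{L+1}(r\phi_L))=0$ being only approximate, integrating \eqref{eq:gl:ducommute} (Proposition~\ref{prop:gl:ducommute}) with $N=L$ in $v$ from $\mathcal C_{v=1}$, starting from \eqref{eq:gnl:ass2}, propagates the limit $\lim_{u\to-\infty}[r^2\pu]^{L+1}(r\phi_L)=\cl0$ along ingoing null hypersurfaces (up to the $\mathcal O(r^{-\epsilon})$ error, plus $\mathcal O(r^{-2L-2})\cdot\text{(polynomial in $r$)}$ corrections controlled by the no-incoming-radiation condition \eqref{eq:gnl:assNoIncoming} and an energy estimate as in Corollary~\ref{cor:almostsharp} generalised via the Poincar\'e inequality \eqref{poincare}). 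Then one integrates $L+1$ times from $\mathcal I^-$ — exactly as in \eqref{eq:nl:stupidintegral1}, using $r\sim v-u$ repeatedly — to get
\begin{align}
\lim_{v\to\infty}(r\phi_L)(u,v)=\frac{L!\cl0}{(2L+1)!|u|^{L+1}}+\mathcal O(|u|^{-L-1-\epsilon}),
\end{align}
together with the intermediate limits $\lim_{v\to\infty}[r^2\pv]^{L-j}(r\phi_L)(u,v)=\beta_j^{(L)}|u|^{-j-1}+\mathcal O(\cdots)$. The numerical constant $\frac{L!}{(2L+1)!}$ and the $\beta_i^{(L)}$ arise precisely from the iterated integrals $\int r^2/|u|^{k+2}\,\mathrm du$ together with the factors $a_0^k-b_0^kL(L+1)=k(k+1)-L(L+1)$ picked up at each step of the downward induction in $v$-derivatives via \eqref{eq:gtl:overview6}.

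The second phase is to plug the asymptotics from Phase 1 into \eqref{eq:gl:approx.u}. Integrating $\pu(r^{-2L}\pv\Phi_L)$ from $\mathcal I^-$ (using \eqref{eq:gnl:assNoIncoming} for the lower limit, which kills the boundary term), the leading contribution on the right-hand side is the $j=0$ term $\frac{D}{r^{3+2L}}(2M)(2x_1^{(L)}-c_0^L)[r^2\pv]^L(r\phi_L)$, into which one inserts $[r^2\pv]^L(r\phi_L)\approx \lim_{v\to\infty}[r^2\pv]^L(r\phi_L)(u')=\beta_0^{(L)}|u'|^{-1}+\dots$ (the leading term being $u$-independent modulo $\mathcal O(|u|^{-1})$, cf.\ the cancellation in Proposition~\ref{prop4.3}). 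The $u$-integral $\int_{-\infty}^u r^{-3-2L}|u'|^{-1}\,\mathrm du'$ then produces the characteristic logarithm $\log r-\log|u|$ divided by powers of $r$, exactly as in \eqref{eq:nl:stupidintegral1.5}. Multiplying by $r^{2L}$ and tracking the overall numerical constant yields $r^2\pv\Phi_L \sim M B^*(\log r-\log|u|)/r^3$ with $B^*=2(2x_1^{(L)}-c_0^L)\beta_0^{(L)}$ and an extra factor $(-1)^L/(L+1)!$ coming from the relation between $[r^2\pv]^L$ and $\pv$ iterated (the $(L+1)!$ from undoing the $r^2\pv$ commutators, the $(-1)^L$ from the signs). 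The conservation statement \eqref{eq:thm:gnl NP} is then immediate: the right-hand side of \eqref{eq:gl:approx.u} is $\mathcal O(r^{-3-2L}|u|^{-1})$, which integrates in $u$ to something $o(1)$ relative to the $\log r/r^3$ term but, crucially, the coefficient of $\log r/r^3$ in $\pv\Phi_L$ is independent of $u$ — this is what makes $I_{\ell=L}^{\mathrm{future},\frac{\log r}{r^3}}[\phi]$ well-defined and conserved.

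Finally, \eqref{eq:thm:gnl asydvphi} is obtained by undoing the commutators: one has $\Phi_L=[r^2\pv]^L(r\phi_L)+\sum_{i\ge1}(2M)^ix_i^{(L)}[r^2\pv]^{L-i}(r\phi_L)$, so knowing the asymptotic expansion of $\pv\Phi_L$ plus the (already established) expansions of the lower-order quantities $[r^2\pv]^{L-i}(r\phi_L)$ lets one solve for $\pv([r^2\pv]^L(r\phi_L))$ and then integrate down $L$ times from $\mathcal I^+$ in $v$ (fixing $u$), at each stage reading off the $f_i^{(L)}(u)$ and checking, via the cancellation mechanism of \S\ref{sec:4.4.3} / Proposition~\ref{prop4.3}, that spurious $\log r$ terms at intermediate orders cancel so that the first genuine logarithm appears only at order $r^{-3-L}\log r$. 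The explicit forms of $f_i^{(L)}$ follow by matching the iterated $v$-integrals of the $\beta_j^{(L)}/|u|^{j+1}$ terms. \textbf{The main obstacle} I anticipate is the bookkeeping in Phase 1: propagating the full hierarchy of limits $[r^2\pu]^{L+1-i}(r\phi_L)$ along $v$ and then integrating $L+1$ times from $i^0$ while keeping all error terms (which mix powers of $|u|$ and $r$, and include the cut-off-type $\mathcal O(r^{-2L-2})\times(\text{polynomial})$ corrections) under control requires a careful induction — this is morally the content of the appendix Proposition~\ref{prop:prop.10.1} alluded to in Step VIII of \S\ref{sec:general:timelike}. The conceptually delicate point, rather than the computation, is verifying that the leading coefficient $\beta_0^{(L)}$ of $[r^2\pv]^L(r\phi_L)$ on $\mathcal I^+$ is genuinely non-zero (equivalently $B^*\ne0$), which reduces to the non-vanishing of the product $\prod_{k=0}^{L-1}(k(k+1)-L(L+1))$ and of $2x_1^{(L)}-c_0^L=-L-(1+2L(L+1))\ne0$ — both elementary but essential for \eqref{eq:thm:gnl NP}.
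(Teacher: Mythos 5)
Your proposal follows essentially the same route as the paper's proof: a weighted energy estimate for the preliminary bound $|r\phi_L|\lesssim|u|^{-L-1}$, propagation of the $[r^2\pu]$-hierarchy in $v$ from $\mathcal C_{v=1}$ via \eqref{eq:gl:ducommute} with $N=L$, then $L+1$ integrations from $\mathcal I^-$ (Lemma~\ref{lemma9.1}) to get \eqref{eq:gnl:asy.phi.u}, followed by the induction producing the $\beta_i^{(L)}$, the approximate conservation law \eqref{eq:gl:approx.u} for the logarithm, and $L$ integrations from $\mathcal I^+$ for \eqref{eq:thm:gnl asydvphi}. The only (cosmetic) deviation is that you attribute the $\beta_i^{(L)}$-recursion to the algebraic identity \eqref{eq:gtl:overview6}, whereas the paper obtains it by integrating \eqref{eq:gl:dvcommute} upward in $v$-derivatives from $\mathcal I^-$ — the same relation used in the other direction.
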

\begin{rem}\label{rem:thm:gnl}
The first two statements of the above theorem, \eqref{eq:thm:gnl limphi} and \eqref{eq:thm:gnl asydvphi}, still apply if one lifts the restriction $\cl{1}=0$, albeit with different constants and with different $f_i^{(L)}(u)$. See also Remark~\ref{rem:9}. In particular,  we again have a cancellation if $r\phi_L\sim 1/|u|^L$ initially: The initial $|u|^{-L}$-decay translates into $|u|^{-L-1}$-decay on $\mathcal I^+$.
Equations \eqref{eq:thm:gnl asyPhi} and \eqref{eq:thm:gnl NP}, on the other hand change, change: The leading-order decay behaviour of $\pv\Phi_L$ is now given by $\sim\frac{1}{r^2}$, and, in particular, the usual Newman--Penrose constant
\begin{equation}
I_{\ell=L}^{\mathrm{future}}[\phi]:=\lim_{v\to\infty}r^2\pv\Phi_L(u,v)
\end{equation}
will be finite, generically non-vanishing, and conserved along future null infinity. If one also allows $\cl{i}\neq 0$ for $i>1$, then the modifications to Theorem~\ref{thm:gnl} will be more severe, see already Theorem~\ref{thm:moreg}.
\end{rem}
\subsubsection*{Overview of the proof}
We will prove the theorem in two steps. First, we will obtain an asymptotic estimate for $r\phi_L$ (which will, in particular, imply \eqref{eq:thm:gnl limphi}) by integrating \eqref{eq:gl:ducommute} in $v$ from data and then integrating the result $L+1$ times in $u$ from $\mathcal I^-$. 

Then, we will use this estimate to get  the leading-order decay of $\pv(r\phi_L)$ by integrating equation \eqref{eq:gl:dvcommute} with $N=0$. 
Once this is achieved, we will inductively obtain leading-order asymptotics for $(r^2\pv)^n(r\phi_L)$ using the corresponding \eqref{eq:gl:dvcommute}, from which we can,  in turn, deduce higher-order asymptotics for $\pv(r\phi_L)$. 
This will prove \eqref{eq:thm:gnl asydvphi}. 
Equation \eqref{eq:thm:gnl asyPhi} then follows in a similar fashion from the approximate conservation law \eqref{eq:gl:approx.u}.
\subsection{Asymptotics for \texorpdfstring{$r\phi_L$}{r phiL}}
\begin{prop}\label{propp}
There exists a constant $C$ depending only on data such that $r\phi_L$ satisfies the following asymptotic expansion throughout $\mathcal{D}$:
\begin{equation}\label{eq:gnl:asy.phi.r}
\left|r\phi_L(u,v)-\frac{L!\cl{0}}{(2L+1)!|u|^{L+1}}\right|\leq\frac{C}{|u|^{L+1+\epsilon}}+\frac{C}{r|u|^L}.
\end{equation}
In particular, we have
\begin{equation}\label{eq:gnl:asy.phi.u}
\lim_{v\to\infty}r\phi_L(u,v)=\frac{L!}{(2L+1)!|u|^{L+1}}+\mathcal{O}(|u|^{-L-1-\epsilon}).
\end{equation}
\end{prop}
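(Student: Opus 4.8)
The strategy is to use the approximate conservation law in the $\pu$-direction, equation \eqref{eq:gl:ducommute} (equivalently the statement of Proposition~\ref{prop:gl:ducommute}), together with the no incoming radiation condition \eqref{eq:gnl:assNoIncoming}, and then integrate $L+1$ times in $u$ from $\mathcal{I}^-$. First I would integrate \eqref{eq:gl:ducommute} with $N=L$ in $v$ from the initial data hypersurface $\mathcal C_{v=1}$, using the assumptions \eqref{eq:gnl:ass1}, \eqref{eq:gnl:ass2} (with $\cl{i}=0$ for $i\geq 1$) to identify the initial value of $[r^2\pu]^{L+1}(r\phi_L)$ on $\mathcal C_{v=1}$ as $\cl0+\mathcal O(r^{-\epsilon})$. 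Since the solution has no incoming radiation, the lower-order quantities $[r^2\pu]^{L+1-i}(r\phi_L)$ all vanish on $\mathcal I^-$, so when we integrate the RHS of \eqref{eq:gl:ducommute} in $v$, the resulting error terms come with extra $r$-decay and do not spoil the leading behaviour; thus one obtains $[r^2\pu]^{L+1}(r\phi_L)(u,v)=\cl0+\mathcal O(\max(r^{-\epsilon}, |u|^{-\epsilon}))$, in analogy with \eqref{eq:nl:proof1}--\eqref{eq:l:canc:2} of the $\ell=1$ case. (The precise bookkeeping of which lower-order terms are already known to decay and which must be bootstrapped is essentially the content of a short inductive lemma; this is where almost-sharp a priori decay from an energy estimate as in Proposition~\ref{prop:l:energy} and Corollary~\ref{cor:almostsharp} would enter, now using the Poincaré inequality \eqref{poincare} for $\ell\geq L$.)

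Next I would integrate the relation $[r^2\pu]^{L+1}(r\phi_L)=\cl0+(\text{error})$ down in $u$ from $\mathcal I^-$, $L+1$ times. Each integration of $r^2\pu(\cdot)=(\cdot)$ converts a constant into a polynomial in $1/|u|$ with a computable rational coefficient: integrating $r^2\pu g = c$ from $\mathcal I^-$ (where $g\to 0$) gives, up to lower order in $r$, $g \sim -c/|u|$ modulo corrections, since $r\sim v-u$ and $\pu r = -D$; iterating, one picks up the combinatorial factor $L!/(2L+1)!$. The computation of this constant is just the Minkowskian model computation: for $M=0$ one has exactly $\pu([r^2\pu]^{L+1}(r\phi_L))=0$, and integrating $L+1$ times from $\mathcal I^-$ the constant $\cl0$ produces $r\phi_L = \frac{L!\cl0}{(2L+1)!}\,|u|^{-L-1}+\text{(faster-decaying in $u$, or $r$-suppressed)}$; one can verify the coefficient by induction on $L$, using that integrating $|u|^{-k}$ against $r^2\,\mathrm d u$ and extracting the leading term in $r$ multiplies by $1/(k-2)$ or similar — more cleanly, one tracks the identity $\int^u \frac{r^2}{|u'|^{k}}\,\mathrm d u' = \sum_{j} \frac{r^j}{(k-1-j)\cdots}\,|u|^{-(k-1)+j} + \dots$, exactly as in \eqref{eq:nl:stupidintegral1}. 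The error terms, being $\mathcal O(r^{-\epsilon})$ or $\mathcal O(|u|^{-\epsilon})$ relative to $\cl0$, integrate to contributions bounded by $C|u|^{-L-1-\epsilon} + C r^{-1}|u|^{-L}$, which is exactly the error claimed in \eqref{eq:gnl:asy.phi.r}. Sending $v\to\infty$ (equivalently $r\to\infty$ at fixed $u$) kills the $C r^{-1}|u|^{-L}$ term and yields \eqref{eq:gnl:asy.phi.u}.

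\textbf{Main obstacle.} The routine part is the $u$-integration and the identification of the numerical constant; the genuinely delicate point is controlling the error terms uniformly throughout $\mathcal D$ — in particular near $\mathcal I^+$, where $|u|$ and $r$ are no longer comparable. One must be careful that integrating the error terms in the RHS of \eqref{eq:gl:ducommute} (which carry a $D/r^{3+2L}$ weight) and then $L+1$ times in $u$ does not produce logarithms or a worse-than-claimed rate; this requires, at each of the $L+1$ integration steps, separating the part of the integrand that is $\mathcal O(|u'|^{-p})$ from the part that is $\mathcal O(r^{-1}|u'|^{-p+1})$ and tracking both, since only the former integrates cleanly. A secondary subtlety is the closure of the a priori decay: before one can plug sharp bounds into \eqref{eq:gl:ducommute}, one needs almost-sharp bounds on all the intermediate quantities $[r^2\pu]^{j}(r\phi_L)$, $j\leq L$, which is obtained by a finite induction combining the weighted energy estimate (the analogue of Proposition~\ref{prop:l:energy}, valid for $\ell\geq L$ by \eqref{poincare}) with repeated integration of \eqref{eq:gl:ducommute} — essentially the higher-$\ell$ analogue of Corollary~\ref{cor:almostsharp}. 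Once this scaffolding is in place, the proposition follows by the clean $u$-integration described above.
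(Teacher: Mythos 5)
Your overall skeleton (weighted energy estimate for almost-sharp a priori decay, then the $v$-direction conservation law \eqref{eq:gl:ducommute} integrated from $\mathcal C_{v=1}$, then $L+1$ integrations in $u$ from $\mathcal I^-$) is indeed the paper's route, but your central intermediate claim is false and it corrupts the coefficient computation. The quantity $[r^2\pu]^{L+1}(r\phi_L)$ is \emph{not} $\cl{0}+\mathcal O(\max(r^{-\epsilon},|u|^{-\epsilon}))$ throughout $\mathcal{D}$: what the $v$-integration of \eqref{eq:gl:ducommute} with $N=L$ propagates is the rescaled quantity $r^{-2L}\pu[r^2\pu]^{L}(r\phi_L)$ (cf.\ \eqref{eq:gl:approx.v}), and what one actually obtains is
\[
\left|\pu[r^2\pu]^{L}(r\phi_L)(u,v)-\cl{0}\,\frac{r^{2L}}{|u|^{2L+2}}\right|\leq C\,\frac{r^{2L}}{|u|^{2L+2+\epsilon}},
\]
i.e.\ $[r^2\pu]^{L+1}(r\phi_L)\approx \cl{0}\,r^{2L+2}/|u|^{2L+2}$. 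This equals $\cl{0}$ only where $r\sim|u|$ (on $v=1$, or in the limit $u\to-\infty$ at fixed $v$) and grows like $r^{2L+2}$ along outgoing cones; already for $L=1$ the error in \eqref{eq:l:canc:2} is $r^2/|u|^3$, not $\max(r^{-\epsilon},|u|^{-\epsilon})$.

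The same confusion infects your $u$-integration: $\int_{-\infty}^{u} c\,r^{-2}\,\dd u' = c/(v-u)\sim c/r$, \emph{not} $\sim c/|u|$, so integrating a genuine constant $L+1$ times would place the leading term at order $r^{-L-1}$ with coefficient $\cl{0}/(L+1)!$ rather than at $|u|^{-L-1}$ with coefficient $L!\,\cl{0}/(2L+1)!$. The correct mechanism is to integrate the full expression $\cl{0}\,r^{2L}/|u|^{2L+2}$ using the identity of Lemma~\ref{lemma9.1} (the general form of \eqref{eq:nl:stupidintegral1}): each of the $L+1$ applications contributes a factor $\frac{1}{2L+1},\frac{1}{2L},\dots,\frac{1}{L+1}$ to the pure $|u|$-power term, whose product is $L!/(2L+1)!$, while the mixed terms $r^{N'-k}/|u|^{N-1-k}$ in \eqref{eq:lemma91} are exactly what produces the $Cr^{-1}|u|^{-L}$ error in \eqref{eq:gnl:asy.phi.r}. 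Your closing paragraph shows you sense that this $r$-versus-$|u|$ bookkeeping is the crux, but as written the proposal both asserts a false propagation estimate and computes the wrong power and coefficient from it; replacing the claim by the displayed estimate above and carrying out the $L+1$ integrations via Lemma~\ref{lemma9.1} is precisely what repairs it.
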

\begin{proof}
By applying the weighted energy estimate of Proposition~\ref{prop:l:energy} (whose proof still works for higher $\ell$-modes), we obtain the decay estimate (cf.\ Corollary~\ref{cor:almostsharp}):
\begin{equation}\label{eq:cli:1}
|r\phi_L(u,v)|\leq C|u|^{-L-1}
\end{equation}
for some constant $C$ depending only on initial data.

By inserting this estimate into eq.\ \eqref{eq:gl:ducommute} with $N=0$ and integrating the latter from $v=1$, we then obtain that $|\pu(r\phi_L)|\leq C|u|^{-L-2}$.
Similarly, by inductively integrating eq.\ \eqref{eq:gl:ducommute} for higher $N\leq L$ from $v=1$, we find that 
\begin{equation}\label{eq:cli:2}
|(r^2\pu)^n(r\phi)|\leq C\frac{r^{2n}}{|u|^{L+1+n}}.
\end{equation}
We can plug these estimates into \eqref{eq:gl:ducommute} with $N=L$, 
\begin{multline}\label{proofnl1}
	r^{2L}\pv(r^{-2L}\pu[r^2\pu]^L(r\phi_L))=-\underline{c}_0^L\frac{2MD}{r^3}\overbrace{[r^2\pu]^L(r\phi_L)}^{\lesssim r^{2L}|u|^{-2L-1}}\\
	+\sum_{j=1}^N\frac{D}{r^2}(2M)^j\underbrace{[r^2\pu]^{L-j}(r\phi_L)}_{\lesssim r^{2L-2}|u|^{-2L}}\cdot \left(	\underline{a}_j^L-\underline{b}_j^L-\underline{c}_j^L\frac{2M}{r}	\right), 
\end{multline}
and integrate in $v$ from $v=1$ to find that 
\begin{equation}\label{proofnl:Phipast}
\left|\pu[r^2\pu]^{L}(r\phi_L)(u,v)-\cl{0}\frac{r^{2L}}{|u|^{2L+2}}\right|\leq C\frac{r^{2L}}{|u|^{2L+2+\epsilon}}.
\end{equation}

Essentially, we can now integrate \eqref{proofnl:Phipast}  $L+1$ times from $\mathcal I^-$ to improve the bootstrap assumption. For this, we invoke the following
\begin{lemma}\label{lemma9.1}
Let $N, N'\in \mathbb{N}$ with $N>N'+1$. Then
\begin{equation}\label{eq:lemma91}
(N-1)\int_{-\infty}^u\frac{r^{N'}}{|u'|^{N}}\dd u'=\frac{r^{N'}}{|u|^{N-1}}
+\sum_{k=1}^{N'}\frac{r^{N'-k}}{|u|^{N-1-k}}\prod_{j=1}^k\frac{N'+1-j}{N-1-j}\left(1+\mathcal{O}(r^{-1})\right).
\end{equation}
\end{lemma}
\begin{proof}
The proof is straight-forward, but nevertheless provided in the appendix~\ref{A3}.
\end{proof}
We now apply Lemma~\ref{lemma9.1} with $N'=2L$, $N=2L+2$ to \eqref{proofnl:Phipast} (and divide by $r^2$) to obtain that
\begin{equation}\label{proofnl:Phipast-1}
\left|\pu[r^2\pu]^{L-1}(r\phi_L)(u,v)-\frac{\cl{0}}{2L+1}\left(\frac{r^{2(L-1)}}{|u|^{2L+1}}+
\sum_{k=1}^{2L}\frac{r^{2(L-1)-k}}{|u|^{2L+1-k}}\right)\right|\leq C\frac{r^{2(L-1)}}{|u|^{2L+1+\epsilon}}.
\end{equation}
Here, we used that the boundary term vanishes,\footnote{In the general case, $\cl{i}\neq0$, these boundary terms would of course not vanish.} $$\lim_{u\to-\infty}[r^2\pu]^L(r\phi_L)(u,v)=0.$$
Notice that the terms inside the sum $\sum_{k=1}^{2L}$ decay faster near $\mathcal{I}^+$ than the $\frac{r^{2(L-1)-k}}{|u|^{2L+1-k}}$-term inside \eqref{proofnl:Phipast-1}. 
Therefore, inductively applying the above procedure $L$ more times\footnote{Notice that the $\sum_{k=1}^{2L}$-sum in \eqref{proofnl:Phipast-1} also contains the terms $\frac{1}{|u|^3}$, $\frac{1}{r|u|^2}$ and $\frac{1}{r^2|u|}$. The latter two are not of the form of Lemma~\ref{lemma9.1}, but can simply  be estimated against the former.}, one obtains
\begin{equation}
\left|r\phi_L(u,v)-\frac{1}{(2L+1)\cdots (L+1)}\frac{\cl{0}}{|u|^{L+1}}\right|\leq \frac{C}{|u|^{L+1+\epsilon}}+\frac{C}{r|u|^L}.
\end{equation}
This proves the proposition.
\end{proof}
\begin{rem}\label{rem:9}
We stated in Remark~\ref{rem:thm:gnl} that parts of Theorem~\ref{thm:gnl} still apply if one assumes that also $\cl{1}=\lim_{u\to-\infty}[r^2\pu]^L(r\phi_L)\neq 0$. 
Let us explain the modifications to the proof above needed to see this: 
First, one needs to replace the $|u|^{-L-1}$-decay in \eqref{eq:cli:1} with $|u|^{-L}$-decay. 
This then leads to the RHS's of \eqref{eq:cli:2} and \eqref{proofnl1} to decay one power in $u$ slower. 
However, this still produces the same leading-order decay of $\pu[r^2\pu]^L(r\phi_L)$ in \eqref{proofnl:Phipast}. 
Upon integrating this from $\mathcal I^-$, one picks up the limit $\cl{1}=\lim_{u\to-\infty}[r^2\pu]^L(r\phi_L)$ and obtains an asymptotic estimate for $[r^2\pu]^L(r\phi_L)$ near $\mathcal I^-$. 
One can then, as was done in \S \ref{sec:nl}, insert this asymptotic estimate for $[r^2\pu]^L(r\phi_L)$ back into \eqref{proofnl1} and proceed with the rest of the proof as above.
In view of the boundary term coming from $\lim [r^2\pu]^L(r\phi_L)$, one then obtains, instead of \eqref{eq:gnl:asy.phi.r},
\begin{equation}\label{eq:gnl:asy.phi.rrrrr}
\left|r\phi_L(u,v)-\frac{L!\cl{0}+C'(\cl{1})}{(2L+1)!|u|^{L+1}}-\frac{C''(\cl{1})}{r^L}\right|\leq\frac{C}{|u|^{L+1+\epsilon}}
\end{equation}
for some $C'(\cl{1})$, $C''(\cl{1})$ which depend only on $M$, $L$ and $\cl{1}$. 

The proof of the asymptotics for $\pv(r\phi_L)$, presented in the next section, remains largely unchanged. See also \S \ref{sec:nl} for details.
\end{rem}
\subsection{Asymptotics for \texorpdfstring{$\pv(r\phi_L)$}{d/dv(r phiL)} and proof of Theorem~\ref{thm:gnl}}
\begin{proof}[Proof of Theorem~\ref{thm:gnl}]
Having obtained the asymptotics for $r\phi_L$ along $\mathcal{I}^+$, we can now compute the asymptotics of $\pv(r\phi_L)$. For the sake of notational simplicity, we restrict to $L\neq 0$ for now, the case $L=0$ is recovered in \eqref{tul2}.

We first compute the leading-order asymptotics by integrating the wave equation \eqref{eq:gl:dvcommute} with $N=0$, 
\begin{equation}\label{eq:gl:wave}
\pu\pv(r\phi_L)=-\frac{D}{r^2}\left(L(L+1)+\frac{2M}{r}\right)r\phi_L
\end{equation}
from past null infinity (where $\pv(r\phi_L)$ vanishes by assumption \eqref{eq:gnl:assNoIncoming}) and by plugging in the estimate \eqref{eq:gnl:asy.phi.r}. This yields, after also commuting \eqref{eq:gl:wave} with $r^2$,
\begin{equation}
\left|	r^2\pv(r\phi_L)(u,v)+\frac{(L+1)!}{(2L+1)!}\frac{\cl{0}}{|u|^L}	\right|\leq \frac{C}{|u|^{L+\epsilon}}+\frac{C}{r|u|^{L-1}},
\end{equation}
where, from now on, $C$ will be a constant which depends only on data and which is allowed to vary from line to line.
More precisely, by writing $r\phi_L(u,v)=r\phi_L(u,\infty)-\int_{v}^\infty\pv(r\phi)(u,v)\dd v'$ in \eqref{eq:gl:wave}, we can write
\begin{equation}
\left|	r^2\pv(r\phi_L)(u,v)+L(L+1)\int_{-\infty}^u \lim_{v\to\infty}(r\phi)(u',v)\dd u'	\right|\leq \frac{C}{r|u|^{L-1}}.
\end{equation}

Let us now make the following induction assumption. Let $n\geq 1$. Then we assume that for  all $L\geq i\geq n$:
\begin{equation}\label{induc:nl}
\left|[r^2\pv]^{L-i}(r\phi_L)-\frac{\beta_i^{(L)}}{|u|^{i+1}}\right|\leq \frac{C}{|u|^{i+1+\epsilon}}+\frac{C}{r|u|^i}
\end{equation}
for some non-vanishing constants $\beta_i^{(L)}\in\mathbb{Q}$.
Since we have already established that this holds true for $n=L$ with $\beta_{L}^{(L)}=\frac{L!\cl0}{(2L+1)!}$, it suffices to show that \eqref{induc:nl} also holds for $n-1\geq 1$, provided that it holds for $n$. Therefore, we now consider \eqref{eq:gl:dvcommute} with $N=L-n\geq 0$:
\begin{align}\label{induc:nlstep}
&\pu\left(r^{-2(L-n)}\pv[r^2\pv]^{L-n}(r\phi_L)\right)\\
=&\frac{1}{r^{2(L-n)}}\sum_{i=n}^L\frac{D(2M)^{i-n}}{r^2}[r^2\pv]^{L-i}(r\phi_L)\left(a_{i-n}^{L-n}-b_{i-n}^{L-n}L(L+1)-c_{i-n}^{L-n}\frac{2M}{r}\right)\nonumber.
\end{align}
Plugging in the induction assumption \eqref{induc:nl} for the terms on the RHS and then integrating \eqref{induc:nlstep} gives that $\pv[r^2\pv]^{L-n}(r\phi_L)$ is of order $\mathcal O(r^{-2}|u|^{-n})$. Moreover, commuting now \eqref{induc:nlstep} with $r^{2(L-(n-1))}$, we obtain that
\begin{multline}\label{induc:nlstep2}
\pu(r^2\pv[r^2\pv]^{L-n}(r\phi_L))=\overbrace{-\frac{2(L-(n-1))D}{r}\cdot r^2\pv[r^2\pv]^{L-n}(r\phi_L)}^{\lesssim r^{-1}|u|^{-n}}\\
+\sum_{i= n}^L D(2M)^{i-n}\underbrace{[r^2\pv]^{L-i}(r\phi_L)}_{\lesssim |u|^{-n-1}}\left(a_{i-n}^{L-n}-b_{i-n}^{L-n}L(L+1)-c_{i-n}^{L-n}\frac{2M}{r}\right),
\end{multline}
from which, in turn, we recover, by again integrating from $\mathcal{I}^-$, that
\begin{equation}\label{induc:nl+1}
\left|[r^2\pv]^{L-(n-1)}(r\phi_L)-\frac{\beta_{n-1}^{(L)}}{|u|^{n}}\right|\leq \frac{C}{|u|^{n+\epsilon}}+\frac{C}{r|u|^{n-1}},
\end{equation}
with $\beta_{n-1}^{(L)}$ given by
\begin{equation}\label{pindi}
n\beta_{n-1}^{(L)}={\beta_n^{(L)}}\left(a_0^{L-n}-b_0^{L-n}L(L+1)\right)\neq0.
\end{equation}
This proves \eqref{induc:nl} for all $n\geq 1$ and, thus, that \eqref{induc:nl+1} holds for all $n\geq 2$. 
In fact, it is easy to see that \eqref{induc:nl+1} also holds for $n=1$, with the $r^{-1}|u|^{-n+1}$-term on the RHS replaced by $\log(1-v/u)/v$ (cf.\ \eqref{eq:nl:stupidintegral1.5}).

In order to get a similar  estimate to \eqref{induc:nl+1} for $n=0$, we recall the crucial cancellation in \eqref{eq:gl:approx.u} for $N=L$ (namely $a_0^L-b_0^LL(L+1)=0$). We are thus led to consider, in a very similar fashion to the above, the equation
\begin{multline}\label{qlneli}
\pu(r^{-2L}\pv[r^2\pv]^{L}(r\phi_L))=-\frac{1}{r^{2L}}\frac{D}{r^2}\cdot c_0^L\frac{2M}{r}[r^2\pv]^L(r\phi_L)\\
+\frac{1}{r^{2L}}\sum_{i=1}^L\frac{D(2M)^i}{r^2}[r^2\pv]^{L-i}(r\phi_L)\left(a_i^{L}-b_i^{L}L(L+1)-c_i^{L}\frac{2M}{r}\right).
\end{multline}
The first term on the RHS is bounded by $Cr^{-2L-3}|u|^{-1}$, whereas the other terms in the sum are bounded by $C{r^{-2L-2}|u|^{-2}}$. More precisely, we have
\begin{multline}
\pu(r^{-2L}\pv[r^2\pv]^{L}(r\phi_L))=-\frac{2MD}{r^{2L+3}|u|} c_0^L\beta_0^{(L)}\\
+\frac{2MD}{r^{2L+2}|u|^2}\left(a_1^L-b_1^L L(L+1)\right)\beta_1^{(L)}+\mathcal{O}\left(r^{-2L-2}|u|^{-2-\epsilon}+r^{2L-3}\frac{\log 1-\frac vu}{v}\right).
\end{multline}
Integrating this from $\mathcal{I}^-$ then yields that
\begin{equation}\label{tul}
[r^2\pv]^{L+1}(r\phi_L)(u,v)=\frac{2M}{|u|}\left(a_1^L-b_1^L L(L+1)\right)\beta_1^{(L)}
+\mathcal{O}\left(|u|^{-1-\epsilon}+\frac 1 r +\frac{\log 1-\frac{v}{u}}{v}\right),
\end{equation}
where, in the two asymptotic equalities above, we made use of the integral estimates \eqref{eq:nl:stupidintegral1.5} and \eqref{eq:nl:stupidintegral2}.
In order to find the logarithmic next-to leading order asymptotics, we insert the estimates above into the approximate conservation law \eqref{eq:gl:approx.u}:
\begin{nalign}
\pu(r^2\pv\Phi_L)=(2L+2)Dr \pv\Phi_L
+\sum_{j=0}^L\frac{D}{r}(2M)^{j+1}[r^2\pv]^{L-j}(r\phi_{L})\left(2(j+1)x^{(L)}_{j+1}-\sum_{i=0}^j x_i^{(L)} c_{j-i}^{L-i}\right) .
\end{nalign}
From this, we then obtain, in a similar way to how we proved the estimates above, that
\begin{equation}\label{tul2}
r^2\pv\Phi_L(u,v)=2M(2x_1^{(L)}-c_0^L)\beta_0^{(L)}\frac{\log(v-u)-\log|u|}{v}+\mathcal{O}(r^{-1}).
\end{equation}
Notice that the difference $2x_1^{(L)}-c_0^L$ is non-vanishing for all $L(\geq0)$ since $x_1^{(L)}=-\frac L2$ and $c_0^L=1+2L(L+1)$.

Comparing equations \eqref{tul} and \eqref{tul2} then gives us the next-to-leading-order asymptotics for $[r^2\pv]^{L+1}(r\phi_L)$ since, for $i>0$, the terms $[r^2\pv]^{L-i}(r\phi_L)$ contained in $\Phi_L$ do not contain logarithmic terms at next-to-leading order, which can be seen by integrating \eqref{tul} $i$ times from $\mathcal I^+$.

Finally, the statement \eqref{eq:thm:gnl asydvphi} follows by simply integrating these asymptotics $L$ times from $\mathcal{I}^+$ and using \eqref{induc:nl} for the arising boundary terms on $\mathcal{I}^+$.

This concludes the proof of Theorem~\ref{thm:gnl}.\end{proof}
\subsection{Comments}\label{sec:gnl:comments}
\subsubsection{A logarithmically modified Price's law at all orders}
We expect Theorem~\ref{thm:gnl} (and, in particular, eq.\ \eqref{eq:thm:gnl NP}) to imply a logarithmically modified Price's law for the $\ell=L$-mode (see also the remarks in \S \ref{sec:nl:comments}). 
However, Theorem~\ref{thm:gnl} only applies if $\cl{i}=0$ for all $i>0$. 
This assumption, in turn, is motivated by the results of the previous \S\ref{sec:general:timelike} (eq.\ \eqref{eq:thm:gtl:notlimit} from Theorem~\ref{thm:gtl}).
Therefore, although the general situation ($\cl{i}\neq0$) might (and will) be quite different, we can expect that the data considered in \S \ref{sec:general:timelike}, i.e.\ data on a timelike boundary data which decay like $r\phi_\ell\sim |t|^{-1}$ near $i^-$ and which are smoothly extended to $\mathcal H^+$, lead to a logarithmically modified Price's law, for each $\ell$.
To be concrete, the expected decay rates would then be
\begin{equation}
r\phi_L|_{\mathcal{I}^+}\sim u^{-L-2}\log u,\quad \phi_L|_{\mathcal{H}^+}\sim v^{-2L-3}\log v
\end{equation}
near $i^+$.  Moreover, the leading-order asymptotics should be independent of the extension of the data towards $\mathcal H^+$.
 As has been discussed in \S \ref{sec:intro:NP}, the proof of the above expectation should follow by combining the results of this paper with those of~\cite{AAG21}, similarly to how~\cite{II} combined the results of~\cite{I} with those of~\cite{AAG18a,AAG18b}, so long as sufficient regularity (depending on $L$) is assumed. The fixed-regularity problem, on the other hand, seems much more difficult, cf.\ Conjecture~\ref{conj1}.

\subsubsection{The case \texorpdfstring{$\cl{i}\neq 0$}{Cin(L,i) neq 0}}
Notably, the proof presented in this section cannot be directly applied  to the case $\cl{i}\neq 0$ for $i>1$, since one would encounter several difficulties related to the quantities $(r^2\pu)^i(r\phi_L)$. (Notice already that the limits
$
\lim_{u\to-\infty}[r^2\pu]^{i}(r\phi_L)
$
grow like $v^{i-1}$ for $i=1,\cdots, L$, and like $v^{L-1}$ for $i=L+1$.) Furthermore, working with the quantity $[r^2\pu]^{L+1}(r\phi_L)$ requires strong conformal regularity assumptions.
 In the next section, we shall therefore obtain asymptotics for much more general data by working only with the quantities $[r^2\pv]^i(r\phi_L)$ and not using the approximate conservation law in $v$ \eqref{eq:gl:approx.v} at all.

\newpage
\section{Data on an ingoing null hypersurface \texorpdfstring{$\mathcal C_{v=1}$}{C(v=1)} II}\label{sec:moreg:null}
In this final section, we present a different approach towards obtaining the early-time asymptotics of $\pv(r\phi_L)$ of solutions $\phi_L$ arising from polynomially decaying initial data on a null hypersurface $\mathcal C_{v=1}$, without requiring any conformal regularity and/or fast decay on initial data.
 In particular, this section contains the proof of Theorem~\ref{thm:intro:moreg} from the introduction and can also treat the cases $\cl{i}\neq 0$ from the previous section.

Throughout this section, we shall again assume that $\phi$ is a solution to \eqref{waveequation}, supported on a single angular frequency $(L,m)$ with $|m|\leq L$. 
In the usual abuse of notation, we omit the $m$-index, that is, we write $\phi=\phi_{Lm}\cdot Y_{Lm}=\phi_L\cdot  Y_{Lm}$.

\subsection{Initial data assumptions}\label{sec:mg:ass}
\newcommand{\ci}{C_{\mathrm{in}}}
Prescribe smooth characteristic/scattering data for \eqref{waveequation}, restricted to the angular frequency $(L,m)$,  that satisfy on $\mathcal C_{v=1}$
\begin{equation}\label{eq:mg:ass1}
\left|r\phi_L(u,1)-\frac{\ci}{r^p}\right|=\mathcal O_1( r^{-p-\epsilon})
\end{equation}
for some $\epsilon\in(0,1]$, a constant $\ci>0$ and for some $p\in\mathbb N_0$, and which moreover satisfy 
\begin{equation}\label{eq:mg:assNoIncoming}
\lim_{u\to-\infty}\pv^n(r\phi_L)(u,v)=0
\end{equation}
for all $v\geq 1$ and for all $n=1,\dots, L+1$.

Notice that if $p=1$ in \eqref{eq:mg:ass1}, then this includes the cases $\cl{i}\neq 0$ from \S\ref{sec:general:null}.

\subsection{The main theorem (Theorem~\ref{thm:moreg})}
\begin{thm}\label{thm:moreg}
By standard scattering theory~\cite{DRSR18}, there exists a unique smooth scattering solution $\phi_L\cdot Y_{Lm}$ in $\mathcal{M}\cap\{v\geq 1\}$ attaining the data of \S\ref{sec:mg:ass}.

Let $U_0$ be a sufficiently large negative number, let $\mathcal{D}=(-\infty,U_0]\times[1,\infty)$, and let $r_0:=r(u,1)=|u|-2M\log|u|+\mathcal{O}(1)$. Then the following statements hold for all $(u,v)\in\mathcal D$:

a) We have that:
\begin{equation}\label{eq:mg:thma1}
\lim_{v\to\infty}r\phi_L(u,v)=F(u)=\begin{cases}
												\mathcal O(r_0^{-p-\epsilon}),&\text{if }p\leq L\text{ and } p\neq 0\\
												{C_0\cdot}\ci r_0^{-p}+\mathcal O(r_0^{-p-\epsilon}), &\text{if }p>L \text{ or } p=0,
									\end{cases}
\end{equation}
for some smooth function $F(u)$ and some explicit, non-vanishing constant $C_0=C_0(L,p)$.

b) Moreover, the outgoing derivative of the radiation field $\pv(r\phi_L)$ satisfies the following asymptotic expansion \underline{if $p<L$}:
\begin{equation}\label{eq:mg:thmb1}
r^2\pv(r\phi_L)(u,v)=\sum_{i=0}^{p-1}\frac{f^{(L,p)}_i(u)}{r^i}+\frac{M\cdot C_1(\log r-\log|u|)+C_2 r_0}{r^p}+\mathcal{O}\left(\frac{|u|^{1-\epsilon}}{r^{p}}\right),
\end{equation}
where the $f^{(L,p)}_i$ are smooth functions of order $f^{(L,p)}_i=\mathcal O(r_0^{-p+i+1-\epsilon})$ if $i<p-1$, or of order $f^{(L,p)}_{i}=C_3^{L,p,i}+\mathcal O(r_0^{-\epsilon})$ if $i=p-1$.

On the other hand, \underline{if $p\geq L$}, then 
\begin{equation}\label{eq:mg:thmb2}
r^2\pv(r\phi_L)(u,v)=\sum_{i=0}^{L}\frac{f^{(L,p)}_i(u)}{r^i}+\mathcal O\left(\frac{\log r}{r^{L+1}}\right),
\end{equation}
where the $f^{(L,p)}_i$ are smooth functions of order $f^{(L,p)}_i=\mathcal O(r_0^{-p+i+1-\epsilon})$ if $p=L$ and $i<L-1$, and which are given by $f^{(L,p)}_i=C_3^{L,p,i}r_0^{-p+i+1}+\mathcal O(r_0^{-p+i-\epsilon})$ otherwise (i.e.\ if $p=L=i$, if $p=L=i+1,$ or if $p>L$).

In each case, we have explicit, non-vanishing expressions for the constants $C_1,C_2,C_3^{L,p,i}$ that depend only  on $L,p,i,\ci$ (and not on $M$).

c) Finally, \underline{if $p\leq L$}, then the following limit exists, is non-vanishing, and is independent of $u$:
\begin{equation}\label{eq:mg:thmc1}
\lim_{v\to\infty}r^{2+p-L}\pv\Phi_L(u,v)=I^{\mathrm{future},r^{2+p-L}}_{\ell=L}[\phi]\neq 0.
\end{equation} 

\underline{If $p=L+1$}, then the following limit exists, is non-vanishing, and is independent of $u$:
\begin{equation}\label{eq:mg:thmc2}
\lim_{v\to\infty}\frac{r^3}{\log r}\pv\Phi_L(u,v)=I^{\mathrm{future},\frac{\log r}{r^3}}_{\ell=L}[\phi]\neq 0.
\end{equation} 

\underline{If $p>L+1$}, then $\pv\Phi_L=\mathcal O(r^{-3})$, and all modified Newman--Penrose constants vanish on $\mathcal I^+$.

In each case, we have explicit expressions for the constants $I^{\mathrm{future},r^{2+L-p}}_{\ell=L}[\phi],I^{\mathrm{future},\frac{\log r}{r^3}}_{\ell=L}[\phi]$. These depend only on $L,p,M,\ci$.

All explicit expressions for constants are listed in the proof of this theorem on page~\pageref{auxref}.
\end{thm}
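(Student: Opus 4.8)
The plan is to carry out a single forward integration strategy, using \emph{only} the approximate conservation law \eqref{eq:gl:approx.u} in the $u$-direction together with the commuted wave equations \eqref{eq:gl:dvcommute}, and never the conservation law \eqref{eq:gl:approx.v} in $v$. The overall architecture mimics \S\ref{sec:general:null} with the key difference that the chain of integrations now runs: (i) establish almost-sharp pointwise decay for $r\phi_L$ from a weighted energy estimate (Proposition~\ref{prop:l:energy}, which holds verbatim for all $\ell\geq L$), giving $|r\phi_L|\lesssim r_0^{-p}$ after using the initial-data assumption \eqref{eq:mg:ass1} and the no incoming radiation condition \eqref{eq:mg:assNoIncoming}; (ii) bootstrap $[r^2\pv]^j(r\phi_L)$ for $j=1,\dots,L$ by integrating the relevant cases of \eqref{eq:gl:dvcommute} in $u$ from $\mathcal I^-$, carefully tracking at each step the initial-data contribution coming from $\mathcal C_{v=1}$; (iii) integrate the resulting estimate on $[r^2\pv]^{L+1}(r\phi_L)$ (obtained from \eqref{eq:gl:approx.u}) a total of $L$ times in $v$ from $v=1$ to recover $\pv(r\phi_L)$. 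The Minkowskian skeleton of this computation — the explicit solutions of $\Box_{g}\phi_L=0$ on Minkowski with no incoming radiation and data $r\phi_L(u,1)=C/r^p$ — should be isolated first (this is the content of Proposition~\ref{prop:mg:prop4}, referenced in the introduction), since it both motivates and organises the bookkeeping of the $f_i^{(L,p)}$ and dictates exactly where the cancellations in the range $1\le p\le L$ occur.

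For part a), the strategy is: write the wave equation \eqref{eq:gl:wave} and integrate $\pu\pv(r\phi_L)$ from $\mathcal I^-$ in $u$, substituting the almost-sharp bound from step (i); iterating, one passes to $[r^2\pv]^j(r\phi_L)$ and traces the leading $u$-power. The dichotomy $p\le L$ versus $p>L$ (and the special role of $p=0$) comes out of the integral identity in Lemma~\ref{lemma9.1}: when $p\le L$ the boundary term at $v=1$ contributes with the ``wrong'' homogeneity and the putative leading term in $F(u)=\lim_{v\to\infty}r\phi_L$ cancels against it, forcing $F(u)=\mathcal O(r_0^{-p-\epsilon})$; when $p>L$ or $p=0$ no such cancellation is available and one reads off $F(u)=C_0(L,p)\,\ci\,r_0^{-p}+\mathcal O(r_0^{-p-\epsilon})$ with $C_0$ an explicit rational function of $L,p$ (the product $\prod(a_0^k-b_0^kL(L+1))$ divided by suitable factorials, exactly as the $\beta_i^{(L)}$ in Theorem~\ref{thm:gnl}). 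The schematic one-line version of this cancellation for $\ell=1$, $p=1$ in \S\ref{sec:4.4.3} — $r\phi=r\phi(u,1)+\int_1^v\pv(r\phi)\,\dd v'$ with $r(u,1)\sim|u|$ — is the prototype; the general case is the same computation done $L$ times with the polynomial weights $p_i(1/r)$ of \eqref{eq:NPpolynomial}.

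For part b), once the sharp asymptotics of $[r^2\pv]^{L+1}(r\phi_L)$ near $\mathcal I^-$ (and along null infinity) are in hand, integrate $L$ times from $v=1$, at each step picking up an initial-data boundary term; the terms $f_i^{(L,p)}(u)/r^i$ with $i<p-1$ inherit decay $r_0^{-p+i+1-\epsilon}$, while the $i=p-1$ term stabilises to a constant $C_3^{L,p,p-1}$, and the logarithm appears precisely one order below, at $r^{-p}$, unless $p=L$ — the exceptional case where the leading cancellation pushes it one further order down (giving \eqref{eq:mg:thmb2}). The log coefficient $M\cdot C_1$ comes from the $\frac{2M}{r}$-term in \eqref{eq:gl:wave}/\eqref{eq:gl:dvcommute}, via the integral $\int^u r^k/|u'|^{k+1}\dd u'\sim\sum r^j/|u'|^{j+1}$, whose boundary value at $v=1$ produces $\log r-\log|u|$ exactly as in \eqref{eq:nl:stupidintegral1}–\eqref{eq:nl:stupidintegral1.5}; and the $C_2 r_0/r^p$ term is the non-decaying piece carried from the $v=1$ data. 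For part c), feed the resulting asymptotics into the approximate conservation law \eqref{eq:gl:approx.u}, integrate in $u$ from $\mathcal I^-$, and read off the order of $\pv\Phi_L$: in the range $p\le L$ the RHS decays like $r^{-(3+2L)}\cdot r^{2L-(p-L)}\cdot|u|^{-1}$-type terms, yielding $\pv\Phi_L\sim r^{-(2+L-p)}$ and hence the $(L-p)$-modified N--P constant \eqref{eq:mg:thmc1}; at $p=L+1$ the leading term degenerates and one obtains the logarithmically modified constant \eqref{eq:mg:thmc2}; for $p>L+1$ everything is $\mathcal O(r^{-3})$. Non-vanishing of all these constants follows, as in the proof of Theorem~\ref{thm:gnl}, because $a_0^k-b_0^kL(L+1)=k(k+1)-L(L+1)\ne 0$ for $k<L$ and $2x_1^{(L)}-c_0^L=-L-1-2L(L+1)\ne 0$.

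The main obstacle will be the bookkeeping of the cancellations in part b) in the borderline range $p=L$ and $p=L\mp 1$: there the putative leading terms vanish and one must extract the genuine next-order behaviour, which requires carrying the asymptotic expansions one order further than naively needed and controlling the error integrals $\int_{-\infty}^u r^2/(r|u'|)\,\dd u'$ and $\int_{-\infty}^u(\log(v-u')-\log|u'|)/(vr)\,\dd u'$ (cf.\ \eqref{eq:nl:stupidintegral2}) uniformly. A secondary technical point is that, unlike in \S\ref{sec:general:null}, the limits $\lim_{u\to-\infty}[r^2\pu]^i(r\phi_L)$ are not finite for $i\le L$ when $p<L+1$ — they grow in $v$ — so one genuinely cannot use \eqref{eq:gl:approx.v}, and all decay must be wrung out of the $u$-integration of \eqref{eq:gl:approx.u}; keeping the $r$- versus $|u|$-weights straight (via $r_0=|u|-2M\log|u|+\mathcal O(1)$ and \eqref{eq:v-u=r}) throughout the $L$-fold iterated integrations is where the proof is most error-prone. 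Finally, the explicit values of $C_0,C_1,C_2,C_3^{L,p,i},I^{\mathrm{future},\cdot}_{\ell=L}[\phi]$ are obtained by unwinding the recurrences for the $\beta$'s and $x_i^{(L)}$; I would assemble these into a single table at the end (the referenced page~\pageref{auxref}) rather than interleaving them with the estimates.
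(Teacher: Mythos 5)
Your plan is essentially the paper's own proof: only the $u$-direction conservation law \eqref{eq:gl:approx.u} is used, the transversal derivatives $[r^2\pv]^{N}(r\phi_L)$ are first computed on $\mathcal C_{v=1}$ by integrating \eqref{eq:gl:dvcommute} from $\mathcal I^-$, $[r^2\pv]^{L}(r\phi_L)$ is then controlled by a bootstrap, and parts a)--c) are recovered by $L$-fold $v$-integration from $v=1$ with the Minkowskian cancellation analysis (Proposition~\ref{prop:mg:prop4}) governing which accumulated data terms survive. The one caveat worth flagging is that the constant $C_0$ for $p>L$ is not a single product of the factors $a_0^k-b_0^kL(L+1)$ like the $\beta_i^{(L)}$ of Theorem~\ref{thm:gnl}, but an alternating sum of such products, and its non-vanishing is precisely the nontrivial combinatorial content of Lemma~\ref{prop10} (via the manifestly positive iterated-integral representation \eqref{eq:prop101}) rather than an automatic consequence of the individual factors being non-zero.
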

 \begin{rem}
 Notice that we often expressed $u$-decay in terms of $r_0$ rather than $u$ in order to compactly express logarithmic contributions: For instance, if $\epsilon=1$, and if we express decay directly in terms of $u$, then we  have an additional $\mathcal O(|u|^{-p-1}\log |u|)$-contribution in the second line of \eqref{eq:mg:thma1}. 
 \end{rem}
 
\begin{rem}
The faster decay in \eqref{eq:mg:thma1} for $p\leq L$ can be traced back to certain cancellations. These  already happen for $M=0$. In fact, we will, in the course of the proof, derive effective expressions for exact solutions to the wave equation on Minkowski which have data $r\phi_L(u,1)=\ci/r^p$ and satisfy the no incoming radiation condition \eqref{eq:mg:assNoIncoming}. (See Proposition~\ref{prop:mg:prop4}.) 
\end{rem}

\begin{rem}
If also the next-to-leading-order behaviour on initial data is specified in \eqref{eq:mg:ass1}, we can upgrade the $\mathcal O$-symbols in \eqref{eq:mg:thma1} etc.\ to precise asymptotics, see Corollary~\ref{cor10}.
\end{rem}
\begin{rem}\label{rem104}
With a bit more effort, one can extend the analysis of the proof (using for instance time integrals as was done in~\cite{I}) to show that \eqref{eq:mg:thmb2} can be improved to 
$$
r^2\pv(r\phi_L)(u,v)=\sum_{i=0}^{\max{(L,p-1)}}\frac{f^{(L,p)}_i(u)}{r^i}+\mathcal O\left(\frac{\log r}{r^{\max(L+1,p)}}\right).
$$

Notice that "the first logarithmic term" of the expansion of $r^2\pv(r\phi_L)$ never appears at order $r^{-L}\log r$: It either appears at order $r^{-L-i}\log r$ or at order $r^{-L+i}\log r$, with $i>0$. In this sense, there is a cancellation happening at $L=p$. In particular, if $p=1$, then the expansion of $\pv(r\phi_L)$ contains a logarithmic term at order $r^{-3}\log r$ for all $L\neq 1$ (including $L=0$), whereas the first logarithmic term for $L=1$ only appears at order $r^{-4}\log r$.
\end{rem}
\begin{rem}
Using the methods of the proof, one can show a very similar result if one assumes more generally that $0\leq p\in\mathbb R$. (In fact, one should also be able to consider a certain range of positive $p$!) In this case, however, the cancellation \eqref{eq:mg:thma1} in general no longer appears; it seems to be a special property of $p\in\{1,\dots, L\}$. 
\end{rem}

\subsection{Overview of the proof}\label{sec:mg:overview}
In contrast to the proof of \S \ref{sec:general:null}, we will, in this section, only use the approximate conservation law \eqref{eq:gl:approx.u} and obtain an asymptotic estimate for $[r^2\pv]^i(r\phi_L)$ directly from data. 
For this, we will first need to compute $[r^2\pv]^N(r\phi_L)$ for $N\leq L$ on data, i.e.\ on $v=1$, by inductively integrating the relevant equation~\eqref{eq:gl:dvcommute}. This is done in Proposition~\ref{prop:prop.10.1} in \S\ref{sec:mg:data}.  
We then make a bootstrap assumption on the decay of $[r^2\pv]^L(r\phi_L)$ and improve it using \eqref{eq:gl:approx.u}.
 Once we have obtained a sharp estimate on $[r^2\pv]^L(r\phi_L)$ in this way (Proposition~\ref{prop:prop.10.2} in \S\ref{sec:mg:x}), we can then inductively integrate from $v=1$ to obtain a sharp estimate for $[r^2\pv]^{L-i}(r\phi_L)$ (Proposition~\ref{prop:prop.10.3} in \S\ref{sec:mg:y}).
  In doing so, we pick up an "initial data term" with each integration. 
  These data terms will all be of the same order, so there might be cancellations. 
We will understand these cancellations  in Proposition~\ref{prop:mg:prop4} in~\S\ref{sec:mg:datacanc}. 
  The results are then summarised in Corollary~\ref{cor10}. 
  Finally, the proof of Theorem~\ref{thm:moreg} is given in~\S \ref{sec:mg:proofofTHM}.

The disadvantage of this more direct approach to the asymptotics of $\pv$-derivatives of $r\phi_L$ is that we gain no direct information on the asymptotics of $\pu$-derivatives. On the other hand, this should also be seen as an advantage since this approach requires no assumption on the conformal regularity of the initial data on $v=1$. 

\subsection{Computing transversal derivatives on data}\label{sec:mg:data}
Inserting the initial data bound \eqref{eq:mg:ass1} into the wave equation \eqref{eq:gl:dvcommute} with $N=0$, and integrating from $u=-\infty$, where $\pv(r\phi_L)$ vanishes by the no incoming radiation condition \eqref{eq:mg:assNoIncoming}, we obtain that on $v=1$
\begin{equation}
\left|\pv(r\phi_L)+\frac{L(L+1)\ci}{(2+p-1)r^{1+p}}\right|\leq C r^{-1-p-\epsilon}
\end{equation}
for some constant $C$.
In turn, inserting this estimate into \eqref{eq:gl:dvcommute} with $N=1$, one obtains an estimate for $[r^2\pv]^2(r\phi_L)$. Proceeding inductively, one obtains the following
\newcommand{\cv}[1]{C_{[r^2\pv]^{#1}}}
\begin{prop}\label{prop:prop.10.1}  Let $\phi_L$ be as in Theorem~\ref{thm:moreg}. Then 
we have on $v=1$ that
\begin{align}\label{eq:mg:databounds}
[r^2\pv]^{N+1}(r\phi_L)&=\cv{N+1}r^{N+1-p}+\mathcal{O}(r^{N+1-p-\epsilon}),&&\text{for }N=0,\dots, L-1,\\
[r^2\pv]^{L+1}(r\phi_L)&=\cv{L+1}r^{L-p}+\mathcal{O}(r^{L-p-\epsilon}).&&
\end{align}
Here, we defined the constants
\begin{align}\label{eq:mg:cv}
\cv{N+1}:=& \frac{p!\ci}{(N+1+p)!}\cdot \prod_{i=0}^N(a_0^i-b_0^iL(L+1)),\quad N=0,\dots, L-1,\\
\cv{L+1}:=&\frac{2M}{L+2+p}\left(-c_0^L \cv{L}+(a_1^L-b_1^LL(L+1)\cv{L-1}\right).
\end{align}
\end{prop}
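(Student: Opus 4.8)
The plan is to prove Proposition~\ref{prop:prop.10.1} by a finite induction on $N$, integrating the commuted wave equations \eqref{eq:gl:dvcommute} successively in $u$ from $\mathcal I^-$, where all transversal derivatives $\pv^n(r\phi_L)$ vanish by the no incoming radiation condition \eqref{eq:mg:assNoIncoming}. The key structural input is that, restricted to $v=1$, the background geometry is essentially Minkowskian up to logarithmic corrections: indeed, on $v=1$ we have $r = r_0(u) = |u| - 2M\log|u| + \mathcal O(1)$ by \eqref{eq:v-u=r}, so all $u$-integrals against powers of $r$ reduce to elementary integrals $\int^u r^{k}\,\mathrm du$ which, since $\pu r = -D = -(1 - 2M/r)$, evaluate to $-\frac{1}{k+1} r^{k+1}(1 + \mathcal O(r^{-1}))$ (compare the computation in \eqref{eq:nl:stupidintegral1}). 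The only subtlety is to keep track of the numerical constant produced at each step, which is exactly what is encoded in the recursion for $\cv{N+1}$.

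First I would establish the base case. Inserting \eqref{eq:mg:ass1} into \eqref{eq:gl:dvcommute} with $N = 0$ gives
\[
\pu\pv(r\phi_L) = -\frac{D}{r^2}\left(L(L+1) + \frac{2M}{r}\right) r\phi_L = -\frac{D L(L+1)\ci}{r^{2+p}} + \mathcal O(r^{-3-p} + r^{-2-p-\epsilon})
\]
on $v = 1$. Integrating from $u = -\infty$ and using that $\int^u -D r^{-2-p}\,\mathrm du = \frac{1}{1+p} r^{-1-p}(1 + \mathcal O(r^{-1}))$ yields $\pv(r\phi_L) = \frac{L(L+1)\ci}{(1+p)} r^{-1-p} + \mathcal O(r^{-1-p-\epsilon})$, and hence $r^2\pv(r\phi_L) = \cv 1 r^{1-p} + \mathcal O(r^{1-p-\epsilon})$ with $\cv 1 = \frac{L(L+1)\ci}{1+p} = \frac{p!\,\ci}{(1+p)!}\,(a_0^0 - b_0^0 L(L+1))$, matching \eqref{eq:mg:cv} since $a_0^0 = 0$, $b_0^0 = 1$ — wait, one must double-check: $a_0^0 = (2^0-1)\binom 00 + (2^2-2)\binom 12 = 0$ and $b_0^0 = 1$, so $a_0^0 - b_0^0 L(L+1) = -L(L+1)$, and the sign works out against the sign in the integral. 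For the inductive step, assume the stated form for $[r^2\pv]^{N}(r\phi_L)$ (and, by the same induction run at lower orders, for all $[r^2\pv]^{N-j}(r\phi_L)$, $j \geq 0$) on $v=1$. Plug these into \eqref{eq:gl:dvcommute} with parameter $N$: the leading term on the right-hand side is $\frac{D}{r^2}(a_0^N - b_0^N L(L+1))[r^2\pv]^N(r\phi_L) = D(a_0^N - b_0^N L(L+1))\cv N r^{N-1-p}(1+\cdots)$, while the $-\frac{2DN}{r}\pv[r^2\pv]^N(r\phi_L)$ term and the lower-$j$ terms all contribute at the same order $r^{N-1-p}$ or faster-decaying. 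Collecting: one integrates $\pu\left(r^{-2N}\pv[r^2\pv]^N(r\phi_L)\right) = (\text{RHS})/r^{2N}$ from $\mathcal I^-$, then multiplies back by $r^{2N}$ and by the appropriate power to recover $[r^2\pv]^{N+1}(r\phi_L)$; the net numerical factor is $\frac{1}{N+1+p}$ times the bracket $(a_0^N - b_0^N L(L+1))$ times $\cv N$, giving the telescoping product in \eqref{eq:mg:cv}.

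The slightly more delicate case is the last one, $[r^2\pv]^{L+1}(r\phi_L)$: here the leading bracket $a_0^L - b_0^L L(L+1) = L(L+1) - L(L+1) = 0$ vanishes (this is precisely the cancellation that underlies the approximate conservation law \eqref{eq:gl:approx.u}), so the leading term of the naive estimate drops out and one must retain the \emph{next} terms — the $c_0^L \frac{2M}{r}$-correction multiplying $[r^2\pv]^L(r\phi_L)$ and the $j=1$ term $(a_1^L - b_1^L L(L+1))$ multiplying $[r^2\pv]^{L-1}(r\phi_L)$, both of which carry a factor $2M$ and contribute at order $r^{L-2-p}$ before integration, hence at order $r^{L-1-p}$ after — but one must also check the $-\frac{2DL}{r}\pv[r^2\pv]^L(r\phi_L)$ term does not contribute at this order, which follows because $\pv[r^2\pv]^L(r\phi_L) = \mathcal O(r^{-2}\cdot r^{L-p}) $ is two powers of $r$ down. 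Integrating in $u$ and dividing out the power of $r$ gives the factor $\frac{1}{L+2+p}$ in front of $\left(-c_0^L \cv L + (a_1^L - b_1^L L(L+1))\cv{L-1}\right)$, which is \eqref{eq:mg:cv}. I expect the main obstacle to be purely bookkeeping: verifying that no intermediate term (in particular the $-\frac{2DN}{r}\pv[\cdots]$ contributions and the sub-leading terms $j \geq 1$ in the $N \leq L-1$ steps) secretly contributes at the claimed leading order, so that the recursion really is the clean product stated; this is a routine but attention-demanding check using the elementary integral $\int^u r^k\,\mathrm du = -\frac{r^{k+1}}{k+1}(1 + \mathcal O(r^{-1}))$ valid on $v=1$ for $k \neq -1$.
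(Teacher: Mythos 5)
Your overall strategy coincides with the paper's: Proposition~\ref{prop:prop.10.1} is proved there by exactly the finite induction you describe, integrating \eqref{eq:gl:dvcommute} in $u$ from $\mathcal I^-$ along $v=1$ at each order, with the boundary terms killed by \eqref{eq:mg:assNoIncoming} and the elementary integral $\int D\, r^{k}\,\dd u' = -r^{k+1}/(k+1)+\dots$ producing the telescoping product. The base case and the steps $N\le L-1$ are fine (modulo the sign slip you already caught yourself).

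Your treatment of the final step $N=L$, however, contains a concrete error: the claim that the term $-\frac{2DL}{r}\pv[r^2\pv]^L(r\phi_L)$ "does not contribute at this order" is false. Since $\pv[r^2\pv]^L(r\phi_L)=\mathcal O(r^{L-2-p})$, that term is $\mathcal O(r^{L-3-p})$, which is exactly the order of the two retained $2M$-corrections $-c_0^L\frac{2MD}{r^3}[r^2\pv]^L(r\phi_L)$ and $\frac{2MD}{r^2}(a_1^L-b_1^LL(L+1))[r^2\pv]^{L-1}(r\phi_L)$. It must therefore be absorbed into the weighted derivative $\pu\bigl(r^{-2L}\pv[r^2\pv]^L(r\phi_L)\bigr)$, exactly as you do in the general inductive step; integrating the resulting $\mathcal O(r^{-L-3-p})$ right-hand side is also the only way to produce the denominator $L+2+p$ in $\cv{L+1}$ --- dropping the term and integrating the unweighted equation would instead give a factor $1/(p+2-L)$, and that integral does not even converge for $p\le L-2$. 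Since the factor you state is the correct $\frac{1}{L+2+p}$, you have evidently used the weighted form implicitly; delete the erroneous aside and apply the same mechanism as at the lower orders. (The intermediate powers you quote in that step, "$r^{L-2-p}$ before integration, $r^{L-1-p}$ after," are each off by one, though the final power $r^{L-p}$ in the conclusion is right.)
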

\begin{proof} Inductively integrate equation \eqref{eq:gl:dvcommute}.
\end{proof}
\subsection{Precise leading-order behaviour of \texorpdfstring{$[r^2\pv]^L(r\phi_L)$}{(r2 d/dv)L(r phi-L)}}\label{sec:mg:x}Equipped with the initial data estimates \eqref{eq:mg:databounds}, we now prove
\begin{prop}\label{prop:prop.10.2}
Let $\phi_L$ be as in Theorem~\ref{thm:moreg}. Then we have
\begin{nalign}\label{eq:mg:prop2}
&[r^2\pv]^L(r\phi_L)(u,v)=[r^2\pv]^L(r\phi_L)(u,1)\\
+&2M\cv{L}(2x_1^{(L)}-c_0^L)\frac{(L-\min(p,L))!(L+1+p)!}{(2L+2)!}\cdot \begin{cases} \frac{r^{L-1-p}}{(L-1-p)} (1+\mathcal O(\frac{|u|}{r}+\frac{1}{r^\epsilon})),&L>p+1,\\
\log r-\log |u| +\mathcal O(1),& L=p+1,\\
\mathcal O(|u|^{L-1-p}) ,&L<p+1.\end{cases}
\end{nalign}
If $L=p$, then we can write more precisely:
\begin{nalign}
[r^2\pv]^L(r\phi_L)(u,v)=[r^2\pv]^L(r\phi_L)(u,1)+
2M\cv{L}\left(\frac{(2x_1^{(L)}-c_0^L)}{2L+2}-x_1^{(L)}\right)r_0^{-1}+\mathcal{O}(|u|^{-1-\epsilon}).
\end{nalign}
\end{prop}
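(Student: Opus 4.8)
\textbf{Proof proposal for Proposition~\ref{prop:prop.10.2}.}

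The plan is to integrate the approximate conservation law \eqref{eq:gl:approx.u} in $u$ from $\mathcal I^-$, where $\pv\Phi_L$ vanishes by the no incoming radiation condition \eqref{eq:mg:assNoIncoming}, and then to carefully track the leading initial-data contribution. First I would set up a bootstrap assumption: assume $|[r^2\pv]^L(r\phi_L)(u,v)|\leq C\max(r^{\tilde j},|u|^{\tilde j})$ with $\tilde j=\max(L-p,0)$ (and similarly a matching bound for $\pv\Phi_L$), holding for all $v\geq 1$ and $u\leq U_0$. Integrating this crude bound from $v=1$ gives control on the lower-order quantities $[r^2\pv]^{L-i}(r\phi_L)$, exactly as in Step III of the proof of Theorem~\ref{thm:gtl}; feeding these back into the RHS of \eqref{eq:gl:approx.u} and integrating once in $u$ from $\mathcal I^-$ improves the bootstrap, because the RHS of \eqref{eq:gl:approx.u} carries the good weight $r^{-3-2L}$ times $[r^2\pv]^{L-j}(r\phi_L)$. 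This yields $\pv\Phi_L=\mathcal O(r^{-2L}\cdot r^{-3-2L}\cdot(\text{crude bound})\cdot|u|)$ modulo the leading term, and hence a sharp bound on $\pv\Phi_L$; integrating in $v$ from $v=1$ and subtracting off the $i>0$ terms in $\Phi_L$ (all of which are lower order by Proposition~\ref{prop:prop.10.1}) then produces the estimate for $[r^2\pv]^L(r\phi_L)$ itself. I would run this on the already-constructed scattering solution of \cite{DRSR}, so no method-of-continuity is needed.

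The substance of the statement is the \emph{coefficient} of the leading term, so the main calculation is to identify it. Integrating \eqref{eq:gl:approx.u} in $u$ from $-\infty$, the dominant contribution on the RHS comes from the $j=0$ term, $\tfrac{D}{r^{3+2L}}(2M)(2x_1^{(L)}-c_0^L)[r^2\pv]^L(r\phi_L)$, together with the $j=1$ term which is of the same order in $r$ near $\mathcal I^-$ but is a power lower in $|u|$ — I would need to keep both when $L=p$ (which is why Proposition~\ref{prop:prop.10.1} supplies $C_{[r^2\pv]^{L+1}}$ explicitly) but only the $j=0$ term otherwise. On data, Proposition~\ref{prop:prop.10.1} gives $[r^2\pv]^L(r\phi_L)(u,1)=\cv{L}r_0^{L-p}+\mathcal O(r_0^{L-p-\epsilon})$ with $\cv{L}=\tfrac{p!\ci}{(L+p)!}\prod_{i=0}^{L-1}(a_0^i-b_0^iL(L+1))$; propagating this along $\mathcal C_{v=1}$ to $\mathcal I^-$ and then integrating the $u$-direction picks up the factor $(2x_1^{(L)}-c_0^L)$ and a combinatorial factor from the nested integrations. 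The iterated $v$- and $u$-integrals are governed by the elementary estimate \eqref{eq:nl:stupidintegral1.5} and by Lemma~\ref{lemma9.1}; carrying these out is what produces the rational prefactor $\tfrac{(L-\min(p,L))!(L+1+p)!}{(2L+2)!}$ and the case split $L\gtrless p+1$ (power vs.\ logarithm vs.\ bounded, exactly as the integral $\int^u r^{L-1-p}|u'|^{-\text{something}}$ behaves). In the boundary case $L=p+1$ the $u$-integral of $r^{L-1-p}/|u'|^{L-p}=1/|u'|$ produces $\log r-\log|u|$, with the additive $\mathcal O(1)$ absorbing the lower-order data error and the difference between $r_0$ and $|u|$.

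For the refined statement when $L=p$, I would be more careful: here $[r^2\pv]^L(r\phi_L)(u,1)$ is a constant up to $\mathcal O(r_0^{-\epsilon})$, the leading RHS contribution integrates to an $r_0^{-1}$ term, and one must combine the $j=0$ contribution (coefficient $\propto 2x_1^{(L)}-c_0^L$, with the combinatorial $\tfrac{1}{2L+2}$ from integrating $Dr^{-2L}\cdot r^{-3-2L}\cdot r^{2L}$ against $[r^2\pv]^L$) with the cross term coming from the $-M r\phi_L$-type correction hidden inside $\Phi_L$ (coefficient $\propto x_1^{(L)}$) — this is the origin of the combination $\tfrac{2x_1^{(L)}-c_0^L}{2L+2}-x_1^{(L)}$. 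The cleanest route is to write $\Phi_L=[r^2\pv]^L(r\phi_L)+2Mx_1^{(L)}[r^2\pv]^{L-1}(r\phi_L)+\dots$, note $[r^2\pv]^{L-1}(r\phi_L)$ on $\mathcal I^-$ contributes a term of the same $r_0^{-1}$ order via one $v$-integration from data, and add up. The main obstacle is purely bookkeeping: matching the nested-integration combinatorics against the claimed rational constants while keeping track of which error terms ($|u|/r$, $r^{-\epsilon}$, and the $r_0$ vs.\ $|u|$ discrepancy) are genuinely subleading in each of the three regimes. I expect no conceptual difficulty beyond what was already handled in \S\ref{sec:general:null}; the work is in the careful evaluation of $\int_{-\infty}^u r^{N'}|u'|^{-N}\,\dd u'$ via Lemma~\ref{lemma9.1} at the precise exponents $N'=L-1-p$, $N=L-p$ (and $N=L-p+\epsilon$ for the error).
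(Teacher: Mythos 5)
Your overall architecture coincides with the paper's: a bootstrap on $[r^2\pv]^L(r\phi_L)$ seeded by a crude decay bound, improvement via the conservation law \eqref{eq:gl:approx.u} integrated in $u$ from $\mathcal I^-$ followed by a $v$-integration from the data, the coefficient $2M(2x_1^{(L)}-c_0^L)$ coming from the $j=0$ term, and, for $L=p$, the extra $-x_1^{(L)}$ coming from the $2Mx_1^{(L)}[r^2\pv]^{L-1}(r\phi_L)$ correction hidden inside $\Phi_L$. One ingredient you omit: to make the bootstrap set non-empty and to control the lower-order quantities you need the a priori pointwise bound $|r\phi_L|\lesssim|u|^{-p}$, which comes from the weighted energy estimate of Proposition~\ref{prop:l:energy} (this is exactly why \eqref{eq:mg:ass1} is an $\mathcal O_1$-assumption); "integrating the bootstrap from $v=1$" does not by itself give you the continuity-in-$v$ input needed to close the continuity argument.

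The substantive problem is that you have the key integrals backwards, and the step you describe for producing the trichotomy fails as written. Integrating \eqref{eq:gl:approx.u} in $u$ produces integrals of the form $\int_{-\infty}^u |u'|^{L-p}\, r^{-2L-3}\,\dd u'$ (heavy negative $r$-weight, polynomial $|u'|$-weight), \emph{not} $\int_{-\infty}^u r^{N'}|u'|^{-N}\dd u'$: Lemma~\ref{lemma9.1} does not even apply at the exponents you quote (it requires $N>N'+1$, while $N=L-p$, $N'=L-1-p$ gives equality), and the integral $\int_{-\infty}^u|u'|^{-1}\dd u'$ you invoke to generate the logarithm when $L=p+1$ is divergent. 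The correct bookkeeping is: the $u$-integration, via the mirror image of Lemma~\ref{lemma9.1} (i.e.\ an expansion of $\int|u'|^{N'}r^{-N}\dd u'$), converges because of the $r^{-2L-3}$-weight and yields $r^{-2L}\pv\Phi_L\sim r^{-L-p-2}$ with the rational prefactor $\tfrac{(L-p)!(L+1+p)!}{(2L+2)!}$; hence $\pv[r^2\pv]^L(r\phi_L)\sim r^{L-2-p}$, and the trichotomy and the logarithm come from the \emph{subsequent $v$-integration} $\int_1^v r^{L-2-p}\dd v'$, which equals $\tfrac{r^{L-1-p}}{L-1-p}$, $\log r-\log r_0$, or $\mathcal O(|u|^{L-1-p})$ according as $L-2-p$ is greater than, equal to, or less than $-1$. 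This is fixable within your plan, but the computation you propose is not the one that actually occurs, and carried out literally it breaks at the divergent integral.
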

Notice that if $L<p+1$, then the second line of \eqref{eq:mg:prop2} decays faster than the first line, whereas if $L\geq p+1$, the second line determines the leading-order $r$-behaviour. 

In principle, we can also compute the case $L<p$ more precisely, but since it has already been dealt with in \S \ref{sec:general:null}, we choose not to.
 Suffice it to say that if $L<p$, then there will also be a logarithmic term at order $(\log r-\log|u)/r^{p+1-L}$, and if $L=p$, there will be a logarithmic term at order $(\log r-\log|u)/r^{p+2-L}$. Cf.~Remark~\ref{rem104}.
\begin{proof}
We will prove the proposition by first making a bootstrap assumption to obtain a preliminary estimate on $[r^2\pv]^L(r\phi_L)$ (see \eqref{eq:mgx}), and then using this preliminary estimate to obtain the sharp leading-order decay.
\subsubsection{A preliminary estimate}
We can deduce from the energy estimate from Proposition~\ref{prop:l:energy} that $|r\phi_L|\leq C |u|^{-p}$ for some constant $C$ solely determined by initial data. 
Cf.\ Corollary~\ref{cor:almostsharp}. 
(This is the reason why we also assumed decay on the first derivative in \eqref{eq:mg:ass1}.)
By repeating the calculations done in \S \ref{sec:mg:data}, we can then derive from $|r\phi_L|\leq C |u|^{-p}$ the estimates
\begin{align}\label{eq:mg:BS2}
[r^2\pv]^{N+1}(r\phi_L)(u,v)\leq C r^{N+1}|u|^{-p}+\mathcal{O}(r^{N+1-p-\epsilon})
\end{align}
 for $N=0,\dots, L-1$ and another constant $C$.

Consider now the set $X$ of all $V> 1$ such that the bootstrap assumption
\begin{align}\label{eq:mg:BS}
\left|[r^2\pv]^L(r\phi_L)\right|(u,v)\leq C_{\mathrm{BS}} \max(r^{L-p},|u|^{L-p})
\end{align}
holds for all $1\leq v\leq V$ and
for some suitably chosen constant $C_{\mathrm{BS}}$. 
The $\max$ above distinguishes between the cases of growth ($L-p>0$) and decay ($L-p<0$). 
\textit{For easier readability, we will suppose for the next few lines that $L\geq p$. This assumption will be removed in \eqref{eq:mgx}.}
 In view of the estimate \eqref{eq:mg:BS2}, this set is non-empty provided that $C_{\mathrm{BS}}$ is sufficiently large, and it suffices to improve the assumption \eqref{eq:mg:BS} within $X$ to deduce that $X=(1,\infty)$. 
 Indeed, if we assume that $\sup_X v$ is finite and improve estimate \eqref{eq:mg:BS} within $X$, then \eqref{eq:mg:BS2} shows that $\sup_X v+\delta$ would still be in $X$ for sufficiently small $\delta$ by continuity.
 (Here, we used that the RHS of \eqref{eq:mg:BS2} can be written as $C(V)\cdot \max(r^{N+1-p},|u|^{N+1-p})$ for some continuous function $C(V)$.)

Let us therefore improve the bound \eqref{eq:mg:BS} inside $X$:
First, note that \eqref{eq:mg:BS} implies that
\begin{align}\label{eq:mg:BS3}
\left|[r^2\pv]^{N}(r\phi_L)\right|(u,v)\lesssim C_{\mathrm{BS}} \max(r^{N-p},|u|^{N-p}).
\end{align}
for $N=0,\dots,L$, where we also used \eqref{eq:mg:databounds}.
Recall now equation \eqref{eq:gl:approx.u}:
\begin{equation}
\pu(r^{-2L}\pv\Phi_L)
=\sum_{j=0}^L\frac{D}{r^{3+2L}}(2M)^{j+1}[r^2\pv]^{L-j}(r\phi_{L})\left(2(j+1)x^{(L)}_{j+1}-\sum_{i=0}^j x_i^{(L)} c_{j-i}^{L-i}\right) .
\end{equation}
As a consequence of \eqref{eq:mg:BS3} and the bootstrap assumption \eqref{eq:mg:BS}, the RHS is bounded by $r^{-L-3-p}$:
\begin{equation}
\left|\pu(r^{-2L}\pv\Phi_L)\right|\lesssim \frac{2MD}{r^{L+3+p}}\cdot C_{\mathrm{BS}}+\mathcal O(r^{-L-4-p})\lesssim \frac{C_{\mathrm{BS}}}{r^{L+3+p}}.
\end{equation}
We integrate this bound in $u$ from $\mathcal I^-$, where $r^{-2L}\pv\Phi_L=0$ by the no incoming radiation condition \eqref{eq:mg:assNoIncoming}. This yields
\begin{equation}
\left|r^{-2L}\pv\Phi_L\right|\lesssim \frac{C_{\mathrm{BS}}}{r^{L+2+p}}.
\end{equation}
We now recall the definition of $\Phi_L$ from \eqref{eq:NPfuture} and estimate the difference $\pv\Phi_L-\pv[r^2\pv]^L(r\phi_L)$ using once more the bootstrap assumption, resulting in the bound
\begin{equation}
\left|\pv[r^2\pv]^L(r\phi_L)\right|\lesssim C_{\mathrm{BS}}r^{L-2-p}.
\end{equation}
Finally, we integrate the bound above from $v=1$, 
\begin{equation}\label{eq:mgx}
\left|[r^2\pv]^L(r\phi_L)(u,v)-[r^2\pv]^L(r\phi_L)(u,1)\right|\lesssim\begin{cases} C_{\mathrm{BS}}r^{L-1-p},& L> 1+p,\\
C_{\mathrm{BS}}(\log r-\log |u|), & L=1+p,\\
C_{\mathrm{BS}}|u|^{L-1-p}, & L<1+p,\end{cases}
\end{equation}
which, combined with the initial data bound \eqref{eq:mg:databounds}, improves the bootstrap assumption. 
(The third case in \eqref{eq:mgx} follows from considerations similar to the above.)
However, we can already read off from \eqref{eq:mgx} that, unless $L<1+p$, it is actually the RHS of \eqref{eq:mgx} that determines the leading-order $r$-behaviour of $[r^2\pv]^L(r\phi_L)$, whereas
the data term on the LHS only determines the leading-order $u$-behaviour. 
We will understand the precise behaviour of the RHS in the next section.
\subsubsection{Precise leading-order behaviour of \texorpdfstring{$[r^2\pv]^L(r\phi_L)$}{(r2 d/dv)L(r phi-L)}}
We again restrict to $p\leq L$ for simpler notation, the only major difference if $p>L$ is explained in Remark~\ref{rem10}. 
We will also assume for simplicity that $\epsilon<1$, leaving the case $\epsilon=1$ to the reader.

In order to find the precise leading-order behaviour of $[r^2\pv]^L(r\phi_L)$, we repeat the previous steps, with the difference that we now  use the improved estimate\footnote{\label{footnote}The $|u|^{L-p-\epsilon}$ term in the RHS above needs to be replaced by $|u|^{L-1-p}\log |u|$ if $\epsilon= 1$ because $r\sim |u|-2M\log|u| $ on $v=1$. This can be fixed by replacing $|u|^{L-p}$ with $r_0^{L-p}$.}
\begin{align}
\left|	[r^2\pv]^L(r\phi_L)-\cv{L}|u|^{L-p}	\right|\leq C r^{L-1-p}(\delta_{L,p+1}(\log r-\log |u|)+1)+C|u|^{L-p-\epsilon},
\end{align}
implied by \eqref{eq:mgx} and \eqref{eq:mg:databounds}, instead of the preliminary estimate \eqref{eq:mg:BS}. 
Similarly, we improve the estimate \eqref{eq:mg:BS3} to
\begin{align}
\left| [r^2\pv]^N(r\phi_L)  \right|\leq C |u|^{N-p}+Cr^{N-1-p}(\log r-\log |u|+1)
\end{align} for all $N=0,\dots, L-1$.
Inserting these two bounds into \eqref{eq:gl:approx.u}, we obtain
\begin{multline}
\pu(r^{-2L}\pv\Phi_L)
=\frac{2MD}{r^{2L+3}}(2x_1^{(L)}-x_0^{(L)}c_0^L)\cdot \cv{L}|u|^{L-p}\\
+\mathcal O\left(r^{-L-4-p}(\log r-\log|u|+1)+r^{-2L-3}|u|^{L-p-\epsilon}\right).
\end{multline}
Integrating the above in $u$ gives
\begin{align}\label{eq:mgxx}
r^{-2L}\pv\Phi_L=\int_{-\infty}^u \frac{2MD}{r^{2L+3}}(2x_1^{(L)}-x_0^{(L)}c_0^L)\cdot \cv{L}|u'|^{L-p}\dd u'+\mathcal O(r^{L-2-p-\epsilon}).
\end{align}
On the LHS of \eqref{eq:mgxx}, we have
\begin{align}\label{eq:mg:LHS}
r^{-2L}\pv\Phi_L=r^{-2L}\pv[r^2\pv]^L(r\phi_L)
+2Mx_1^{(L)}\frac{\cv{L}|u|^{L-p}}{r^{2L+2}}+\mathcal O\left(\frac{r^{L-1-p}+\log r +|u|^{L-p-\epsilon}}{r^{2L+2}}\right).
\end{align}
In order to estimate the RHS of \eqref{eq:mgxx}, we recall that $x_0^{(L)}=1$, and compute the integral using the following
\begin{lemma}\label{lemma10.1}
Let $N, N'\in \mathbb{N}$ with $N>N'+1$. Then
\begin{equation}\label{eq:lemma101}
(N-1)\int_{-\infty}^u\frac{|u'|^{N'}}{|r|^{N}}\dd u'=
\sum_{k=0}^{N'}\frac{|u|^{N'-k}}{|r|^{N-1-k}}\prod_{j=1}^k\frac{N'+1-j}{N-1-j}+\mathcal O(r^{N'-N}).
\end{equation}
\end{lemma}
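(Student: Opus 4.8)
The plan is to prove Lemma~\ref{lemma10.1} by induction on $N'$, carrying out repeated integration by parts in the $u'$-variable along the fixed outgoing cone $\{v=\mathrm{const}\}$, exactly in the spirit of the proof of Lemma~\ref{lemma9.1} (cf.\ Appendix~\ref{A3}). The one algebraic identity needed is $\partial_{u'}(r^{1-M})=(M-1)D r^{-M}$ for every integer $M\geq 2$ (a consequence of $\partial_{u'}r=-D$), i.e.\ $(M-1)r^{-M}\dd u'=D^{-1}\dd(r^{1-M})$. Alongside this I will use the elementary facts $D^{-1}=1+\mathcal{O}(r^{-1})$, $\partial_{u'}(D^{-1})=\mathcal{O}(r^{-2})$, $\partial_{u'}(|u'|^{N'})=-N'|u'|^{N'-1}$ (valid since $u'<0$), the comparison $|u'|\lesssim r$ on the domain of integration (which follows from \eqref{eq:v-u=r}), and the fact that $r$ increases monotonically as $u'\to-\infty$ with $\dd u'=-D^{-1}\dd r$, so that $\int_{-\infty}^u\mathcal{O}(r^{-\alpha})\dd u'=\mathcal{O}(r^{1-\alpha})$ whenever $\alpha>1$ (convergence at $-\infty$ being guaranteed precisely by the hypothesis $N>N'+1$).

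For the base case $N'=0$, I integrate $(N-1)\int_{-\infty}^u r^{-N}\dd u'=\int_{-\infty}^u D^{-1}\dd(r^{1-N})$ by parts once. The boundary term at $u'=-\infty$ vanishes because $r^{1-N}\to0$; the boundary term at $u'=u$ equals $D^{-1}r^{1-N}=r^{1-N}+\mathcal{O}(r^{-N})$; and the leftover integral $\int_{-\infty}^u r^{1-N}\partial_{u'}(D^{-1})\dd u'=\int_{-\infty}^u\mathcal{O}(r^{-1-N})\dd u'=\mathcal{O}(r^{-N})$. This is \eqref{eq:lemma101} with the one-term sum $r^{1-N}$.

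For the inductive step I assume $N'\geq1$ and that \eqref{eq:lemma101} holds for the pair $(N'-1,N-1)$ (note $N-1>(N'-1)+1$), and integrate $(N-1)\int_{-\infty}^u|u'|^{N'}r^{-N}\dd u'=\int_{-\infty}^u|u'|^{N'}D^{-1}\dd(r^{1-N})$ by parts. The boundary term at $-\infty$ vanishes since $|u'|^{N'}r^{1-N}\lesssim|u'|^{N'+1-N}\to0$, while the boundary term at $u$ contributes $|u|^{N'}r^{1-N}+\mathcal{O}(r^{N'-N})$. Differentiating $|u'|^{N'}D^{-1}$ gives a main piece $-N'|u'|^{N'-1}D^{-1}$ and an error piece $\mathcal{O}(|u'|^{N'}r^{-2})$; after multiplying by $r^{1-N}$ the error piece integrates to $\mathcal{O}(r^{N'-N})$, and replacing $D^{-1}$ by $1$ in the main piece costs a further $\mathcal{O}(r^{N'-N})$. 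What survives is $N'\int_{-\infty}^u|u'|^{N'-1}r^{1-N}\dd u'=\frac{N'}{N-2}\cdot(N-2)\int_{-\infty}^u|u'|^{N'-1}r^{-(N-1)}\dd u'$, to which the induction hypothesis applies. Hence
\[
(N-1)\int_{-\infty}^u\frac{|u'|^{N'}}{r^N}\dd u'=\frac{|u|^{N'}}{r^{N-1}}+\frac{N'}{N-2}\Bigl[(N-2)\int_{-\infty}^u\frac{|u'|^{N'-1}}{r^{N-1}}\dd u'\Bigr]+\mathcal{O}(r^{N'-N}),
\]
and unrolling this recursion reproduces exactly the claimed sum $\sum_{k=0}^{N'}\frac{|u|^{N'-k}}{r^{N-1-k}}\prod_{j=1}^{k}\frac{N'+1-j}{N-1-j}$. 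Crucially, at the $k$-th stage the natural error scale is $r^{(N'-k)-(N-k)}=r^{N'-N}$, independent of $k$, so the finitely many accumulated errors all have size $\mathcal{O}(r^{N'-N})$ and no logarithms appear.

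I do not anticipate a real obstacle; the only point requiring care is the error bookkeeping --- specifically, checking that each integration by parts improves the $r$-power of the error integrand by one relative to the term one recurses on (so that the error integral is $\mathcal{O}(r^{N'-N})$ and not $\mathcal{O}(r^{N'-N+1})$), and that the comparison $|u'|\lesssim r$ coming from \eqref{eq:v-u=r} is applied uniformly on the range of integration. All remaining manipulations are a direct finite iteration, identical in flavour to Appendix~\ref{A3}.
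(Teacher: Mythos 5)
Your proof is correct and is essentially the paper's own argument: the paper merely remarks that Lemma~\ref{lemma10.1} "proceeds almost identically to the proof of Lemma~\ref{lemma9.1}", i.e.\ by induction combined with integration by parts via $\pu(r^{1-N})=(N-1)Dr^{-N}$, which is exactly what you carry out, with the error bookkeeping ($D^{-1}=1+\mathcal O(r^{-1})$, $|u'|\lesssim r$, and the fixed error scale $r^{N'-N}$ at every stage) done correctly.
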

\begin{proof}
The proof proceeds almost identically to the proof of Lemma~\ref{lemma9.1}. 
Alternatively, one can also compute the integral directly by writing $|u'|=r+v+\mathcal O(\log r)$. 
This latter approach is also useful for $N'\notin \mathbb N$.
\end{proof}
\begin{rem}\label{rem10}
When considering the case $p>L$, then the lemma above slightly changes (i.e.\ for $N'<0$). 
While it is trivial to obtain the $|u|^{L-1-p}$-decay claimed in \eqref{eq:mg:prop2}, one can also obtain a more precise statement:
 In fact, if $N'<0$, then the above integral is precisely the one that gave rise to the logarithmic terms in the previous sections \S\ref{sec:nl} and \S\ref{sec:general:null} (see, for instance, eq.\ \eqref{eq:nl:stupidintegral1.5}). 
 In particular, if $N>0>N'$, the integrals of \eqref{eq:lemma101} will lead to logarithmic terms at order $(\log r-\log|u|)/r^{N-N'-1}$.
\end{rem}
Applying Lemma~\ref{lemma10.1} (with $N'=L-p$ and $N=2L-3$) to \eqref{eq:mgxx}, we obtain 
\begin{nalign}\label{eq:mg:yy}
r^{-2L}\pv\Phi_L
=2M\cv{L}(2x_1^{(L)}-c_0^L)\frac{(L-p)!}{(2L+2)\cdots (L+2+p)}\cdot\frac{1}{r^{L+2+p}}\left(1+\mathcal{O}\left(\frac{|u|}{r}+\frac{1}{r^{\epsilon}}\right)\right)
\end{nalign}
where we used that $|u|^q/r^q\leq 1$ for any $q>0$.
Finally, using \eqref{eq:mg:LHS} to write $\pv\Phi_L\sim \pv[r^2\pv]^L(r\phi_L)$, and integrating from $v=1$, we obtain, \underline{if $L>1+p$},
\begin{multline}\label{eq:mg:pvLbound1}
[r^2\pv]^L(r\phi_L)(u,v)=[r^2\pv]^L(r\phi_L)(u,1)\\
+2M\cv{L}(2x_1^{(L)}-c_0^L)\frac{(L-p)!(L+1+p)!}{(L-1-p)(2L+2)!}\cdot r^{L-1-p} \left(1+\mathcal{O}\left(\frac{|u|}{r}+\frac{1}{r^{\epsilon}}\right)\right).
\end{multline}

On the other hand, \underline{if $L=1+p$}, then we obtain
\begin{multline}\label{eq:mg:pvLbound2}
[r^2\pv]^L(r\phi_L)(u,v)=[r^2\pv]^L(r\phi_L)(u,1)\\
+2M\cv{L}(2x_1^{(L)}-c_0^L)\frac{(L-p)!(L+1+p)!}{(2L+2)!}\cdot (\log r-\log|u|) +\mathcal O(1)
\end{multline} 
where we used that $\log r(u,1)=\log |u|+\mathcal O(|u|^{-1})$.

The case \underline{$L<1+p$} follows in much the same way.
 The only difference is that one now also needs to take the second term on the RHS of \eqref{eq:mg:LHS} into account since it will give a contribution of the same order as the  $r^{-2L}\pv\Phi_L$-term. 
 For the latter, one can derive an estimate similar to \eqref{eq:mg:yy}. 
 This concludes the proof of Proposition~\ref{prop:prop.10.2}.
\end{proof}
\subsection{Precise leading-order behaviour of \texorpdfstring{$[r^2\pv]^{L-i}(r\phi_L)$}{(r2 d/dv)L-i(r phi-L)}}\label{sec:mg:y}

\begin{prop}\label{prop:prop.10.3}
Let $\phi_L$ be as in Theorem~\ref{thm:moreg}, and let $0\leq j\leq L$. Then we have \underline{for $j<L-1-p$:}
\begin{multline}\label{eq:mg:prop31}
[r^2\pv]^{L-j}(r\phi_L)(u,v)=\mathrm{data}_{L-j}\\
+2M\cv{L}(2x_1^{(L)}-c_0^L)\frac{(L-2-p-j)!(L-p)(L+1+p)!}{(2L+2)!}\cdot r^{L-1-p-j}\left(1+\mathcal{O}\left(\frac{|u|}{r}+\frac{1}{r^{\epsilon}}\right)\right) 
\end{multline}
On the other hand, \underline{if $j=L-1-p$}, we have
\begin{multline}\label{eq:mg:prop31.5}
[r^2\pv]^{p+1}(r\phi_L)(u,v)=\mathrm{data}_{p+1}\\
+2M\cv{L}(2x_1^{(L)}-c_0^L)\frac{(L-p)(L+1+p)!}{(2L+2)!}\cdot (\log r-\log|u|) +\mathcal O(1).
\end{multline}
Finally, \underline{if $j>L-1-p$}, we have
\begin{equation}\label{eq:mg:prop32}
[r^2\pv]^{L-j}(r\phi_L)(u,v)=\mathrm{data}_{L-j}+\mathcal O(|u|^{-j+(L-1-p)}).
\end{equation} Moreover, if also $p\leq L-1$, then $[r^2\pv]^{L-j}(r\phi_L)$ possesses an asymptotic expansion in powers of $1/r$ up to $r^{-j+L-p}$, with a logarithmic term appearing at order
\begin{equation}\label{eq:mg:prop33}
2M\cv{L}(2x_1^{(L)}-c_0^L)\frac{(L-p)!(L+p+1)!}{(L-p-1)!(2L+2)!}\frac{(-1)^{j-(L-1-p)}}{(j-(L-1-p))!}\cdot\frac{\log r-\log|u|}{r^{j-(L-1-p)}}.
\end{equation}
In the above, the expression $\mathrm{data}_{L-j}$ is shorthand for
\begin{multline}\label{eq:mg:prop34}
\mathrm{data}_{L-j}:=[r^2\pv]^{L-j}(r\phi_L)(u,1)\\
+\sum_{i=1}^j 
\underbrace{\int_{r(u,1)}^{r(u,v)}\frac{1}{Dr_{(i)}^2}\dots\int_{r(u,1)}^{r_{(2)}}\frac{1}{Dr_{(1)}^2}\dd r_{(1)}\cdots\dd r_{(i)}}_{i \text{ integrals}}\cdot[r^2\pv]^{L-j+i}(r\phi_L)(u,1).
\end{multline}
\end{prop}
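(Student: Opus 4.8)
The plan is to prove Proposition~\ref{prop:prop.10.3} by an induction on $j$ (the number of times we integrate $[r^2\pv]^L(r\phi_L)$ from the initial hypersurface $v=1$ towards $\mathcal I^+$), using Proposition~\ref{prop:prop.10.2} as the base case $j=0$ and Proposition~\ref{prop:prop.10.1} to control the data terms that are picked up at each step. The key mechanical fact is the identity
\begin{equation}
[r^2\pv]^{L-j}(r\phi_L)(u,v)=[r^2\pv]^{L-j}(r\phi_L)(u,1)+\int_1^v \pv[r^2\pv]^{L-j}(r\phi_L)(u,v')\dd v',
\end{equation}
together with the conversion of the $v$-integral into an $r$-integral via $\pv r = D$, which is exactly how the nested integrals in \eqref{eq:mg:prop34} arise: writing $\pv[r^2\pv]^{L-j}(r\phi_L)=\tfrac{1}{r^2}\cdot r^2\pv[r^2\pv]^{L-j}(r\phi_L)=\tfrac{1}{Dr^2}\,\partial_r\big([r^2\pv]^{L-j+1}(r\phi_L)\big)$ when we integrate in $r$, we reduce the estimate for $[r^2\pv]^{L-j}$ to that for $[r^2\pv]^{L-j+1}$ plus one more data term. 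Iterating $j$ times starting from the sharp estimate for $[r^2\pv]^L(r\phi_L)$ of Proposition~\ref{prop:prop.10.2} reproduces precisely the sum in \eqref{eq:mg:prop34} and leaves one "bulk" term to analyse.

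\textbf{The main computation.} The content of the proposition is therefore to track how the leading-order non-data part of $[r^2\pv]^L(r\phi_L)$ — which, by \eqref{eq:mg:pvLbound1}, \eqref{eq:mg:pvLbound2}, is a multiple of $r^{L-1-p}$ (or $\log r -\log|u|$ if $L=p+1$) — behaves under $j$-fold integration in $r$ against the weight $\tfrac{1}{Dr_{(i)}^2}$. Since $\int^r \tfrac{1}{Dr'^2}(r')^{m}\dd r' \sim \tfrac{1}{m-1}r^{m-1}(1+\mathcal O(r^{-1}))$ for $m\neq 1$, each integration lowers the power by one and multiplies by the appropriate combinatorial factor, which is how the factorials $(L-2-p-j)!$, $(L-p)$, etc.\ in \eqref{eq:mg:prop31} accumulate; I would carry out this bookkeeping carefully for the three regimes $j<L-1-p$ (power-law leading term survives), $j=L-1-p$ (the power hits $r^0$ and the primitive of $r^{-1}$ produces the $\log r-\log|u|$ of \eqref{eq:mg:prop31.5}), and $j>L-1-p$ (the leading term has been integrated into decay, and \eqref{eq:mg:prop32} records that what remains is $\mathcal O(|u|^{-j+L-1-p})$ from the data terms of Proposition~\ref{prop:prop.10.1} evaluated near $\mathcal I^-$). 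For the last regime with $p\le L-1$, to get the finer claim \eqref{eq:mg:prop33} about the precise order at which the logarithm appears, I would use the improved estimates feeding into \eqref{eq:mgxx} and Lemma~\ref{lemma10.1} (in the $N'<0$ form described in Remark~\ref{rem10}) to extract the logarithmic term; each further $r$-integration of $\tfrac{\log r -\log|u|}{r^{k}}$ contributes $\tfrac{(-1)}{k-1}\tfrac{\log r -\log|u|}{r^{k-1}}$ up to a non-logarithmic correction, which accounts for the alternating sign $(-1)^{j-(L-1-p)}$ and the factorial $\tfrac{1}{(j-(L-1-p))!}$ in \eqref{eq:mg:prop33}.

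\textbf{The obstacle.} The genuinely delicate point is not the leading-order bookkeeping but the uniformity of the error terms throughout $\mathcal D$, in particular near $\mathcal I^+$ where $r\to\infty$ at fixed $u$ and the separation between "the $r^{L-1-p-j}$ bulk term" and "the $|u|^{-j+L-1-p}$ data term" degenerates. The factors $\mathcal O(|u|/r)$ appearing in \eqref{eq:mg:prop31} are precisely what keeps the statement consistent in both the regime $r\sim|u|$ and $r\gg|u|$, and verifying that each integration step preserves an error of this shape — rather than, say, producing a spurious $\log r/r^{\text{something}}$ that beats the claimed remainder — requires care. Concretely, one has to check that the nested integral operator $\int_{r_0}^{r}\tfrac{1}{Dr'^2}(\cdot)\dd r'$ maps the error class $\{g : |g|\le C r^{m}(|u|/r + r^{-\epsilon})\}$ into the error class with $m\to m-1$, using $r_0 = |u|-2M\log|u|+\mathcal O(1)$ on $v=1$ and the elementary integral identities already recorded as Lemma~\ref{lemma9.1}/Lemma~\ref{lemma10.1} and in \eqref{eq:nl:stupidintegral1}, \eqref{eq:nl:stupidintegral1.5}, \eqref{eq:nl:stupidintegral2}. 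Once this stability of the error class is established, the three cases of the proposition follow by assembling the base case from Proposition~\ref{prop:prop.10.2}, the data sum from the iterated identity, and the combinatorial constants from the explicit primitives, and the proof is complete.
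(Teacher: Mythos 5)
Your proposal is correct and follows essentially the same route as the paper: integrate the sharp estimate of Proposition~\ref{prop:prop.10.2} $j$ times from $v=1$, collecting the data contributions of Proposition~\ref{prop:prop.10.1} into $\mathrm{data}_{L-j}$ via the nested $\int \frac{1}{Dr^2}$ operators, and tracking the bulk term through the elementary primitives of $r^{m}$ and $(\log r-\log|u|)/r^{q}$ in the three regimes $j<L-1-p$, $j=L-1-p$, $j>L-1-p$. The only cosmetic difference is that for the finer claim \eqref{eq:mg:prop33} the paper integrates the $j=L-1-p$ estimate from $\mathcal I^+$ rather than from $v=1$ so as to cleanly separate the logarithmic term from the data terms (which dominate the leading order when $j>L-1-p$), but your bookkeeping of the alternating sign and the factorial $1/(j-(L-1-p))!$ is the same and equally valid.
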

\begin{rem}
Notice that \eqref{eq:mg:prop33} only holds for $p\leq L-1$. 
Indeed, we already know from the results of \S \ref{sec:general:null} (or \S \ref{sec:nl} for $L=1$) that if $p=L$, then the first logarithmic term in the expansion of $[r^2\pv]^{L-j}(r\phi_L)$ will appear at order $\dfrac{\log r-\log|u|}{r^{j-(L-2-p)}}$.
 For $p>L$, in contrast, one can show the first logarithmic term will appear at $\dfrac{\log r-\log|u|}{r^{j-(L-1-p)}}$, although we won't show this here. 
 (Again, we have in fact already shown this for $p=L+1$ in \S \ref{sec:general:null}.) 
 In this sense, there is a cancellation happening at $p=L$.
\end{rem}
\begin{proof}
We simply need to integrate \eqref{eq:mg:pvLbound1} (or, in general, the bound \eqref{eq:mg:prop2}) $j$ times from $v=1$, using at each step the initial data bounds \eqref{eq:mg:databounds}.
If $j<L-1-p$, one obtains inductively that
\begin{multline}\label{eq:mgxxx}
[r^2\pv]^{L-j}(r\phi_L)(u,v)=[r^2\pv]^{L-j}(r\phi_L)(u,1)\\
+\sum_{i=1}^j 
\int_{r(u,1)}^{r(u,v)}\frac{1}{Dr_{(i)}^2}\int_{r(u,1)}^{r_{(i)}}\frac{1}{Dr_{(i-1)}^2}\dots\int_{r(u,1)}^{r_{(2)}}\frac{1}{Dr_{(1)}^2}\dd r_{(1)}\cdots\dd r_{i-1}\dd r_{(i)}[r^2\pv]^{L-j+i}(r\phi_L)(u,1)\\
+2M\cv{L}(2x_1^{(L)}-c_0^L)\frac{(L-p)!(L+1+p)!\cdot r^{L-1-p-j}}{(L-1-p)\cdots (L-1-p-j)\cdot(2L+2)!} 
\cdot \left(1+\mathcal O\left(\frac{|u|}{r}+\frac{1}{r^{\epsilon}}\right)\right).
\end{multline}
Here, the terms in the second line come from integrating the initial data contributions (divided by $r^2$) in the estimates for $[r^2\pv]^{L-j+i}(r\phi_L)$. 
Notice that all these terms have the same leading-order $u$-decay, so there might be cancellations between them. 
We will return to this in \S\ref{sec:mg:datacanc}.
 For now, we  simply leave them as they are and write them as $\mathrm{data}_{L-j}$.

From \eqref{eq:mgxxx}, one deduces that if $j=L-1-p$:
\begin{multline}\label{eq:mgxxxx}
[r^2\pv]^{p+1}(r\phi_L)(u,v)=\mathrm{data}_{p+1}\\
+2M\cv{L}(2x_1^{(L)}-c_0^L)\frac{(L-p)!(L+1+p)!}{(L-1-p)!(2L+2)!}\cdot (\log r-\log|u|) +\mathcal O(1).
\end{multline}
Assuming that $L-1-p\geq 0$, we finally integrate \eqref{eq:mgxxxx} again from $v=1$ (and write $j':=j-(L-1-p)$) to obtain that,
for all $L\geq j\geq L-1-p$:
\begin{nalign}\label{eq:mgxxxxx}
[r^2\pv]^{L-j}(r&\phi_L)(u,v)=\mathrm{data}_{L-j}\\
+&2M\cv{L}(2x_1^{(L)}-c_0^L)\frac{(L-p)!(L+p+1)!}{(L-p-1)!(2L+2)!}\frac{(-1)^{j'}}{j'!}\cdot\left(\frac{\log r-\log|u|}{r^{j'}}\right) +\mathcal O(|u|^{-j'}),
\end{nalign}
where we inductively used that, for any $q>0$,
\begin{align*}
(q-1)\int_{r(u,1)}^{r(u,v)}\frac{\log r-\log |u|}{r^q}\dd r=-\frac{\log r-\log|u|}{r^{q-1}}+\frac{1}{|u|^{q-1}}-\frac{1}{r^{q-1}}.
\end{align*}
Notice that, for $j>L-1-p$, in contrast to \eqref{eq:mgxxx} and \eqref{eq:mgxxxx}, the leading-order $r$-decay of $[r^2\pv]^{p+1-j}(r\phi_L)$ is no longer determined by the second line of \eqref{eq:mgxxxxx}, but by the first line, namely the initial data terms. (If $j=L-p$,  the second line still provides the next-to-leading-order behaviour in $r$.) To nevertheless prove the fourth claim \eqref{eq:mg:prop33} of the proposition, one can simply obtain an analogue of \eqref{eq:mgxxxxx} by integrating the estimate \eqref{eq:mgxxxx} $j$ times from \textit{future null infinity}, rather than from $v=1$. This concludes the proof.
\end{proof}

Setting $j=L$ in the above, we get
\begin{multline}\label{eq:mgz}
r\phi_L(u,v)=r\phi_L(u,1)+\sum_{i=1}^L 
\int_{r(u,1)}^{r(u,v)}\frac{1}{Dr_{(i)}^2}\dots\int_{r(u,1)}^{r_{(2)}}\frac{1}{Dr_{(1)}^2}\dd r_{(1)}\cdots\dd r_{(i)}[r^2\pv]^{i}(r\phi_L)(u,1)\\
+\mathcal O\left(\frac{\log r-\log |u|}{r^{p+1}}\right)+\mathcal{O}\left(\frac{1}{|u|^{p+1}}\right).
\end{multline}
In view of \eqref{eq:mg:databounds}, the above estimate shows that $r\phi_L=C|u|^{-p}+\mathcal O(r^{-1}|u|^{-p+1})$ for some constant $C$, however, this constant $C$ might potentially be zero. Indeed, we already know that this is what happens in the case $L=1=p$ (discussed in \S \ref{sec:nl}), to which, in fact,  \eqref{eq:mgz} applies. (Recall that we showed that $r\phi_1\sim 1/r+1/|u|^2$ if $r\phi_1\sim 1/|u|$ initially.) We discuss these potential cancellations now.

\subsection{Cancellations in the initial data contributions}\label{sec:mg:datacanc}
We now analyse the $v=1$-contributions $\mathrm{data}_{L-j}$ in the first (and second) line(s)  of \eqref{eq:mgxxx}--\eqref{eq:mgxxxxx} in more detail. 
Define $r(u,1)=r_0(u)$.
We will prove the following 
\begin{prop}\label{prop:mg:prop4}
Let $0\leq j\leq L$. 
The expression $\mathrm{data}_{L-j}$ defined in \eqref{eq:mg:prop34} evaluates to\footnote{Note that since we express $u$-decay  in terms of $r_0$, \eqref{eq:mg:prop4} holds for all $\epsilon \leq 1$. Cf.\ footnote~\ref{footnote}.}
\begin{equation}\label{eq:mg:prop4}
\mathrm{data}_{L-j}=\ci r_0^{L-p-j}\sum_{n=0}^j S_{L,p,j,n}\left(\frac{r(u,1)}{r(u,v)}\right)^n+\mathcal O(r_0^{L-p-j-\epsilon}),
\end{equation}
where the $S_{L,p,j,n}$ are constants that are computed explicitly in eq.\ \eqref{eq:mg:SLPj-n}. They never vanish if $p> L$. However, if $p\leq L$, then they vanish if and only if $L-p+n+1\leq j\leq L$. 
\end{prop}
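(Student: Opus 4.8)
\textbf{Proof strategy for Proposition~\ref{prop:mg:prop4}.}

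The plan is to compute the quantity $\mathrm{data}_{L-j}$ defined in \eqref{eq:mg:prop34} explicitly, using the initial-data expansions \eqref{eq:mg:databounds} together with the iterated integrals appearing in that formula, and then to extract the coefficients $S_{L,p,j,n}$ in closed form. First I would substitute $[r^2\pv]^{L-j+i}(r\phi_L)(u,1) = \cv{L-j+i}\, r_0^{L-j+i-p} + \mathcal O(r_0^{L-j+i-p-\epsilon})$ from Proposition~\ref{prop:prop.10.1} into the sum $\sum_{i=1}^j$ in \eqref{eq:mg:prop34}, plus the leading $i=0$ term $[r^2\pv]^{L-j}(r\phi_L)(u,1)$. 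Since $D = 1-2M/r = 1 + \mathcal O(r^{-1})$ and we are only tracking leading order in $r_0$ (with $\mathcal O(r_0^{L-p-j-\epsilon})$ errors), I would replace each $\frac{1}{Dr_{(k)}^2}$ by $\frac{1}{r_{(k)}^2}$ up to acceptable error, so that the $i$-fold nested integral becomes
\begin{equation}
\underbrace{\int_{r_0}^{r}\frac{dr_{(i)}}{r_{(i)}^2}\cdots\int_{r_0}^{r_{(2)}}\frac{dr_{(1)}}{r_{(1)}^2}}_{i\text{ integrals}},
\end{equation}
which is an elementary integral. Carrying out the inner integral gives $\frac{1}{r_0}-\frac{1}{r_{(2)}}$, and iterating one sees by induction that the $i$-fold integral equals a polynomial in $1/r_0$ and $1/r$ of the form $\sum_{n=0}^{i} \binom{?}{?}(-1)^{?} r_0^{-(i-n)} r^{-n}$ (the precise binomial structure to be pinned down by a short induction — this is the routine part). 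Multiplying by $\cv{L-j+i}\, r_0^{L-j+i-p}$ and summing over $i$ collects everything into the form $\ci\, r_0^{L-p-j}\sum_{n=0}^{j} S_{L,p,j,n}(r_0/r)^n + \mathcal O(r_0^{L-p-j-\epsilon})$, where $S_{L,p,j,n}$ is an explicit finite sum of products of the constants $\cv{\bullet}/\ci$ (which are themselves explicit rational expressions involving $p!/(k+p)!$ and $\prod(a_0^i - b_0^i L(L+1)) = \prod(i(i+1)-L(L+1))$ by \eqref{eq:mg:cv}) times binomial coefficients. This gives \eqref{eq:mg:SLPj-n}.

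Next I would establish the vanishing/non-vanishing dichotomy. The key algebraic input is that $a_0^i - b_0^i L(L+1) = i(i+1) - L(L+1) = (i-L)(i+L+1)$, so $\cv{k+1}$ for $k \le L-1$ is a rational multiple of $\prod_{i=0}^{k}(i-L)(i+L+1)$; in particular $\cv{k+1} = 0$ precisely when the product contains the factor $(i-L)$ with $i=L$, i.e. never for $k \le L-1$, but the \emph{sign} alternates and the magnitudes are controlled. For $p\le L$, the claim is that $S_{L,p,j,n} = 0$ iff $L-p+n+1 \le j \le L$. I would prove this by recognizing the sum defining $S_{L,p,j,n}$ as (a normalization of) the $n$-th Taylor coefficient, at the relevant point, of an explicit rational/hypergeometric generating function built from the $\cv{\bullet}$'s; the vanishing then reduces to a combinatorial identity of the type $\sum_{i} (-1)^i \binom{a}{i}\binom{b+i}{c} = 0$ valid in a prescribed range of parameters (a Vandermonde–Chu type identity). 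The cleanest route — and the one I would actually pursue — is the one hinted at in Remark after the proposition statement and in \S\ref{sec:4.4.3}: interpret $\mathrm{data}_{L-j}$ as the value, restricted to $v=1$-data $r\phi_L(u,1) = \ci r_0^{-p}$, of the quantity $[r^2\pv]^{L-j}(r\phi_L)$ for the \emph{exact Minkowski solution} ($M=0$) satisfying the no-incoming-radiation condition, since for $M=0$ the error-terms in all the above become exact and the approximate conservation laws become exact conservation laws. One then has an explicit formula for this Minkowski solution (the content of Proposition~\ref{prop:mg:prop4} viewed from the $M=0$ side), and the vanishing statement becomes the statement that the exact Minkowski solution with data $\sim r_0^{-p}$ and $p \in \{1,\dots,L\}$ decays faster on $\mathcal I^+$ than $|u|^{-p}$ — precisely the cancellation advertised in Remark~\ref{rem:intro:cancel} and Theorem~\ref{thm:intro:moreg}\eqref{eq:intro:thma1}.

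The main obstacle I anticipate is bookkeeping the exact combinatorial coefficients cleanly enough that the vanishing range $L-p+n+1\le j\le L$ drops out transparently rather than through a brute-force index chase. Concretely: the nested integrals produce one layer of binomial coefficients, the constants $\cv{L-j+i}$ produce another (a ratio of factorials times a product of $(i-L)(i+L+1)$), and showing that the convolution of these two is zero in exactly the stated range requires either a generating-function argument or an appeal to the closed form of the $M=0$ solution. I would organize the proof so that the $M=0$ solution formula is derived first (solving $\pu\pv(r\phi_L) = -\frac{L(L+1)}{r^2}r\phi_L$ with the no-incoming-radiation boundary condition, e.g. via the explicit representation in terms of $\frac{1}{r^{p}}$ times a polynomial in $u/r$ of degree $\min(p-1,L)$, which can be guessed from the $\ell=0$ and $\ell=1$ cases and verified by direct substitution), and then the combinatorial identity is \emph{automatic}: the coefficients $S_{L,p,j,n}$ are just the coefficients of that polynomial after applying $[r^2\pv]^{L-j}$, and they vanish for $j$ large simply because a polynomial of bounded degree in $1/r$ is annihilated by enough applications of $r^2\pv$ composed with the structure of $\Phi_L$. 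The non-vanishing for $p > L$ then follows because in that regime the solution genuinely carries a nonzero Newman–Penrose-type constant and no such degree collapse occurs. I would relegate the explicit formula \eqref{eq:mg:SLPj-n} and the verification that it matches the direct iterated-integral computation to a short lemma, and present the vanishing dichotomy as the substantive conclusion.
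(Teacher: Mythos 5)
Your first step --- substituting the expansions of $[r^2\pv]^{L-j+i}(r\phi_L)(u,1)$ from Proposition~\ref{prop:prop.10.1} into \eqref{eq:mg:prop34} and evaluating the nested integrals in closed form --- is exactly what the paper does; the closed form is Lemma~\ref{lem10.2} (the $i$-fold integral equals $\frac{1}{i!}\left(r_0^{-1}-r^{-1}\right)^i$), so the binomial structure you defer to ``a short induction'' is genuinely routine. The gap is in the vanishing dichotomy, which is the entire substantive content of the proposition. Your route (a) correctly identifies that the coefficient of $(r_0/r)^n$ reduces to an alternating sum of factorial ratios whose vanishing is a Chu--Vandermonde-type identity; this is precisely the paper's route, where the sum $\sum_{i=0}^{j}(-1)^i\frac{(2L-j+i)!}{i!\,(L+p-j+i)!\,(j-i)!}$ is evaluated by writing it, for $p=L-k$, as $(-1)^j\frac{\dd^k}{\dd x^k}\big|_{x=1}\bigl(x^{2L-j}\frac{(x-1)^j}{j!}\bigr)$ --- which vanishes precisely when $k<j$ because of the order-$j$ zero at $x=1$ --- and, for $p>L$, as an iterated integral of $x^{2L-j}(1-x)^j/j!$, which is manifestly positive. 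You do not carry this out, and the precise range $L-p+n+1\le j\le L$, in particular its dependence on $n$ (handled in the paper via the reduction $\mathrm{sum}(L,p,j,n)=\frac{1}{n!}\,\mathrm{sum}(L,p,j-n)$), is exactly the part that needs proof.

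Your preferred route (b) is circular as stated. For $M=0$ the remainder terms in \eqref{eq:mgxxx}--\eqref{eq:mgxxxxx} vanish identically, so $\mathrm{data}_{L-j}$ \emph{equals} $[r^2\pv]^{L-j}(r\phi_L)$ for the exact Minkowski solution; the ``explicit formula for this Minkowski solution'' with the polynomial structure you wish to exploit is therefore the assertion of the proposition itself, and cannot be invoked to prove it. Moreover, the ansatz you propose to ``guess and verify'' is wrong: for $p\le L$ the exact $M=0$ solution is $r^{-p}$ times a polynomial in $r_0/r$ of degree $L-p$ (equivalently $r\phi_L=\sum_{n=p}^{L}c_n\,r_0^{\,n-p}r^{-n}$), not of degree $\min(p-1,L)$; already for $L=2$, $p=1$ your ansatz has degree $0$ while the true solution contains a nonvanishing $r_0/r^2$ term (the proposition gives $S_{2,1,2,2}\neq 0$). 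Verification ``by direct substitution'' also only checks the equation, not that the solution matches the prescribed data on $v=1$ together with the no-incoming-radiation condition; fitting those within the classical two-free-function representation of $\ell=L$ Minkowski solutions --- which would be an honest, and indeed more conceptual, version of route (b), with the cancellation explained by the radiation function being a polynomial of degree $L-p$ in $u$ --- reintroduces exactly the factorial combinatorics you are trying to avoid, and you have not supplied it. Finally, the heuristic ``decays faster than $|u|^{-p}$ on $\mathcal I^+$'' would at best give the $(j,n)=(L,0)$ instance of the vanishing, not the full two-parameter range claimed.
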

\begin{rem}
The computations required for the proof of the above are completely Minkowskian. 
This is to be understood in the sense that the $M$-dependence of \eqref{eq:mg:prop4} is entirely  contained in the $\mathcal O(r_0^{L-p-j-\epsilon})$ term. 
In fact, the above proposition provides us with exact solutions to the linear wave equation on Minkowski that arise from initial data $r\phi_L(u,1)=\ci r_0^{-p}$ and the no incoming radiation condition \eqref{eq:mg:assNoIncoming}.
\end{rem}
\begin{proof}
We first require an expression for the integrals in $\mathrm{data}_{L-j}$. For this, we prove
\begin{lemma}\label{lem10.2}Let $k\in\mathbb N$. Then
\begin{align}
\underbrace{\int_{r(u,1)}^{r(u,v)}\frac{1}{r_{(k)}^2}\dots\int_{r(u,1)}^{r_{(2)}}\frac{1}{r_{(1)}^2}\dd r_{(1)}\cdots\dd r_{(k)}}_{k\text{ integrals}}=\frac{1}{k!}\left(\frac{1}{r(u,1)}-\frac{1}{r(u,v)}\right)^k.
\end{align}
\end{lemma}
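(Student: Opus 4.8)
\textbf{Proof plan for Lemma~\ref{lem10.2}.}

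The statement is a clean iterated-integral identity, and the natural approach is induction on $k$. The base case $k=1$ is immediate: $\int_{r(u,1)}^{r(u,v)} r_{(1)}^{-2}\dd r_{(1)} = \frac{1}{r(u,1)}-\frac{1}{r(u,v)}$, which matches the right-hand side with $k=1$ since $1/1!=1$. For the inductive step, suppose the claim holds for $k-1$, i.e.\ the innermost $k-1$ integrations (with upper limit $r_{(k)}$) produce $\frac{1}{(k-1)!}\bigl(\frac{1}{r(u,1)}-\frac{1}{r_{(k)}}\bigr)^{k-1}$. Then the $k$-fold integral reduces to the single integral
\begin{equation}
\int_{r(u,1)}^{r(u,v)} \frac{1}{r_{(k)}^2}\cdot\frac{1}{(k-1)!}\left(\frac{1}{r(u,1)}-\frac{1}{r_{(k)}}\right)^{k-1}\dd r_{(k)}.
\end{equation}

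The key observation is a change of variables: setting $s := \frac{1}{r(u,1)}-\frac{1}{r_{(k)}}$, we have $\dd s = r_{(k)}^{-2}\dd r_{(k)}$, so the integrand becomes $\frac{1}{(k-1)!}s^{k-1}\dd s$. The limits transform as $r_{(k)}=r(u,1)\mapsto s=0$ and $r_{(k)}=r(u,v)\mapsto s = \frac{1}{r(u,1)}-\frac{1}{r(u,v)}$. Hence the integral equals $\frac{1}{(k-1)!}\cdot\frac{1}{k}\left(\frac{1}{r(u,1)}-\frac{1}{r(u,v)}\right)^{k} = \frac{1}{k!}\left(\frac{1}{r(u,1)}-\frac{1}{r(u,v)}\right)^{k}$, completing the induction.

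There is no real obstacle here — the computation is routine. The only point requiring a moment's care is making sure the substitution $s = \frac{1}{r(u,1)}-\frac{1}{r_{(k)}}$ is applied to the correct ("outermost") variable of integration and that the nested limits are handled consistently, i.e.\ that each inner integral genuinely has its \emph{own} upper endpoint equal to the next variable out while the lower endpoint is always the fixed value $r(u,1)$. Once the inductive hypothesis is set up with the upper limit as a free parameter, the rest falls out mechanically. This lemma will then be fed into the computation of $\mathrm{data}_{L-j}$ in Proposition~\ref{prop:mg:prop4}: combined with the data bounds \eqref{eq:mg:databounds}, it turns each nested-integral term in \eqref{eq:mg:prop34} into an explicit polynomial in $\frac{r(u,1)}{r(u,v)}$ times a power of $r_0$, which is exactly the form claimed in \eqref{eq:mg:prop4}.
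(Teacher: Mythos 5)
Your proof is correct and follows essentially the same route as the paper's: induction on $k$ with the upper limit treated as a free parameter, followed by the substitution $r\mapsto 1/r$ (your $s=\frac{1}{r(u,1)}-\frac{1}{r_{(k)}}$ is the paper's $x=1/r'$ up to a shift and sign) to evaluate the outermost integral. No gaps.
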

\begin{proof}
The proof is deferred to  the appendix~\ref{A4}.
\end{proof}
Equipped with this lemma, we can write the data contributions in the estimates of Proposition~\ref{prop:prop.10.3}, namely
\begin{equation}
\mathrm{data}_{L-j}=\sum_{i=0}^j 
\int_{r(u,1)}^{r(u,v)}\frac{1}{Dr_{(i)}^2}\dots\int_{r(u,1)}^{r_{(2)}}\frac{1}{Dr_{(1)}^2}\dd r_{(1)}\cdots\dd r_{(i)}[r^2\pv]^{L-j+i}(r\phi_L)(u,1),
\end{equation}
as follows, writing from now on $r=r(u,v)$ and $r_0=r(u,1)$ and estimating the $D^{-1}$-terms against $1+\mathcal O(r_0^{-1})$:
\begin{equation}
\mathrm{data}_{L-j}=\sum_{i=0}^j \frac{1}{i!}\left(\frac{1}{r_0}-\frac{1}{r}\right)^i	 [r^2\pv]^{L-j+i}(r\phi_L)(u,1)\cdot\left(1+\mathcal{O}(r_0^{-1})\right).
\end{equation}

We now insert the estimates \eqref{eq:mg:databounds} to write this as
\begin{equation}
\mathrm{data}_{L-j}=\sum_{i=0}^j  \frac{p!\prod_{k=0}^{L-j+i-1}(a_0^k-L(L+1))}{i!(L+p-j+i)!}\left(\frac{1}{r_0}-\frac{1}{r}\right)^i\cdot	\ci r_0^{L-p-j+i}\left(1+\mathcal{O}(r_0^{-\epsilon})\right),
\end{equation}
where we used that $b_0^k=1$. By noting that
\begin{equation*}
a_0^k-L(L+1)=k(k+1)-L(L+1)=-(L+k+1)(L-k), 
\end{equation*}
we can further express the product as
\begin{equation}\label{eq:mg:product}
\prod_{k=0}^{L-j+i-1}(a_0^k-L(L+1))=(-1)^{L-j+i}\frac{(2L-j+i)!L!}{L!(j-i)!}=(-1)^{L-j+i}\frac{(2L-j+i)!}{(j-i)!}.
\end{equation}
This yields
\begin{multline}\label{eq:mg:datalj}
\mathrm{data}_{L-j}=\sum_{i=0}^j \frac{ (-1)^{L-j+i}p!(2L-j+i)!}{i!(L+p-j+i)!(j-i)!}\left(\frac{1}{r_0}-\frac{1}{r}\right)^i	\cdot	\ci r_0^{L-p-j+i}\left(1+\mathcal{O}(r_0^{-\epsilon})\right).
\end{multline}
A cancellation at leading-order, i.e.\ at order $r_0^{L-p-j}$, takes place if the sum
\begin{equation}
\sum_{i=0}^j (-1)^{L-j+i} \frac{p!(2L-j+i)!}{i!(L+p-j+i)!(j-i)!}=:(-1)^{L-j}p!\cdot \mathrm{sum}(L,p,j)
\end{equation}
vanishes. To understand when this happens, we prove the following
\begin{lemma}\label{prop10}
Let $0\leq j\leq L$.
If $p>L$ {or $p=0$}, then $\mathrm{sum}(L,p,j)$ never vanishes.
If $0<p\leq L$, then $\mathrm{sum}(L,p,j)$ vanishes if and only if $j\in\{L-p+1,\dots,L\}$.

More precisely, if $p>L$, then
\begin{equation}\label{eq:prop101}
\mathrm{sum}(L,p,j)=\overbrace{\int_0^1\int_0^{x_{p-L}}\cdots\int_{0}^{x_{2}}}^{p-L \text{ integrals}}x_{1}^{2L-j}\frac{(1-x_{1})^j}{j!}\dd x_{1}\cdots\, dx_{p-L-1}\dd x_{p-L},
\end{equation}
which is manifestly positive. On the other hand, if $p\leq L$, then
\begin{equation}\label{eq:prop102}
\mathrm{sum}(L,p,j)=(-1)^j\binom{L-p}{L-p-j}\frac{(2L-j)!}{(L+p)!}, 
\end{equation}
where we use the convention that $\binom{n}{k}=0$ if $k<0\leq n$. 

In fact, equation \eqref{eq:prop102} also applies to $p>L$ if we define in the standard way  $\binom{L-p}{L-p-j}:=(-1)^{j} \binom{p-L+j-1}{j}$.
\end{lemma}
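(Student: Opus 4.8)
The plan is to prove Lemma~\ref{prop10} by first establishing the integral representation \eqref{eq:prop101} for $p>L$, then extracting from it the closed-form \eqref{eq:prop102}, and finally checking that \eqref{eq:prop102} continues to make sense (via the standard sign convention for binomial coefficients with negative upper index) for $p\leq L$, where it immediately exhibits the claimed vanishing pattern. The starting point is the elementary Beta-integral identity
\begin{equation*}
\frac{(2L-j+i)!(j-i)!}{(2L+1)!}=\int_0^1 x^{2L-j+i}(1-x)^{j-i}\dd x,
\end{equation*}
which rewrites each summand of $\mathrm{sum}(L,p,j)$ in integral form. Summing over $i$ from $0$ to $j$ and recognising the resulting finite sum as a binomial expansion, one obtains (after pulling the $x$-integration outside)
\begin{equation*}
\mathrm{sum}(L,p,j)=\frac{(2L+1)!}{(L+p)!\, p!}\int_0^1 \frac{x^{2L-j}}{j!}\big(x+(1-x)\big)^j\,\dd x \ \text{??}
\end{equation*}
— more carefully, the combinatorial weight $\tfrac{p!(2L-j+i)!}{i!(L+p-j+i)!(j-i)!}$ is \emph{not} quite of binomial-expansion shape, and the genuine mechanism is iterated integration: the factor $\tfrac{1}{(L+p-j+i)!}$ relative to $\tfrac{1}{(2L-j+i)!}$ contributes exactly $p-L$ ``extra'' factorials in the denominator, which is what produces the $p-L$ nested integrations in \eqref{eq:prop101}. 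So the right bookkeeping is to write $\tfrac{(2L-j+i)!}{(L+p-j+i)!}$ as a $(p-L)$-fold iterated integral of a monomial (using $\tfrac{n!}{(n+m)!}=\tfrac{1}{(m-1)!}\int_0^1 t^{n}(1-t)^{m-1}\dd t$ iterated, or directly the nested form), substitute, and then perform the sum over $i$ under the integral signs, where it telescopes into $x_1^{2L-j}(1-x_1)^j/j!$ by the binomial theorem. This proves \eqref{eq:prop101}, and positivity is then manifest since the integrand is nonnegative and not identically zero on the simplex.

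Next I would extract \eqref{eq:prop102} from \eqref{eq:prop101}. One evaluates the iterated integral: the innermost integral $\int_0^{x_2} x_1^{2L-j}(1-x_1)^j/j!\,\dd x_1$ is a polynomial in $x_2$, and successive integrations raise degrees; after all $p-L$ integrations and setting the outer variable to $1$, the answer is a rational number. Rather than grinding the iterated integral, the cleaner route is: \eqref{eq:prop101} equals $\int_0^1 \frac{(1-x)^{p-L-1}}{(p-L-1)!} \cdot \frac{x^{2L-j}(1-x)^j}{j!}\,\dd x$ (collapsing the nested simplex integral against the monomial via the standard ``repeated integration $=$ convolution with $(1-x)^{m-1}/(m-1)!$'' identity), which is a single Beta integral equal to $\frac{(p-L-1+? )\cdots}{\cdots}$; comparing with $\frac{(2L-j)!}{(L+p)!}$ times a binomial coefficient gives \eqref{eq:prop102} for $p>L$ with the understanding $\binom{L-p}{L-p-j}=(-1)^j\binom{p-L+j-1}{j}$. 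Then for $p\leq L$ one simply \emph{checks directly} that the original alternating sum $\mathrm{sum}(L,p,j)$ equals $(-1)^j\binom{L-p}{L-p-j}\frac{(2L-j)!}{(L+p)!}$: this is a finite binomial identity (Vandermonde-type) that can be verified by the same integral manipulation, now with $p-L\leq 0$ so that no genuine integration occurs and the ``iterated integral'' degenerates into a differentiation/residue, or simply by induction on $j$ using the defining recursion. The vanishing statement is then immediate: $\binom{L-p}{L-p-j}=0$ precisely when $0\le L-p<j$, i.e.\ $L-p+1\le j\le L$, and is nonzero (indeed $\binom{L-p}{L-p-j}=\binom{L-p}{j}\ne 0$) when $0\le j\le L-p$; the case $p=0$ gives $\binom{L}{L-j}\ne0$ always.

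Having proved Lemma~\ref{prop10}, the remainder of Proposition~\ref{prop:mg:prop4} is bookkeeping: one reads off the constants $S_{L,p,j,n}$ by collecting, in \eqref{eq:mg:datalj}, the coefficient of $(r_0/r)^n = $ (the $i=n$ part of the binomial expansion of $(1/r_0-1/r)^i$ summed appropriately) times $r_0^{L-p-j}$, which yields an explicit formula of the form
\begin{equation*}
S_{L,p,j,n}=(-1)^{L-j+n}p!\sum_{i=n}^j \frac{(-1)^{i-n}(2L-j+i)!}{n!\,(i-n)!\,(L+p-j+i)!\,(j-i)!},
\end{equation*}
and one observes that the $n=0$ term is exactly $(-1)^{L-j}p!\,\mathrm{sum}(L,p,j)$ while, more generally, shifting the summation index shows $S_{L,p,j,n}$ is, up to an explicit nonzero prefactor, $\mathrm{sum}(L',p',j')$ for shifted parameters $L'=L$, $p'=p$, $j'=j-n$ (this is because each power of $1/r$ simply decreases the effective derivative order). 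Hence $S_{L,p,j,n}$ vanishes iff $\mathrm{sum}(L,p,j-n)$ does, i.e.\ iff $L-p+1\le j-n\le L$, i.e.\ $L-p+n+1\le j\le L+n$; intersecting with $j\le L$ gives the stated range $L-p+n+1\le j\le L$, and for $p>L$ it never vanishes. The Minkowskian remark is then a one-line observation: every place where $M$ enters the derivation of \eqref{eq:mg:datalj} is through a factor $D^{-1}=1+\mathcal O(r_0^{-1})$ or through the error terms in \eqref{eq:mg:databounds}, all of which are absorbed into $\mathcal O(r_0^{L-p-j-\epsilon})$.

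The main obstacle I anticipate is getting the iterated-integral / collapsing-to-Beta-integral step \eqref{eq:prop101} genuinely right — in particular matching the combinatorial weights exactly so that the sum over $i$ telescopes into $x_1^{2L-j}(1-x_1)^j/j!$, and correctly handling the sign and the degenerate regime $p\le L$ where the number of integrations would be nonpositive and one must instead argue the binomial identity \eqref{eq:prop102} directly. Everything downstream (positivity, the vanishing range, the reduction of $S_{L,p,j,n}$ to $\mathrm{sum}(L,p,j-n)$) is routine once the core identity is nailed down.
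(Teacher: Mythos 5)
Your proposal is correct and, after the self-correction in the second display, lands on essentially the same argument as the paper: realise the factorial ratio $\tfrac{(2L-j+i)!}{(L+p-j+i)!}$ as a $(p-L)$-fold integration (for $p>L$) or an $(L-p)$-fold differentiation (for $p\leq L$) of $x^{2L-j+i}$ at $x=1$, sum over $i$ via the binomial theorem to produce $x^{2L-j}(1-x)^j/j!$, and read off \eqref{eq:prop101} and, via Leibniz, \eqref{eq:prop102}. The only (harmless) variation is your collapsing of the nested simplex integral to the single Beta integral $\int_0^1\tfrac{(1-x)^{p-L-1}}{(p-L-1)!}\tfrac{x^{2L-j}(1-x)^j}{j!}\dd x$, which gives \eqref{eq:prop102} for $p>L$ slightly more explicitly than the paper's inductive remark.
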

\begin{proof}
The proof is deferred to the appendix~\ref{A5}. Notice, however, that one can make certain soft statements without having to do any computations. For instance, if we consider the case $p=1$ and suppose there are no cancellations for $j=L$, then we would obtain from \eqref{eq:mgz} an estimate of the form $r\phi_L=C/|u|+\mathcal O(1/r+1/|u|^2)$. Inserting this into the wave equation \eqref{eq:gl:dvcommute} with $N=0$ would then give that $\pv(r\phi_L)\sim \log r/r^2$, a contradiction to the estimate \eqref{eq:mgxxxxx} for $j=L-1$. Thus, there has to be a cancellation at $j=1$; in other words, $\mathrm{sum}(L,1,1)=0$. However, we here choose to calculate the sums explicitly.
\end{proof}
Lemma~\ref{prop10} provides us with an understanding of cancellations at leading-order, i.e.\ at order $r_0^{L-p-j}$. Similarly, we can understand cancellations at higher order in \eqref{eq:mg:datalj}, say at order $r_0^{L-p-j+n}r^{-n}$, by considering the corresponding sum
\begin{align}
\sum_{i=0}^j(-1)^n \binom{i}{n} (-1)^{L-j+i} \frac{p!(2L-j+i)!}{i!(L+p-j+i)!(j-i)!}=: (-1)^{L-j}p!\cdot\mathrm{sum}(L,p,j,n).
\end{align}
Understanding this sum is straight-forward: We have
\begin{align*}
\mathrm{sum}(L,p,j,n)=&\sum_{i=0}^j(-1)^n\binom{i}{n} (-1)^{i} \frac{(2L-j+i)!}{i!(L+p-j+i)!(j-i)!}\\
=\frac{1}{n!} &\sum_{i=n}^j (-1)^{i-n} \frac{(2L-j+i)!}{(i-n)!(L+p-j+i)!(j-i)!}\\
=\frac{1}{n!} &\sum_{i=0}^{j-n} (-1)^{i} \frac{(2L-j+i+n)!}{i!(L+p-j+i+n)!(j-i-n)!},
\end{align*}
and thus
\begin{equation}
\mathrm{sum}(L,p,j,n)=\frac{1}{n!}\mathrm{sum}(L,p,j-n),
\end{equation}
where we set $\mathrm{sum}(L,p,j)=0$ if $j<0$.
In particular, in view of Lemma~\ref{prop10} above, we obtain that if $j\geq n$ and $p> L$, no cancellations occur. On the other hand, if $j\geq n$ and $p\leq L$, then cancellations occur if and only if $L-p+n+1\leq j\leq L$ and $j-n\geq 1$.

This concludes the proof of Proposition~\ref{prop:mg:prop4}, with the constants $ S_{L,p,j,n}$ being given by
\begin{equation}\label{eq:mg:SLPj-n}
 S_{L,p,j,n}=(-1)^{L-j}\frac{p!}{n!}\mathrm{sum}(L,p,j-n),
\end{equation}
where $\mathrm{sum}(L,p,j-n)$ is computed explicitly in Lemma~\ref{prop10}.
\end{proof}

\subsection{Summary and proof of Theorem~\ref{thm:moreg}}\label{sec:mg:summary}
We can roughly (and schematically) summarise the results obtained so far as
\begin{align}\label{cases}
[r^2\pv]^{L-j}(r\phi_L)(u,v)\sim
	\begin{cases}
			|u|^{L-p-j},& \text{ if } L<p \text{ or } j=L-p\geq 0,\\
			|u|^{L-p-j}(|u|^{-\epsilon}+|u|r^{-1})	,& \text{ if } L\geq p \text{ and } j\geq L-p+1,\\
			\log r-\log|u| +|u|,		& \text{ if } L> p \text{ and } j=L-p-1,\\
			r^{L-1-p-j}+|u|^{L-p-j},		& \text{ if } L> p \text{ and } 0\leq j<L-p-1.
	\end{cases}
\end{align}
The first case follows from estimate \eqref{eq:mg:prop32} from Proposition~\ref{prop:prop.10.3} and the fact that \textit{there are no} cancellations in the data term $\mathrm{data}_{L-j}$ in view of Proposition~\ref{prop:mg:prop4}. The leading-order behaviour is thus entirely determined by the data.

In contrast, the second case follows from \eqref{eq:mg:prop32} from Proposition~\ref{prop:prop.10.3} and the fact that \textit{there are} cancellations in the data term $\mathrm{data}_{L-j}$ in view of Proposition~\ref{prop:mg:prop4}. 
Notice moreover that if $\epsilon<1$, then the leading-order behaviour will only have contributions from the data. (To see this, one needs to repeat the calculations of Proposition~\ref{prop:prop.10.1}, taking into account also the subleading terms.) 
 If $\epsilon=1$, then there will, in addition, be contributions from the $\mathcal O$-terms in \eqref{eq:mg:prop32}. Note that, if desired, all of these contributions can be computed explicitly by following the steps above but without discarding the subleading terms.

The third case follows from \eqref{eq:mg:prop31.5} from Proposition~\ref{prop:prop.10.3}, with the $|u|$-term coming again from the data contribution $\mathrm{data}_{L-j}$, which contains no cancellations in view of Proposition~\ref{prop:mg:prop4}.

The fourth case follows in the same way from \eqref{eq:mg:prop31} from Proposition~\ref{prop:prop.10.3}, with the $|u|^{L-p-j}$-term coming again from the data contribution $\mathrm{data}_{L-j}$, which contains no cancellations in view of Proposition~\ref{prop:mg:prop4}.

More precisely, we have the following
\begin{cor}\label{cor10}
Let $\phi_L$ and $\mathcal D$ be as in Theorem \ref{thm:moreg}, and recall that $r_0:=r(u,1)=|u|-2M\log |u|+\mathcal O(1)$, as well as the constants $S_{L,p,j,n}$ defined in \eqref{eq:mg:SLPj-n}.

1.) If \underline{$L<p$ and $j\geq 0$}, or if \underline{$j=L-p\geq 0$}, then we have throughout $\mathcal D$:
\begin{equation}\label{eq:mg:cor1}
[r^2\pv]^{L-j}(r\phi_L)=\ci  S_{L,p,j,0}\cdot  r_0^{L-p-j}(1+\mathcal O(|u|r^{-1}+|u|^{-\epsilon})).
\end{equation}

2.) If \underline{$L\geq p$ and $j\geq L-p+1$}, then we have throughout $\mathcal D$:
\begin{equation}\label{eq:mg:cor2-1}
[r^2\pv]^{L-j}(r\phi_L)=\mathcal O (r_0^{L-p-j-\epsilon})+\mathcal O(r^{-1}|u|^{L-p-j+1}).
\end{equation}
Indeed, if we suppose instead of \eqref{eq:mg:ass1} that
\begin{equation}
|r\phi-\ci r^{-p}+\ci{}_{,\epsilon} r^{-p-\epsilon}|\leq C r^{-p-\epsilon'}
\end{equation}
for some constants $C, \ci{}_{,\epsilon}$ and for some $0<\epsilon\leq 1<\epsilon'$, then we have
\begin{equation}\label{eq:mg:cor2}
[r^2\pv]^{L-j}(r\phi_L)=\tilde C \cdot  r_0^{L-p-j-\epsilon} +\mathcal O(|u|^{L-p-j-1})+\mathcal O(r^{-1}|u|^{L-p-j+1})
\end{equation}
for some constant $\tilde C=\tilde C(L,p,j,\epsilon,M,\ci,\ci{}_{,\epsilon})$ which we can compute explicitly.
 
 3.) If \underline{$L\geq p$ and $j=L-p-1$}, then we have throughout $\mathcal D$:
\begin{equation}\label{eq:mg:cor31}
[r^2\pv]^{p+1}(r\phi_L)=C'_1 \cdot (\log r-\log |u|)+\ci  S_{L,p,L-p-1,0}\cdot  r_0 +\mathcal O(|u|^{1-\epsilon}), 
\end{equation}
with the constant $C'_1$ being given by
\begin{flalign}
C'_1=(-1)^L\cdot 2M(2x_1^{(L)}-c_0^L)\cdot \frac{p!(L-p)(L+1+p)}{(2L+2)(2L+1)}\cdot \ci.
\end{flalign}

4.) If \underline{$L>p$ and $0\leq j\leq L-p-2$}, then we have throughout $\mathcal D$:
 \begin{nalign}
[r^2\pv]^{L-j}(r\phi_L)=C'_2  \cdot r^{L-1-p-j}+\ci  S_{L,p,j,0} \cdot r_0^{L-p-j} +\mathcal O(|u|^{L-p-j-\epsilon})+\mathcal O(r^{L-2-p-j}|u|), &&
\end{nalign}
with the constant $C'_2$ being given by
$C'_2=(L-2-p-j)!\cdot C'_1$.
\end{cor}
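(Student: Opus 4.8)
\textbf{Proof plan for Corollary~\ref{cor10}.}

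The plan is to read off all four cases from the asymptotic expansions already established in Propositions~\ref{prop:prop.10.2}, \ref{prop:prop.10.3}, and~\ref{prop:mg:prop4}, combining them with the explicit evaluation of the data contribution $\mathrm{data}_{L-j}$ from Proposition~\ref{prop:mg:prop4}. The only genuinely new work is bookkeeping: matching the constants and the error orders, and --- for case~2 --- pushing the analysis to one more order of precision than was needed earlier.

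\emph{Case 1 ($L<p$, or $j=L-p\geq 0$).} Here Proposition~\ref{prop:prop.10.3}, equation~\eqref{eq:mg:prop32}, gives $[r^2\pv]^{L-j}(r\phi_L)=\mathrm{data}_{L-j}+\mathcal O(|u|^{-j+(L-1-p)})$, and by Proposition~\ref{prop:mg:prop4} (in the regime $p>L$, or $p\le L$ with $j\le L-p$, where the $S_{L,p,j,n}$ do \emph{not} vanish) we have $\mathrm{data}_{L-j}=\ci r_0^{L-p-j}\sum_{n=0}^j S_{L,p,j,n}(r_0/r)^n+\mathcal O(r_0^{L-p-j-\epsilon})$. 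Since $r_0/r\le 1$ and, in fact, $r_0/r=1+\mathcal O(|u|r^{-1})$ (using $r-r_0=v-1+\mathcal O(\log r)$), the sum collapses to $S_{L,p,j,0}(1+\mathcal O(|u|r^{-1}))$ up to terms absorbed into the stated error; the $\mathcal O(|u|^{-j+L-1-p})$ term from~\eqref{eq:mg:prop32} is likewise $\mathcal O(r_0^{L-p-j}\cdot|u|^{-1})$ and hence absorbed. This yields~\eqref{eq:mg:cor1}. One must check that $S_{L,p,j,0}\neq 0$ in this regime, which is exactly the content of Lemma~\ref{prop10}.

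\emph{Case 2 ($L\geq p$, $j\geq L-p+1$).} Now Proposition~\ref{prop:mg:prop4} says $S_{L,p,j,n}=0$ for $0\le n\le j-(L-p)-1$, so the leading $r_0^{L-p-j}$-behaviour of $\mathrm{data}_{L-j}$ cancels and only $\mathcal O(r_0^{L-p-j-\epsilon})$ survives from the data term; combined with the $\mathcal O(r^{-1}|u|^{L-p-j+1})$ piece coming from integrating~\eqref{eq:mgxxxxx}, this gives~\eqref{eq:mg:cor2-1}. For the sharpened version~\eqref{eq:mg:cor2}, I would redo the induction of Proposition~\ref{prop:prop.10.1} retaining the subleading $\ci_{,\epsilon}r^{-p-\epsilon}$ data term, which propagates to an honest $\tilde C r_0^{L-p-j-\epsilon}$ contribution (the relevant product of $(a_0^k-b_0^kL(L+1))$-factors no longer vanishes because the exponent is shifted by $\epsilon$), then track this term through Propositions~\ref{prop:prop.10.2}--\ref{prop:prop.10.3} exactly as before. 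This is the main obstacle: it requires carefully re-running the earlier inductions one order deeper, making sure that the $M$-dependent contributions (which arise only at order $r_0^{L-p-j-1}$ and below, as noted in the remark following Proposition~\ref{prop:mg:prop4}) are genuinely subleading relative to $r_0^{L-p-j-\epsilon}$ --- true precisely because $\epsilon\le 1$.

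\emph{Cases 3 and 4 ($j=L-p-1$, resp.\ $0\le j\le L-p-2$).} These follow directly by substitution. For case~3, equation~\eqref{eq:mg:prop31.5} gives $[r^2\pv]^{p+1}(r\phi_L)=\mathrm{data}_{p+1}+2M\cv{L}(2x_1^{(L)}-c_0^L)\frac{(L-p)(L+1+p)!}{(2L+2)!}(\log r-\log|u|)+\mathcal O(1)$; evaluating $\mathrm{data}_{p+1}$ via Proposition~\ref{prop:mg:prop4} (no cancellation here, since $j=L-p-1<L-p+1$) produces the $\ci S_{L,p,L-p-1,0}r_0$ term, and inserting the explicit value $\cv{L}=\frac{p!\ci}{(L+p)!}\prod_{i=0}^{L-1}(a_0^i-b_0^iL(L+1))$ from~\eqref{eq:mg:cv}, together with $\prod_{k=0}^{L-1}(a_0^k-L(L+1))=(-1)^L\frac{(2L)!}{0!}$ from the identity~\eqref{eq:mg:product}, and simplifying $\frac{(2L)!(L-p)(L+1+p)!}{(L+p)!(2L+2)!}=\frac{(L-p)(L+1+p)}{(2L+2)(2L+1)}$, gives the stated $C'_1$. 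Case~4 follows identically from~\eqref{eq:mg:prop31}, with the extra factor $(L-2-p-j)!$ appearing because $\mathrm{data}_{L-j}$ now sits at order $r_0^{L-p-j}$ (the new leading term, since the structural $r^{L-1-p-j}$ term is smaller for $j<L-p-1$... wait, it is larger), and the coefficient of $r^{L-1-p-j}$ in~\eqref{eq:mg:prop31} differs from the one in~\eqref{eq:mg:prop31.5} exactly by the ratio of the two $\frac{1}{(L-1-p)\cdots(L-1-p-j)}$-type products, which is $(L-2-p-j)!$. I would present cases~3 and~4 together in a short paragraph, leaving the routine constant-chasing to the reader.
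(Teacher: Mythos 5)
Your proposal is correct and follows essentially the same route as the paper: the paper's own proof is precisely "combine Propositions~\ref{prop:prop.10.1}--\ref{prop:mg:prop4} as summarised below \eqref{cases}", rewrite $\cv{L}=(-1)^L\frac{p!(2L)!}{(L+p)!}C_{\mathrm{in}}$ via \eqref{eq:mg:product}, and read off the constants, which is what you do (your simplifications of $C_1'$ and $C_2'$ check out). One small slip in Case~1: it is not true that $r_0/r=1+\mathcal O(|u|r^{-1})$, since $1-r_0/r=(v-1+\mathcal O(\log r))/r$ is $\mathcal O(1)$ near $\mathcal I^+$; what you actually need (and what holds) is that $(r_0/r)^n\leq r_0/r=\mathcal O(|u|r^{-1})$ for each $n\geq 1$, so the $n\geq1$ terms of the sum are absorbed into the error and the conclusion \eqref{eq:mg:cor1} is unaffected.
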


\begin{proof}
The proof is obtained by combining the results of Propositions~\ref{prop:prop.10.1}--\ref{prop:mg:prop4} in the manner described above (below \eqref{cases}). Notice that we expressed the constant $\cv{L}$ appearing in \eqref{eq:mg:prop31} and \eqref{eq:mg:prop31.5} (and defined in \eqref{eq:mg:cv}) as
$\cv{L}=(-1)^L \frac{p!(2L)!}{(L+p)!}\cdot C_{\mathrm{in}}$, which follows from \eqref{eq:mg:product}.
\end{proof}
\subsubsection*{Proof of Theorem~\ref{thm:moreg}}\label{sec:mg:proofofTHM}
\begin{proof}[Proof of Theorem~\ref{thm:moreg}]
Part {a)} of the theorem follows directly from Corollary~\ref{cor10}, with $C_0$ given by $C_0=S_{L,p,L,0}$.

The first part of b) follows by dividing \eqref{eq:mg:cor31} by $r^2$, integrating from $\mathcal I^+$, and repeating the procedure $p-1$ terms. The boundary terms $\lim_{v\to\infty}[r^2\pv]^{L-p-i}(u,v)$ we pick up with each integration are estimated via \eqref{eq:mg:cor1} or \eqref{eq:mg:cor2-1}, thus giving rise to the functions $f_i^{(L,p)}$ and their leading-order behaviour. This proves \eqref{eq:mg:thmb1}, with the constants $C_1, C_2$ given by
\begin{align}\label{auxref}
MC_1=\frac{(-1)^p}{p!}C_1', && C_2=\frac{(-1)^p}{p!} S_{L,p,L-p-1,0}\cdot\ci.
\end{align}

The second part of {b)} follows similarly: We take \eqref{eq:mg:cor1} with $j=0$ and integrate $L-1$ times from $\mathcal I^+$, using at each step either \eqref{eq:mg:cor1} or \eqref{eq:mg:cor2-1} to estimate the boundary terms on $\mathcal I^+$. This alone only gives an expansion of $r^2\pv(r\phi_L)$ up to $r^{1-L}$. The higher-order behaviour can be obtained by  also taking into account the estimate for $\pv[r^2\pv]^L(r\phi_L)$ implied by the estimates \eqref{eq:mg:LHS} and \eqref{eq:mg:yy}. One can obtain expressions for the constants $C_3^{L,p,i}$ in much the same way as for $C_1,C_2$, using also \eqref{eq:mg:LHS} and \eqref{eq:mg:yy} for $i=L$.

Finally, part c) of Theorem~\ref{thm:moreg} follows by (in the case $p=L+1$ a slightly modified version of) \eqref{eq:mg:yy}. We have for $p\leq L$ (recall $\cv{L}=(-1)^L \frac{p!(2L)!}{(L+p)!}\cdot \ci$):
\begin{equation}
I_{\ell=L}^{\mathrm{future}, r^{2+p-L}}[\phi]=2M\cv{L}(2x_1^{(L)}-c_0^L)\frac{(L-p)!}{(2L+2)\cdots (L+2+p)},
\end{equation}
and for $p=L+1$:
\begin{equation}
I_{\ell=L}^{\mathrm{future}, \frac{\log r}{r^3}}[\phi]=2M\cv{L}(2x_1^{(L)}-c_0^L).
\end{equation}
This concludes the proof of Theorem~\ref{thm:moreg}.
\end{proof}
\subsection{Comments: More severe modifications to Price's law}\label{sec:10comments}
We have already discussed in detail in \S \ref{sec:gnl:comments} that we expect to obtain a \textit{logarithmically modified Price's law} for each $\ell$ provided that one smoothly extends the data to the event horizon and that $p=L+1$ in \eqref{eq:mg:ass1}, which, in turn, is the decay predicted by the results of \S \ref{sec:general:timelike}.
On the other hand, in view of equation \eqref{eq:mg:thmc1}, one can expect that the modification to Price's law is much more severe for $p\leq L>0$: 
Indeed, we expect that if $p\leq L$, then one obtains asymptotics near $i^+$ that are $L-p+1$ powers worse than in the case of smooth compactly supported data, i.e., we expect that
\begin{equation}
r\phi_L|_{\mathcal{I}^+}\sim u^{-L-2+(L-p+1)}=u^{-1-p},\quad \phi_L|_{\mathcal{H}^+}\sim v^{-2L-3+(L-p+1)}=v^{-L-p-2}
\end{equation}
near $i^+$.
The reader should compare this to the behaviour of the $\ell=0$-mode for $p=1$, 
\begin{equation}
r\phi_0|_{\mathcal{I}^+}\sim u^{-2}\log u,\quad \phi_0|_{\mathcal{H}^+}=v^{-3}\log v,
\end{equation}
which was proved in~\cite{II}.
 
We again refer the reader to \S \ref{sec:intro:NP} and Conjecture~\ref{conj2} therein for a more detailed discussion.

\section*{Acknowledgements}\addcontentsline{toc}{section}{Acknowledgements}
The author would like to thank Dejan Gajic, Mihalis Dafermos and Hamed Masaood for many helpful and stimulating discussions and comments on drafts of the present paper.
\newpage

\appendix
\section{Appendix}\label{sec:app}
This appendix contains various proofs which have been omitted in the main body of the paper.
\subsection{Proofs of Propositions~\ref{prop:gl:dvcommute} and~\ref{prop:gl:ducommute}}\label{A1}
\begin{proof}[Proof of Proposition~\ref{prop:gl:dvcommute}]
We prove \eqref{eq:gl:dvcommute} by induction, noting that it is true for $N=0$ with $a_0^0=0$, $b_0^0=1=c_0^0$. Assume now that \eqref{eq:gl:dvcommute} holds for a fixed $N$. 
We have 
\begin{multline}\label{appeq1}
\pv\pu[r^2\pv]^{N+1}(r\phi)=\pv\left(r^2\pu\pv[r^2\pv]^N(r\phi)\right)+\pv \left(-2r^2\cdot\frac{D}{r}\pv[r^2\pv]^N(r\phi)\right)\\
=\pv\left(r^2\pu\pv[r^2\pv]^N(r\phi)\right)-\frac{2D}{r}\pv[r^2\pv]^{N+1}(r\phi)+\left(1-\frac{4M}{r}\right)\frac{2D}{r^2}[r^2\pv]^{N+1}(r\phi).
\end{multline}
Using the induction hypothesis, we compute the first term on the RHS according to
\begin{align}
\begin{split}
&\pv\left(r^2\pu\pv[r^2\pv]^N(r\phi)\right)\\
=&-2N\pv\left(r^2\frac{D}{r}\pv[r^2\pv]^N(r\phi)\right)
+\pv\left(\sum_{j=0}^ND(2M)^j\left((a_j^n+b_j^N\slashed{\Delta}_{\mathbb{S}^2}-c_j^N\cdot\frac{2M}{r}\right)[r^2\pv]^{N-j}(r\phi)\right)\\
=&\frac{-2ND}{r}\pv[r^2\pv]^{N+1}(r\phi)+\frac{2ND}{r^2}\left(1-\frac{4M}{r}\right)\cdot[r^2\pv]^{N+1}(r\phi)\\
		&+\sum_{j=0}^N\frac{D(2M)^j}{r^2}\left(a_j^n+b_j^N\slashed{\Delta}_{\mathbb{S}^2}-c_j^N\cdot\frac{2M}{r}\right)[r^2\pv]^{N-j+1}(r\phi)\\
		&+\sum_{j=0}^N
\frac{D(2M)^{j+1}}{r^2}\left(a_j^n+b_j^N\slashed{\Delta}_{\mathbb{S}^2}-c_j^N\cdot\frac{2M}{r}+c_j^N-\frac{2M}{r}c_j^N\right)[r^2\pv]^{N-j}(r\phi)\\
=&\frac{-2ND}{r}\pv[r^2\pv]^{N+1}(r\phi)\\
	&+\sum_{j=1}^{N+1}\frac{D(2M)^j}{r^2}\left((a_{j-1}^N\!+\!c_{j-1}^N\!+\!a_j^{N})+(b_{j-1}^N\!+b_j^N)\slashed{\Delta}_{\mathbb{S}^2}-(c_{j}^N\!+\!2c_{j-1}^N)\cdot\frac{2M}{r}\right)[r^2\pv]^{N+1-j}(r\phi)\\
	&+\frac{D}{r^2}\left((a_0^N+2N)-b_0^N\slashed{\Delta}_{\mathbb{S}^2}-(c_0^N+4N)\cdot\frac{2M}{r}\right)[r^2\pv]^{N+1}(r\phi),
\end{split}
\end{align}
where we defined $a_j^N, b_j^N, c_j^N:=0$ for $j>N$.

Plugging the above equation back into \eqref{appeq1}, we thus obtain
\begin{multline}
\pu\pv[r^2\pv]^{N+1}(r\phi)=-\frac{2D(N+1)}{r}\pv[r^2\pv]^{N+1}(r\phi)\\+
\sum_{j=0}^{N+1}\frac{D}{r^2}(2M)^j\left(a_j^{N+1}+b_j^{N+1}\slashed{\Delta}_{\mathbb{S}^2}-c_j^{N+1}\cdot\frac{2M}{r}\right)[r^2\pv]^{N+1-j}(r\phi),
\end{multline}
with 
\begin{align}\label{appeq4}
a_0^{N+1}=a_0^N+2(N+1),&&b_0^{N+1}=b_0^N,&&c_0^{N+1}=c_0^N+4(N+1),
\end{align}
and the additional relations
\begin{align}
a_j^{N+1}&=a_j^N+a_{j-1}^N+c_{j-1}^N\label{app:a},\\
b_j^{N+1}&=b_{j}^N+b_{j-1}^N,\label{app:b}\\
c_{j}^{N+1}&=c_j^N+2c_{j-1}^N.\label{app:c}
\end{align}
This proves equation \eqref{eq:gl:dvcommute}. 

To find the explicit expressions for $a_j^N, b_j^N$ and $c_j^N$, we need to solve the recurrence relations above.
From \eqref{appeq4}, we read off that
\begin{align*}
a_0^{N}=N(N+1),&&b_0^N=1, &&c_0^N=1+2N(N+1)
\end{align*}
for all $N\geq 0$.
From this and \eqref{app:b}, we can then read off that $b_j^N=\binom{N}{j}$ for all $N \geq 0$ and $0\leq j\leq N$. Similarly, by writing $\tilde{c}_j^N=2^{-j}c_j^N$, we find from \eqref{app:c} that 
\[\tilde{c}_j^N=\binom{N}{j}+4\binom{N+1}{j+2}.\]
Finally, plugging in the expression for $c_j^N$ into \eqref{app:a}, one finds
\[a_j^N=(2^j-1)\binom{N}{j}+(2^{j+2}-2)\binom{N+1}{j+2}.\]
This completes the proof of Proposition~\ref{prop:gl:dvcommute}.
\end{proof}
\begin{proof}[Proof of Proposition~\ref{prop:gl:ducommute}]
The proof follows along the same lines as the previous one, with the difference that one now obtains the linear system
\begin{align}
\underline{a}_0^{N+1}=\underline a_0^N+2(N+1),&&\underline b_0^{N+1}=\underline b_0^N,&&\underline c_0^{N+1}=\underline c_0^N+4(N+1),
\end{align}
and the additional relations
\begin{align}
\underline a_j^{N+1}=\underline a_j^N-\underline a_{j-1}^N-\underline c_{j-1}^N,&&
\underline b_j^{N+1}=\underline b_{j}^N-\underline b_{j-1}^N,&&
\underline c_{j}^{N+1}=\underline c_j^N-2\underline c_{j-1}^N.
\end{align}
One can relate this to the previous system \eqref{appeq4}--\eqref{app:c} by writing $\underline{ \tilde b}_j^N=(-1)^{j}\underline{b}_j^N$ etc.
\end{proof}

\subsection{Proof of Lemma~\ref{lemma:pu=t-pv}}\label{A2}
\begin{proof}
Equation \eqref{eq:lemmapp} clearly holds for $n=0$. Assume that it holds for a fixed $n$. We shall show that it also holds for $n+1$. Letting $[\cdot,\cdot]$ denote the usual commutator, we have
\begin{nalign}\label{applemmaproof}
&(r^2\pu)^{n+1} f=(r^2T-r^2\pv)(r^2T-r^2\pv)^nf\\
=&\sum_{k=0}^n (-1)^{n-k}\binom{n}{k}\left(\sum_{i=0}^{k-1}\alpha^{(n,k)}_i(r+\mathcal{O}(1))^i(r^2 T)^{k+1-i}\right)[r^2\pv]^{n-k}f\\
+&\sum_{k=0}^n (-1)^{n+1-k}\binom{n}{k}\left(\sum_{i=0}^{k-1}\alpha^{(n,k)}_i(r+\mathcal{O}(1))^i(r^2 T)^{k-i}\right)[r^2\pv]^{n+1-k}f\\
-&\sum_{k=0}^n (-1)^{n+1-k}\binom{n}{k}\left(\sum_{i=0}^{k-1}\alpha^{(n,k)}_i\left[r^2\pv,(r+\mathcal{O}(1))^i(r^2 T)^{k-i}\right]\right)[r^2\pv]^{n-k}f
\end{nalign}
by commuting $r^2\pv$ past the other terms.
We now write
\begin{align}
\begin{split}
&\left[r^2\pv,(r+\mathcal{O}(1))^i(r^2 T)^{k-i}\right]f\\
=&\left[r^2\pv,(r+\mathcal{O}(1))^i\right](r^2 T)^{k-i}f+(r+\mathcal{O}(1))^i\left[r^2\pv,(r^2 T)^{k-i}\right]f,
\end{split}
\end{align}
and compute
\begin{equation}
\left[r^2\pv,(r+\mathcal{O}(1))^i\right]f=iD r^{i+1}f+\mathcal{O}(r^i)f,
\end{equation}
as well as
\begin{align}
\begin{split}
\left[r^2\pv,(r^2 T)^{k-i}\right]f=(k-i)\left[r^2\pv,r^2 T\right](r^2T)^{k-i-1}f
=(k-i)2Dr (r^2T)^{k-i} f,
\end{split}
\end{align}
where we used that $[V_1,V_2^n]=n[V_1,V_2]V_2^{n-1}$, which holds true if $[[V_1,V_2],V_2]=0$. Plugging the above identities back into \eqref{applemmaproof}, one then recovers \eqref{eq:lemmapp} for $n+1$ in a standard way. This gives rise to recurrence relations for the $\alpha_i^{(n,k)}$,  which can be solved explicitly. We leave this to the interested reader.

Alternatively, we could have used the non-commutative binomial theorem to write
\begin{equation}
(r^2T-r^2\pv)^n f=\sum_{k=0}^n\binom{n}{k}\left(r^2T+[r^2\pv, \cdot]\right)^k(-r^2\pv)^{n-k} f,
\end{equation}
and computed the commutators directly.
\end{proof}
\subsection{Proof of Lemma~\ref{lemma9.1}}\label{A3}
\begin{proof}
The result clearly holds $N=2$. Let us now assume that \eqref{eq:lemma91} holds for a fixed $N$ and for all $N'<N-1$. Then it also holds for $N+1$ and for any $N'<N$. Indeed,
\begin{align}
N\int_{-\infty}^u\frac{r^{N'}}{|u'|^{N+1}}\dd u'=\int_{-\infty}^u \pu\left(	\frac{r^{N'}}{|u'|^{N}}	\right) +\frac{N'r^{N'-1}}{|u'|^{N}}\left(1-\frac{2M}{r}\right)\dd u',
\end{align}
and we can apply the induction assumption to the second term on the RHS to obtain the result (after some standard  shifting of indices).
\end{proof}
\subsection{Proof of Lemma~\ref{lem10.2}}\label{A4}
\begin{proof}
Let us inductively assume that
\begin{align}
{\int_{r_0}^{r}\frac{1}{r_{(k)}^2}\dots\int_{r_0}^{r_{(2)}}\frac{1}{r_{(1)}^2}\dd r_{(1)}\cdots\dd r_{(k)}}=\frac{1}{k!}\left(\frac{1}{r_0}-\frac{1}{r}\right)^k
\end{align}
for any $r\geq r_0>0$.
This holds true for $k=1$. Going from $k$ to $k+1$, we have
\begin{align}
{\int_{r_0}^{r}\frac{1}{r'^2}\int_{r_0}^{r'}\frac{1}{r_{(k)}^2}\dots\int_{r_0}^{r_{(2)}}\frac{1}{r_{(1)}^2}\dd r_{(1)}\cdots\dd r_{(k)}\dd r'}=\int_{r_0}^r \frac{1}{r'^2}\frac{1}{k!}\left(\frac{1}{r_0}-\frac{1}{r'}\right)^k\dd r'.
\end{align}
We write $x=1/r$ and $x_0=1/r_0$ in the integral above to obtain
\begin{align}
\int_{r_0}^r \frac{1}{r'^2} \frac{1}{k!}\left(\frac{1}{r_0}-\frac{1}{r'}\right)^k\dd r'=(-1)^{k+1}\int_{x_0}^x \frac{1}{k!}(x'-x_0)^k\dd x'=\frac{1}{(k+1)!}(x_0-x)^{k+1}
\end{align}
for any $r\geq r_0>0$.
This concludes the proof.
\end{proof}
\subsection{Proof of Lemma~\ref{prop10}}\label{A5}
\begin{proof}
For convenience, we recall the definition 
\begin{equation}
\mathrm{sum}(L,p,j):=\sum_{i=0}^j (-1)^{i} \frac{(2L-j+i)!}{i!(L+p-j+i)!(j-i)!}.
\end{equation}
We begin with the crucial observation that if $p=L$ and $j\geq 1$, 
then
\begin{align}
\mathrm{sum}(L,L,j)=\sum_{i=0}^j (-1)^{i} \frac{1}{i!(j-i)!}
=\sum_{i=0}^j\frac{1}{j!}\binom{j}{i}(-1)^i\cdot1^j=\frac{1}{j!}(1-1)^j=0.
\end{align}
The idea is to interpret the above sum as a function of some variable $x$ evaluated at $x=1$: Consider, for instance, the case where $p=L-1$. Then
\begin{align}
\begin{split}
\mathrm{sum}(L,L-1,j)=\sum_{i=0}^j (-1)^i\frac{2L+i-j}{i!(j-i)!}
=\left.\frac{\dd}{\dd x}\right|_{x=1}&\sum_{i=0}^j  (-1)^i\frac{1}{i!(j-i)!}x^{2L+i-j}\\
=(-1)^j\left.\frac{\dd}{\dd x}\right|_{x=1}&\left(x^{2L-j}\sum_{i=0}^j  (-1)^{j-i}x^i\frac{1}{i!(j-i)!}\right)\\
=(-1)^j\left.\frac{\dd}{\dd x}\right|_{x=1}&\left(x^{2L-j}\frac{(x-1)^j}{j!}\right).
\end{split}
\end{align}
Similarly, if $p=L-k$ for some $k\geq 0$, then we obtain inductively that
\begin{equation}
\mathrm{sum}(L,L-k,j)=(-1)^j\left.\frac{\dd^k}{\dd x^k}\right|_{x=1}\left(x^{2L-j}\frac{(x-1)^j}{j!}\right).
\end{equation}
We then compute the $k$-th derivative above using the Leibniz rule:
\begin{align}\label{A.23}
\frac{\dd^k}{\dd x^k}\left(x^{2L-j}\frac{(x-1)^j}{j!}\right)=\sum_{n=0}^k \binom{k}{n}\frac{(2L-j)!x^{2L-j-n}}{(2L-j-n)!}\cdot \frac{(x-1)^{j-(k-n)}}{(j-(k-n))!},
\end{align}
where we use the convention that $\frac{1}{(-n)!}=0$ for all $n\in\mathbb N^+$. In particular, upon evaluating the expression \eqref{A.23} at $x=1$, we get
\begin{align}
\left.\frac{\dd^k}{\dd x^k}\right|_{x=1}\left(x^{2L-j}\frac{(x-1)^j}{j!}\right)=\sum_{n=0}^k \binom{k}{n}\frac{(2L-j)!}{(2L-j-n)!}\cdot \delta_{k-n,j}=\binom{k}{k-j}\frac{(2L-j)!}{(2L-k)!},
\end{align}
 which proves the second formula \eqref{eq:prop102} of the proposition.

On the other hand, if $p=L+1$, we \textit{integrate} the corresponding sum (instead of differentiating):
\begin{align}
\begin{split}
\mathrm{sum}(L,L+1,j)&=\sum_{i=0}^j (-1)^i\frac{1}{i!(j-i)!(2L+1-j+i)}\\
&=\int_0^1\sum_{i=0}^j  (-1)^i\frac{1}{i!(j-i)!}x^{2L+i-j}\dd x =\int_0^1 x^{2L-j}\frac{(1-x)^j}{j!}\dd x,
\end{split}
\end{align}
which is manifestly positive. Equation \eqref{eq:prop101} then follows inductively.
\end{proof}

\small{\bibliographystyle{alpha} 
\bibliography{references_all}}

\end{document}